\documentclass[a4paper,11pt]{article}

\usepackage[T1]{fontenc}

\usepackage[numbers,sort&compress]{natbib}
\usepackage{latexsym}
\usepackage{longtable}
\usepackage{epsfig}
\usepackage{graphicx,bbm,psfrag} 
\usepackage{amssymb,amsmath,amsthm,booktabs,mathtools}
\usepackage{bm}
\usepackage{color}
\usepackage{dutchcal}
\usepackage{hyperref}
\hypersetup{
    colorlinks,
    citecolor=red,
    linkcolor=blue,
}

\numberwithin{equation}{section}

\newtheorem{theorem}{Theorem}[section]
\newtheorem{rema}{Remark}[section]

\newtheorem{defi}[theorem]{Definition}
\newtheorem{lemma}[theorem]{Lemma}
\newtheorem{corol}[theorem]{Corollary}
\newtheorem{assum}[theorem]{Assumption}
\newtheorem{algo}[theorem]{Algorithm}

\def\doi#1{\\ DOI: \href{https://www.doi.org/#1}{#1}}

\newcommand{\bc}{\begin{center}}
\newcommand{\ec}{\end{center}}
\def\ba#1{\begin{array}{#1}\displaystyle}
\newcommand{\ea}{\end{array}}
\newcommand{\z}{\\[0.2cm] \displaystyle}
\newcommand{\beq}{\begin{equation}}
\newcommand{\eeq}{\end{equation}}
\newcommand{\beqa}{\begin{eqnarray}}
\newcommand{\eeqa}{\end{eqnarray}}
\newcommand{\no}{\nonumber}
\newcommand{\n}{\nonumber\\}
\newcommand{\bi}{\begin{itemize}}
\newcommand{\ei}{\end{itemize}}

\def\ssect#1{\subsection{#1}}

\def\lt#1{\left#1}
\def\rt#1{\right#1}
\def\t#1{\tilde{#1}}
\def\h#1{\hat{#1}}
\def\b#1{\bar{#1}}
\def\frc#1#2{\frac{#1}{#2}}

\def\bs#1{\boldsymbol{#1}}

\DeclareMathOperator{\argmax}{{\rm argmax}}

\DeclareMathOperator{\argmin}{{\rm argmin}}
\newcommand{\p}{\partial}
\newcommand{\dbra}{[\![}
\newcommand{\dket}{]\!]}
\newcommand{\Z}{{\mathbb{Z}}}
\newcommand{\N}{{\mathbb{N}}}
\newcommand{\R}{{\mathbb{R}}}

\newcommand{\Rpu}{\mathbb{R}_{+\!\!\uparrow}}
\newcommand{\Rne}{\mathbb{R}_{\neq}}

\newcommand{\ep}{\epsilon}
\newcommand{\varep}{\varepsilon}

\newcommand{\re}{{\rm e}}
\newcommand{\dd}{{\rm d}}
\newcommand{\1}{{\bf 1}}
\newcommand{\0}{{\bf 0}}
\DeclareMathOperator{\sgn}{sgn}
\DeclareMathOperator{\diag}{{\rm diag}}
\DeclareMathOperator{\Mat}{{\rm Mat}}

\DeclareMathOperator{\dist}{{\rm dist}}

\DeclareMathOperator{\core}{{\rm core}}
\DeclareMathOperator{\reg}{{\rm reg}}
\DeclareMathOperator{\neigh}{{\rm neigh}}

\DeclareMathOperator{\wlim}{{\rm w--lim}}
\def\wlimu#1{\underset{#1}{\wlim}\,}

\newcommand{\halmos}{\rule{1ex}{1.4ex}}
\newcommand{\eproof}{\hspace*{\fill}\mbox{$\halmos$}}

\begin{document}

\begin{center}
{\Large {\sc Where solitons are in a KdV soliton gas}}

\vspace{1cm}

{\large Benjamin Doyon
}

\vspace{0.2cm}
Department of Mathematics, King’s College London, Strand, London WC2R 2LS, U.K.
\ec

The Korteweg-De Vries (KdV) equation is a paradigmatic model of integrable classical fields. It admits multi-soliton solutions where localised profiles interact by elastic, factorised scattering. But when solitons interact, their shapes are modified, and individual solitons cannot be identified anymore, a key problem in soliton gas theory where one needs a density of solitons per unit distance. This is the question of quasi-particles in integrable systems: can we locate asymptotic objects even at finite times and finite densities in the macroscopic limit of infinite number of solitons? A good definition of solitons' positions should guarantee that the field in any mesoscopic neighbourhood is determined by solitons lying there. We solve this problem for the KdV equation. We define solitons' positions as certain explicit solutions to the semiclassical Bethe equations. We show a density lower bound generalising the hard-rod condition and reminiscent of the condensate lower-bound known from soliton gases. We construct a fluid-cell projection, that projects out solitons lying outside mesoscopic regions, and show that the result is a field composed solely of the remaining solitons, unchanged in this region and supported there up exponentially decaying corrections. From this we show that in the weak limit, conserved densities are given by the empirical density for these solitons' positions. Thus we have associated a kinetic theory of quasi-particles to the KdV equation. We also find new bounds on the growth of the multi-soliton support and on the supremum of the field and its derivatives. The results hold under natural conditions on spectral and impact parameters, and no randomness or large-wavelength conditions are required. Our proof is based on a novel solitonic tau function. This is a first step towards rigorously deriving the kinetic equation of the KdV soliton gas, first proposed by Gennady El in 2003. We believe the methods are generalisable to other solitonic models.


\medskip

\tableofcontents

\section{Introduction}

{\em KdV solitons---} The Korteweg-de-Vries (KdV) equation
\beq\label{kdv}
	\p_t u + 6 u\p_x u + \p_x^3 u = 0
\eeq
is a paradigmatic model of integrable PDEs \cite{miura1968korteweg,gardner1974korteweg,zakharov1991integrability}. One of its main properties is that it admits solutions composed of many solitons. A soliton is a field configuration with exponentially localised energy, that keeps its shape under the dynamics \eqref{kdv} and propagates with constant velocity, like a particle. Solitonic models such as KdV have received a lot of attention recently in the context of {\em soliton gases}, where one sees random fields $u$ as distributions of solitons, with properties akin to a gas or fluid. This was first proposed by Zakharov \cite{zakharov1971kinetic}, and Gennady El and co-authors \cite{EL2003374,el2005kinetic,el2011kinetic,pavlov2012generalized} first conjectured the soliton gas kinetic equation governing its large-scale dynamics, see also \cite{gurevich2000statistical}. This has been extended to generalised hydrodynamics (GHD), the hydrodynamic theory for classical and quantum many-body integrable models \cite{castro2016emergent,bertini2016transport}, see \cite{el2021soliton,suret2024soliton} for reviews on soliton gas, and \cite{doyon2020lecture,bastianello2022introduction,spohn2024hydrodynamic,doyon2025generalized} for reviews on GHD.

In the KdV equation, there is a 1-soliton solution for  every {\em spectral parameter} $\chi>0$, with velocity $v=4\chi^2$, and {\em impact parameter} $y\in\R$, the position of its maximum at time $t=0$:
\beq\label{1sol}
	u(x,t) = u_{\chi,y+vt}(x):= \frc{2\chi^2}{\cosh^2(\chi (x-y-vt))},\quad v := 4\chi^2,\,\chi>0,\;y\in\R.
\eeq
A $N$-soliton solution has a more complicated form \cite{gardner1974korteweg}, but it is explicit, and its scattering structure is surprisingly simple. It tends to a sum of $N$ separate one-soliton fields \eqref{1sol} as $t\to\pm\infty$, with a unique set of spectral parameters $\chi_i$'s in both asymptotic time regions (elastic scattering), and with corresponding one-soliton impact parameters $y=x_i^\pm$ in \eqref{1sol}. Further, the Wigner scattering shifts are sums of two-soliton shifts, as if the complex multi-soliton scattering had occurred as well-separated two-solitons scattering events (factorised scattering),
\beq\label{factscat}
	x_i^+ - x_i ^- = -\sum_{j=1\atop j\neq i}^N \sgn(v_i-v_j)\varphi_{ij},\quad \varphi_{ij} = \,\frc1{\chi_i} \log\Big|\frc{\chi_i-\chi_j}{\chi_i+\chi_j}\Big|,\quad v_i = 4\chi_i^2.
\eeq
For convenience we define our ``impact parameters'' as $y_i = (x_i^++x_i^-)/2$; there is one distinct $N$-soliton field $u_{\bs\chi,\bs y}$ for every $\bs\chi\in\Rpu^N=\{\bs\chi\in\R^N:0<\chi_1<\chi_2<\ldots<\chi_N\}$ and $\bs y\in\R^N$, and this is the full set.
Time evolution of the $N$-soliton field is rather simple. Spectral parameters are invariant, and impact parameters evolve linearly:
\beq\label{uevolintro}
	u(x,t) = u_{\bs \chi,\bs y+\bs v t}(x)
\eeq
solves \eqref{kdv}.

%
%
\medskip
\noindent {\em Quasi-particles---} It is conjectured that integrability gives more than elastic, factorised scattering: the existence of {\em stable quasi-particles}. Within the KdV context, this informally says that {\em in a large family of multi-soliton configurations, there is a meaningful notion of individual solitons' positions $x_i(t)$ at all times $t$, up to a ``small'' imprecision $\Delta x \ll N$ as $N\to\infty$}. This is why it makes sense to talk about a gas of solitons: it is a gas of quasi-particles. But how do we know if a proposed notion of solitons' positions is ``meaningful''?

At the very least, solitons' positions should describe the Euler-scale weak limits of conserved densities. Let $\mathtt P_k[u](x)$ be the $k$'th conserved density admitted by the KdV equation \cite{miura1968korteweg}. This is a polynomial in $u,\,\p_xu,\,\p_x^2 u,\ldots$. It takes total values $\int_{-\infty}^\infty \dd x\,\mathtt P_k[u_{\bs\chi,\bs y}](x)=\sum_{i=1}^N \chi_i^{2k+1}$ ($k=0,1,2,\ldots$) on $N$-soliton fields. Then with \eqref{uevolintro}, one would like to have for every Schwartz function $f:\R\to\R$ the following, with appropriate sequences\footnote{Usually, one takes an appropriate family of measures, parameterised by $N$, on $N$-soliton configurations, such as local generalised Gibbs ensembles \cite{doyon2020lecture}, and either the left-hand side of \eqref{euler} holds a.s.~under this measure, or holds on average. But in general it is not necessary to consider measures, as we will see.} $\N\ni N\,\mapsto \bs\chi\in\Rpu^N,\,\bs y\in\R^N$:
\beq\label{euler}
	\lim_{N\to\infty}\Bigg[\int \dd x\, f(x)  \mathtt P_k[u_{\bs\chi,\bs y}(\cdot,Nt)](Nx)
	-
	\frc1N\sum_i f(x_i(Nt)/N)\chi_i^{2k+1}\Bigg]=0.
\eeq
Note how solitons with positions $x_i(t)\approx x$ are those that give information about the field at time $t$ around $x$. More generally, the empirical density
\beq\label{empirical}
	\rho_{\rm emp}(\chi,x,t) = \sum_i \delta(x-x_i(t))\delta(\chi-\chi_i)
\eeq
is a central object of soliton gas theory and GHD, not only on average in order to obtain the Euler-scale hydrodynamic equation, but for describing hydrodynamic fluctuations \cite{doyon2023emergence,doyon2023ballistic} as well as the hydrodynamic theory beyond the Euler scale \cite{gopalakrishnan2024non,yoshimura2025anomalous,mcculloch2025ballistic,hydrodynamic2025Yoshimura,doyon2026hydrodynamic}. For this, \eqref{euler} is not sufficient. Thus finding and implementing a good notion of solitons' positions is crucial.
%
%

In a multi-soliton field, there is of course no universal notion of a soliton's position other then at asymptotically large times, or more generally at well-separated impact parameters -- solitons' shapes are modified due to interactions when their distances are comparable to their sizes, and individual solitons are not manifestly identifiable anymore in terms of where the energy is localised. 
In practice, one may use a ``time-of-flight'' thought experiment \cite{doyon2020lecture} to observe where solitons are, experimentally realised in the context of GHD in cold atomic gases \cite{malvania2021generalized}. In this procedure, at some time $Nt$, we put to 0 the field outside a mesoscopic interval $I$ centred on the position $Nx$, of length $1\ll L\ll N$, and determine the spectral parameters of solitons lying on $I$ by letting the resulting field evolve, using the soliton resolution conjecture \cite{jendrej2025resolution}. The resulting spectral density divided by $L$ gives the {\em density of states} $\rho(\chi,x,t)$: the quantity $\dd x\dd\chi\, \rho(\chi,x)$ is interpreted as the number of solitons with spectral parameters within $[\chi,\chi+\dd\chi]$ and macroscopic positions within $[x,x+\dd x]$ at macroscopic time $t$. See Fig.~\ref{figtof}. The soliton gas kinetic equation / GHD equation is a continuity equation for $\rho(\chi,x,t)$. This indirect procedure to define $\rho(\chi,x,t)$ is very general, and does not even rely on integrability. But it does not guarantee the existence of meaningful solitons' trajectories.
\begin{figure}
\bc\includegraphics[width=12cm]{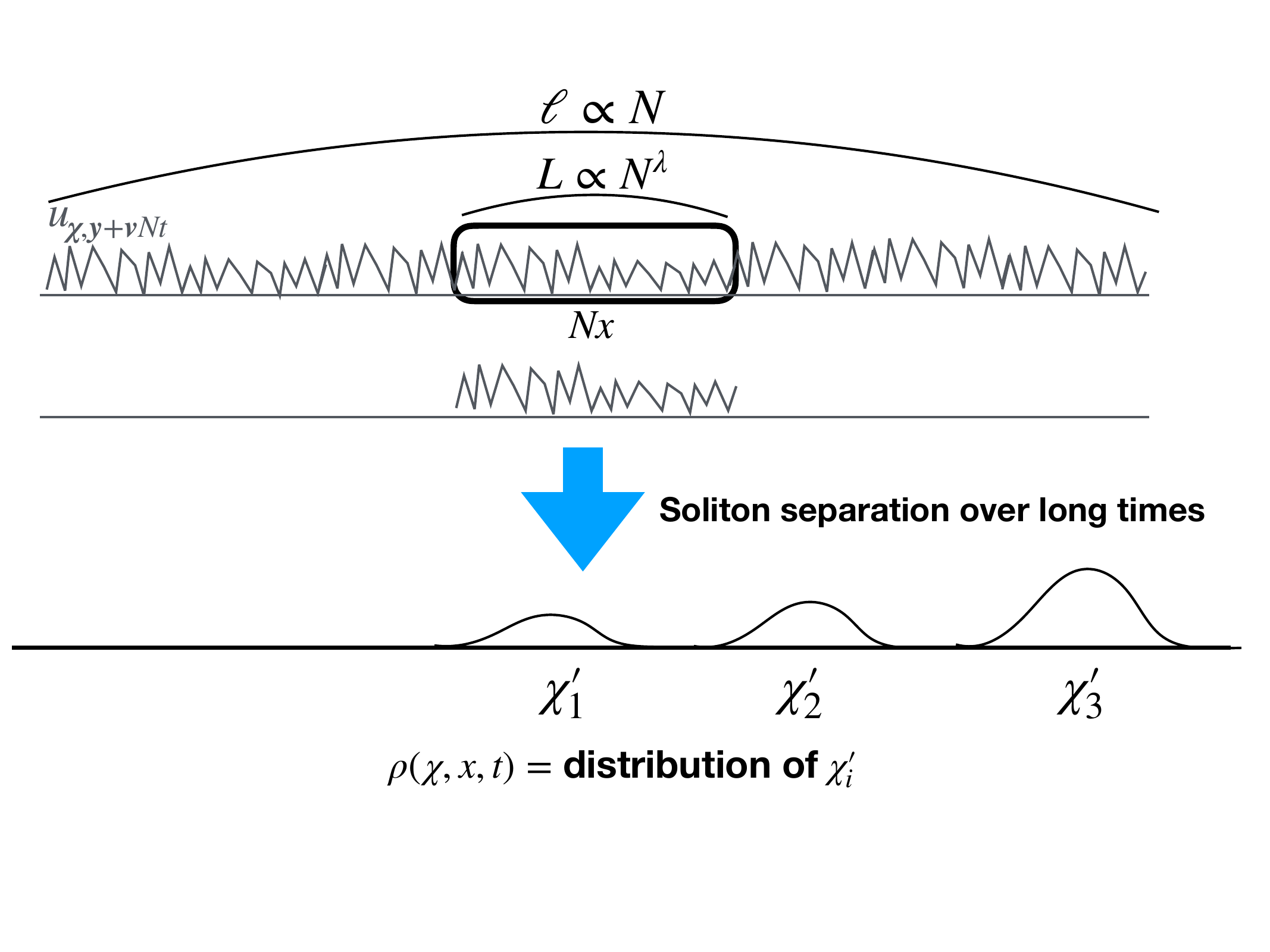}\ec
\caption{A representation of the time-of-flight thought experiment.}
\label{figtof}
\end{figure}
In soliton gas theory the density of state $\rho(\chi,x,t)$ is in fact not defined as \eqref{empirical}, but instead via a Witham modulation procedure from multi-phase, finite-gap solutions  (see recent results on how $N$-soliton solutions arise from finite-gap solutions \cite{jenkins2025approximation}), and in GHD it is defined indirectly as a tool to represent the infinite set of conserved densities \cite{castro2016emergent,bertini2016transport}. The full connection with \eqref{empirical} is still missing.

The lack of a concrete and more precise notion of quasi-particles -- of solitons' positions -- has been up to now a major stumbling block in making sense of, and deriving rigorously, the hydrodynamic theory of soliton gases.

\medskip
\noindent {\em The quasi-particle criterium---} Here we argue that a good notion of solitons' positions, inspired by the time-of-flight thought experiment, is what we refer to as the {\em quasi-particle criterium}, expressed in terms of {\em fluid-cell projections}:
\beq\label{loose}
\fbox{\begin{minipage}[t]{15cm}
There exists for every closed interval $I$ a fluid-cell projection, $\mathcal F_I:u_{\bs\chi,\bs y} \mapsto u_{\bs\chi',\bs y'}$, that projects out all solitons with  $x_i\not\in I$: the projected field $u_{\bs\chi',\bs y'}$ contains only solitons lying in $I$, $\{\chi_i'\} = \{\chi_i:x_i\in I\}$. In the large-$N$ limit, for $I$ of mesoscopic length, the multi-soliton field is unchanged on $I$, $u_{\bs\chi,\bs y}|_I = u_{\bs\chi',\bs y'}|_I$, and is itself supported on $I$, $u_{\bs\chi',\bs y'}|_{\R\setminus I}=0$. 
\end{minipage}}
\eeq
%

As we will explain, from the quasi-particle criterium, one can show that the time-of-flight $\rho(\chi,x,t)$ is in fact the weak limit of the empirical density associated to $\{x_i^{(Nt)}/N\}$, and \eqref{euler} also follows. Importantly, the fluid-cell maps give an extension of \eqref{euler} to {\em local observables that are not conserved densities}. Establishing the quasi-particle criterium for a large family of spectral and impact parameters is a crucial step in proving the emergent hydrodynamics and thermodynamics of soliton gases.

\medskip
\noindent {\em Results---} We develop new techniques and propose a solution to the quasi-particle problem for the KdV equation. Because of the linear evolution of impact parameters, Eq.~\eqref{uevolintro}, it is sufficient to concentrate on $t=0$.

Our analysis is based on the {\em semi-classical Bethe equations} (SBE) \cite{doyon2023ab,doyon2024new,bonnemain2025hamiltonian}
\beq\label{bethe}
	y_i = x_i + \frc12 \sum_{j=1\atop j\neq i}^N \sgn(x_i-x_j)
	\varphi_{ij},\quad i=1,\ldots,N.
\eeq
We first show (Theorem \ref{theosbe}) that \eqref{bethe} have at least one solution $\bs x\in\R^N_{\neq} = \{\bs x\in\R^N:x_i\neq x_j\,\forall\,i\neq j\}$ for all $\bs\chi\in\Rpu^N$ and $\bs y\in\R^N$ (for any strictly negative scattering shift $\varphi_{ij}$, in particular with that given in \eqref{factscat}), which can be constructed explicitly via a finite algorithm. This solution satisfies the minimal-distance condition
\beq\label{separationintro}
	x_i-x_j \geq -\sum_{k:x_k\in[x_j,x_i)} \varphi_{ik}\quad \forall\;i,j:x_i>x_j.
\eeq
Choosing such a judicious solution to \eqref{bethe} for every $N\in\N,\,\bs\chi\in\Rpu^N,\,\bs y\in\R^N$ (Definition \ref{defiqp}, Algorithm \ref{leftanchor}) gives our proposal for quasi-particle positions $x_i = x_i(\bs\chi,\bs y)$.
We show (Lemma \ref{lemdesc}, Definition \ref{defimapfc}) that the associated fluid-cell projection is given by, for every closed interval $I$,
\beq\label{fcintro}
	\mathcal F_I : u_{\bs\chi,\bs y} \mapsto u_{\bs\chi',\bs y'}
\eeq
where
\beq\label{fcintro2}
	\bs\chi' = (\chi_i)_{i:x_i\in I},\quad
	\bs y' = \Big(y_i + \frc12\sum_{j\,:\,x_j\,\text{to the right of}\,I} \varphi_{ij}
	-
	\frc12\sum_{j\,:\,x_j\,\text{to the left of}\,I} \varphi_{ij}\Big)_{i:x_i\in I}.
\eeq
We then show (Theorem \ref{theofc}), under certain conditions on the density of solitons and on the spectral parameters, that the quasi-particle criterium is fulfilled for $\mathcal F_I$. Using fluid-cell projections, we further show Eq.~\eqref{euler} at $t=0$  (Theorem \ref{theoweak}).

Thus, re-introducing time, we map, in the case of the multi-soliton solutions, the dynamics of the KdV field to that of quasi-particle positions, which satisfy at all times
\beq\label{bethet}
	y_i  +v_i t= x_i^{(t)} + \frc12 \sum_{j=1\atop j\neq i}^N \sgn(x_i^{(t)}-x_j^{(t)})
	\varphi_{ij},\quad i=1,\ldots,N.
\eeq
These are defined algebraically in terms of spectral and impact parameters via the explicit, finite Algorithm \ref{leftanchor}, and are predictive for local observables of the KdV field. In this way, we have associated to the KdV time evolution a {\em kinetic theory} of effective objects, the quasi-particles, whose empirical density reproduces in the weak Euler scaling limit all conserved densities, and more generally, via the fluid-cell projection, all local observables. We believe that  \eqref{bethet}, \eqref{fcintro} and \eqref{fcintro2} are universal expressions in solitonic models, and not specific to KdV (with appropriate choices of $v_i$ and $\varphi_{ij}$).

An important aspect of our work is that we establish {\em precise and general} conditions for the fluid-cell criterium to be satisfied: simple conditions on spectral parameters, and some density conditions:
we require that
the contracted coordinates around any $x_*\in\R$ (Definition \ref{definatmag}) $X_i = x_i + \sum_{k:x_k\in[x_*,x_i)} \varphi_{ik}\,(x_i\geq x_*),\,x_i - \sum_{k:x_k\in(x_i,x_*)} \varphi_{ik}\,(x_i<x_*)$ have a density around $x_*$ that does not increase too fast with $N$. We give the example of the ``dilute'' gas, whose (non-contracted) density decreases as $N^{-\ep}$ for any $\ep>0$ as small as desired. We note that our assumptions do not require an analysis of the full $N$-soliton field or the inverse scattering structures for the KdV equation -- all information lies within the solution to the SBE.

The inequalities \eqref{separationintro} are very natural: in the weak limit, a simple heuristic argument shows that they correspond to the positivity of the ``density of space'' \cite{doyon2018geometric,doyon2020lecture}, the amount of space that remains for the solitons to move in. Thus, we believe our construction holds as near as desired to the soliton condensate \cite{el2021soliton,suret2024soliton}, where the density of space vanishes and the maximal soliton density is obtained.


Our techniques also allow us to obtain new fundamental results on the KdV $N$-soliton field: a bound on its support (Corollary \ref{corolextent}), and a bound on the supremum of the field and its derivatives (Corollary \ref{corolbound2}), as $N$ gets large. Bounds on the support can be obtained by Riemann-Hilbert techniques, this is done in \cite{girotti2021rigorous} for a particular family of $N$-soliton solutions as $N\to\infty$. Our bound is not as tight, however we believe it applies on a wider family of such limits (we leave an in-depth comparison for future works). Concerning bounds on the supremum of the field, we are only aware of results in \cite{lundina1985compactness,gesztesy1992limits}. Our bound is also weaker, however we obtain bounds on the derivatives as well, for which it seems fully rigorous results were still lacking.

Interestingly our proofs do not explicitly require the phenomenon of Anderson localisation nor any randomness. Anderson localisation was an important principle in the recent works \cite{aggarwal2025effective,aggarwal2026asymptoticscatteringrelationtoda} in order to localise Lax eigenvectors (see below) -- our result suggests that the phenomenon of Anderson localisation due to randomness may not be required to have a notion of quasi-particles, although more investigations would be necessary. Instead, we construct new tau-function representations of the KdV muti-soliton solution, which allow us to easily recognise which solitons contribute to the field around the observation point. We note that the specific form of the two-body scattering shift $\varphi(\chi,\chi')$ plays a crucial role: the fact that $\log|\chi-\chi'|$ is the one-dimensional projection of the Green's function for the non-positive Laplacian operator $\nabla^2$ in two dimensions is important; this fact was also used in analysing the nonlinear dispersion relation of related soliton gases in \cite{kuijlaars2021minimal}. We believe the techniques based on tau functions developed here are powerful enough to be widely applicable, including, for instance, to higher-dimensional integrable field equations such as the KP equation \cite{ablowitz1981solitons}.

\medskip
\noindent {\em Literature remarks---} The belief in the existence of ``stable'' quasi-particles in quantum integrable models stems from the Bethe ansatz \cite{takahashi1999thermodynamics}, and in classical integrability it is suggested by the early work of Babelon and Bernard \cite{babelon1993sine} on the sine-Gordon model, and in Zakharov's and El's works on soliton gases cited above. In general quasi-particles are a fundamental concept of GHD, see e.g.~\cite{doyon2020lecture}. In the box-ball system \cite{takahashi1990soliton}, a ``discretization'' of the KdV equation, there are explicit algorithms to locate solitons, in particular that of Ferrari, Nguyen,  Rolla and Wang \cite{ferrari2021soliton}; and the recent work of Aggarwal \cite{aggarwal2025effective,aggarwal2026asymptoticscatteringrelationtoda} locates quasi-particles in terms of Lax eigenvectors. These latter work lend credence to the existence of quasi-particles.

The SBE Eq.~\eqref{bethet} has appeared before. The kinetic equation of soliton gases (or GHD equation) strongly suggests that quasi-particles' trajectories are obtained by accumulating two-body scattering shifts  $\varphi_{ij}$ at soliton crossings. This is the intuitive picture referred to as the {\em collision rate ansatz} \cite{el2005kinetic,doyon2018soliton,doyon2019generalized,congy2021soliton,bonnemain2022generalized,bonnemain2025soliton}, a local version of the asymptotic factorised scattering \eqref{factscat}. Although in these works \eqref{bethet} was not written, it is closely related to the soliton rate ansatz. It was first  proposed in \cite{doyon2024new} that solitons in classical integrable models should be assigned positions $x_i^{(t)}$'s that solve \eqref{bethet}, as this implements the collision rate ansatz. In fact, the special case $\varphi_{ij}=-a$ (negative constant) first arose indirectly as the ``contraction map'' used to solve the hard-rod model, see e.g.~\cite{spohn2012large}. Eq.~\eqref{bethet} also appears to be related to the canonical sine-Gordon soliton coordinates introduced in \cite{babelon1993sine}, and is implemented by the flea-gas algorithm \cite{doyon2018soliton}, used to solve the GHD equations. Eq.~\eqref{bethet} arose in its general form written here for the first time from an analysis of the Euler-scale limit of Bethe ansatz wave functions in the quantum Lieb-Liniger model  \cite{doyon2023ab}; it directly comes out from a semiclassical limit of the Bethe ansatz wave functions, thus its name. Under the condition $\varphi_{ij}\geq0$ (not satisfied in KdV), it was 
shown  \cite{doyon2026generalised}, and further developed in \cite{bonnemain2025hamiltonian}, that a version of \eqref{bethet} where the sign function is mollified (and a larger family of such equations) have a unique solution corresponding to trajectories of local integrable Hamiltonian systems with elastic and factorised scattering as \eqref{factscat}, arising from so-called $T\b T$-deformations. It is not too hard to show that the GHD equation arises for the weak limit of the empirical density associated to \eqref{bethet}, see e.g.~\cite{croydon2021generalized,doyon2024new,doyon2023ab}. It was in fact conjectured in \cite{doyon2024new} that \eqref{bethet} should give a good description of many-body integrable models at all orders of hydrodynamics, including their fluctuations, an idea developed numerically in \cite{urilyon2025simulating,kethepalli2026ballistic,urilyon2026anomalousdiffusionsuperdiffusionintegrable}.

Crucially, in the context of solitonic models, it is possible to show\footnote{Makiko Sasada, private communication.}, using the results of \cite{ferrari2021soliton}, that Eq.~\eqref{bethe} with for $\varphi_{ij} = -2\min\{i,j\}$ describes solitons' positions arising from the ``slot decomposition'' of the box-ball cellular automaton \cite{takahashi1990soliton}; this is closely related to the proof of the GHD equation in this model \cite{croydon2021generalized}. Further, recently, it was shown by Aggarwal \cite{aggarwal2025effective,aggarwal2026asymptoticscatteringrelationtoda} that, again with appropriate $\varphi_{ij}$, they give a good description of the centres of Lax eigenvectors in thermal states of the Toda chain, leading to a proof of the GHD prediction for Euler-scale correlation functions of conserved densities \cite{aggarwal2026fluctuationstodalattice}. The latter works essentially gives a solution to the quasi-particle problem -- as far as we are aware, Eqs.~\eqref{fcintro}, \eqref{fcintro2} were not explicitly proven, but there are locality results that can be translated into properties of fluid-cell projections, see e.g.~\cite[point 2, page 5; Cor 5.5, 5.6]{aggarwal2026asymptoticscatteringrelationtoda}.

In fact, the definition of solitons' positions via Lax eigenvectors was proposed earlier in \cite{bulchandani2019kinetic,fache2025dissipation} (referred to as the Bethe-Lax correspondence), and further studied in the work \cite{bonnemain2026exact} on KdV.

\medskip
\noindent
{\em Organisation of the paper---} The paper is organised as follows. In Sec \ref{sectstatement} we make the general discussion above more precise, framing the quasi-particle problem -- the KdV solitons' positions --, we express precise, simplified versions of our four main theorems, and we explain the example of the ultra-dilute soliton gases. In Sec.~\ref{secmain} we present the setup, and provide a set of intermediate definitions and lemmas, including new tau functions representation for the $N$-soliton solution (the in-out representation, and the centred representation) as well as our definition of quasi-particle positions. In Sec.~\ref{sectmainresult} we express the main results, leaving longer proofs to Sec.~\ref{secproofs}. In Sec.~\ref{sectdisc}, we discuss the physical meaning of the results, discuss how they should apply to dense soliton gases, and discuss how the kinetic equation of soliton gases should emerge. Finally, we conclude in Sec.~\ref{secconclu}.

\medskip
\noindent {\em Notation---} Throughout, we use $\Z_+ = \{0,1,2,\ldots\}$,  $\N=\{1,2,3,\ldots\}$, and for $N\in\N$ we use the notation $\dbra 1,N\dket = \{1,2,\ldots,N\}$. For any $z\in\R,\,I\subset\R$, we denote $\dist(z,I) = \inf_{x\in I}(|z-x|)$. We use the boldface font for vectors $\bs a=(a_1,a_2,\ldots,a_N)\in\R^N$, and we write $\1_N = (1,1,\ldots,1)\in\R^N$. We denote the set of real vectors in $\R^N$ with positive, ordered entries as
\beq\label{Rpudef}
	\Rpu^N = \{\bs \chi\in\R^N: 0<\chi_1<\chi_2<\ldots<\chi_N\},
\eeq
and those where coordinates are all different as
\beq
	\Rne^N = \{\bs x\in\R^N: x_i\neq x_j\;\forall\;i\neq j\}.
\eeq 
For any $s\subseteq \dbra 1,N\dket$ and $\bs a\in \R^N$, we denote
\beq\label{setext}
	\bs a_s = (a_i)_{i\in s}
\eeq
the restriction of $\bs a$ onto $s$, with the induced order. We take the convention that the sign function $\sgn$ takes value 1 at 0:
\beq\label{sgn}
	\sgn(a) := \lt\{
	\ba{ll}1 & (a\geq 0) \\ -1  & (a<0). \ea\rt.
\eeq
For every subset $s\subseteq\N$, it will also be convenient to define the ``indicator sign function''
\beq\label{sgns}
	\sgn_s(j)=\lt\{\ba{ll} 1 & (j\in s)\\ -1 & (\mbox{otherwise}).\ea\rt.
\eeq

By a slight abuse of notation, we write $[a,b)$ for the standard closed-open interval from $a$ to $b$ if $a<b$, and for the open-closed interval $(b,a]$ if $a>b$:
\beq\label{defab}
	[a,b):=(b,a]\mbox{ if }a>b,\quad \emptyset \mbox{ if }a=b.
\eeq

For any $s\subset\N$, $\argmin_{i\in s} (f(i))$ is the set of all arguments that minimise $f$, and similarly for $\argmax_{i\in s} (f(i))$:
\beq\label{argminmax}
	 \argmin_{i\in s} (f(i)) = \{j\in s: f(j) = \min_{i\in s}f(i)\},\quad
	 \argmax_{i\in s} (f(i)) = \{j\in s: f(j) = \max_{i\in s}f(i)\}.
\eeq
If there is a single minimum, then we will sometimes understand these functions as returning that single value, instead of the single-element set.

Finally, for any $Z\subseteq \R$, we denote the largest distance between nearest neighbours are
\beq\label{neigh}
	\neigh(Z) = \sup_{x\in Z}
	\Big(\dist(x,Z\setminus\{x\})\Big).
\eeq

\section{Statement of the problem and overview of the main results}\label{sectstatement}

The basic ideas behind the problem of solitons' positions for the KdV equation were expressed in the Introduction. We now give details of the statement of the problem, and express simplified versions of our main theorems.

\subsection{Background: $N$-soliton solutions and conserved densities}\label{ssectbackground}

A $N$-soliton solution $u(x,t) = u_{\bs\chi,\bs y}(x,t)$ to the KdV equation \eqref{kdv} on the plane $(x,t)\in\R^2$ with vanishing asymptotic conditions $\lim_{x\to\pm\infty} u_{\bs\chi,\bs y}(x,t)=0\,\forall\,t\in\R$, is a $C^\infty$ function $\R^2\to\R$ characterised by $2N$ real parameters: the {\em spectral parameters} $\chi_i,\,i=1,\ldots,N$, with $\bs \chi\in\Rpu^N$, and the {\em impact parameters} $y_i\in\R,\,i=1,\ldots N$. See \cite{miura1968korteweg,gardner1974korteweg,lundina1985compactness,gesztesy1992limits} for fundamental results on the $N$-soliton solution, and in particular \cite{gesztesy1992limits,girotti2021rigorous,jenkins2025approximation} for results about the $N\to\infty$ limit, and Section \ref{ssecttau} for explicit expressions. Its time evolution is expressed as a linear shift of impact parameters:
\beq\label{timeevol}
	u_{\bs\chi,\bs y}(x,t) = u_{\bs \chi,\bs y + \bs v t}(x),\quad
	\bs v = (v_i)_i = (4\chi_i^2)_i
\eeq
where by a slight abuse of notation we denote $u_{\bs \chi,\bs y}(x,0) = u_{\bs \chi,\bs y}(x)$. At asymptotically large times, the $N$-soliton field separates into single-soliton solutions \eqref{1sol}: for every $\bs\chi\in\Rpu^N,\,\bs y\in\R^N$ and $x,\,v\in\R$, we have
\beq\label{uasymp}
	\lim_{t\to\pm\infty} u_{\bs\chi,\bs y}(x+vt,t)  = \lim_{t\to\pm\infty}
	\sum_i u_{\chi_i,x_i^{\pm}+v_i t}(x+vt)=
	\sum_{i=1}^N \lt\{\ba{ll}
	u_{\chi_i,x_i^\pm}(x) & (v=v_i) \\
	0 & (\mbox{otherwise})
	\ea\rt\}
\eeq
where
\beq\label{outgoingincoming}
	x_i^\pm = y_i \pm \frc12\sum_{j\neq i}\sgn(v_j-v_i)\varphi_{ij}
\eeq
are the outgoing and incoming impact parameters, respectively. Their differences are the Wigner scattering shifts associated to this $N$-soliton collision event, and take the factorised form \eqref{factscat}, and
\beq\label{yaverage}
	 y_i = \frc{x_i^++x_i^-}{2}.
\eeq
More generally, whenever impact parameters are well separated, the $N$-soliton field separates into 1-soliton profiles: for every ``velocity vector'' $\bs w\in\R^N$, one has the decomposition
\beq\label{uasympgen}
	\lim_{t\to\infty} u_{\bs\chi,\bs y+\bs w t}(x+vt)  =
	\sum_{i=1}^N \lt\{\ba{ll}
	u_{\chi_i,\t x_i}(x) & (v=w_i) \\
	0 & (\mbox{otherwise})
	\ea\rt\}
\eeq
where
\beq\label{tx}
	\t x_i = y_i + \frc12\sum_{j\neq i}\sgn(w_j-w_i)\varphi_{ij}.
\eeq
These standard results are shown in Section \ref{ssecttau} for completeness.

Quantities that will play an important role below are the {\em right- and left-extremal positions}: for every $N\in\N$ and $\bs\chi\in\Rpu^N,\,\bs y\in\R^N$, set
\beq\label{Xextremeintro}
	X_i^\pm(\bs\chi,\bs y) := y_i \mp \frc12\sum_{j=1\atop j\neq i}^N\varphi_{ij},\quad i=1,\ldots,N.
\eeq
As we will see, right-extremal positions are good solitons' positions for those the furthest to the right, and similarly for left-extremal positions. A way of parametrising a $N$-soliton solution is, instead of $y_i$'s, in terms of $a_i = X_i^+(\bs\chi,\bs y)$, or $\t a_i = X_i^-(\bs\chi,\bs y)$, as these turn out to be the ``na\"ive'' impact parameters that appear naturally in  determinantal representations of the tau function for $N$-soliton solutions, e.g.~Eq.~\eqref{ay} below. One also talks about ``norming constants'', simple functions of right- or left-extremal positions, a standard example being that of \cite[Eq 3.9]{gardner1974korteweg}, which in our language are expressed as
\beq\label{normingconst}
	c_i = \sqrt{2\chi_i}\re^{\chi_i a_i} 
	= \sqrt{2\chi_i}\re^{\chi_i X_i^+(\bs\chi,\bs y)}.
\eeq

The KdV equation is integrable. The existence of the multi-soliton solution, with elastic and factorised scattering, are fundamental characteristics of integrability. Another fundamental property is the presence of an infinite number of conserved quantities with local densities \cite{miura1968korteweg,gardner1974korteweg}. There exist an infinite number of polynomial functions $P_k$ of $u,\,\p_x u,\,\p_x^2 u,\ldots$, the conserved densities,
\beq\label{conspoly}
	\mathtt P_k[u](x) = P_k\big(u(x),\p_xu(x),\p_x^2u(x),\ldots \big ),\quad k=0,1,2,\ldots
\eeq
such that, for any $u(x,t)$ satisfying \eqref{kdv}, their total integral on $\R$ is conserved in time:
\beq
	\p_t \int_{-\infty}^\infty \dd x\,\mathtt P_k[u(\cdot,t)](x) = 0.
\eeq
With an appropriate normalisation, on the $N$-soliton solution these integrals take the form
\beq\label{integralconserved}
	\int_{-\infty}^\infty \dd x\,\mathtt P_k[u_{\bs\chi,\bs y}](x)
	=\sum_{i=1}^N\chi_i^{2k+1}.
\eeq
In particular,
\beq
	P_0(u) = \frc{u}4,\quad P_1(u) = \frc{3u^2}{16},\quad
	P_2(u,\p_x u) = \frc{5}{64}\big((\p_x u)^2 -2  u^3\big).
\eeq

In the context of the $N$-soliton solution to the KdV equation, a soliton gas can naturally be seen as a probability distribution of spectral and impact parameters $\bs\chi,\,\bs y$, taken in the limit where $N$ is large. What this probability distribution may be is still under investigation \cite{suret2024soliton}, see also our Conclusion, Section \ref{secconclu}.

In this paper, we define a soliton gas as {\em a sequence of spectral parameters, and a sequence of impact parameters, with increasing number of solitons},
\beq\label{seqintro}
	\N\ni N\mapsto \bs\chi\in\Rpu^N,\ \bs y\in\R^N.
\eeq
One may simply adjoin $\chi_{N+1},\,y_{N+1}$ to $\bs\chi\in\Rpu^N,\,\bs y\in\R^N$ for every $N$, but this is not necessary, and our analysis (in particular of projections) in fact requires more general sequences. The interest is in understanding what happens as $N\to\infty$. Fundamental results about this limit, with bounded $\max_{i\in\dbra 1,N\dket}(\chi_i)$ and certain weak conditions on $\bs y_i$, were obtained in \cite{gesztesy1992limits}, where the simple adjoining procedure was taken, including the existence of the limit. Further, it has been established how certain $N$-soliton solutions reproduce,  as $N\to\infty$, a KdV solution interpolating between a cnoidal wave and an asymptoticlally vanishing field \cite{girotti2021rigorous}; and a recent analysis \cite{jenkins2025approximation} connects finite-gap solutions to certain $N\to\infty$ limits of $N$-soliton solutions. Here, we do not need $\max_{i\in\dbra 1,N\dket}(\chi_i)$ to be bounded in our general results, although it is convenient to assume this in the specialisation of our results below. We do not need either the limit $N\to\infty$ of $u_{\bs\chi,\bs y}$ or of $\bs\chi,\,\bs y$ to converge in any sense.

\subsection{Time-of-flight thought experiment}\label{ssecttof}

The main object of a KdV soliton gas is the {\em density of states}: a function $\rho(\chi,x,t)$ on $\R_+\times\R\times \R$, which, in some sense, describes the density of solitons in phase-space $\chi,x$ at time $t$. As far as we are aware, there is no rigorous definition of $\rho(\chi,x,t)$ yet in terms of $u_{\bs\chi,\bs y}(x,t)$. Notwithstanding this lack of precision, the soliton-gas theory still predicts a universal kinetic equation \cite{suret2024soliton}. Here, we are looking to understand {\em how to define} $\rho(\chi,x,t)$, and {\em under what conditions} on the sequences \eqref{seqintro} of spectral and impact parameters it is related $u_{\bs\chi,\bs y}$ as in \eqref{euler} and its extension to generic observables.

We start by  emphasising that {\em there is no need to have a notion of solitons' positions in order to define a density of states $\rho(\chi,x,t)$}. Indeed, a natural definition of $\rho(\chi,x,t)$ in terms of $u_{\bs\chi,\bs y}(x,t)$ in the large-$N$ limit is from the physical picture of the ``time-of-flight thought experiment'', see e.g.~\cite{doyon2020lecture}. Although we will not directly implement this exact procedure, our construction amounts to a slight modification of it, so it is instructive to describe what it is in the present context. The discussion in this section is not fully mathematically rigorous.

Under appropriate conditions of finite density, as $N$ gets large, we expect $u_{\bs\chi,\bs y}(\cdot,Nt)$ to be essentially supported on an interval of length $\ell(t)\propto N$, for all $t\in\R$ (see \cite[Lem 2.1]{girotti2021rigorous} for a family of examples, and Theorem \ref{theoextentintro} for a result on the support). So we take the ``macroscopic length scale'' as being equal to $N$. Now take an interval of length $L>0$ and centre $x_*\in\R$,
\beq\label{Ioverview}
	I = [x_*-L/2,x_*+L/2],
\eeq
that is large but not too large, say
\beq\label{Loverview}
	L = N^\lambda,\quad 0<\lambda<1.
\eeq
$L$ is a ``mesoscopic length scale'' and $I$ is a ``fluid cell''. Then construct a na\"ive fluid-cell projected field at time $t$:
\beq
	u^{(I)}_{\bs\chi,\bs y}(x,t) := \lt\{\ba{ll}
	u_{\bs\chi,\bs y}(x,t) & (x\in I)\\
	0 & (\dist(x,I)>\Delta x)\\
	\mbox{some smooth interpolation} & (\mbox{otherwise})
	\ea\rt.
\eeq
for some $\Delta x$ small enough. Denoting by $\tau_T$ the evolution under \eqref{kdv} to time $T$, that is $u(x,T) = (\tau_Tu(\cdot,0))(x)$ for  $u(\cdot,0)\in C^\infty(\R)$, evaluate the set of spectral parameters that represent the fluid-cell projected field: the ones occurring asymptotically in the ``thought experiment'' of letting it evolve for a long time,
\beq\label{setchiprime}
	U=\{\chi'>0:\lim_{T\to\infty} (\tau_T u^{(I)}_{\bs\chi,\bs y}(\cdot,t))(x+4{\chi'}^2T) \neq 0\} = \{\chi_i'\}_{i=1}^M,\quad M=|U|.
\eeq
The soliton resolution conjecture says that generic initial conditions to the KdV equation \eqref{kdv} give, asymptotically, fields that have solitonic and radiative parts \cite{gardner1974korteweg}; the above extracts spectral parameters of the solitonic part, as we only look at positive velocities $4{\chi'}^2$ (while radiation modes have negative velocities) -- see \eqref{uasymp}. Then define the ``empirical density'' associated to this thought experiment:
\beq
	\t\rho_{\rm emp}(\chi,x_*,t) = \frc1L\sum_{i=1}^M
	\delta(\chi-\chi_i').
\eeq
The density of states is the following weak limit, at macroscopic positions and times: for every $\chi>0,\,x,\,t\in\R$,
\beq\label{rhowlim}
	\rho(\chi, x,t) = \wlimu{N\to\infty}\t\rho_{\rm emp}(\chi,N x,N t).
\eeq

That is, we cut-out a mesoscopic piece of the $N$-soliton field in a neighbourhood of the point $x_*$ of mesoscopic length $L$ at time $t$, and let this piece evolve in time until solitons separate. This ``extracts'' the solitons that were, so to speak, in that neighbourhood, and the density of states is the corresponding phase-space density at large scales. See Fig.~\ref{figtof}. This thought experiment is, in fact,  implemented in actual experiments on cold atoms \cite{malvania2021generalized}. One expects that, under physically natural probability distributions, the weak limit in \eqref{rhowlim} exists almost surely, and is an almost-sure function (that is, the result does not fluctuate); otherwise one may take an average over the distribution.

In order for any emergent dynamical equation for $\rho(\chi,x,t)$ to be predictive, one would also like to go in the other direction, and {\em extract information about $u_{\bs \chi,\bs y}(Nx,Nt)$ from $\rho(\chi,x,t)$}. Clearly, because the na\"ive fluid-cell projected field lies on $I$, we have, with the only approximation being that we take the solitonic part only,
\beq\label{Prho}
	\frc1L\int_{x_*-L/2-\Delta x}^{x_*+L/2+\Delta x}\dd x\,
	\mathtt P_k[u_{\bs\chi,\bs y}(\cdot,Nt)](x)
	=
	\frc1L\int_{-\infty}^\infty \dd x\,
	\mathtt P_k[u^{(I)}_{\bs\chi,\bs y}(\cdot,Nt)](x)
	\stackrel{\text{solitonic part}}=
	\frc1L\sum_{i=1}^M {\chi_i'}^{2k+1}.
\eeq
With $x_*= N x$, and taking the limit $N\to\infty$ where we can replace $\int_{[x_*-L/2-\Delta x , x_*+L/2+\Delta x]}\dd x$ by $\int_{I}\dd x$, this becomes the {\em fluid-cell average of $\mathtt P_k[u_{\bs\chi,\bs y}]$ at the macroscopic point $x,t$}. As in the weak Euler-scale limit we may divide $\R$ into mesoscopic intervals where the Schwartz function against which we integrate is approximately constant, this implies \eqref{euler}.
That is, $\rho(\chi,x,t)$ gives us information about the total conserved quantities present in mesoscopic intervals, or local information at the Euler scale.

The time-of-flight construction raises a few questions.

\bi
\item First, Eq.~\eqref{setchiprime} only extracts the solitonic component of $u^{(I)}_{\bs\chi,\bs y}(x,t)$. Yet, it is expected to have a radiative component as well. Is it indeed true that this radiative part does not contribute in evaluating averages, i.e.~terms neglected in Eq.~\eqref{Prho} are negligible?

\item Second, it would be natural for the extracted spectral parameters to form a {\em subset} of the original spectral parameters:
\beq
	\{\chi_i':i=1,\ldots,M\} \subseteq \{\chi_j:j=1,\ldots,N\}\mbox{ ?}
\eeq
That is, we are extracting solitons that were already in the $N$-soliton solution. But this is not immediate from the time-of-flight construction \eqref{setchiprime}.

\item Third, what about predictions for fluid-cell averages of other observables, say functions of $u$ and finitely-many of its derivatives, $O(u,\p_x u,\ldots)$:
\beq\label{Frho}
	\lim_{N\to\infty} \frc1L\int_I \dd x'\,O(u_{\bs\chi,\bs y}(x',N t),
	\p_{x'} u_{\bs\chi,\bs y}(x',N t),\ldots) 
	= \mbox{?}
\eeq
The knowledge of the left-hand side of \eqref{Prho} for all $k$ is not sufficient to determine \eqref{Frho} more generally.

\item Finally, for what values of the mesoscopic exponent $\lambda$, Eq.~\eqref{Loverview}, the above would work, and under what conditions on how spectral and impact parameters behave as $N$ gets large?
\ei

Even with the current extensive theory of the KdV equation, or more generally of integrable field equations and integrable interacting particles (where similar heuristic constructions can be done and questions asked), proving the statement \eqref{rhowlim} with this time-of-flight construction, proving \eqref{Prho}, and evaluating \eqref{Frho}, appear to be out of reach.

\subsection{Statement of the problem: where solitons are in a soliton gas}\label{ssectquestion}

We concentrate on field configurations $u$ lying in the space of solitonic fields,
\beq\label{defiS}
	\mathfrak S = \{u_{\bs\chi,\bs y}:N=0,1,2,\ldots,\,\bs \chi\in\Rpu^N,\,
	\bs y\in\R^N\}\subset C^\infty(\R)
\eeq
(where for $N=0$ we have $u_{\emptyset,\emptyset}=0$). Time evolution is given by \eqref{timeevol}, so we may set without loss of generality $t=0$. We are looking to improve on the time-of-flight construction, and define, for every interval $I\subset \R$, a fluid-cell projection $\mathcal F_I$:
\beq\label{mapintro}
	\mathcal F_I:u_{\bs\chi,\bs y} \mapsto u_{\bs\chi',\bs y'}, \quad
	\quad (\bs\chi',\bs y') = \h{\mathcal P}_I (\bs\chi,\bs y)\in\Rpu^M\times \R^M,\quad 
	 M\leq N.
\eeq
Note that because this maps to a multi-soliton field, without radiation mode, the scattering thought experiment of the time-of-flight construction directly gives $\bs\chi'$, by \eqref{uasymp}.

A natural proposal is to have a notion of effective positions of solitons, the {\em quasi-particle positions}, within a multi-soliton field $u_{\bs\chi,\bs y}(x)$. Let us discuss this more at length.

We first note that from the time evolution \eqref{timeevol}, it may be tempting to see spectral parameters as ``action variables'' and impact parameters as non-compact versions of ``angle variables'' from classical integrability \cite{babelon2003introduction}. However, linearisation of time evolution by mapping to impact parameters is expected to hold by the general many-body scattering theory, even without integrability \cite{derezinski2013scattering}. In fact, by general properties of scattering processes, the scattering map $\Sigma:(\bs\chi,\bs x^-)\mapsto (\bs\chi,\bs x^+)$, which takes incoming to outgoing asymptotic data, always factorises in terms of M\o ller transforms $\Sigma = \Omega_+^{-1}\Omega_-$. Outgoing / incoming M\o ller transforms $\Omega_\pm$ connects freely-evolving asymptotic coordinates $(\bs\chi,\bs x^\pm)$ to interacting objects. Here interacting objects are the solitonic fields in $\mathfrak S$, i.e.~$\Omega_\pm:(\bs\chi,\bs x^\pm)\mapsto u_{\bs\chi,\bs x^\pm \mp \frc12\sum_{j\neq i}\Delta_{ij}}$ where $\Delta_{ij} = \sgn(v_i-v_j)\varphi_{ij}$. But, contrary to linearisation of the dynamics, elastic and factorised scattering \eqref{outgoingincoming} are properties of $\Sigma$ that are, indeed, fundamental to integrability. Using this, for the purpose of our discussion it is convenient to simply define ``the M\o ller transform'' as
\beqa
	\Omega\quad :
	\quad
	\mathsf S:=\coprod_{N=0}^\infty \Rpu^N\times \R^N&\to&\mathfrak S\n
	(\bs\chi,\bs y)&\mapsto& u_{\bs\chi,\bs y}.
	\label{Omega}
\eeqa
It is simple to see, from our discussion in Sec.~\ref{ssectbackground}, that $\Omega$ is a bijection.

As mentioned in the Introduction, it is conjectured that by integrability there exists stable quasi-particles: it should be possible to locate solitons within any multi-soliton KdV field even at finite times. Quasi-particles may be formally defined by requiring that the M\o ller transform factorises as
\beq\label{Moller}
	\Omega = \Upsilon \mathcal U
\eeq
with
\beqa
	\mathcal U
	\quad :
	\quad
	\mathsf S&\to&\mathsf Q\subseteq\mathsf S\n
	(\bs\chi,\bs y)&\mapsto&  (\bs\chi,\bs x).
	\label{mapU}
\eeqa
Here $\bs x\in\R^N$ is the vector of quasi-particles' positions (solitons' positions), representing where individual solitons lie in space within the multi-soliton field. For these to be good solitons' positions, $\Upsilon$ maps solitons' positions to the KdV field with {\em appropriate locality properties}. Determining such locality properties and the space $\mathsf Q$ of quasi-particles data (for given $\bs\chi$, what vectors of positions $\bs x$ are possible?) are non-trivial problems. Below we write the vector of quasi-particle positions as a function\footnote{Here and below we make this slight abuse of notation whereby $\bs x(\bs\chi,\bs y)$ is the vector of quasi-particle positions while $\bs x$ is just a variable, here chosen to simplify the notation; that is, we do not use $\bs x$ as the map $\bs\chi,\bs y\mapsto \bs x(\bs\chi,\bs y)$.}
\beq\label{xUy}
 	\bs x = \bs x(\bs\chi,\bs y),
\eeq
that is $\mathcal U(\bs\chi,\bs y) = (\bs \chi,\bs x(\bs\chi,\bs y))$.

A natural, informal expression of locality for $\Upsilon:\mathsf Q \to\mathfrak S$ is that, in any neighbourhood of $x\in\R$, local observables $u(x),\,\p_x u(x),\ldots$ and functions of these, where $u = \Upsilon(\bs\chi,\bs x)$,  can be solely expressed in terms of solitons data $\chi_i,x_i$ for solitons whose positions $x_i$ lie in this neighbourhoods.

Recall that we loosely expressed in \eqref{loose} the quasi-particle criterium. In terms of the maps introduced above, for any interval $I\subset \R$, the fluid-cell map $\mathcal F_I$ from \eqref{loose} is defined as
\beq\label{fcupsilon}
	\mathcal F_I = \Upsilon \mathcal P_I \Upsilon^{-1} = \Omega \mathcal U^{-1}\mathcal P_I\mathcal U\Omega^{-1} ,\quad
	\mathcal P_I(\bs\chi,\bs x) = ((\chi_i)_{i:x_i\in I},\,(x_i)_{i:x_i\in I}).
\eeq
Then, a sligthly more precise expression of the locality of  $\Upsilon$ in \eqref{Moller} is as follows: for every $I$, there exist map $\mathcal F_I:\mathfrak S\to\mathfrak S$ such that $u_{\bs\chi,\bs y}|_I=\big(\mathcal F_Iu_{\bs\chi,\bs y}\big)|_I$ and $\big(\mathcal F_Iu_{\bs\chi,\bs y}\big)|_{\R\setminus I}=0$ in an appropriate sense as $N\to\infty$, and such that $\Upsilon$ is the intertwiner relating $\mathcal F_I$ and the simple projection $\mathcal P_I$, 
\beq
	\mathcal F_I\Upsilon = \Upsilon \mathcal P_I.
\eeq

For \eqref{fcupsilon} to make sense, we must have
\beq\label{Piinclusion}
	\mathcal P_I \mathsf Q\subseteq\mathsf Q
\eeq
for all intervals $I$. Then, it is clear that
\beq
	\mathcal F_I^2 = \mathcal F_I.
\eeq
In terms of \eqref{mapintro},
\beq\label{hatPUP}
	\hat{\mathcal P}_I = \mathcal U^{-1}\mathcal P_I\mathcal U
\eeq
is a projection, and
\beq\label{fchatP}
	\mathcal F_I = \Omega\h{\mathcal P}_I\Omega^{-1}.
\eeq

What should be the form of $\mathsf Q$, $\mathcal U$ and  $\h{\mathcal P}_I$? Given $\mathcal U$, it is in fact easy to guess $\h{\mathcal P}_I$. Indeed, when solitons are well separated, which happens when impact parameters  are well separated, Eq.~\eqref{uasympgen}, then solitons are easily identifiable, and we should have, for every $\bs w\in\R^N$ (see \eqref{tx}):
\beq\label{conditionUintro}
	\lim_{t\to\infty} \bs x(\bs\chi,\bs y + \bs w t)-\bs w t=
	\t{\bs x}.
\eeq
Then, a fluid-cell projection \eqref{mapintro} acts as follows: the set of spectral parameters is a subset of the original set,
\beq\label{chiprimeintro}
	\bs\chi' = (\chi_i)_{i:x_i\in I}\quad\mbox{where}\quad \bs x=\bs x(\bs\chi,\bs y),
\eeq
and \eqref{conditionUintro} with \eqref{tx} naturally suggest
\beq\label{yprimeintro}
	\bs y' = \Big(y_i + \frc12\sum_{j\,:\,x_j\,\text{to the right of}\,I} \varphi_{ij}
	-
	\frc12\sum_{j\,:\,x_j\,\text{to the left of}\,I} \varphi_{ij}\Big)_{i:x_i\in I}
	\quad\mbox{where}\quad \bs x=\bs x(\bs\chi,\bs y).
\eeq
Here, impact parameters are shifted in accordance with their scattering with the solitons that have been taken away. Eqs.~\eqref{chiprimeintro} and \eqref{yprimeintro} define the map $\hat {\mathcal P}_I$ in \eqref{mapintro}.

We will show that a precise version of the quasi-particle criterium \eqref{loose} holds with such fluid-cell projections, if an appropriate choice of quasi-particle positions \eqref{xUy} is used. We will also show \eqref{Piinclusion}, hence $\mathcal F_I$ and $\h{\mathcal P}_I$ are indeed projections.

Once we have this, then the natural empirical density \eqref{empirical} is
\beq
	\rho_{\rm emp}(\chi,x) = \sum_{i=1}^N
	\delta(\chi-\chi_i)\delta(x-x_i)
\eeq
and the fluid-cell mean \eqref{Prho} follows, or more generally
\beq\label{Frho2}
	\lim_{N\to\infty} \frc1L\int_I \dd x'\,O(u_{\bs\chi,\bs y}(x'),
	\p_{x'} u_{\bs\chi,\bs y}(x'),\ldots) 
	=
	\lim_{N\to\infty} \frc1L\int_{-\infty}^\infty \dd x'\,O(u_{\bs\chi',\bs y'}(x'),
	\p_{x'} u_{\bs\chi',\bs y'}(x'),\ldots).
\eeq
We will also show the Euler scaling limit \eqref{euler}.

It is important to note that, from \eqref{timeevol}, {\em the KdV dynamics is transferred to a dynamics for quasi-particle positions}: solitons' trajectories $t\mapsto \bs x^{(t)}$ are given by
\beq
	\bs x^{(t)} = \bs x(\bs \chi,\bs y+\bs vt).
\eeq
The analysis of this dynamics at the Euler scale is a separate problem, which we do not address here rigorously (we provide non-rigorous arguments). We emphasise that relating solitons' positions to local observables of the KdV field $u_{\bs\chi,\bs y}$ itself is an essential problem. It is non-trivial in soliton gases and other models such as the Toda model \cite{aggarwal2025effective,aggarwal2026asymptoticscatteringrelationtoda}, a major difference with respects to integrable models of interacting particles where the quasi-particles are simply related to the particles themselves, such as in the hard rod gas \cite{spohn2012large,doyon2017dynamics}. Having an emergent, large-scale dynamics for solitons seen as point particles is not sufficient: it is essential that this emergent dynamics describes well all large-scale aggregate quantities evaluated from the KdV field --- the quasi-particle criterium must be fulfilled.

\subsection{Overview of the main results}\label{ssectovermain}

We now describe our main rigorous results, whose full expressions are Theorems \ref{theo}, \ref{theofc} and \ref{theoweak}, and Corollaries \ref{corolextent}, \ref{corolbound2}, \ref{corolfc}, \ref{corolvanishfc}, \ref{coroboundedfc}. We use here a simplified set of assumptions; the theorems and corollaries quoted are more general. Our results give the full construction outlined in Section \ref{ssectquestion}. All theorems below are proven in Section \ref{ssectproofoverviewtheorems} from our main theorems.

As mentioned, we take a soliton gas as a sequence of spectral and impact parameters, with increasing number of solitons, Eq.~\eqref{seqintro}. The interest is in understanding what happens as $N\to\infty$. We keep the dependence of $\chi_i$'s and $y_i$'s on $N$ implicit, in order to lighten the notation.

We say that {\em a soliton gas \eqref{seqintro} has regular spectral parameters} if the sequence $N\mapsto \bs\chi$ satisfies the following conditions: there exist $C>\chi_*>0$, and for every $\kappa>0$ there exists $A=A(\kappa)>0$, such that for every $N\in\N$,
\beq
	\label{c1intro}
	\min_{i\in\dbra 1,N\dket}(\chi_i)\geq \chi_*,\quad
	\min_{i\neq j\in\dbra 1,N\dket}|\chi_i-\chi_j|\geq \re^{-A N^{\kappa}},\quad
	\max_{i\in\dbra 1,N\dket}(\chi_i)\leq C.
\eeq
That is, there is a minimum and maximum spectral parameter, and the minimal separation between spectral parameters is at most decaying as a stretched exponential, for every stretch exponent $\omega$. Note that if we distribute impact parameters regularly between $\chi_*$ and $C$, then their minimal separation is of order $1/N$, hence the minimum-separation condition is rather weak. It is a simple matter to construct a soliton gas with regular spectral parameters. Our most general result allow for $\max_i(\chi_i)$ to increase with $N$ as a power law, see \eqref{c1}.

Regularity of spectral parameters is not enough in order to obtain our results on soliton gases: we will need requirements stating that the ``density'' of the soliton gas is not too large. But these requirements must take into account the interaction, with some notion of soliton positions. These density requirements are numerically easy to verify and physically natural, however their mathematical analysis would require more work. Establishing such density requirements is one of the most non-trivial parts of our results. 

\medskip
\noindent {\em First result: support of the multi-soliton field (Corollary \ref{corolextent}).} Recall the right- and left-extremal positions, Eq.~\eqref{Xextremeintro}. These partially account for solitons' interactions. In fact, $X_i^-(\bs\chi,\bs y)$ is more accurate as a position for soliton $i$ the less there are solitons whose extremal left positions are to its left (it becomes a good position if there are none); and vice versa for $X_i^+(\bs\chi,\bs y)$. According to this interpretation, $\min_i(X_i^-(\bs\chi,\bs y))$ and $\max_i(X_i^+(\bs\chi,\bs y))$ are therefore accurate estimates of where the left-most  and right-most solitons lie. Define the {\em core} of the $N$-soliton field as the interval
\beq\label{coreintro}
	\core(\bs\chi,\bs y) = \Big[
	\min_{i\in\dbra 1,N\dket}(X_i^-(\bs\chi,\bs y)) , 
	\max_{i\in\dbra 1,N\dket}(X_i^+(\bs\chi,\bs y))\Big].
\eeq
\begin{theorem}\label{theoextentintro}
Consider a soliton gas \eqref{seqintro} that has regular spectral parameters, Eq.~\eqref{c1intro}. Assume that there exists $D>0$ such that, for all $N\in\N$ and $d\geq 1$,
\beq\label{c2intro}
	\frc{\Big| \{i:|X_i^-(\bs\chi,\bs y)-\min_j(X_j^-(\bs\chi,\bs y))|\leq d\}\Big|}{2d} \leq D,
	\quad
	\frc{\Big| \{i:|X_i^+(\bs\chi,\bs y)-\max_j(X_j^+(\bs\chi,\bs y))|\leq d\}\Big|}{2d} \leq D.
\eeq
Then for every $\alpha>0$, there exist constants $D_n>0,\,\kappa_n\in\R$ for $n=0,1,2,\ldots$, and $E>0$, such that
\beqa
	\lefteqn{\Big|\p_x^n u_{\bs \chi,\bs y}(x)\Big| \leq D_n
	N^{\kappa_n} \exp\Big(EN^{\alpha}-2\chi_*\dist(x,\core(\bs\chi,\bs y))\Big)} &&\n
	&& \qquad \quad \forall\ x\in\R\,:\, \dist(x,\core(\bs\chi,\bs y))>0,\ n\in\Z_+,\ N\in\N.
	\label{projvanishintro}
\eeqa
\end{theorem}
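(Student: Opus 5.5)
The plan is to work directly with the Hirota/Kay--Moses tau-function representation $u_{\bs\chi,\bs y}=2\p_x^2\log\tau$ and expand $\tau$ as a sum over subsets. I treat the region to the right of the core, $x>\max_i X_i^+$; the region to the left follows from the same argument under the reflection $x\mapsto -x$, with $X_i^-$ in place of $X_i^+$.

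\emph{Subset expansion.} Using the norming constants \eqref{normingconst}, $c_i^2/(2\chi_i)=\re^{2\chi_i X_i^+}$, so the Cauchy-determinant expansion of $\tau=\det(1+A(x))$ reads
\beq\no
	\tau(x)=\sum_{s\subseteq\dbra 1,N\dket} T_s(x),\qquad
	T_s(x)=\exp\Big(-2\sum_{i\in s}\chi_i(x-X_i^+)+\sum_{i\neq j\in s}\chi_i\varphi_{ij}\Big).
\eeq
Here I use that $\chi_i\varphi_{ij}=\log|(\chi_i-\chi_j)/(\chi_i+\chi_j)|$ is symmetric in $i,j$. Every $T_s$ is positive, and $T_\emptyset=1$. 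Since $\varphi_{ij}<0$, we get $T_s\le\prod_{i\in s}w_i$ with $w_i=\re^{-2\chi_i(x-X_i^+)}$. Hence $\tau=1+\ep$ with
\beq\no
	0\le\ep\le\prod_i(1+w_i)-1\le W\re^{W},\qquad W=\sum_i w_i .
\eeq

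\emph{Using the density assumption.} Set $\delta=\dist(x,\core)$ and $M=\max_jX_j^+$. Group the indices $i$ into the shells $M-X_i^+\in[d-1,d)$ for $d=1,2,\ldots$. By \eqref{c2intro}, shell $d$ contains at most $2Dd$ indices, and each of them has $w_i\le\re^{-2\chi_*(\delta+d-1)}$ by \eqref{c1intro}. Summing over shells gives $W\le 2D\re^{2\chi_*}\sum_d d\,\re^{-2\chi_* d}\,\re^{-2\chi_*\delta}=:K\re^{-2\chi_*\delta}$, with $K$ independent of $N$. Therefore $\ep\le K\re^{K}\re^{-2\chi_*\delta}$.

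\emph{Derivatives.} Each term satisfies $T_s^{(k)}=(-2\sum_{i\in s}\chi_i)^kT_s$, so $|T_s^{(k)}|\le(2NC)^kT_s$. Summing over $s\neq\emptyset$ gives $|\ep^{(k)}|\le(2NC)^k\ep$ for every $k\ge1$. Now $\p_x^n u=2\p_x^{n+2}\log(1+\ep)$. By the Fa\`a di Bruno formula this is a finite combination, with $n$-dependent coefficients, of products $\prod_{r}\ep^{(k_r)}/(1+\ep)$ with $\sum_r k_r=n+2$ and all $k_r\ge1$. Each such product has at least one factor, and
\beq\no
	\frc{|\ep^{(k)}|}{1+\ep}\le(2NC)^k\min(\ep,1).
\eeq
Bound one factor by $(2NC)^{k}\ep$ and all the others by $(2NC)^{k_r}$. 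This yields $|\p_x^nu(x)|\le D_nN^{n+2}\re^{-2\chi_*\delta}$, which is \eqref{projvanishintro} with $\kappa_n=n+2$. The factor $\re^{EN^\alpha}$ is not even needed here; any $E>0$ is allowed and simply gives slack.

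\emph{Main obstacle.} The delicate step is the sign bookkeeping in the subset expansion. One must check that the exponent really involves $X_i^+$, including the $\frac12$ convention in \eqref{Xextremeintro} and the symmetry of $\chi_i\varphi_{ij}$, and that the pairwise factors are $\le1$ so they can be discarded. That pairwise factors are $\le1$ is exactly the negativity $\varphi_{ij}<0$, that is $\chi_i\neq\chi_j$, which is where the separation condition in \eqref{c1intro} enters (only in its mildest form). For the left region, the analogous expansion uses the determinant normalised at $x\to-\infty$: one factors out $\prod_i\re^{-2\chi_i x}\cdots$, which replaces $X_i^+$ by $X_i^-$ and $x$ by $-x$. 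The remainder of that case is the same argument.
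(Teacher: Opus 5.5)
Your argument is correct, but it takes a different and more direct route than the paper.

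\textbf{The paper's route.} The paper obtains this result as Corollary \ref{corolextent}, a special case of the local-form Theorem \ref{theo}. It places the observation point $x_*$ outside the core and uses the centred tau function with the empty projection. It bounds the interaction term $R_r$ by $(3+2A)|r|N^{\alpha/2}$, using the negativity of the Fourier transform of $\log|\cdot|$. It then controls the remaining non-interacting product with the contracted-distance density, and finally lets $L/2\to\dist(x_*,\core)$. This is where the factor $\re^{EN^{\alpha}}$, the large exponent $\kappa_n=(n+2)(n+4)+1$ and the restriction $N\geq N_*$ come from.

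\textbf{Your route.} Your subset expansion is exactly the in-out representation of Lemma \ref{lemtaudecomp} with $s=\emptyset$, or $s=\dbra 1,N\dket$ on the left side. For $x_*$ outside the core this coincides with the paper's centred form. The point you exploit is that there all signs $\sgn(x_i-x_*)$ are equal, so $R_r=\sum_{i\neq j\in r}\log|S_{ij}|\leq 0$ trivially. No potential-theoretic argument is then needed.

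\textbf{What your route buys.} It gives a strictly sharper statement: $|\p_x^n u_{\bs\chi,\bs y}(x)|\leq D_nN^{n+2}\re^{-2\chi_*\dist(x,\core(\bs\chi,\bs y))}$ for all $N$. So under these hypotheses there is no $\re^{EN^\alpha}$ ``skin'' factor at all. It also uses only $\chi_*\leq\chi_i\leq C$. The stretched-exponential separation is never needed, because $\chi_i\neq\chi_j$ is already built into $\bs\chi\in\Rpu^N$, contrary to your remark that it ``enters in its mildest form''.

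\textbf{What the paper's route buys.} Uniformity: the same estimate handles observation points inside the gas. There the signs are mixed, $R_r$ can be positive, and the log-kernel positivity is genuinely required. It also covers in one stroke the more general setting of Assumptions \ref{asssp} and \ref{assde}: growing $\max_i\chi_i$, density exponents $\mu,\nu$, and the width $w$.
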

Recall that $\chi_*$ is the constant involved in \eqref{c1intro}. The assumption is that densities for the coordinates $X_i^\pm(\bs\chi,\bs y)$ around the boundaries of the core are bounded. The theorem says that the $N$-soliton field and all its derivatives are essentially supported on the core, being exponentially decaying outside it, up to a ``skin effect'' whose thickness grows more slowly than any power law in $N$. The constants $D_n,\,\kappa_n$ and $E$ depend only on the data of the assumptions: on $\alpha,\,D$, and on $\chi_*,\,C$ and the function $A(\kappa)$ in \eqref{c1intro}; they do not depend on the choice of sequences \eqref{seqintro} satisfying these assumptions. It would be interesting to compare the above result with \cite[Lem 2.1]{girotti2021rigorous}, where a particular family of $N$-soliton solutions is shown to be supported  on $[-\t CN,0]$ for some $\t C>0$, up to exponentially decaying terms as $N\to\infty$.

For the next three results, the notion of left- and right- extremal positions are not sufficient. We need to introduce our main objects, the {\em quasi-particle positions} $x_i(\bs\chi,\bs y)$. As mentioned, these are judicious solutions to the SBE \eqref{bethe}, and we give a {\em precise, finite algorithm} to construct them for every $\bs\chi,\,\bs y$: Algorithm \ref{leftanchor} (Definition \ref{defiqp}),
\beq\label{qpreview}
	\bs x(\bs\chi,\bs y) = {\tt Leftanchor}(\bs\chi ,\bs y).
\eeq
We interpret this algorithm as providing a ``left anchor'' to the soliton, the left-most point of the support of the soliton.

The algorithm is simple. Set $\bs X = \bs X^-(\bs\chi,\bs y)$, and start with $i_1 = \argmin_i(X_i)$ and set $x_{i_1}=X_{i_1}$. This is the position of the left-most solitons, which we now mark as discarded. To get the next one, take away the interaction due to soliton $i_1$: replace $X_j\to X_j -\varphi_{ji_1}$ for all remaining solitons, $j\neq i_1$. As $\varphi_{ij}<0$, this moves them to the right. Repeat to get $i_2,\,i_3,\ldots,i_N$ and the associated soliton positions $x_{i_1}<x_{i_2}<\cdots<x_{i_N}$. As $X_i = y_i+\frc12\sum_{j\neq i} \varphi_{ij}$, at every step we change one sign in the sum, that of the soliton we passed through. Hence the resulting positions have the form $x_i = y_i + \frc12\sum_{j\neq i}\sgn(x_j-x_i)\varphi_{ij}$, thus they satisfy the SBE. If the minimum is not unique at some step, we can make it unique by an infinitesimal time evolution.

It turns out that the resulting positions satisfy the separation property \eqref{separationintro}, which in its more general form is (Theorem \ref{theosbe}, Lemma \ref{lemdesc}),
\beq\label{separationsimplified}
	x_i(\bs\chi,\bs y)-a\geq -\sum_{j:x_j(\bs\chi,\bs y)\in[a,x_i(\bs\chi,\bs y))} \varphi_{ij} \quad \forall\  a
	\leq x_i(\bs\chi,\bs y),\ i\in\dbra 1,N\dket.
\eeq

Using the choice above to define $\mathcal U$, Eq.~\eqref{mapU}, we get its image, the space of quasi-particle coordinates $\mathsf Q$, and we show (Lemma \ref{lemdesc}) that this is invariant under projection, Eq.~\eqref{Piinclusion}. This is a crucial and non-trivial property. We also show that $\h{\mathcal P}_I$ defined by \eqref{chiprimeintro}, \eqref{yprimeintro} has the form \eqref{hatPUP}, hence Eqs.~\eqref{fcupsilon}, \eqref{fchatP} give our fluid-cell projection.

We then define, for every $x_*\in\R$, the {\em locally contracted coordinates} (Definition \ref{definatmag})
\beq\label{contractintro}
	X_i(x_*;\bs\chi,\bs y) = x_i + \sgn(x_i-x_*)\sum_{j:x_j\in[x_*,x_i)\,(x_i
	\geq x_*)\atop j:x_j\in(x_i,x_*)\,(x_i< x_*)}\varphi_{ij}
\eeq
for $\bs x = \bs x(\bs\chi,\bs y)$. For $x_*\gtrless x_\pm(\bs\chi,\bs y)$ these take values $\bs X^\pm(\bs\chi,\bs y)$ (Lemma \ref{lemnat}). These coordinates take away the interaction effects between $x_i$ and $x_*$; the coordinate $X_i(x_*;\bs\chi,\bs y)$ is an accurate position if there aren't too many solitons between $x_i$ and $x_*$. $\bs X(x_*;\bs\chi,\bs y)$ is essentially an intermediate quantity between the freely propagating coordinates $\bs y$, and the fully interacting $\bs x$, and gives a generalisation of the norming constants (Eq.~\eqref{normingconst}) to other observation points that are within the gas, instead of being outside of it. Heuristically, the change between $\bs x$ and $\bs X$ gives a {\em change of metric} from the ``real'' space to the {\em contracted space around $x_*$}. We discuss informally these ideas in Section \ref{sectdisc}.

Coordinates $\bs X(x_*;\bs\chi,\bs y)$ are not subject to minimal distance requirements such as \eqref{separationsimplified} -- they can accumulate. We expect that in soliton condensates \cite{el2021soliton,suret2024soliton}, they indeed accumulate. Note that because of \eqref{separationsimplified}, we have $X_i(x_*;\bs\chi,\bs y)\geq x_*$ if $x_i\geq x_*$, but $X_i(x_*;\bs\chi,\bs y)-x_*$ may be negative or positive if $x_i<x_*$; however it is expected to be negative for most $i$ away from condensates, and we have bounds on how positive it can get (Lemma \ref{lemimp}). In fact, away from condensates, the coordinate $X_i(x_*;\bs\chi,\bs y)$ is expected to stay on the ``correct'' side of, and far enough, from $x_*$ for all $i$ such that $x_i$ is far enough from $x_*$. In order to control this ``change of metric'' from real to contracted coordinates -- and quantitatively characterise how far from a condensate a soliton gas is -- we define (Definition \ref{defineigh})
\beqa
	x_i^{\rm left}(\Delta X;\bs\chi,\bs y)&:=&\sup\{x: X_i(x_*;\bs\chi,\bs y)\geq x_*+\Delta X \;\forall\; x_*< x\}\n
	x_i^{\rm right}(\Delta X;\bs\chi,\bs y)&:=& \inf\{x: X_i(x_*;\bs\chi,\bs y)\leq x_*-\Delta X \;\forall\; x_*> x\}
	\label{xilrintro}
\eeqa
and the {\em effective imprecision},
\beq\label{deltaxgamma}
	\Delta x(\Delta X;\bs\chi,\bs y) = 
	\max_{i\in\dbra 1,N\dket} \Big(x_i^{\rm right}(\Delta X;\bs\chi,\bs y)-x_i^{\rm left}(\Delta X;\bs\chi,\bs y)\Big)\geq 0.
\eeq
We note that $x_i^{\rm left}(0;\bs\chi,\bs y)=x_i$ because of \eqref{separationsimplified}, and our results suggest that $[x_i,x_i^{\rm right}(0;\bs\chi,\bs y)]$ is a good estimation of the interval where the soliton $i$ truly ``lies''.

We say that {\em a soliton gas \eqref{seqintro} has regular metric change} if there exists $V>0$ such that, for all $N\in\N$ and $\Delta X\geq 1$,
\beq\label{c4intro}
	\frc{\Delta x(\Delta X;\bs\chi,\bs y)}{\Delta X}\leq V.
\eeq
Importantly, if this is the case, and if the soliton gas also has regular spectral parameters, Eq.~\eqref{c1intro}, then the conditions \eqref{c2intro} of Theorem \ref{theoextentintro} are satisfied with (Lemma \ref{lemrealcontracted})
\beq\label{DVintro}
	D = V\b\rho,\quad \b\rho = \frc12\Big(1+\frc{C^2}{\chi_*}\Big).
\eeq 
Further, a soliton gas with regular spectral parameters has finite density (Lemma \ref{lemrealdens}),
\beq
	\sup_{x_*\in\R,\,r\geq 1}\frc{\big|\{i:|x_i(\bs\chi,\bs y)-x_*|\leq r\}\big|}{2r} \leq \b\rho.
\eeq

\medskip
\noindent{\em Second result: bound on the multi-soliton field (Corollary \ref{corolbound2}).} It has been established that as $N\to\infty$, the $N$-soliton field has bounded supremum for bounded spectral parameters \cite{lundina1985compactness,gesztesy1992limits}; see also \cite{girotti2021rigorous,jenkins2025approximation} for other $N\to\infty$ results. However, this is not quite enough for our purposes. Here, under certain density assumptions, we find a stronger result: we bound the supremum of the field and all its derivatives.

\begin{theorem}\label{theoboundintro}
Consider a soliton gas \eqref{seqintro} that has regular spectral parameters and metric change, Eqs.~\eqref{c1intro} and \eqref{c4intro}. Then for every $\alpha>0$, there exist $T>0$ such that
\beq\label{bound2intro}
	\sup_{x\in \R}|\p_x^nu_{\bs \chi,\bs y}(x)|\leq 
	2(TN^{\alpha})^{(n+2)}\,(n+2)!\,\sum_{m_1,\ldots,m_{n+2}\geq 0\atop \sum_j jm_j= n+2}
	\frc{\Big(\sum_j m_j-1\Big)!}{\prod_{j=1}^{n+2} j!^{m_j}m_j!}\,
	\quad
	\forall \ n\in\Z_+,\ N\in\N.
\eeq
\end{theorem}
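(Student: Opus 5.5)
The combinatorial sum in \eqref{bound2intro} is exactly the Fa\`a di Bruno coefficient that arises when bounding the derivatives of $\log$ of a function. So the natural route is through the tau-function representation $u_{\bs\chi,\bs y}=2\p_x^2\log\tau$: the $n$-th derivative of $u$ is $2\p_x^{n+2}\log\tau$. Fix an observation point $x_*\in\R$, which we may take to be $x$ itself. We use the centred tau-function representation around $x_*$, expressed in the locally contracted coordinates $X_i(x_*;\bs\chi,\bs y)$ of \eqref{contractintro}. In that representation we write
\[
\tau = \sum_{s\subseteq\dbra 1,N\dket} w_s(x),
\]
a sum of positive terms, each of the form $\exp\big(\sum_{i\in s}(\pm 2\chi_i)(x-X_i)\big)$ times products of $\big(\frac{\chi_i-\chi_j}{\chi_i+\chi_j}\big)^2$. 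The subsets $s$ encode which solitons have been ``flipped'' relative to $x_*$.

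The key steps are as follows.

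\textbf{Step 1: express derivatives of $\tau$ as weighted averages.} Since $\tau$ is a sum of exponentials in $x$, we have $\p_x^m\tau/\tau=\langle \Lambda_s^m\rangle$. Here $\Lambda_s=\sum_{i\in s}\pm2\chi_i$, and the average is taken with respect to the probability weights $w_s/\tau$.

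\textbf{Step 2: control the averages.} The goal is a bound $\langle|\Lambda_s|^m\rangle\le (TN^\alpha)^m m!$, or a cumulant-type bound of this shape. Since $|\Lambda_s|\le 2C|s|$, it suffices to show that the weights concentrate on subsets $s$ of size $O(N^\alpha)$, with exponential tails in $|s|$. The tail decay should follow from two ingredients. First, the separation property \eqref{separationsimplified} and the sign structure of the contracted coordinates make each flipped soliton far from $x_*$ cost a factor $\re^{-2\chi_*\,\mathrm{dist}}$. Second, the regular metric change \eqref{c4intro} together with the density bound $D=V\b\rho$ (Lemma \ref{lemrealcontracted}) means only $O(\ell)$ solitons have contracted coordinates within distance $\ell$ of $x_*$. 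Summing the resulting geometric series over configurations, we must control the interaction factors $\prod\big(\frac{\chi_i-\chi_j}{\chi_i+\chi_j}\big)^2$ from below, in the denominator $\tau$. For this we use the stretched-exponential minimal separation in \eqref{c1intro}, which gives a loss of $\re^{AN^\kappa}$ per pair among the $O(N^\alpha)$ relevant solitons. Choosing $\kappa$ small relative to $\alpha$ absorbs this loss into $TN^\alpha$.

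\textbf{Step 3: Fa\`a di Bruno.} We write $\p_x^{n+2}\log\tau$ via Fa\`a di Bruno's formula for $\log$, namely as a sum over partitions, with coefficient $(-1)^{k-1}(k-1)!$, of products $\prod_j(\tau^{(j)}/\tau)^{m_j}/(j!^{m_j}m_j!)$, times $(n+2)!$. Bounding each $|\tau^{(j)}/\tau|\le (TN^\alpha)^j j!$ gives precisely \eqref{bound2intro}, after the $j!$ factors are absorbed as in the stated expression. If the stated form instead has no extra $j!$, we use the sharper moment bound $\langle|\Lambda|^j\rangle\le (TN^\alpha)^j$, which holds directly when $|s|\lesssim N^\alpha$ with overwhelming weight after a cutoff. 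The factor 2 comes from $u=2\p_x^2\log\tau$. Uniformity in $x$ holds because all constants depend only on $\chi_*,C,A(\cdot),V$.

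\textbf{Main obstacle.} The hard part is Step 2: proving that the Gibbs-like weights $w_s/\tau$ on subsets are dominated by those close to the ``ground state'' $s_0$ selected by the quasi-particle positions relative to $x_*$. This is where the Green's-function (logarithmic Coulomb) structure of $\varphi_{ij}$ must be used. I would first try a Peierls-type argument: compare $w_s$ with $w_{s\,\triangle\,\{i\}}$ one soliton at a time, and use the contracted-coordinate inequalities (Lemma \ref{lemimp}) to show that each extra flip costs at least a bounded multiplicative factor beyond a window of size $O(N^\alpha)$ around $x_*$.
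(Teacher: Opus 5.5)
\textbf{Overall.} Your skeleton matches the paper's: expand $\tau$ in the centred form of Lemma \ref{lemtaumain} around $x_*$, discard every term containing a soliton with contracted distance $D_i>L/2$, note that only $O(N^{\alpha})$ solitons have $D_i\le L/2$ (Lemmas \ref{lemrealdens}, \ref{lemrealcontracted}), and finish with Fa\`a di Bruno. The paper packages this as Theorem \ref{theo} applied to the fluid-cell projection onto $[x_*-\Delta x,x_*+\Delta x]$ (local by Lemma \ref{lemfclocal}), followed by the crude Lemma \ref{lembound1} on the projected field.

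\textbf{The gap in Step 2.} The problem there is misdiagnosed. In the centred form, the interaction factor of the term $r$ is $\re^{R_r}=\prod_{i\neq j\in r}|S_{ij}|^{\sigma_i\sigma_j}$ with $\sigma_i=\sgn(x_i-x_*)$. So every opposite-sign pair contributes $|S_{ij}|^{-2}$, which can be of order $\re^{2AN^{\kappa}}$. Only in uncentred representations (e.g.\ $s=\dbra 1,N\dket$ in Lemma \ref{lemtaudecomp}) are all these factors $\le 1$, and there far solitons are not suppressed.

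The denominator is harmless, since $\tau\ge 1$ from the $r=\emptyset$ term. What you need is an \emph{upper} bound on $\re^{R_r}$ in the discarded terms. Pair by pair you only get $R_r\lesssim AN^{\kappa}|r|^2/2$. Under the contracted-density bound, the guaranteed decay is only $\exp(-2\chi_*\sum_{i\in r}D_i)$ with $\sum_{i\in r}D_i\gtrsim |r|^2/(4D)$, $D=V\b\rho$: a quadratic with an $N$-independent coefficient. So the interaction loss wins at every scale $L$.

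Restricting to ``$O(N^\alpha)$ relevant solitons'' does not help. The discarded terms may contain arbitrarily many far solitons, and even within the window, $\re^{AN^\kappa}$ per pair is $\re^{O(N^{2\alpha+\kappa})}$, which cannot be absorbed into $TN^\alpha$. A single-flip Peierls comparison fails for the same reason: flipping $i$ changes $R_r$ by $2\sigma_i\sum_{j\in r\setminus\{i\}}\sigma_j\log|S_{ij}|$, which can be of order $|r|N^{\kappa}$. You rightly sense that the logarithmic structure must enter, but neither concrete mechanism you propose uses it.

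\textbf{The missing ingredient.} You need a bound \emph{linear} in $|r|$, namely $R_r\le(3+2A)|r|N^{\kappa}$, which exploits screening between same-sign and opposite-sign pairs. The paper obtains it as follows:
\begin{itemize}
\item[(i)] write $R_r$ as the off-diagonal part of $\iint f(\chi)f(\chi')\log\big|\frac{\chi-\chi'}{\chi+\chi'}\big|$ for $f=\sum_{i\in r}\sigma_i\delta_{\chi_i}$;
\item[(ii)] mollify at scale $\re^{-2AN^{\kappa}}$, with the error controlled by the minimal spectral gap;
\item[(iii)] odd-extend to $\R$, so the kernel becomes $\frac12\log|\chi-\chi'|$ acting on a neutral density;
\item[(iv)] use that $\widehat{\log|\cdot|}=-\pi/|p|\le 0$ on mean-zero functions.
\end{itemize}
The full energy is then non-positive, so $R_r$ is bounded by the diagonal self-energies, which are $O(|r|N^{\kappa})$. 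With this, $\re^{R_r}$ becomes a per-soliton factor, the discarded terms factorise into a non-interacting product, and \eqref{resbound} follows.

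\textbf{A smaller point on Step 3.} A bound $|\p_x^j\tau/\tau|\le(TN^\alpha)^j j!$ does not give \eqref{bound2intro}, since the $j!$ cancels the $j!^{m_j}$ in the denominator and leaves a larger sum. The stated form needs $|\p_x^j\tau/\tau|\le(TN^\alpha)^j$. The paper gets this exactly for the projected field, whose exponential rates are all bounded by $2\sum_{i\in s}\chi_i$. It then adds the discarded terms as a separate exponentially small error via Theorem \ref{theo}, rather than through moment tails.
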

That is, the supremum of the KdV field on the line, and of all its derivatives, grows more slowly than any power law in $N$, with a coefficient whose dependence on $n$ is as given. Again, $T$ only depends on the parameters characterising the regularity conditions.

\medskip
\noindent{\em Third result: local form of the KdV field (Theorem \ref{theo})}. This is our main technical theorem, and we are not aware of any previous result of a similar type. It says how a KdV field ``looks like'' locally, within the interval of the fluid-cell projection. Here we consider a local region of space-time -- this is the only place where time appears explicitly, as for large-scale results, we need to make solitons' positions time-dependent instead, as in \eqref{bethe}. We set
\beq
	I = [I_-,I_+],\quad I_-<I_+\in\R
\eeq
and consider the fluid-cell projection \eqref{fcintro} with \eqref{fcintro2} for $\bs x = \bs x(\bs\chi,\bs y)$ as defined in \eqref{qpreview},
\beq
	\mathcal F_I u_{\bs\chi,\bs y} = u_{\bs\chi',\bs y'}.
\eeq
%
\begin{theorem}\label{theointro} Consider a soliton gas \eqref{seqintro} that has regular spectral parameters and metric change, Eqs.~\eqref{c1intro} and \eqref{c4intro}. For every $\alpha>0$, there exist constants $D_{n,m}>0,\,\kappa_{n,m}\in\R$ for $n,m=0,1,2,\ldots$, and $E>0$, such that
\beq\begin{aligned}
	\sup_{x\in[I_-+VL/2,I_+-VL/2]\atop
	t\in[-C^{-3},C^{-3}]}\Big|\p_x^n\p_t^m \big(u_{\bs \chi,\bs y+\bs v t}(x) - u_{\bs \chi',\bs y'+\bs v' t}(x)\big)\Big| \leq D_{n,m}
	N^{\kappa_{n,m}} \exp\Big(E N^{\alpha}-\chi_*L\Big)
	&\\
	\forall\
	L\geq 2,\ 
	I_+-I_-\geq VL,\ 
	n,m\in\Z_+,\
	N\in\N.
	&
	\label{projintro}
	\end{aligned}
\eeq
\end{theorem}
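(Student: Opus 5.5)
The plan is to work with a tau-function representation of the $N$-soliton field that is adapted to the observation point, namely the ``centred'' representation announced in the organisation paragraph. Fix $x_*\in[I_-+VL/2,I_+-VL/2]$. Writing $u=2\p_x^2\log\tau$, I expand $\tau$ as a sum over subsets $s\subseteq\dbra 1,N\dket$ of exponentials. The phases are re-centred using the locally contracted coordinates $X_i(x_*;\bs\chi,\bs y)$ of \eqref{contractintro}, so that the prefactor is $\prod_{i<j}$ of ratios $|(\chi_i-\chi_j)/(\chi_i+\chi_j)|$ raised to powers $0$ or $\pm1$. The goal is for the reference configuration (``all solitons with $x_i<x_*$ on, the others off'', with the appropriate sign convention) to carry weight $1$. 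Each term is then $\exp\big(\sum_i \sgn_s(i)\chi_i(x-X_i(x_*))+\text{pairwise logs}\big)$ relative to that reference. The key structural fact I would establish first is that $\tau$ factorises, up to an overall factor that is killed by $\p_x^2\log$, into a ``local'' tau function involving only solitons with $x_i\in I$ times correction terms. Each correction term involves flipping at least one soliton with $x_j\notin I$. For the local factor, the shifts $\frac12\sum\pm\varphi_{ij}$ over solitons outside $I$ are exactly those in \eqref{fcintro2}, so the local factor equals the tau function of $u_{\bs\chi',\bs y'}$ with the same centring. This part uses Lemma \ref{lemdesc} (invariance $\mathcal P_I\mathsf Q\subseteq\mathsf Q$), which guarantees that the quasi-particle positions of the projected configuration are the $x_i$, $i$ with $x_i\in I$.

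Next I bound the correction terms. A flip of soliton $j$ with $x_j>I_+$ (resp.\ $x_j<I_-$) costs a factor $\exp(-\chi_j|X_j(x_*)-x|)$. By regular metric change \eqref{c4intro}, with $\Delta X\sim L/2$, every soliton with $x_j$ outside $I$ has $X_j(x_*)$ on the correct side of $x_*$, at contracted distance at least of order $L$ from $x$. This is precisely why the inner interval is shrunk by $VL/2$. Since $\chi_j\ge\chi_*$, this gives the factor $\re^{-\chi_*L}$. The pairwise interaction factors $|(\chi_i-\chi_j)/(\chi_i+\chi_j)|^{\pm1}$ appearing in a flipped configuration can be large. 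I would control them using the Green's-function/positivity structure of $\log|\chi-\chi'|$ mentioned in the introduction, i.e.\ a Cauchy-determinant argument showing that the centred tau is a positive sum dominated by its leading terms. Together with the minimal separation $\re^{-AN^\kappa}$ in \eqref{c1intro}, this bounds the ratio of interaction factors by $\re^{EN^\alpha}$. The number of relevant configurations is made summable via the finite density bound (Lemma \ref{lemrealdens}): the sum over subsets of flipped outside solitons is a product $\prod_j(1+\re^{-\chi_*\dist_j})$, which is bounded by $\re^{O(N^\alpha)}$ after using the density of contracted coordinates.

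Finally, I pass from the tau function to $u$ and its derivatives. Write $\tau=\tau_{\rm loc}(1+\epsilon)$ with $|\p_x^k\epsilon|\le \re^{EN^\alpha-\chi_*L}$ on the shrunk interval. Then $u-u'=2\p_x^2\log(1+\epsilon)$, and Faà di Bruno gives polynomial-in-$N$ prefactors $D_{n,m}N^{\kappa_{n,m}}$. Lower bounds on $\tau_{\rm loc}$ come from positivity of the centred representation. Bounds on derivatives of $\tau_{\rm loc}$ relative to $\tau_{\rm loc}$ come from the same estimates used in Theorem \ref{theoboundintro}. Time derivatives are handled by noting that for $|t|\le C^{-3}$ the impact parameters move by $v_it\le 4C^{-1}$, a bounded shift. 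The shift changes neither the assignment of solitons to $I$ nor the estimates beyond constants, and $\p_t$ acts on each exponential as multiplication by $\sgn_s(i)\chi_iv_i$, which is polynomially bounded. The main obstacle is the second step: controlling the pairwise factors in non-reference configurations uniformly, so that the only exponential gain is $\re^{-\chi_*L}$ while everything else is subexponential. Here I would first try to prove a subset-monotonicity inequality, namely that flipping an extra soliton never decreases the cost by more than $\re^{-\chi_j\times(\text{contracted distance})}$, using the separation property \eqref{separationsimplified}.
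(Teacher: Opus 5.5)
Your architecture is the paper's: the centred form \eqref{taumain} at $x_*$, the projected field as the restriction of that sum to $r\subseteq s$ (Corollary \ref{coroltauproj}), locality of $\mathcal F_I$ from Lemma \ref{lemfclocal} with $\Delta X=L/2$ (which produces the $VL/2$ shrinkage), the gain $\re^{-\chi_*L}$ from $2\chi_jD_j>\chi_*L$ for $j\notin s$, and Fa\`a di Bruno at the end. There is, however, a genuine gap: the step you flag as the main obstacle is left open, and the route you propose for it would fail.

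\textbf{Why the proposed route fails.} The interaction exponent is $R_r=\sum_{i\neq j\in r}\sigma_i\sigma_j\log|S_{ij}|$.
\begin{itemize}
\item It is not bounded by $EN^\alpha$ uniformly in $r$. Take $N/2$ pairs of spectral parameters at distance $\re^{-AN^{\alpha/2}}$, with opposite signs inside each pair; then $R_r\approx AN^{1+\alpha/2}$.
\item No subset-monotonicity holds. One has $R_{r\cup\{j\}}-R_r=2\sigma_j\sum_{i\in r}\sigma_i\log|S_{ij}|$. If every $i\in r$ has sign opposite to $\sigma_j$, this equals $2\sum_{i\in r}\log\frac{\chi_i+\chi_j}{|\chi_i-\chi_j|}\ge 2|r|\log(1+\chi_*/C)$. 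So a single flip can gain a factor $\re^{cN}$, which no contracted distance $\ll N$ compensates.
\item The separation property \eqref{separationsimplified} cannot repair this. $R_r$ sees the positions only through the signs $\sigma_i$, and every sign pattern can occur.
\item For the same reason, $\sum_{r\not\subseteq s}\tau_r$ does not factor as $\tau_{\rm loc}$ times a product over outside solitons. The inside--outside cross terms in $R_r$ can be of order $|r\cap s|\,|r\setminus s|$.
\end{itemize}

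\textbf{What is needed.} The missing step is the bound \eqref{Rbound}, $R_r\le(3+2A)|r|N^{\alpha/2}$ for every $r$. It comes from positivity of the whole quadratic form, not flip by flip:
\begin{itemize}
\item Extend $f=\sum_{i\in r}\sigma_i\delta_{\chi_i}$ oddly to $\chi<0$. Then $\int_{\R_+^2}ff\log|\frac{\chi-\chi'}{\chi+\chi'}|=\frac12\int_{\R^2}FF\log|\chi-\chi'|$, with $F$ of zero total mass.
\item This logarithmic energy is non-positive, because the regular part of $\widehat{\log|\cdot|}$ is $-\pi/|p|$.
\item Before doing this, mollify the point charges at scale $\re^{-2AN^{\alpha/2}}$, below the minimal spectral gap. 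This perturbs the off-diagonal terms negligibly and costs only a self-energy $O(|r|N^{\alpha/2})$.
\end{itemize}
Only this linear-in-$|r|$ bound lets you absorb the interactions soliton by soliton: $\tau_r\le\prod_{i\in r}\re^{a_i}$ with $a_i=-2\chi_iD_i+2\chi_i|x-x_*|+8\chi_i^3|t|+O(N^{\alpha/2})$. Overcounting then gives the absolute bound $\sum_{r\not\subseteq s}\tau_r\le\sum_{j\notin s}\re^{a_j}\prod_{i=1}^N(1+\re^{a_i})$, with the product over all solitons, inside ones included. The factor $\re^{EN^\alpha}$ arises as the $O(N^{\alpha/2})$ cost per soliton times the $O(N^{\alpha/2})$ solitons with contracted distance $\lesssim N^{\alpha/2}$ (contracted density bound, Lemma \ref{lemrealcontracted}). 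The remaining factors form a convergent product. You then conclude using $\tau_{\rm loc}\ge 1$, not a relative factorisation $\tau=\tau_{\rm loc}(1+\epsilon)$ built from outside solitons.

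\textbf{A smaller point.} Do not invoke Theorem \ref{theoboundintro} to control $\partial_x^j\tau_{\rm loc}/\tau_{\rm loc}$. It is derived from Theorem \ref{theo} (Corollary \ref{corolbound2}), so this would be circular. The elementary bound $|\partial_x^j\tau/\tau|\le(2\sum_i\chi_i)^j$ of Lemma \ref{lembound1} suffices, since polynomial factors in $N$ are allowed.
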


Recall that $\chi_*,\,C,\,V$ are the constants involved in \eqref{c1intro}, \eqref{c4intro}. This is a partial projection theorem: taking $L = N^\gamma$ for any $\gamma>0$ as small as desired, the right-hand side of \eqref{projintro} vanishes as $N\to\infty$ for $\alpha$ small enough. Hence this tells us that only a ``small'' number of solitons, just those lying within $I$, may contribute to the shape of the $N$-soliton solution $u_{\bs\chi,\bs y}$ on a slight reduction of the interval $I$. The result is not quite strong enough in order to allow us to perform a fluid-cell average, as we need to control what happens on the boundaries of $I$; but this can be done using our support and boundedness theorems. Again, the constants $D_n,\,\kappa_n$ and $E$ depend only the constants characterising the regularity of spectral parameters and the finite density.

\medskip
\noindent {\em Fourth result: fluid-cell averaging (Theorem \ref{theofc}) and weak limit (Theorem \ref{theoweak}).} These are our main and final theorems, which implement rigorously the consequences of the quasi-particle criterium \eqref{loose}. Again, we are not aware of previous results of this type in the literature.

For the following theorem, we need not only a soliton gas, but also a sequence of fluid cells, intervals $I$ of lengths $L=N^\lambda$ centred at positions $x_*$ that may depend on $N$,
\beq
	I_- = x_*-N^\lambda/2,\quad I_+ = x_*+N^\lambda/2,\quad \lambda>0
\eeq
along with the associated sequence of fluid-cell projections \eqref{fcintro} with \eqref{fcintro2} for $\bs x = \bs x(\bs\chi,\bs y)$ as defined in \eqref{qpreview}. Recall that for the Euler scaling limit, we are interested in taking $x_*\propto N$ and $0<\lambda<1$, which is covered by the theorem.

\begin{theorem}\label{theofcintro} Consider a soliton gas \eqref{seqintro} which has regular spectral parameters and metric change, Eqs.~\eqref{c1intro} and \eqref{c4intro}.
%
Let $n\in\Z_+$, and let $O:\R^{n+1}\to \R$ be either: a bounded Lipschitz function for the $L^1$ norm on $\R^{n+1}$ with $O(\bs 0) = 0$, or a polynomial without constant term. Consider the local observable
\beq
	\mathtt O[u] (x) := O(u(x),\p_x u(x),\ldots,\p_x^nu(x)),\quad x\in\R.
\eeq
Then
we have
\beq\label{resintro}
	\lim_{N\to\infty} \Bigg(\frc1{L}\int_{I} \dd x\,\mathtt O[u_{\bs\chi,\bs y}] (x)
	- \frc1{L}\int_{-\infty}^{\infty} \dd x\,\mathtt O[u_{\bs\chi',\bs y'}] (x)
	\Bigg)=0
\eeq
and in particular, for every local conserved density $\mathtt P_k[u](x)$ Eq.~\eqref{conspoly},
\beq\label{res2conservedintro}
	\lim_{N\to\infty}
	\Bigg(
	\frc1{L}\int_{I} \dd x\,\mathtt P_k[u_{\bs\chi,\bs y}](x)
	-
	\frc1{L}\sum_{i:x_i(\bs\chi,\bs y)\in I}\chi_i^{2k+1}
	\Bigg)
	=0.
\eeq
\end{theorem}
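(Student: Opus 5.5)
The plan is to split the integral over $\R$ of the projected observable into three pieces: an inner interval $I'=[I_-+VL'/2,\,I_+-VL'/2]$, a thin boundary layer $I\setminus I'$, and the complement $\R\setminus I$. Here $L'=N^{\gamma}$ with $0<\gamma<\lambda$ small, so that $L'\ll L=N^\lambda$ but $L'\geq 2$ and $I_+-I_-=L\geq VL'$ for large $N$. The difference in \eqref{resintro} then becomes three terms:
\[
\frc1L\int_{I'}\big(\mathtt O[u_{\bs\chi,\bs y}]-\mathtt O[u_{\bs\chi',\bs y'}]\big)\dd x,\qquad
\frc1L\int_{I\setminus I'}\big(\mathtt O[u_{\bs\chi,\bs y}]-\mathtt O[u_{\bs\chi',\bs y'}]\big)\dd x,\qquad
\frc1L\int_{\R\setminus I}\mathtt O[u_{\bs\chi',\bs y'}]\dd x .
\]

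\emph{Inner term.} On $I'$, Theorem \ref{theointro} at $t=0$ with $L'$ in place of $L$ bounds $|\p_x^k(u_{\bs\chi,\bs y}-u_{\bs\chi',\bs y'})|$ by $D_k N^{\kappa_k}\exp(EN^\alpha-\chi_*N^\gamma)$ for $k\le n$. We choose $\alpha<\gamma$, so this is superpolynomially small. If $O$ is Lipschitz, the integrand is controlled directly. If $O$ is a polynomial, we use the mean value theorem together with the sup bounds on both fields from Theorem \ref{theoboundintro}, which grow at most like $N^{\alpha'}$. The projected gas also satisfies the regularity hypotheses, via the projection-invariance (Lemma \ref{lemdesc}) and the inheritance of regular spectral parameters and metric change under $\h{\mathcal P}_I$; I would check this carefully. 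The term is thus $o(1)$.

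\emph{Boundary term.} The layer $I\setminus I'$ has length $VL'$, and the integrands are bounded by polynomials in the sup norms from Theorem \ref{theoboundintro}. We get a bound of order $N^{\gamma-\lambda}N^{c\alpha'}$, and taking $\alpha'$ small enough gives $o(1)$. For Lipschitz $O$ with $O(\bs 0)=0$, boundedness of $O$ already suffices.

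\emph{Outer term.} This is the step I expect to be the main obstacle. The projected field must decay outside $I$, and I would use Theorem \ref{theoextentintro} applied to $(\bs\chi',\bs y')$. Two things must be shown. First, $\core(\bs\chi',\bs y')$ lies within a distance $O(N^{\alpha})$ of $I$. The left-extremal positions of the projected gas are obtained by running Algorithm \ref{leftanchor} on the subset, so by projection invariance $\min_i X_i'^-$ equals the leftmost projected quasi-particle position, which lies in $I$. For the right end I would use Lemma \ref{lemnat}: evaluated at $x_*$ to the right of all $x_i'$, the contracted coordinates equal $\bs X'^+$, and these differ from the positions in $I$ by at most the effective imprecision, which regular metric change (with $\Delta X\sim 1$) bounds by $V$ up to the density contribution. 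Second, the density hypotheses \eqref{c2intro} for the projected gas must hold; they follow from \eqref{DVintro} with $D=V\b\rho$. Given these, \eqref{projvanishintro} makes $|\p_x^k u'|$ decay like $N^{\kappa_k}e^{EN^\alpha-2\chi_*\dist(x,\core)}$. Because the core overhangs $I$ by at most $O(N^{\alpha})$, we cover that overhang with the sup bound and integrate the exponential tail beyond it. Using $|O(\bs w)|\lesssim|\bs w|$ near $0$ (from the Lipschitz property, or from the absence of a constant term), this yields a contribution of order $N^{\alpha''-\lambda}$, hence $o(1)$.

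\emph{Conserved densities.} To get \eqref{res2conservedintro}, we apply \eqref{resintro} to $O=P_k$, a polynomial without constant term, and then use \eqref{integralconserved} on the multi-soliton field $u_{\bs\chi',\bs y'}$, whose spectral parameters are exactly $\{\chi_i:x_i\in I\}$ by \eqref{chiprimeintro}.
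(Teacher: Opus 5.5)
Your three-region decomposition (inner interval handled by the local-form theorem, boundary layers by sup bounds, exterior by decay of the projected field) is the structure of the paper's proof of Theorem \ref{theofc}, which goes through Corollaries \ref{corolfc}, \ref{corolvanishfc} and \ref{coroboundedfc}. Theorem \ref{theofcintro} is the case $\beta=\mu=\omega=0$ with $\alpha$ arbitrarily small. Your inner and boundary-layer estimates and the final step via \eqref{integralconserved} are fine.

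\textbf{The exterior estimate fails as written.} The bound \eqref{projvanishintro} carries the factor $\re^{EN^\alpha}$. Integrating it outward from the edge of $\core(\bs\chi',\bs y')$ gives a quantity of order $N^{\kappa_k}\re^{EN^\alpha}$, which is not $o(L)$. The exponential only helps at distance at least $QN^\alpha$ from the core, with $Q>E/(2\chi_*)$; this is the skin of \eqref{Ksupport2}, \eqref{uvanish}. You must cover that whole skin with the sup bound, at cost $O(N^{\alpha+c\alpha-\lambda})$, and only beyond it does the tail give $O(N^{-\infty})$. This is why the exterior region in Corollary \ref{corolvanishfc} starts at distance $a\Delta I$ from $I$.

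The $N^\alpha$ width is needed for the skin, not for the core overhang, which is in fact $O(1)$. On the left, $x_-(\bs\chi',\bs y')=\min_{i\in s}x_i\geq I_-$ by \eqref{miniexist} and \eqref{Pexplicit}. On the right, for $x_*>\max_{i\in s}x_i$ every contracted coordinate of the projected gas equals $X_i^+(\bs\chi',\bs y')$ by \eqref{Xiyi}. Lemma \ref{lemshift} with $\Delta X=0$ then gives $x_+(\bs\chi',\bs y')\leq I_++\Delta x(0;\bs\chi',\bs y')\leq I_++\Delta x(1;\bs\chi,\bs y)\leq I_++V$, using \eqref{ineqmeasure} and \eqref{inclimp}. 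No density contribution enters.

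\textbf{The inheritance you planned to check is false in the formulation you use.} Theorems \ref{theoextentintro} and \ref{theoboundintro} are stated for sequences whose control parameter is the number of solitons. The projected data $(\bs\chi',\bs y')$ has $M\leq N$ solitons, and $M$ may stay bounded as $N\to\infty$. Its spectral parameters are only guaranteed to be $\re^{-AN^\kappa}$-separated, which is no bound of the form $\re^{-A'M^\kappa}$ when $M\ll N$. So \eqref{c1intro} can fail for the projected data, and the intro theorems cannot be applied to $u_{\bs\chi',\bs y'}$.

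The repair is the paper's. Keep the original $N$ as an external control parameter $\b N$, using sequences \eqref{seqgen} with Assumptions \ref{asssp} and \ref{assimp}. Under this formulation, spectral regularity and the imprecision bound pass to $\h{\mathcal P}_I(\bs\chi,\bs y)$ with unchanged data (Lemma \ref{lemstable}, via \eqref{ineqdi} and \eqref{ineqmeasure}). Then use the $\b N$-versions, Corollary \ref{corolextent} and Corollary \ref{corolbound2} (i.e.\ Corollaries \ref{corolvanishfc} and \ref{coroboundedfc}), for the sup bound on the layers and the exterior decay. On the inner interval you do not need this, since there $u'$ is within $O(N^{-\infty})$ of $u$. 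With these two repairs your argument is the paper's.
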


This gives a rigorous version of \eqref{Prho}. The theorem arises from Corollaries \ref{corolfc} and \ref{corolvanishfc}, which rigorously implement the quasi-particle criterium \eqref{loose}. It allows us to evaluate all such fluid-cell means in terms of the full-line integral of the fluid-cell projection. In particular, for conserved densities, as the full integral is the full conserved quantity, we get the explicit form of the latter purely in terms of spectral parameters. The theorem identifies the precise regularity assumptions that must be fulfilled in order for the fluid-cell means to be given in terms of solitons lying within the fluid cell.

From this, we obtain the rigorous version of \eqref{euler}; the Euler-scaling case \eqref{euler} is $\Lambda=1$:
\begin{theorem}\label{theoweakintro} Consider a soliton gas \eqref{seqintro} which has regular spectral parameters and metric change, Eqs.~\eqref{c1intro} and \eqref{c4intro}.
%
%
Let $k\in\Z_+$ and $\Lambda>0$. Then, for every Schwartz function $f:\R\to\R$,
\beq
	\lim_{N\to\infty}
	\Bigg(\int \dd x\,f(x)\,\mathtt P_k[u_{\bs\chi,\bs y}](N^\Lambda x)
	-
	\frc1{N^\Lambda}\sum_{i=1}^N f(x_i(\bs\chi,\bs y)/N^\Lambda)\chi_i^{2k+1}
	\Bigg)=0.
\eeq
\end{theorem}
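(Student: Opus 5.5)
The plan is to obtain Theorem \ref{theoweakintro} from the fluid-cell result, Theorem \ref{theofcintro}, by splitting the real line into mesoscopic cells and adding up the contributions. I would first scale variables, $x\mapsto N^\Lambda x$. The first term then reads $N^{-\Lambda}\int \dd x\, f(x/N^\Lambda)\,\mathtt P_k[u_{\bs\chi,\bs y}](x)$, which matches the normalisation of the sum. Next I fix a mesoscopic length $L=N^\lambda$ with $0<\lambda<\Lambda$. I also fix a cutoff $R=N^{\Lambda+\delta}$, with $\delta>0$ small, beyond which $f$ is negligible. The window $[-R,R]$ is partitioned into consecutive closed cells $I_m$ of length $L$, and there are $O(R/L)$ of them.

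On each cell, the uniform form of Theorem \ref{theofcintro} applied to $\mathtt P_k$ gives
\[
\int_{I_m}\dd x\,\mathtt P_k[u_{\bs\chi,\bs y}](x)=\sum_{i:x_i\in I_m}\chi_i^{2k+1}+L\,\epsilon_N ,
\]
where $\epsilon_N\to0$. I would actually use the quantitative Corollaries \ref{corolfc}, \ref{corolvanishfc}, \ref{coroboundedfc}. These give an error of size $N^{\kappa}\re^{EN^\alpha}$ times a boundary contribution that does not grow with $L$. They also give $\epsilon_N$ uniformly in the cell centre $x_*$, because the constants depend only on the regularity data. Inside a cell, I replace $f(x/N^\Lambda)$ by its value at the cell centre. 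This costs at most $\|f'\|_\infty L N^{-\Lambda}$ times $\int_{I_m}|\mathtt P_k|$ on the field side. On the particle side it costs the same factor times $\sum_{i:x_i\in I_m}\chi_i^{2k+1}\le C^{2k+1}\b\rho L$, using the density bound of Lemma \ref{lemrealdens}. For the field side I bound $\int_{I_m}|\mathtt P_k|\le L\sup|\mathtt P_k|\le L\,N^{O(\alpha)}$ by Theorem \ref{theoboundintro}.

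Summing over the cells with weights $|f|$, the total error is
\[
N^{-\Lambda}\sum_m |f(c_m/N^\Lambda)|\Big(L\epsilon_N+L^2N^{-\Lambda}N^{O(\alpha)}\Big).
\]
A Riemann-sum argument gives $\sum_m L|f(c_m/N^\Lambda)|\approx N^\Lambda\|f\|_1$, so the error is $O(\epsilon_N+LN^{-\Lambda+O(\alpha)})$, which tends to $0$ once $\alpha$ is small. If only the qualitative statement of Theorem \ref{theofcintro} were available, I would still have to check that the convergence is uniform over cell centres. The quantitative corollaries are what make this work.

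The tails $|x|>R$ need separate control. On the particle side, the density bound and the rapid decay of $f$ make $N^{-\Lambda}\sum_{|x_i|>R}|f|\chi_i^{2k+1}$ negligible. On the field side I would use Corollary \ref{corolextent}. The field is exponentially small outside the core, and the core has length at most polynomial in $N$ (via Lemma \ref{lemrealcontracted} and the finite density, the core lies within $\Delta x$ of $[\min_i x_i,\max_i x_i]$). Inside the core I use the sup bound together with the decay of $f$. All of this should also follow from Theorem \ref{theofcintro} applied to one large cell, together with the total conserved integral \eqref{integralconserved}.

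I expect the main obstacle to be the uniformity of the fluid-cell error over the growing number $R/L$ of cells. This includes cells whose boundaries cut through clusters of solitons, where the "skin" error at the boundaries must not accumulate. I would handle it by quoting the explicit per-cell bound, which is boundary-localized and of order $N^{\kappa}\re^{EN^\alpha}$ up to exponentially small terms. Its sum over cells is then $O(R/L)\cdot N^{\kappa+o(1)}$, and after dividing by $N^\Lambda$ this vanishes provided $\lambda$ is chosen larger than $\delta$ plus the small exponents involved.
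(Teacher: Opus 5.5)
Your overall strategy is the paper's. Theorem \ref{theoweakintro} is Theorem \ref{theoweak} with $\beta=\mu=\omega=0$ and $\alpha$ arbitrarily small. That proof tiles $\R$ by cells of length $N^\lambda$ (all $\ell\in\Z$, no cutoff) and applies Theorem \ref{theofc}, Case II, uniformly in the cell centre. It then controls the freezing of $f$ on each cell using the global bound on $\mathtt P_k$ and a Schwartz majorant of the local oscillation of $f$. Bounding the particle sums per cell by Lemma \ref{lemrealdens}, as you do, is a harmless and slightly more direct shortcut.

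There is, however, a wrong step exactly where you locate the difficulty. The per-cell error is not of order $N^{\kappa}\re^{EN^\alpha}$. Moreover $\re^{EN^\alpha}$ is not $N^{o(1)}$: it grows faster than any power of $N$. A per-cell error of that size, summed over $N^{\Lambda+\delta-\lambda}$ cells and divided by $N^\Lambda$, diverges, so your last paragraph fails as written. In Corollaries \ref{corolfc}--\ref{coroboundedfc}, the factor $\re^{EN^\alpha}$ only ever appears multiplied by $\re^{-\chi_*L'}$, where the local scale $L'\propto N^{\alpha}$ (coefficient $>E/\chi_*$) is chosen precisely to cancel it. What survives is a skin of width $\Delta I=VRN^{\alpha}$ at each end of the cell. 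On that skin the integrand is only controlled by $\sup|\mathtt P_k|\lesssim N^{r\alpha}$, with $r=2k+2$. The correct per-cell error, uniform in the cell position, is therefore $O(N^{(r+1)\alpha})+O(N^{-\infty})$, as in the proof of Theorem \ref{theofc}. This gives $\epsilon_N\to0$ once $\lambda>(r+1)\alpha$. Combined with your Taylor condition $\lambda+r\alpha<\Lambda$, it yields the paper's condition \eqref{condLambdacombined}, $\Lambda>(2r+1)\alpha$. That condition holds for every $\Lambda>0$ because \eqref{c1intro} lets you take $\alpha$ as small as you like.

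Two smaller points. First, the core need not have length polynomial in $N$: the assumptions only bound the density from above, so the $y_i$ may be spread arbitrarily far. You do not need this, though. On $|x|>N^{\Lambda+\delta}$, the global bound $\sup|\mathtt P_k|\lesssim N^{r\alpha}$ together with $\int_{|y|>N^\delta}|f|=O(N^{-\infty})$ already suffices. Second, the Taylor error should be weighted by a local bound on $|f'|$ (or by $\|f'\|_\infty$, which works with your cutoff), not by $|f(c_m/N^\Lambda)|$.
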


\subsection{Explicit example: dilute soliton gases}\label{ssectdilute}

In this section we construct an explicit sequence of spectral and impact parameters, \eqref{seqintro}, for which we show that all assumptions of the theorems above are satisfied. This is the case of {\em dilute} soliton gases: informally, a dilute soliton gas is one whose density vanishes as $N\to\infty$. Here, we consider vanishing as $N^{-\ep}$ for $\ep>0$ as small as desired. This example shows how our assumptions work, and that they are natural and can be satisfied. We believe this is a non-trivial example, as even though the density vanishes, its vanishing is very slow, and interaction effects between solitons are still relevant. In particular, the map $\bs x(\bs\chi,\bs y)$ remains non-trivial, e.g.~the ordering of the quasi-particle positions is different from that of the impact parameters $\bs y$.

Consider any sequence $N\mapsto \bs\chi\in\Rpu^N$ satisfying \eqref{c1intro}. For instance, take the regular distribution
\beq\label{exchi}
	\chi_i = \chi_* + (C-\chi_*) i/N,\quad i=1,2,\ldots,N
\eeq
with $C\geq \chi_*+1>1$. Then $\min_{i\neq j\in\dbra 1,N\dket}|\chi_i-\chi_j| = (C-\chi_*)/ N\geq 1/N$ so for every $\kappa>0$ we can choose $A = \max_{N\in\N} \log(N)N^{-\kappa}<\infty$.

Consider a sequence $N\mapsto \t{\bs y}\in\R^N$. Using this, we construct a dilute soliton gas $N\mapsto \bs\chi,\bs y$ as follows. Let $\t{\bs x} = \bs x(\bs\chi,\t{\bs y})$, let $i_k$ be such that $\t x_{i_{k+1}}>\t x_{i_k}$ for $k=1,2,\ldots,N-1$, choose $R,\ep>0$, and set
\beq
	y_{i_k} = \t y_{i_k}+R(k-N/2) N^\ep.
\eeq
Then
\beq
	x_{i_k} = \t x_{i_k}+R(k-N/2) N^\ep
\eeq
solves \eqref{bethe}, which is easy to show by noticing that, for every $k\neq l$, we have $\sgn(x_{i_k}-x_{i_l}) = \sgn(\t x_{i_k}-\t x_{i_l} + R(k-l)N^\ep)= \sgn(\t x_{i_k}-\t x_{i_l})$ as, by construction, $\sgn(\t x_{i_k}-\t x_{i_l}) = \sgn(k-l)$. As $\t{\bs x}$ satisfies \eqref{separationsimplified}, then so does $\bs x$ because it is less dense. For almost every $R>0$ (with respect to the Lebesgue measure), the non-coincidence condition \eqref{noncoincidence} holds, and hence by the uniqueness statement of Theorem \ref{theosbe}, we have
\beq
	\bs x = \bs x(\bs\chi,\bs y).
\eeq
Below we will have to choose $R$ large enough.

We now evaluate $\Delta x(\Delta X;\bs\chi,\bs y)$. By Eq.~\eqref{boundvarphi} below, we have
\beq
	\max_{j=1\atop j\neq i}^N |\varphi_{ij}|
	\leq \frc{1}{\chi_*} (AN^\kappa + \log(2C))
	\leq A'N^\kappa\quad \forall\;N\in\N
\eeq
for some $A'>0$. We note that $x_i$'s are separated by nearest-neighbour distances of $RN^\ep$ or more. Then, in order to have at least one term in the subtraction in \eqref{contractintro}, $x_*$ must be a distance at least $EN^\ep$ from $x_i$, so that
\beq
	|X_i-x_* - (x_i-x_*)| \leq \frc{A'}R|x_i-x_*|N^{\kappa-\ep}.
\eeq
Hence
\beq
	X_i-x_* = (1+u_iN^{\kappa-\ep})(x_i-x_*)
\eeq
where $|u_i|\leq A'/R$. As this holds for all $\kappa>0$, for any $\ep>0$ we can choose $\kappa<\ep$. Choosing $R$ large enough,
\beq
	R>A',
\eeq
guarantees that $|u_iN^{\kappa-\ep}|<1$ for all $N\in\N$, hence $\sgn(X_i-x_*) = \sgn(x_i-x_*)$. Therefore the condition $X_i-x_*\geq \Delta X$ follows from $x_i-x_*\geq (1-(A'/R)N^{\kappa-\ep})^{-1}\Delta X$, hence in \eqref{xilrintro}, we have
\beq
	x_i^{\rm left}(\Delta X;\bs\chi,\bs y)\geq
	x_i-(1-(A'/R)N^{\kappa-\ep})^{-1}\Delta X.
\eeq
Similarly,
\beq
	x_i^{\rm right}(\Delta X;\bs\chi,\bs y)\leq
	x_i+(1-(A'/R)N^{\kappa-\ep})^{-1}\Delta X
\eeq
and we find
\beq
	\Delta x(\Delta X;\bs\chi,\bs y) \leq
	\frc{2}{1-(A'/R)N^{\kappa-\ep}}\Delta X.
\eeq
Hence in \eqref{c4intro}, we may take, using $\kappa<\ep$,
\beq
	V = \frc{2}{1-A'/R}.
\eeq

Therefore, we have shown that the dilute soliton gas $N\mapsto \bs\chi,\,\bs y$ constructed above has both regular spectral parameters and regular metric change. Hence all theorems above hold for this gas.

\section{Tau functions and soliton positions}\label{secmain}

The KdV equation \eqref{kdv} has many types of solutions, which can be obtained for instance by the method of inverse scattering \cite{babelon2003introduction}. Our analysis is based on tau functions instead, see \cite{hietarinta2007introduction} for a general discussion of their meaning and significance. We are interested in the $N$-soliton solution, with $N$ becoming large. In this section, we introduce the concepts and basic results that will be needed to prove our main theorems. The section is as pedagogical as possible, and we leave the longer, technical proofs to Section \ref{secproofs}.

\subsection{Tau-function representations of the KdV multi-soliton solution}\label{ssecttau}

We start by writing down the explicit $N$-soliton solution in terms of so-called tau functions. The form \eqref{tau}, below, of the $N$-soliton tau function is standard, but not unique, see e.g.~\cite{gardner1974korteweg}. We obtain other forms of tau functions, as expressed in Lemma \ref{lemtaudecomp}, which play a crucial role in our study and which have a clear interpretation in terms of the factorised scattering theory of the $N$-soliton solutions to the KdV equation.

Recall that we use $N\in\N,\,\bs y \in\R^N$ (impact parameters) and $\bs\chi\in\Rpu^N$ (spectral parameters, see Eq.~\eqref{Rpudef}), as well as $v_i = 4\chi_i^2,\ i=1,\ldots,N$. We write
\beq\label{yt}
	\bs y^{(t)} = \bs y + \bs v t, \quad t\in\R
\eeq
for the time-evolved impact parameters.

The $N$-soliton solution to the KdV equation \eqref{kdv} can be expressed in terms of $\bs \chi$ and $\bs y^{(t)}$ as (its dependence on $N$ is within the understanding that $\bs \chi,\bs y^{(t)}\in\R^N$)
\beq\label{utau}
	u(x,t) = u_{\bs \chi,\bs y^{(t)}}(x):=2\p_x^2 \log \tau_{\bs\chi,\bs y^{(t)}}(x)
\eeq
where the tau function can be written as
\beq\label{tau}
	\tau_{\bs\chi,\bs y}(x) = \det(\Psi(x)^2 + \omega),\quad  x\in\R
\eeq
with $\Psi(x),\,\omega\in\Mat(N)$ being $N$ by $N$ real matrices given by (we keep their dependence on $\bs \chi,\,\bs y$ implicit for lightness of notation)
\beq\label{Psiomega}
	\Psi(x) = \diag(\Psi_i(x))_i,\quad \Psi_i(x) = \re^{\chi_i(x-a_i)},\quad
	\omega_{ij} = \frc{2\sqrt{\chi_i\chi_j}}{\chi_i+\chi_j}.
\eeq
Here the time-evolved ``na\"ive'' impact parameters $a_i$ are related to the time-evolved impact parameters $y_i$ and spectral parameters $\bs\chi$ via (see \eqref{Xextremeintro})
\beq\label{ay}
	a_i = y_i - \frc12 \sum_{j=1\atop j\neq i}^N \varphi_{ij} = X_i^+(\bs\chi,\bs y)
\eeq
where (see Eq.~\eqref{factscat})
\beq\label{varphi}
	\varphi_{ij} = \frc1{\chi_i} \log\Big|\frc{\chi_i-\chi_j}{\chi_i+\chi_j}\Big|.
\eeq
Note that
\beq\label{varphineg}
	\varphi_{ij}<0\quad\forall\ i\neq j.
\eeq

The function \eqref{utau} is smooth, and in fact it is a Schwartz function. We will refer to $\Psi_i(x)$ as the {\em partial wave} of the soliton $i$. As mentioned, the choice of $\bs y$ as impact parameters, instead of the more standard $\bs a$, with the complicated-looking relation \eqref{ay}, is because $\bs y$ has a more universal physical meaning from the viewpoint of scattering theory as we described in Section \ref{ssectbackground} and develop below. Therefore, it is easily ``portable'' to other representations of the $N$-soliton solution.  It also makes our main equations for soliton positions more symmetric. Instead, $a_i$ has the interpretation as ``right-extremal position'' of soliton $i$, as explained around Eq.~\eqref{valuesextremepositions} below.

We often concentrate on the time slice $t=0$; results for other times are simply obtained by linear shifts of $\bs y$. The set of all multi-soliton solutions is $\mathfrak S$, Eq.~\eqref{defiS}.

The function $u_{\bs \chi,\bs y}$ can be represented in terms of various tau functions, other than \eqref{tau}. This is because the right-hand side of \eqref{utau} is unchanged under the replacement $\tau_{\bs \chi,\bs y^{(t)}}(x)\to e^{A^{(t)}x+B^{(t)}}\tau_{\bs \chi,\bs y^{(t)}}(x)$, for any $A^{(t)},\,B^{(t)}\in\R$ (which may depend on $t$). This gives us an equivalence relation on tau functions\footnote{For terminology, we use ``the tau function'' in oder to describe the equivalence class of $\tau$, Eq.~\eqref{tau}, and ``a tau function'' or ``a tau-function representation'' or ``a form of the tau function'' for an element of the class.}, as functions of $x$:
\beq\label{equiv}
	\tau_{\bs \chi,\bs y}\equiv \tau' \quad \mbox{iff}\quad 
	\tau'(x)= e^{Ax+B}\tau_{\bs \chi,\bs y}(x)\ \forall x\in\R \quad \mbox{for some  $A,\,B\in\R$}
\eeq
and
\beq
	u_{\bs \chi,\bs y}(x,t) = 2\p_x^2 \log \tau'(x)\quad \forall \ \tau'\equiv \tau_{\bs \chi,\bs y}.
\eeq

It is convenient to expand the determinant form \eqref{tau} in terms of partial waves $\Psi_i(x)$, using the Cauchy structure of the matrix $\omega$, Eq.~\eqref{Psiomega}. In particular, there is a natural choice of tau function, within the equivalence class of \eqref{tau}, for every choice of subset $s\subseteq\dbra 1,N\dket$, where the set $s$ represents, as we explain below, ``out-solitons'', while its complement $\dbra 1,N\dket\setminus s$ represents ``in-solitons''. These tau functions are expanded into partial waves as follows:
\begin{lemma}[In-out representations] \label{lemtaudecomp}
For every $N\in\N$, $\bs\chi\in\Rpu^N,\,\bs y\in\R^N$ and  $s\subseteq \dbra 1,N\dket$, 
\beq\label{taudecomp}
	\tau_{\bs \chi,\bs y} \equiv \tau_{\bs \chi,\bs y}^s,\quad \tau_{\bs \chi,\bs y}^s(x):=\frc1{S_s}\sum_{p\subseteq s,\,q\subseteq \dbra 1,N\dket\setminus s}
	S_{(s\setminus p)\cup q}
	\prod_{i\in p}\Psi_i(x)^2 \prod_{j\in q}\Psi_j(x)^{-2}
\eeq
where
\beq\label{Sp}
	S_{r} := \prod_{i,j\in r\atop i\neq j}|S_{ij}|
	=\prod_{i,j\in r\atop i<j}
	S_{ij}^2,
	\quad S_{ij}:=\frc{\chi_i-\chi_j}{\chi_i+\chi_j},
	\quad  r\subseteq\dbra 1,N\dket.
\eeq
Further, defining
\beq\label{xist}
	x_i^{s} := y_i + \frc12\sum_{j=1\atop j\neq i}^N \sgn_s(j)\varphi_{ij}
	\quad(i\in\dbra 1,N\dket),\quad
	x_i^{s,r} := y_i + \frc12\sum_{j=1 \atop j\not\in r}^N  \sgn_s(j)\varphi_{ij}
	\quad (i\in r)
\eeq
where $\sgn_s(j)$ is defined in \eqref{sgns}, we have the following two equivalent expressions:
\beqa\label{tauresult}
	\tau_{\bs \chi,\bs y}^s(x)&=&
	\sum_{r\subseteq \dbra 1,N\dket}\ 
	\exp \sum_{i\in r} 2\chi_i\sgn_s(i)(x-x_i^{s})\ 
	\exp \sum_{i,j,\in r,\,i\neq j} \sgn_s(i)\sgn_s(j)
	\log|S_{i,j}|\\
	&=&\label{tauresult2}
	\sum_{r\subseteq \dbra1,N\dket}\ 
	\exp \sum_{i\in r} 2\chi_i\sgn_s(i)(x-x_i^{s,r}).
\eeqa
\end{lemma}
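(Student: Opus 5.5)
The plan is to start from the standard principal-minor expansion of $\det(\Psi^2+\omega)$. Since $\Psi^2$ is diagonal,
\[
\det(\Psi(x)^2+\omega)=\sum_{r\subseteq\dbra 1,N\dket}\det(\omega_{r})\prod_{i\notin r}\Psi_i(x)^2 ,
\]
where $\omega_r$ is the principal submatrix of $\omega$ indexed by $r$. Each $\omega_r$ is a Cauchy-type matrix, $\omega_{ij}=\sqrt{2\chi_i}\sqrt{2\chi_j}/(\chi_i+\chi_j)$. The Cauchy determinant formula then gives
\[
\det\omega_r=\prod_{i\in r}\frac{2\chi_i}{2\chi_i}\prod_{i<j\in r}\Big(\frac{\chi_i-\chi_j}{\chi_i+\chi_j}\Big)^2=S_r .
\]
Writing $t=\dbra 1,N\dket\setminus r$ for the set of indices carrying $\Psi^2$, we obtain
\[
\tau_{\bs\chi,\bs y}(x)=\sum_{t}S_{t^c}\prod_{i\in t}\Psi_i(x)^2 .
\]
This is the $s=\dbra 1,N\dket$ form. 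I would first check it directly against \eqref{taudecomp} with $p=t$, $q=\emptyset$.

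For general $s$, I would multiply by $\prod_{j\notin s}\Psi_j(x)^{-2}/S_s$. This factor is $\re^{Ax+B}$, so it is an allowed gauge factor by \eqref{equiv}. Then I reindex the subset $t$ as $t=p\cup((\dbra 1,N\dket\setminus s)\setminus q)$ with $p\subseteq s$ and $q\subseteq \dbra 1,N\dket\setminus s$. With this substitution, $\prod_{i\in t}\Psi_i^2\prod_{j\notin s}\Psi_j^{-2}=\prod_{p}\Psi_i^2\prod_q\Psi_j^{-2}$ and $t^c=(s\setminus p)\cup q$. This yields \eqref{taudecomp} with prefactor $1/S_s$.

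Next, to reach \eqref{tauresult}, I set $r=p\cup q$; each subset $r$ corresponds to a unique pair $(p,q)$, namely $p=r\cap s$ and $q=r\setminus s$. The exponential part is $\prod_{i\in r}\Psi_i^{2\sgn_s(i)}=\exp\sum_{i\in r}2\chi_i\sgn_s(i)(x-a_i)$.

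The combinatorial step, which I expect to be the only real work, is to express $S_{(s\setminus p)\cup q}/S_s$ via pairwise factors. Write $L_{ij}=\log|S_{ij}|$, which is symmetric in $i,j$, and use $\log S_w=\sum_{i\neq j\in w}L_{ij}$ over ordered pairs. Comparing $w=(s\setminus p)\cup q$ with $s$, the contributions to $\log S_w-\log S_s$ are as follows:
\begin{itemize}
\item pairs inside $p$ are removed, contributing $-\sum_{p,p}$;
\item pairs between $p$ and $s\setminus p$ are removed, contributing $-2\sum_{i\in p,\,j\in s\setminus p}$;
\item pairs inside $q$ are added, contributing $+\sum_{q,q}$;
\item pairs between $q$ and $s\setminus p$ are added, contributing $+2\sum_{i\in q,\,j\in s\setminus p}$.
\end{itemize}
I then rewrite the sums over $s\setminus p$ as sums over $s$ minus sums over $p$. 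The result should collapse to
\[
\sum_{i\neq j\in r}\sgn_s(i)\sgn_s(j)L_{ij}-2\sum_{i\in r}\sgn_s(i)\sum_{j\in s,\,j\neq i}L_{ij}.
\]
The $p$–$p$ terms give coefficient $-1-2+2=+1$, the $q$–$q$ terms give $+1$, and the cross terms between $p$ and $q$ give $-2$, matching $\sgn\sgn=-1$ counted over both ordered pairs. This bookkeeping is where sign errors are likely, so I would verify it on $|r|\le2$.

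Then I absorb the linear-in-$r$ term into the positions. Using $L_{ij}=\chi_i\varphi_{ij}$, we have
\[
-2\sgn_s(i)\sum_{j\in s}\chi_i\varphi_{ij}=2\chi_i\sgn_s(i)\Big(-\sum_{j\in s}\varphi_{ij}\Big).
\]
Combining this with $a_i=y_i-\frac12\sum_{j\neq i}\varphi_{ij}$ gives
\[
x-a_i-\sum_{j\in s}\varphi_{ij}=x-y_i-\tfrac12\sum_{j\neq i}\sgn_s(j)\varphi_{ij}=x-x_i^s ,
\]
which is \eqref{tauresult}. Finally, \eqref{tauresult2} follows by folding the quadratic term into the positions as well. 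Write $\sum_{i\neq j\in r}\sgn_s(i)\sgn_s(j)\chi_i\varphi_{ij}=\sum_{i\in r}2\chi_i\sgn_s(i)\cdot\frac12\sum_{j\in r\setminus\{i\}}\sgn_s(j)\varphi_{ij}$. Adding this to $x-x_i^s$ removes the $j\in r$ terms from the sum defining $x_i^s$, leaving exactly $x-x_i^{s,r}$.
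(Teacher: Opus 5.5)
Your proof is correct. It reaches \eqref{taudecomp} by a more direct route than the paper.

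\textbf{First step.} The paper obtains the coefficients of $\det(\Psi^2_{|s}+\omega_{|s})$ indirectly:
\begin{itemize}
\item it uses the Cauchy inverse $\omega^{-1}=\gamma\omega\gamma$ from \eqref{invomega};
\item it factors the determinant through $\b\Psi^2_{|s}+(\omega_{|s})^{-1}$;
\item it sets $\Psi_i=0$ on a complementary subset;
\item it derives and solves the recursion \eqref{crecur} to get $c^s_p=S_{s\setminus p}$.
\end{itemize}
You instead use the principal-minor expansion of $\det(\Psi^2+\omega)$ with $\Psi^2$ diagonal. You then apply Cauchy's determinant to each principal submatrix $\omega_r$, which has the same form with $\bs\chi_r$. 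This yields the coefficient $S_{t^c}$ of every monomial at once. It needs only the determinant formula, not the inverse, and it makes transparent that $c^s_p$ is just $\det\omega_{|s\setminus p}$.

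The paper's detour produces the recursion between coefficients for nested index sets as a by-product, but nothing later relies on it.

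\textbf{Regauging.} Your regauging by $\prod_{j\notin s}\Psi_j^{-2}/S_s$ and the reindexing $t=p\cup((\dbra 1,N\dket\setminus s)\setminus q)$ match the paper. Note that $S_s>0$ because the $\chi_i$ are distinct, so this is a legitimate $\re^{Ax+B}$ factor.

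\textbf{Second step.} For \eqref{tauresult} both arguments do the same bookkeeping.
\begin{itemize}
\item The paper works multiplicatively. It first trades $a_i$ for $y_i$ and then for $x^s_i$, then uses block products $A_{ab}$ over $p$, $s\setminus p$, $q$ and $(\dbra 1,N\dket\setminus s)\setminus q$.
\item You work additively with $L_{ij}=\log|S_{ij}|=\chi_i\varphi_{ij}$, splitting $s\setminus p$ as $s$ minus $p$.
\end{itemize}
Your coefficient count is right: $+1$ on $p$--$p$ and $q$--$q$ pairs, $-2$ on $p$--$q$ pairs, plus the linear term $-2\sum_{i\in r}\sgn_s(i)\sum_{j\in s\setminus\{i\}}L_{ij}$. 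The absorption of that term into $x^s_i$, and then of the quadratic term into $x^{s,r}_i$, also checks out.
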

The proof is provided in Sec.~\ref{ssecttaurep}. We note that for every function $t\mapsto s(t)\subseteq\dbra 1,N\dket$, we have
\beq\label{tauprimet}
	\tau_{\bs \chi,\bs y^{(t)}} \equiv \tau_{\bs \chi,\bs y^{(t)}}^{s(t)}.
\eeq
That is, in the expressions in Lemma \ref{lemtaudecomp}, we can give an additional $t$-dependence to the set $s$ besides the linear displacement of impact parameters, and the result still provides a $t$-dependent tau function that gives the $N$-soliton solution $u_{\bs\chi,\bs y^{(t)}}$ to the KdV equation.

The explicit expansion of the determinant, such as in the in-out representation above, directly gives a simple bound on the $N$-soliton field itself. This bound is weak, for instance the $n$th derivative grows as $N^{n+2}$ for bounded spectral parameters, while it is known, for $n=0$, that the field $||u||_\infty$ is bounded in this case \cite{lundina1985compactness,gesztesy1992limits}. However here we also have bounds on the derivatives, $n\geq 1$, which is important for our results, and when combined with our local projection result Theorem \ref{theo}, this will give a much tighter bound (while still weaker than the one quoted on $u$ itself, it will be enough), Corollary \ref{corolbound2}. 
\begin{lemma}[rough bound on the KdV field]\label{lembound1} In the context of Lemma \ref{lemtaudecomp}, we have
\beq\label{bound1}
	|\p_x^nu_{\bs\chi,\bs y}(x)|
	\leq 
	2(n+2)!\sum_{m_1,\ldots,m_{n+2}\geq 0\atop \sum_j jm_j= n+2}
	\frc{\Big(\sum_j m_j-1\Big)!}{\prod_{j=1}^{n+2} j!^{m_j}m_j!}
	\Big(2\sum_i\chi_i\Big)^{n+2}
\eeq
for all $n\in\Z_+$ and $x\in\R$.
\end{lemma}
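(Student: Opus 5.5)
Writing $u = 2\p_x^2\log\tau$ with $\tau\equiv\tau^s_{\bs\chi,\bs y}$ for any fixed choice of $s$, I would use the representation \eqref{tauresult2}. There $\tau(x) = \sum_{r} w_r(x)$ is a sum of strictly positive terms
\beq
	w_r(x) = \exp\Big(\sum_{i\in r}2\chi_i\sgn_s(i)(x-x_i^{s,r})\Big),
\eeq
and each exponent is affine in $x$ with slope $\lambda_r := \sum_{i\in r}2\chi_i\sgn_s(i)$. Consequently $\log\tau(x)$ is the cumulant generating function of a probability measure: define weights $\pi_r(x) = w_r(x)/\tau(x)$ on the finite set of subsets $r$, and view $\Lambda:=\lambda_r$ as a random variable under $\pi(x)$. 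Then $\p_x\log\tau = \mathbb E_{\pi(x)}[\Lambda]$. More generally, shifting $x\to x+h$ multiplies each $w_r$ by $\re^{h\lambda_r}$, so $\log\tau(x+h)-\log\tau(x) = \log\mathbb E_{\pi(x)}[\re^{h\Lambda}]$. Therefore $\p_x^{m}\log\tau(x) = \kappa_m$, the $m$-th cumulant of $\Lambda$ under $\pi(x)$. This yields $\p_x^n u = 2\kappa_{n+2}$.

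The key bound is that $\Lambda$ is bounded: $|\lambda_r|\le \sum_{i\in r}2\chi_i\le 2\sum_i\chi_i =: M$. It follows that all moments satisfy $|\mu_j|=|\mathbb E[\Lambda^j]|\le M^j$. I would then invoke the standard moment-to-cumulant formula (Faà di Bruno applied to $\log$ of the moment generating function):
\beq
	\kappa_m = m!\sum_{\sum_j jm_j=m}\frc{(-1)^{\sum_j m_j-1}\big(\sum_j m_j-1\big)!}{\prod_j j!^{m_j}m_j!}\prod_j \mu_j^{m_j}.
\eeq
This follows from $\log(1+y)=\sum_k (-1)^{k-1}y^k/k$ with $y=\sum_{j\ge1}\mu_j h^j/j!$ and extracting the coefficient of $h^m/m!$. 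Taking absolute values term by term and using $\prod_j|\mu_j|^{m_j}\le M^{\sum_j jm_j}=M^m$ gives exactly
\beq
	|\kappa_{n+2}|\le (n+2)!\sum\frc{(\sum_j m_j-1)!}{\prod_j j!^{m_j}m_j!}\,M^{n+2},
\eeq
and multiplying by $2$ yields \eqref{bound1}.

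The main point to check carefully is the identification of derivatives of $\log\tau$ with cumulants. This requires every term of \eqref{tauresult2} to be positive, so that $\pi$ is a genuine probability measure, and to be exactly exponential-affine in $x$ — both of which hold since the $x_i^{s,r}$ do not depend on $x$. The rest is the combinatorial moment–cumulant formula, which I would cite or derive via Faà di Bruno as indicated. The estimate is uniform in $x$, in $s$, and in $\bs y$.
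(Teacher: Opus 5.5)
Your proposal is correct and is essentially the paper's own argument: the moment--cumulant formula you invoke is exactly Faà di Bruno's formula for $\p_x^{n+2}\log\tau$, and your bound $|\mu_j|\leq M^j$ is the paper's estimate $|\p_x^j\tau|\leq (2\sum_i\chi_i)^j\,\tau$, which comes from the positivity and exponential-affine form of each partial-wave term. The only difference is cosmetic: the paper fixes $s=\dbra 1,N\dket$ in \eqref{taudecomp}, so all slopes are nonnegative, whereas you allow any $s$ and take absolute values.
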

\proof
From \eqref{utau} and Faa Di Bruno's formula, we have
\beq
	\p_x^n u_{\bs\chi,\bs y}(x) = 2(n+2)!\sum_{m_1,\ldots,m_{n+2}\geq 0\atop \sum_j jm_j= n+2}
	(-1)^{\sum_j m_j-1}\Big(\sum_j m_j-1\Big)!\prod_{j=1}^{n+2} \frc{(\p_x^j \tau_{\bs\chi,\bs y}(x))^{m_j}}{j!^{m_j}m_j!\tau_{\bs\chi,\bs y}(x)^{m_j}}.
\eeq
For convenience, let us use \eqref{taudecomp} with $s=\dbra1,N\dket$, so that $q=\emptyset$ in every term. Consider the term with $p$ in \eqref{taudecomp}. Its $j$th derivative $\p_x^j$ is bounded as
\beq
	\Bigg(2\sum_{i\in p} \chi_i\Bigg)^j \,\frc{S_{s\setminus p}}{S_s} \prod_{i\in p}\Psi_i(x)^2 \leq \Bigg(2\sum_{i=1}^N \chi_i\Bigg)^j \,\frc{S_{s\setminus p}}{S_s} \prod_{i\in p}\Psi_i(x)^2.
\eeq
Hence
\beq
	\p_x^j \tau_{\bs\chi,\bs y}(x) \leq \Bigg(2\sum_{i=1}^N \chi_i\Bigg)^j\tau_{\bs\chi,\bs y}(x).
\eeq
The result follows.
\eproof

As a check, and in order to gain physical intuition, it is easy to see that we get the correct scattering shifts at positive and negative infinite times. Take \eqref{tauresult2}. Let us follow soliton $i_*$ and concentrate on the position $x = \b x + v_{i_*} t$ to large times $t\to\infty$. Then it is convenient to take $s= \dbra i_*,N\dket$, as, accounting for the order $v_1<v_2<\ldots<v_N$, only two terms do not vanish, those with $r=\{i_*\}$ and $r=\emptyset$, and both are finite in the limit. Hence we find
\beq
	\lim_{t\to\infty} \tau_{\bs \chi,\bs y^{(t)}}^s(\b x + v_{i_*} t)= 
	\exp\Big[2\chi_{i_*}\Big(\b x-y_{i_*}-\frc12\sum_{j\neq i_*}\sgn(j-i_*) \varphi_{i_*j}\Big)\Big]+1
	=
	\exp\Big[2\chi_{i_*}\Big(\b x-x_{i_*}^+\Big)\Big]+1
\eeq
with {\em outgoing impact parameters} from \eqref{outgoingincoming}. This is a tau function for the 1-soliton solution centred on $x_{i_*}^+$. A similar calculation can be done for all spatial derivatives of the tau function, which also are finite in this limit, and hence the soliton solution \eqref{utau} converges accordingly. On the other hand with $t\to-\infty$ we choose $s = \dbra 1,i_*\dket$ and get
\beq
	\lim_{t\to-\infty} \tau_{\bs \chi,\bs y^{(t)}}^s(\b x + v_{i_*} t) =
	\exp\Big[2\chi_{i_*}\Big(\b x-y_{i_*}+\frc12\sum_{j\neq i_*}\sgn(j-i_*) \varphi_{i_*j}\Big)\Big]+1
	=
	\exp\Big[2\chi_{i_*}\Big(\b x-x_{i_*}^-\Big)\Big]+1
\eeq
with {\em incoming impact parameters} from \eqref{outgoingincoming}. Now we get the tau function for the 1-soliton solution centred on $x_{i_*}^-$. The difference of impact parameters gives the scattering shift of this soliton, where we now write the sign function in terms of the velocities for more physical clarity, see Eq.~\eqref{factscat}:
\beq
	\Delta x_{i_*} = x_{i_*}^+ - x_{i_*}^- = \sum_{j=1\atop j\neq i_*}^N
	\sgn(v_j-v_{i_*})\varphi_{i_*j}.
\eeq
We see that soliton $i_*$ has been affected by a shift to the right by the positive quantity $-\varphi_{i_*j}$ upon scattering with slower solitons $j:v_j<v_{i_*}$, and by a shift to the left by the positive quantity $-\varphi_{i_*j}$ upon scattering with faster solitons $j:v_j>v_{i_*}$. This is in agreement with the known factorised scattering theory of the KdV equation \cite{gardner1974korteweg}, and shows \eqref{uasymp} with \eqref{outgoingincoming}. Note that we also recover \eqref{yaverage}. A similar calculation gives \eqref{uasympgen}.

We see above that a convenient choice of $s$ allowed us to make sure that the tau function and its derivatives converge to a finite, non-zero value in the large-time limit. In this way the evaluation of the soliton solution is rendered simple, as the derivative of the logarithm of the tau function gives the ratio
\beq
	\p_x \log\tau = \frc{\tau \p_x^2 \tau - (\p_x\tau)^2}{\tau^2}
\eeq
and we just have to take ratios of the finite limits. Different choices of $s$ would have given the same result (thanks to the equivalence \eqref{equiv}), however the calculation of the soliton solution for the KdV field requires more work because the numerator and denominator in \eqref{utau} either diverge or go to zero. A clever choice of set $s$ is at the basis of our fluid-cell projection results.

Using factorised scattering ideas, the physical interpretation of \eqref{tauresult}, \eqref{tauresult2} is now clear: We choose a set $s$ representing solitons that naturally ``go to $\infty$'', in the sense that their corresponding partial waves vanish as $y_i\to\infty$ -- these are our ``out-solitons''; while solitons in the complement $\dbra 1,N\dket\setminus s$ ``go to $-\infty$'' -- these are our ``in-solitons''. Then, in the way of writing \eqref{tauresult2}, we sum over all possible subsets of solitons $r$ that are present, i.e.~that have ``not yet been sent to infinity.'' But these solitons have their coordinates shifted, according to the scattering picture, by half of the scattering shifts due to all solitons that have been sent to infinity. This accounts for ``half of the effect'' of the full scattering, as the solitons that are present have been subjected only to half-trajectories, on times $[0,\infty)$ or $(-\infty,0]$, of the solitons sent to infinity. In \eqref{tauresult} we shift by half the scattering shift due to all solitons, even those not ``sent to infinity'', and counterbalance with a self-interaction term within the group of solitons $r$. This physical explanation is what justifies our nomenclature {\em in-out representations} for the tau functions in Lemma \ref{lemtaudecomp}. Note how these expressions, with this physical interpretation, are similar to what is seen in the coordinate Bethe ansatz for wave functions (see e.g.~\cite{korepin1997quantum}).

\subsection{Semiclassical Bethe equations and the centred form of the tau function}\label{ssectsbe}

As mentioned, it has been conjectured that the SBE \eqref{bethe} determine meaningful soliton positions \cite{doyon2024new}. We now introduce these equations and the main notions related to the, in particular showing how they allow us to obtain a form of the tau function where the strength of every term can be controlled by the positions $x_i$ solving the SBE. In Section \ref{ssectsberesults}, we then study the SBE in more depth, finding explicit solutions and specifying some of their main properties. Seeing a solution to the SBE as specifying solitons' positions, there are associated notions of imprecisions, which we study in Section \ref{ssectimp}. Making an appropriate choice of solution to the SBE, in Section \ref{ssectprojections} we define fluid-cell projections. Thus, solutions to SBE will be our quasi-particle positions, and we will obtain projection theorems under relatively simple, algebraic conditions on such positions. In Sections \ref{ssectsberesults} and \ref{ssectimp}, we establish slightly more general and in-depth results than what is strictly necessary for our main theorems, in order to provide a fuller understanding of the SBE.

Recall the conventions \eqref{defab} and \eqref{argminmax}. In particular, when we write $[a,b)$, the interval is non-empty both for $a> b$ and $a< b$ (it is empty if $a=b$). We note that the results of this subsection hold for any matrix $(\varphi_{ij})_{i,j=1}^N$ with $\varphi_{ij}<0\;\forall\;i\neq j$, in place of the matrix \eqref{varphi}.

\begin{defi}[Semiclassical Bethe vectors]\label{defibethe}
Let $N\in\N$, $\bs \chi\in\Rpu^N,\,\bs y\in\R^N$. A {\em Semiclassical Bethe vector} for $\bs \chi,\,\bs y$ is a solution $\bs x\in\Rne^N$ to the system of SBE:
\beq\label{bethemain}
	y_i = x_i +\frc12\sum_{j=1\atop j\neq i}^N \sgn(x_i-x_j)\varphi_{ij},\quad i=1,\ldots,N.
\eeq
We define the {\em contraction map} $\mathcal C_{\bs \chi}: \Rne^N\to\R^N$ via Eq.~\eqref{bethemain} as
\beq\label{mapC}
	\bs y = C_{\bs\chi}(\bs x)\quad\forall\;\bs\chi\in\Rpu^N.
\eeq
\end{defi}
Theorem \ref{theosbe} below guarantees that a solution in $\Rne^N$ to the SBE always exist (hence $\mathcal C_{\bs\chi}$ has a right-inverse), and gives explicit solutions. An appropriate semiclassical Bethe vector will be our proposal for {\em quasi-particle positions}, representing effective positions of solitons in a multi-soliton KdV field. The fact that it is a good proposal is justified by our main projection theorems.

Note that in the cases where $\varphi_{ij}\geq0$, and with a differentiable regularisation of the sign function, it is shown in \cite{doyon2026generalised} that \eqref{bethemain} has a unique solution, giving a diffeomorphism $\R^N\to\R^N$. However, for the KdV equation we have $\varphi_{ij}<0$, hence this result does not apply, and there are, in general, many solutions.

An important notion is that of {\em locally contracted vectors} $\bs X$ at a point $x_*\in\R$, and the associated ``distances'' of their coordinates to $x_*$. The coordinate $X_i$ is obtained by subtracting the scattering jumps $\varphi_{ij}$ corresponding to interactions of $i$ with solitons at positions $x_j$ lying between $x_*$ and $x_i$.
\begin{defi}[Locally contracted vector at $x_*\in\R$]\label{definatmag}
Let $N\in\N,\,\bs\chi\in\Rpu^N,\,\bs x\in\Rne^N$. The locally contracted vector at $x_*$ for $\bs\chi,\,\bs x$ results from the {\em partial contraction map at $x_*$}, $\h{\mathcal C}_{\bs\chi,x_*}:\Rne^N\to\R^N$:
\beq\label{Xnatural}
	\bs X = \h{\mathcal C}_{\bs\chi,x_*}(\bs x),\quad
	X_i = x_i + \sgn(x_i-x_*)\sum_{j:x_j\in[x_*,x_i)\,(x_i
	\geq x_*)\atop j:x_j\in(x_i,x_*)\,(x_i< x_*)}\varphi_{ij}.
\eeq
The {\em contracted distances} to $x_*$ are
\beq\label{contracteddist}
	D_i = (X_i - x_*)\sgn(x_i-x_*).
\eeq
\end{defi}
We note that, because $x_i\neq x_j\,\forall\, i\neq j$, if $x_*\in[x_i-\ep,x_i+\ep]$ for $\ep>0$ small enough, then there are no contraction terms in \eqref{Xnatural}, whence $X_i=x_i$. That is, for all $i\in\dbra 1,N\dket$, we have the fixed-point property
\beq\label{fixedpoint}
	\big(\h{\mathcal C}_{\bs\chi,x_i}(\bs x)\big)_i  = x_i,\quad
	D_i=0\ \mbox{for}\ x_*=x_i
\eeq
as well as
\beq\label{fixedneigh}
	\big(\h{\mathcal C}_{\bs\chi,x_*}(\bs x)\big)_i  = x_i,\quad
	D_i=|x_i-x_*|\quad \forall\; x_*\in[x_i-\ep,x_i+\ep],\ \ep\mbox{ small enough.}
\eeq
Eq.~\eqref{fixedpoint} guarantees that the value of the sign function at 0 in \eqref{contracteddist} is irrelevant. We also note the contraction property
\beq\label{contractionprop}
	D_i \leq |x_i-x_*|\quad\forall\,i\in\dbra 1,N\dket.
\eeq

In evaluating coordinate $X_i$ in Definition \ref{definatmag}, we take away all interaction effects between $x_*$ and $x_i$. Hence, heuristically, these only partially account for the collision-rate ansatz, accumulating scattering shifts as if the soliton $i$ were at the position $x_*$. Further away from $x_*$, positions $X_i$'s do not fully account for all collisions, by contrast to positions $x_i$'s satisfying the SBE \eqref{bethe}, and thus are further away from true soliton positions. Much like the SBE maps free coordinates $y_i$'s to interacting coordinates $x_i$'s, contracted positions, as functions of $y_j$'s, are still {\em free with respect to the point $x_*$}. Because the scattering shift is negative, Eq.~\eqref{varphineg}, this means that coordinates $X_i$ around $x_*$ are contracted towards $x_*$ with respect to true solitons' coordinates $x_i$'s solving the SBE. Geometrically (see \cite{doyon2018geometric} for the geometric viewpoint on GHD on which this interpretation is based), the resulting {\em contracted space} is affected by a metric that is further from the flat metric the further we are from $x_*$.

The associated contracted distances $D_i$ are not true distances, as they can be negative. However, as we will show (Lemma \ref{lemnat}), they cannot be ``too'' negative, and for certain solutions (Theorem \ref{theosbe}), they are non-negative for all $x_i>x_*$ (or all $x_i<x_*$). The physical meaning of these is that the less positive $D_i$ is, the more the soliton ``condenses at $x_*$'', in the sense of soliton condensates \cite{el2021soliton,suret2024soliton}. Our picture is that when the distance $D_i$ is negative, the position $X_i$ lies in a {\em dark space} beyond $x_*$ -- as if some space needed to be added in order to account for them, and included within the contracted space. This picture makes sense for our definition of a local projection, Definition \ref{defimap}, at the basis of our main local-projection Theorem \ref{theo}: we consider solitons lying on some interval $I$ in contracted space, which includes the full dark space, no matter how far from $x_*$ in the wrong direction a contracted coordinate $X_i$ is.


Our main technical justification for introducing the semiclassical Bethe vectors and in particular their contracted distances, is their relation with tau functions. Combining the notion of contracted distances with a judicious choice of the tau-function representation \eqref{tauresult}, \eqref{tauprimet}, we obtain the following.
\begin{lemma}[Centred form] \label{lemtaumain}
Let $N\in\N$, $\bs\chi\in\Rpu^N,\,\bs y,\,\Delta\bs y\in\R^N$, and  $x_*\in\R$. Let $\bs x$ be a semiclassical Bethe vector for $\bs\chi,\,\bs y$, Definition \ref{defibethe}, and let $\bs D$ be the vector of contracted distances at $x_*$, Definition \ref{definatmag}. Then under the equivalence \eqref{equiv}, we have
\beq\label{taueqcentred}
	\tau_{\bs\chi,\bs y+\Delta\bs y}  \equiv \tau_{\bs\chi,\bs x,\Delta\bs y,x_*}^\#
\eeq
where
\beq
	\tau_{\bs\chi,\bs x,\Delta\bs y,x_*}^\#(x) := \sum_{r\subseteq \dbra1,N\dket} 
	\exp\Big[-\sum_{i\in r} 2\chi_i\big(
	D_i
	-\sgn(x_i-x_*)(x-x_*-\Delta y_i)\big)+ R_r\Big]
	\label{taumain}
\eeq
where
\beq\label{eR}
	R_r := \sum_{i,j\in r,\,i\neq j}
	\sgn(x_i-x_*)
	\sgn(x_j-x_*) \log|S_{i,j}|
\eeq
(recall \eqref{varphi} and \eqref{Sp}).
\end{lemma}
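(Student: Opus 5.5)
The plan is to start from the in-out representation \eqref{tauresult} of Lemma \ref{lemtaudecomp}, applied to impact parameters $\bs y+\Delta\bs y$, with the choice
\[
s=\{j:x_j\ge x_*\},\qquad \sgn_s(j)=\sgn(x_j-x_*).
\]
This matches the convention \eqref{sgn}, where the sign is $1$ at $0$. With this choice, the interaction factor in \eqref{tauresult} is $\exp\sum_{i\ne j\in r}\sgn_s(i)\sgn_s(j)\log|S_{ij}|$, which is exactly $\re^{R_r}$. So the only work is to rewrite the linear term
\[
\sum_{i\in r}2\chi_i\sgn_s(i)\big(x-x_i^s\big),
\]
where $x_i^s$ is computed from $\bs y+\Delta\bs y$. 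The target form is $-\sum_{i\in r}2\chi_i\big(D_i-\sgn(x_i-x_*)(x-x_*-\Delta y_i)\big)$, possibly plus an $r$-independent affine function of $x$ that the equivalence \eqref{equiv} can absorb. In the present case no such term should even be needed.

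The key step is to express $x_i^s$ through the SBE. By definition,
\[
x_i^s=y_i+\Delta y_i+\tfrac12\sum_{j\ne i}\sgn(x_j-x_*)\varphi_{ij}.
\]
Substituting $y_i$ from \eqref{bethemain} gives
\[
x_i^s=x_i+\Delta y_i+\tfrac12\sum_{j\ne i}\big[\sgn(x_i-x_j)+\sgn(x_j-x_*)\big]\varphi_{ij}.
\]
The bracket is a case analysis.

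\begin{itemize}
\item If $x_i\ge x_*$: the bracket equals $2$ exactly when $x_*\le x_j<x_i$, equals $0$ when $x_j>x_i$ or $x_j<x_*$, and the boundary cases follow from the conventions on $\sgn$. Hence $x_i^s=x_i+\Delta y_i+\sum_{j:x_j\in[x_*,x_i)}\varphi_{ij}=X_i+\Delta y_i$.
\item If $x_i<x_*$: the bracket equals $-2$ for $x_i<x_j<x_*$ and $0$ otherwise. Hence $x_i^s=x_i+\Delta y_i-\sum_{j:x_j\in(x_i,x_*)}\varphi_{ij}=X_i+\Delta y_i$.
\end{itemize}

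In both cases $x_i^s=X_i+\Delta y_i$, with $X_i$ as in \eqref{Xnatural}. Then
\[
\sgn_s(i)(x-x_i^s)=\sgn(x_i-x_*)(x-x_*-\Delta y_i)-\sgn(x_i-x_*)(X_i-x_*),
\]
and the last term equals $-D_i$ by \eqref{contracteddist}. This reproduces \eqref{taumain} term by term, so in fact $\tau^s_{\bs\chi,\bs y+\Delta\bs y}=\tau^\#$ exactly, and \eqref{taueqcentred} follows from \eqref{taudecomp}.

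The only delicate point I expect is the boundary bookkeeping in the case analysis: the cases $x_j=x_*$ (but $j\ne i$) and $x_i=x_*$, with $\sgn(0)=1$. These must match the half-open intervals $[x_*,x_i)$ and $(x_i,x_*)$ in Definition \ref{definatmag}. Since $\bs x\in\Rne^N$, at most one index sits at $x_*$, so I would check these cases directly.
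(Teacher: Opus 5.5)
Your proposal is correct and is essentially the paper's proof. You use the same choice $s=\{i:x_i\ge x_*\}$ in the in-out representation \eqref{tauresult}, identify the interaction factor with $\re^{R_r}$, and use the identity $x_i^s=X_i+\Delta y_i$ to turn the linear term into $-D_i+\sgn(x_i-x_*)(x-x_*-\Delta y_i)$. The only difference is that you verify this identity by a direct case analysis from the SBE, and your boundary bookkeeping there is right. The paper instead cites \eqref{Xiyi} from Lemma~\ref{lemnat}, whose proof rests on the same sign identity \eqref{specialrelationmain}.
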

\proof We use the in-out representation, Lemma \ref{lemtaudecomp}, putting the $\Delta \bs y$ shift out of $x_i^s$ in \eqref{tauresult}:
\beq
	\tau_{\bs \chi,\bs y+\Delta \bs y}^s(x)=
	\sum_{r\subseteq \dbra 1,N\dket}\ 
	\exp \sum_{i\in r} 2\chi_i\sgn_s(i)(x-x_i^{s}-\Delta y_i)\ 
	\exp \sum_{i,j,\in r,\,i\neq j} \sgn_s(i)\sgn_s(j)
	\log|S_{i,j}|
\eeq
with the first of Eq.~\eqref{xist}. Let $s=\{i:x_i\geq x_*\}$. Then, recalling our convention \eqref{sgn}, we have
\beq
	\sgn_s(i) = \sgn(x_i-x_*).
\eeq
We then obtain
\beq
	 \sum_{i,j,\in r,\,i\neq j} \sgn_s(i)\sgn_s(j)
	\log|S_{i,j}| = R_r
\eeq
and from \eqref{xist} and \eqref{Xiyi}, 
\beq
	x_i^{s} = X_i.
\eeq
Hence we have
\beqa
	\sgn_s(i) (x-x_i^s-\Delta y_i)&=&\sgn(x_i-x_*)(x-X_i-\Delta y_i)\n
	&=&
	\sgn(x_i-x_*)(x-x_*-\Delta y_i-(X_i-x_*))\n
	&=&-D_i+\sgn(x_i-x_*)(x-x_*-\Delta y_i).
\eeqa
\eproof

\medskip

One aspect of the result \eqref{taueqcentred} is that the left hand-side is a function that depends only on $\bs y+\Delta \bs y$, while the right-hand side is a function that depends on $\bs y$, through $\bs x$, and on $\Delta \bs y$, separately (as well as on $x_*$). Of course, the two functions are different, however the Lemma says that they are equivalent -- they give rise to the same KdV field. The way we choose to separate impact parameters into the sum $\bs y+\Delta \bs y$ is up to us, depending on the application. We may simply set $\Delta \bs y=\bs 0$, and we will denote
\beq
	\tau_{\bs\chi,\bs x,x_*}^\#:=\tau_{\bs\chi,\bs x,\bs 0,x_*}^\#.
\eeq
Below we will use $\tau_{\bs\chi,\bs x,\Delta\bs y,x_*}^\#$ with $\Delta \bs y = \bs v t$ in order to establish the local form for small times $t$ in Theorem \ref{theo}.

Another comment is about time evolution. As a result of the lemma, for every $\R\ni t\mapsto \bs x^{(t)}\in\R^N$ satisfying ${\mathcal C}_{\bs \chi}(\bs x^{(t)}) = \bs y^{(t)}\;\forall\;t\in\R$, we have $\tau_{\bs\chi,\bs y^{(t)}}  \equiv \tau_{\bs\chi,\bs x^{(t)},x_*}^\#\;\forall\;t\in\R$. Thus time evolution can be implemented by a time evolution on soliton displacements. This is important. Indeed, Lemma \ref{lemtaumain} along with
\beq\label{discussioncentredform}
	u_{\bs \chi, \bs y^{(t)}} (x) = 2\p_x^2\log\tau^\#_{\bs\chi,\bs x^{(t)},x_*}(x),\quad
	\bs x^{(t)} \mbox{ solves \eqref{bethet}},
\eeq
gives a crucial representation of the $N$-soliton solution that we will use in order to establish Theorem \ref{theo}. It gives the tau function for the KdV solution in its {\em centred form} $\tau^\#_{\bs\chi,\bs x,x_*}(x)$: every term in its partial wave expansion, parametrised by the subset $r$, is now strictly decaying as a function of the contracted distances. The tau function is ``centred'' around the observation point $x_*$, being dominated, at all times $t$, by solitons whose contracted distances are not too large. Of course, the representation \eqref{taumain} is valid for all values of $x$, hence gives the correct $N$-soliton solution even for values of $x$ that are far from $x_*$. However, as we will see below, this centred representation allows us to control what we will refer to as local projections, where solitons far from $x_*$ in the sense of contracted distances, are taken away, making sure that in doing this, the KdV field and its derivatives change very little for all values of $x$ that are near enough to $x_*$. It is in this sense that this is a good representation around $x_*$, which focusses only the solitons that matter for the KdV field around it. Formalising this is the purpose of our main technical theorem, Theorem \ref{theo}.

\ssect{Properties of semiclassical Bethe vectors}\label{ssectsberesults}

In this section we establish the main properties of semiclassical Bethe vectors, Definition \ref{defibethe}.

Interestingly, we find that there are always solutions to the SBE, which can be explicitly constructed by a finite algorithm, that satisfy at least one of two one-sided density conditions, Eqs.~\eqref{distexist} and \eqref{distexist2}. These conditions naturally generalise the hard-rod non-overlapping conditions, and are ``microscopic'' versions of the condition according to which soliton condensates correspond to the densest soliton configurations \cite{el2021soliton,suret2024soliton}. They guarantee that contracted distances $D_i$ are non-negative, either for all $x_i\geq x_*$, or all $x_i\leq x_*$.

We define ``regular'' points as those for which the density conditions hold on both sides of it; these points have particularly nice properties. We define a notion of  ``low density'' such that regular points are almost everywhere. These notions are useful to characterise solutions to the SBE, however we will not use them explicitly in our main results.


As we will see below, for certain points $x_*$, where no contracted coordinate lies in the dark space (i.e.~$D_i<0$), the locally contracted vector is an example of a {\em regular magnifying-glass vector}. Magnifying-glass vectors can be defined more generally, independently from the SBE, and also represent solitons' positions for a contracted metric at $x_*$, see Appendix \ref{appmag}; it is sufficient to restrict to ``regular'' magnifying-glass vectors here, where all coordinates are  different from $x_*$. In Appendix \ref{appmag} we partially develop the general theory of magnifying-glass vectors, and associated representations of tau functions and local projection theorem.

\begin{defi}[Regular magnifying-glass vectors at $x_*\in\R$]\label{defimagstr}
Let $N\in\N$, $\bs \chi\in\Rpu^N,\,\bs y\in\R^N$. A {\em regular magnifying-glass vector at $x_*$} for $\bs\chi,\,\bs y$ is a solution $\bs X\in\R^N,\;x_i\neq x_*\,\forall\,i$ to the following system of equations:
\beq\label{eqxymainXstr}
	y_i = X_{i} - \frc12\sum_{j=1\atop j\neq i}^N\sgn(X_j-x_*)\varphi_{ij},\quad i=1,\ldots,N.
\eeq
We define the {\em partial contraction map away from $x_*$}, $\mathcal C_{\bs \chi,x_*}: \{\bs X\in\R^N:x_i\neq x_*\;\forall\;i\}\to\R^N$, as
\beq\label{mapC}
	\bs y = C_{\bs\chi,x_*}(\bs X)\quad\forall\;\bs\chi\in\Rpu^N,\,x_*\in\R.
\eeq
\end{defi}


We observe that any semiclassical Bethe vector and regular magnifying-glass vector is of the form
\beq\label{formvectors}
	x_i = y_i + \frc12\sum_{j=1\atop j\neq i}^N\sigma_j^{(i)}\varphi_{ij},
	\quad
	X_i = y_i + \frc12\sum_{j=1\atop j\neq i}^N\sigma_j\varphi_{ij},	\quad
	\sigma_j^{(i)},\,\sigma_j\in\{\pm\}.
\eeq
Let us define {\em right- and left-extremal positions} $X_i^\pm$ and their maximum and minimum values:
\beq\label{condxstar}
	X_i^\pm(\bs\chi,\bs y) := y_i \mp \frc12\sum_{j=1\atop j\neq i}^N\varphi_{ij},\quad	
	x_-(\bs\chi,\bs y)= \min_i\big(X_i^-\big),\quad
	x_+(\bs\chi,\bs y) = \max_i\big(X_i^+\big).
\eeq
Clearly $x_-(\bs\chi,\bs y)<x_+(\bs\chi,\bs y)$ and we define the {\em core} of the KdV field characterised by $\bs\chi,\bs y$ as
\beq\label{core}
	\core(\bs\chi,\bs y) = [x_-(\bs\chi,\bs y),x_+(\bs\chi,\bs y)].
\eeq
We will show in Corollary \ref{corolextent}  that this determines where the KdV field is supported. It is easy to see that
\beq\label{sbeboundinv}
	[\min_i X_i,\max_i X_i],\ [\min_i x_i,\max_i x_i]\subseteq\core(\bs\chi,\bs y) .
\eeq
Note also that $X_i^+(\bs\chi,\bs y) = a_i$ from \eqref{ay}. That is, the na\"ive impact parameters $a_i$'s are magnifying-glass soliton positions for a large enough observation points. Similarly, $X_i^-(\bs\chi,\bs y)$ corresponds to the na\"ive impact parameters in the ``opposite'' formulation of tau functions obtained under the space-time parity transformation, $(x,t),a_i\mapsto (-x,-t),-a_i$.

For the proofs below, it is convenient to establish the following basic properties.
\begin{lemma}[Properties of regular magnifying-glass vectors] \label{lemdstr} In the context of Definition \ref{defimagstr}, whenever $x_*< x_-(\bs\chi,\bs y)$, resp.~$x_*> x_+(\bs\chi,\bs y)$, the regular magnifying-glass vector exists and is unique, and is given by
\beq\label{valuesextremepositions}
	X_i = X_i^-(\bs\chi,\bs y),
	\quad\mbox{resp.}\quad
	X_i = X_i^+(\bs\chi,\bs y).
\eeq
If $x_*> x_-(\bs\chi,\bs y)$, then for any regular magnifying-glass vector $\bs X$ there exists $i\in\dbra 1,N\dket$ such that $X_i\in[x_-(\bs\chi,\bs y),x_*)$. Similarly, if $x_*< x_+(\bs\chi,\bs y)$, then there exists $i\in\dbra 1,N\dket$ such that $X_i\in(x_*,x_+(\bs\chi,\bs y)]$. If $x_*=x_\pm(\bs\chi,\bs y)$, there are no regular magnifying-glass vectors.
\end{lemma}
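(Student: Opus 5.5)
The whole argument rests on one observation: by \eqref{formvectors}, every regular magnifying-glass vector is sandwiched between the left- and right-extremal positions. I would first rewrite \eqref{eqxymainXstr} as
\[
X_i = y_i + \frc12\sum_{j\neq i}\sgn(X_j-x_*)\varphi_{ij}.
\]
Since $\varphi_{ij}<0$ by \eqref{varphineg} and $\sgn(\cdot)\in\{\pm1\}$, each term satisfies $\varphi_{ij}\le \sgn(X_j-x_*)\varphi_{ij}\le -\varphi_{ij}$. Comparing with \eqref{condxstar} gives
\[
X_i^-(\bs\chi,\bs y)\le X_i\le X_i^+(\bs\chi,\bs y).
\]
Moreover, the lower bound is attained iff $\sgn(X_j-x_*)=+1$ for all $j\neq i$, and the upper bound iff $\sgn(X_j-x_*)=-1$ for all $j\neq i$. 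Because $X_j\neq x_*$ by definition of regularity, these sign conditions mean $X_j>x_*$, resp.\ $X_j<x_*$, so the convention \eqref{sgn} at $0$ never enters. In particular every regular magnifying-glass vector satisfies $x_-\le X_i\le x_+$ for all $i$.

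\emph{Case $x_*<x_-$.} For uniqueness, any solution has $X_i\ge X_i^-\ge x_->x_*$ for all $i$. Hence all signs equal $+1$, and the equation forces $X_i=X_i^-$. For existence, I would plug in $X_i=X_i^-$: all $X_j^->x_*$, so the right-hand side of the rewritten equation becomes $y_i+\frc12\sum_{j\ne i}\varphi_{ij}=X_i^-$, and the coordinates are all different from $x_*$. The case $x_*>x_+$ is the mirror image, using the upper bound and $X_i^+$.

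\emph{Case $x_*>x_-$.} I would argue by contradiction. Suppose no $X_i$ lies in $[x_-,x_*)$. Since all $X_i\ge x_-$, this forces $X_i\ge x_*$ for all $i$, and regularity upgrades this to $X_i>x_*$. All signs are then $+1$, so $\bs X=\bs X^-$. Taking $i\in\argmin_i X_i^-$ gives $X_i=x_-<x_*$, which is a contradiction. The statement for $x_*<x_+$ is symmetric, using $\argmax_i X_i^+$.

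\emph{Case $x_*=x_\pm$.} Take $x_*=x_-$. Any solution has $X_i\ge x_-=x_*$, so $X_i>x_*$ by regularity. Then $\bs X=\bs X^-$, and the minimiser has $X_i=x_-=x_*$, which is excluded. The case $x_*=x_+$ is symmetric, so no regular vector exists in either case.

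I do not expect any real obstacle here. The only delicate point is keeping track of the strict versus non-strict inequalities, and this is exactly where the hypothesis $X_j\neq x_*$ is used.
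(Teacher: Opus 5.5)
Your proof is correct and follows essentially the same route as the paper, which also uses the sandwich bound \eqref{sbeboundinv} ($x_-\le X_i\le x_+$) to force all signs $\sgn(X_j-x_*)$ to agree. Both arguments then conclude $\bs X=\bs X^\mp$ and obtain the contradiction from the minimiser/maximiser of $X_i^\mp$. The differences are only cosmetic: you derive $X_i^-\le X_i\le X_i^+$ explicitly from $\varphi_{ij}<0$ instead of citing it, and you treat $x_*>x_-$ and $x_*=x_-$ separately where the paper handles $x_*\ge x_-$ in a single argument.
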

\proof Let $x_*<x_-(\bs\chi,\bs y)$. Then because of \eqref{sbeboundinv} we have $\sgn(X_j-x_*)=1$ in \eqref{eqxymainXstr} and therefore we must have $X_i = X_i^-(\bs\chi,\bs y)\,\forall\, i$, and this is a solution. A similar argument holds for $x_*>x_+(\bs\chi,\bs y)$. If $x_*\geq x_-(\bs\chi,\bs y)$, then assume there are no $i$'s such that $X_i\leq x_*$. Then again $\sgn(X_j-x_*)=1$ in \eqref{eqxymainXstr} and $X_i =  X_i^-(\bs\chi,\bs y)\,\forall i$, therefore $\min_i X_i = x_-(\bs\chi,\bs y)$, a contradiction. Hence, again considering \eqref{sbeboundinv}, there must exist $i$ such that $X_i\in[x_-(\bs\chi,\bs y),x_*]$, and as this is a regular magnifying-glass vector, $X_i\neq x_*$, therefore $X_i\in[x_-(\bs\chi,\bs y),x_*)$. If $x_*=x_-(\bs\chi,\bs y)$, this is impossible, hence there cannot be a solution. A similar argument holds for $x_*\leq x_+(\bs\chi,\bs y)$. \eproof

\medskip

Given a semiclassical Bethe vector $\bs x$ with respect to $\bs\chi,\,\bs y$, we say that $x_*\in\R$ is a {\em regular point} for it if
\beq\label{condreg}
	|x_i-x_*|> -\sum_{j:x_j\in[x_*,x_i)} \varphi_{ij}\quad \forall\; i\in\dbra 1,N\dket.
\eeq
We denote by
\beq\label{reg}
	\reg_{\bs \chi}(\bs x) = \Big\{x_*: |x_i-x_*|> -\sum_{j:x_j\in[x_*,x_i)} \varphi_{ij} \quad \forall\; i\in\dbra 1,N\dket\Big\}
\eeq
the set of all regular points of $\bs x$. It is clear that this has the form of a disjoint union of open intervals $I_p,\,p=1,2,\ldots p_{\rm max}$,
\beq
	\reg_{\bs\chi}(\bs x) = \cup_p I_p,
\eeq
with the properties that $x_i\not\in I_p$ for all $i,p$, and where for each nearest-neighbour coordinates $x_i<x_j,\,\not\exists k:x_k\in(x_i,x_j)$, there may be at most one $p$ such that $I_p\subseteq(x_i,x_j)$.

We say that $\bs x$ has {\em low density} if
\beq\label{lowdensity}
	|x_i-a|\geq -\sum_{j:x_j\in[a,x_i)} \varphi_{ij} \quad \forall\  i\in\dbra 1,N\dket,\ a\in\R.
\eeq
It is immediate from our definitions that for a low-density semiclassical Bethe vector $\bs x$ with respect to $\bs\chi,\,\bs y$, we have
\beq\label{reglowdensity}
	\reg_{\bs \chi}(\bs x) = \R\setminus \bs x\quad (\mbox{low density}).
\eeq

The next lemma establishes important results for locally contracted vectors that we will need for our main theorems, and make the connection with regular magnifying-glass vectors.
\begin{lemma}[Properties of locally contracted vectors]\label{lemnat} Let $\bs x$ be a semiclassical Bethe vector for $\bs\chi\in\Rpu^N,\,\bs y\in\R^N$. For every $x_*\in\R$, the locally contracted vector $\bs X$ at $x_*$ takes the form
\beq\label{Xiyi}
	X_i  = y_i + \frc12\sum_{j\neq i}\sgn(x_j-x_*)\varphi_{ij}.
\eeq
For every $x_*<x_-(\bs\chi,\bs y)$, resp.~$x_*> x_+(\bs\chi,\bs y)$, we have
\beq\label{valuesextremepositionsnat}
	X_i = X_i^-(\bs\chi,\bs y),
	\quad\mbox{resp.}\quad
	X_i = X_i^+(\bs\chi,\bs y)\quad\forall\,i\in\dbra 1,N\dket.
\eeq
If $x_*=x_-(\bs\chi,\bs y)$, resp.~$x_*= x_+(\bs\chi,\bs y)$, there exists $i$ such that $X_i=x_-(\bs\chi,\bs y)$, resp.~$X_i= x_+(\bs\chi,\bs y)$, and this also holds for $x_*$ in a neighbourhood of $x_-(\bs\chi,\bs y)$, resp.~$x_+(\bs\chi,\bs y)$.

Let $x_*^+\in\reg_{\bs\chi}(\bs x),\, x_*^+\geq x_*$ be the smallest regular point greater than or equal to $x_*$, and $x_*^-\in\reg_{\bs\chi}(\bs x),\, x_*^-\leq x_*$ be the largest regular point smaller than or equal to $x_*$. Then
\beq\label{maxminnatural}
	\max_{i:x_i\leq x_*} X_i < x_*^+, \quad
	\min_{i:x_i> x_*} X_i > x_*^-,\quad
	D_i >-\max(|x_*^+-x_*|, |x_*^--x_*|)\ \forall \;i\in\dbra 1,N\dket.
\eeq
If $x_*\in\reg_{\bs\chi}(\bs x)$, then
\beq\label{Xiregular}
	X_i\neq x_*,\quad \sgn(X_i-x_*) = \sgn(x_i-x_*),\quad
	D_i>0 \quad \forall\;i\in\dbra 1,N\dket
\eeq
and as a consequence $\bs X$ is a regular magnifying-glass vector, Definition \ref{defimagstr}, giving a factorisation of the contraction map into partial contraction maps:
\beq
	\mathcal C_{\bs\chi,x_*}(\h{\mathcal C}_{\bs\chi,x_*}(\bs x))
	=
	\mathcal C_{\bs\chi}(\bs x)\quad\forall\;x_*\in\reg_{\bs\chi}(\bs x).
\eeq
\end{lemma}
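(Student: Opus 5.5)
The proof is a direct computation from the SBE and Definition \ref{definatmag}, followed by sign bookkeeping.

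\emph{Formula \eqref{Xiyi}.} Start from the SBE \eqref{bethemain}, written as $x_i = y_i - \frc12\sum_{j\neq i}\sgn(x_i-x_j)\varphi_{ij}$. Take $x_i\geq x_*$. Definition \ref{definatmag} adds $\varphi_{ij}$ for every $j$ with $x_j\in[x_*,x_i)$. For exactly these $j$, $\sgn(x_i-x_j)=1$ and $\sgn(x_j-x_*)=1$. So adding $\varphi_{ij}$ turns the SBE coefficient $-\frac12\varphi_{ij}$ into $+\frac12\varphi_{ij}=\frac12\sgn(x_j-x_*)\varphi_{ij}$. For the remaining $j$, the sign $-\sgn(x_i-x_j)$ already equals $\sgn(x_j-x_*)$:
\bi
\item if $x_j>x_i\geq x_*$, both signs are $+1$;
\item if $x_j<x_*\leq x_i$, both signs are $-1$.
\ei
The case $x_i<x_*$ is symmetric, with the $j$ satisfying $x_j\in(x_i,x_*)$ having their sign flipped from $+$ to $-$. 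The convention $\sgn(0)=1$ is consistent at $x_j=x_*$.

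\emph{Extremal values \eqref{valuesextremepositionsnat}.} By \eqref{sbeboundinv}, $x_*<x_-$ gives $\sgn(x_j-x_*)=1$ for all $j$, so \eqref{Xiyi} reduces to $X_i^-$; the case $x_*>x_+$ is analogous.

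\emph{Attainment at $x_*=x_-$.} All $X_i$ lie in $\core$, and $X_i=X_i^-$ holds for $x_*$ in a left neighbourhood of $x_-$. Let $i_0\in\argmin X^-_i$. The plan is to show that $\sgn(x_j-x_*)$ is constant for $x_*$ near $x_-$, i.e.\ that no $x_j$ equals $x_-$ except possibly $x_{i_0}$. The first thing to try is: if $x_{i_0}>x_-$, then \eqref{Xiyi} with all signs $+$ at $x_*=x_-$ gives $X_{i_0}=X^-_{i_0}=x_-$. Otherwise one uses that $x_{i_0}\geq x_-$ and checks which sign changes occur. This is the delicate bookkeeping step, and I would handle it through the explicit form \eqref{Xiyi}, noting that only $j$ with $x_j=x_*$ can flip.

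\emph{Bounds \eqref{maxminnatural} and \eqref{Xiregular}.} This is the main step. Let $x_i\leq x_*$. Since $x_*^+$ is regular, $x_*^+-x_i>-\sum_{x_j\in(x_i,x_*^+]}\varphi_{ij}$. A direct computation from Definition \ref{definatmag} gives
\[
X_i = x_i-\sum_{x_j\in(x_i,x_*)}\varphi_{ij},
\]
which is at most $x_i-\sum_{x_j\in(x_i,x_*^+]}\varphi_{ij}$ because $\varphi<0$. This is $<x_*^+$, which is the first inequality in \eqref{maxminnatural}.

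Some care is needed because the interval conventions differ: regularity uses $[x_*,x_i)$, i.e.\ $(x_i,x_*^+]$ via \eqref{defab}. I would verify that the extra $j$ with $x_j\in[x_*,x_*^+]$ only add nonnegative terms. The second inequality in \eqref{maxminnatural} is symmetric. The bound on $D_i$ follows because $D_i=(x_*-X_i)>x_*-x_*^+$ for $x_i<x_*$, and symmetrically for $x_i>x_*$.

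If $x_*$ is itself regular, take $x_*^\pm=x_*$. This gives $X_i<x_*$ for $x_i<x_*$ and $X_i>x_*$ for $x_i>x_*$, which is \eqref{Xiregular}; here $x_i=x_*$ is excluded since $x_*\notin\bs x$. Then $\sgn(X_j-x_*)=\sgn(x_j-x_*)$, so \eqref{Xiyi} is exactly \eqref{eqxymainXstr}. Hence $\bs X$ is a regular magnifying-glass vector, and this is the claimed factorisation of the contraction map.
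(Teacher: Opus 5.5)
Your treatment of \eqref{Xiyi}, the extremal values, \eqref{maxminnatural}, \eqref{Xiregular} and the factorisation is correct and follows the paper's route. Your sign bookkeeping for \eqref{Xiyi} is exactly the paper's identity \eqref{specialrelationmain}, and \eqref{maxminnatural} follows as in the paper from monotonicity of the sums plus regularity of $x_*^\pm$. You obtain \eqref{Xiregular} as the special case $x_*^\pm=x_*$ of \eqref{maxminnatural} rather than directly from \eqref{condreg}, which is a harmless repackaging.

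The gap is the claim that some $X_i$ equals $x_-$ for all $x_*$ in a two-sided neighbourhood of $x_-$. You only give a plan, and the plan as stated fails. For an arbitrary $i_0\in\argmin_i X_i^-$, the assertion ``no $x_j$ equals $x_-$ except possibly $x_{i_0}$'' is false when the minimiser is not unique. Take $N=2$ with $\bs y$ chosen so that $X^-_1=X^-_2=0$, and the \texttt{Rightward} solution that picks $l=2$ first. Then $x_2=0=x_-$ and $x_1=-\varphi_{12}>0$. With $i_0=1$ and $x_*\in(0,x_1)$, \eqref{Xiyi} gives $X_1=X_1^- -\varphi_{12}>x_-$, so $X_{i_0}\neq x_-$ just to the right of $x_-$. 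Relatedly, ``only $j$ with $x_j=x_*$ can flip'' is not the right criterion. Compared with the configuration $x_*<x_-$, the signs that change are those with $x_j\in[x_-,x_*)$.

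The missing idea is a case split on whether a coordinate sits at $x_-$, combined with the fixed-point property.
\begin{itemize}
\item Since $x_j\ge x_-$ for all $j$ and $\bs x\in\Rne^N$, at most one coordinate equals $x_-$.
\item If none does, then for $|x_*-x_-|<\min_j x_j-x_-$ every sign in \eqref{Xiyi} is $+1$. Hence $\bs X=\bs X^-$ and $\min_i X_i=x_-$ is attained.
\item If $x_i=x_-$, then \eqref{fixedneigh} gives $X_i=x_i=x_-$ for all $x_*$ close to $x_i$. Equivalently, $x_i$ is the smallest coordinate, so the SBE give $x_i=X_i^-$. Moreover $X_i$ involves only the signs of $x_j-x_*$ for $j\neq i$, which are constant near $x_-$.
\end{itemize}
At the single point $x_*=x_-$ no case analysis is needed: $x_j-x_-\geq0$ and $\sgn(0)=1$ give $\bs X=\bs X^-$ directly from \eqref{Xiyi}. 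The case of $x_+$ is symmetric.
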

\proof We denote $s_+ = \{i:x_i\geq x_*\}$, $s_- = \{i:x_i< x_*\}$. With our convention \eqref{sgn}, the following general relation holds for every $\bs x\in\Rne^N,\,x_*\in\R$ and $i\in\dbra 1,N\dket$:
\beq\label{specialrelationmain}
	\sgn(x_i-x_*)\sum_{j:x_j\in[x_*,x_i)\,(i\in s_+)\atop j:x_j\in(x_i,x_*)\,(i\in s_-)}\varphi_{ij}
	=
	\frc12\sum_{j=1\atop j\neq i}^N\Big(
	\sgn(x_j-x_*)\varphi_{ij} - \sgn(x_j-x_i)\varphi_{ij}\Big).
\eeq
Then combining \eqref{Xnatural} and \eqref{bethemain} and using $\sgn(x_i-x_j)+\sgn(x_j-x_i)=0$ (as $\bs x\in\Rne^N$), we obtain \eqref{Xiyi}. The statement \eqref{valuesextremepositionsnat} follows using \eqref{sbeboundinv}.

Let $x_*$ be in a small enough neighbourhood of $x_-(\bs\chi,\bs y)$. By \eqref{sbeboundinv} there are two possibilities. One is that $x_j>x_-(\bs\chi,\bs y)\,\forall\,j$. Then by \eqref{Xiyi} we have $X_i=X_i^-(\bs\chi,\bs y)\,\forall\,i$ and therefore $x_-(\bs\chi,\bs y)=\min_j X_j$, hence there exists $i$ such that $X_i = x_-(\bs\chi,\bs y)$. The other is that there exists $i$ such that $x_i=x_-(\bs\chi,\bs y)$, and $x_j>x_-(\bs\chi,\bs y)\,\forall\,j\neq i$. Then
 by the fixed-point property \eqref{fixedneigh}, $X_i =(\h{\mathcal C}_{\bs\chi,x_*\in[x_i-\ep,x_i+\ep]}(\bs x))_i = x_i = x_-(\bs\chi,\bs y)$. A similar argument holds for $x_*=x_+(\bs\chi,\bs y)$. This shows the statement after \eqref{valuesextremepositionsnat}.

Let $x_*\in\reg_{\bs\chi}(\bs x)$. Then in \eqref{Xnatural} we have $\Big|\sum_{j:x_j\in[x_*,x_i)\,(i\in s_+)\atop j:x_j\in(x_i,x_*)\,(i\in s_-)}\varphi_{ij}\Big|< |x_i-x_*|$, and hence $X_i\neq x_*$ and $\sgn(X_i-x_*)=\sgn(x_i-x_*)$, which with \eqref{contracteddist} gives \eqref{Xiregular} and, with \eqref{Xiyi} that we have established, the fact that $\bs X$ is a regular magnifying-glass vector.

Let $x_*\in\R$ and $i\in s_+$. We have $x_i\geq x_*^-$, and further $\{j:x_j\in[x_*,x_i)\}\subseteq \{j:x_j\in [x_*^-,x_i)\}$. Therefore, recalling that $\varphi_{ij}\leq 0$,
\beq
	\sum_{j:x_j\in[x_*,x_i)}\varphi_{ij}\geq
	\sum_{j:x_j\in[x_*^-,x_i)}\varphi_{ij}
\eeq
so that by the regularity condition,
\beqa
	X_i-x_*^- &=& x_i - x_*^- +\sum_{j:x_j\in[x_*,x_i)}\varphi_{ij}\n
	&\geq & x_i-x_*^-+\sum_{j:x_j\in[x_*^-,x_i)}\varphi_{ij}\n
	&>&0. \quad \mbox{(regularity of $x_*^-$)}
\eeqa
This implies the second relation in \eqref{maxminnatural}. A similar argument holds for $i\in s_-$, in which case $x_i< x_*^+$, giving $X_i<x_*^+$ and the first relation in \eqref{maxminnatural}. The third relation of \eqref{maxminnatural} then follows from \eqref{contracteddist}. \eproof

\medskip


The next results give properties of semiclassical Bethe vectors and their contracted vectors. In particular, we show, as mentioned, that semiclassical Bethe vectors always exist, and also how explicit solutions can be written that satisfy a separation condition that generalises that of the hard rods (which is the case $\varphi_{ij}<0$ independent of $i,j$). We also show how, if this separation condition holds
along with a non-coincidence condition, the solution is unique.

We first need the following algorithms, whose entries are an integer number $N\in\N$ and two vectors, $\bs\chi \in \Rpu^N$ and $\bs X\in \R^N$ -- the information of $N$ is within the space in which $\bs\chi,\,\bs y$ lie.

\begin{algo}[${\tt Rightward}(\bs\chi \in \Rpu^N,\bs X\in\R^N)$]\label{rightward}
{\ }\\
Let $s=\dbra 1,N\dket$. As long as $s\neq\emptyset$:
\bi
\item[\tt1-] Set
\beq\label{inpminmain}
	l \in \argmin_{i\in s}
	\Big(X_i - \sum_{j\in \dbra 1,N\dket \setminus s} \varphi_{i j}\Big),\quad
	x_{l} = X_l- \sum_{j\in \dbra 1,N\dket\setminus s}\varphi_{lj}.
\eeq
\item[\tt2-] Replace $s$ by $s\setminus \{l\}$.
\item[\tt3-] Repeat.
\ei
Return the set of all vectors $\bs x\in\R^N$ corresponding to all possible combinations of choices in \eqref{inpminmain}.
\end{algo}

\begin{algo}[${\tt Leftward}(\bs\chi \in \Rpu^N,\bs X\in\R^N)$]
\label{leftward}
{\ }\\
Let $s=\dbra 1,N\dket$. As long as $s\neq\emptyset$:
\bi
\item[\tt1-] Set
\beq\label{inpmaxmain}
	l \in \argmax_{i\in s}
	\Big(X_i + \sum_{j\in \dbra 1,N\dket\setminus s} \varphi_{i j}\Big),\quad
	x_{l} = X_l+ \sum_{j\in \dbra 1,N\dket\setminus s}\varphi_{lj}.
\eeq
\item[\tt2-] Replace $s$ by $s\setminus \{l\}$.
\item[\tt3-] Repeat.
\ei
Return the set of all vectors $\bs x\in\R^N$ corresponding to all possible combinations of choices in \eqref{inpmaxmain}.
\end{algo}

\begin{theorem}[Existence, uniqueness, and explicit forms of semiclassical Bethe vectors]\label{theosbe}
Let $N\in\N,\,\bs\chi\in\Rpu^N,\,\bs y\in\R^N$. Then there exists at least one solution $\bs x\in\Rne^N$ to the system of SBE \eqref{bethemain}.

One non-empty set of such solutions is $\bs x \in {\tt Rightward}(\bs\chi,\bs X^-(\bs\chi,\bs y))$. For this set,
\beq\label{miniexist}
	\min_{i\in\dbra 1,N\dket} x_i = x_-(\bs\chi,\bs y),
\eeq
and
\beq\label{distexist}
	x_i-a\geq -\sum_{j:x_j\in[a,x_i)} \varphi_{ij} \quad \forall\  a
	\leq x_i,\ i\in\dbra 1,N\dket
\eeq
and as a consequence, $D_i\geq 0$ for all $x_i\geq x_*$ (Definition \ref{definatmag}). Any solution $\bs x$ that satisfy \eqref{distexist} lies in ${\tt Rightward}(\bs\chi,\bs X^-(\bs\chi,\bs y))$, and if further the non-coincidence condition holds,
\beq\label{noncoincidence}
	y_i + \frc12\sum_{j\neq i}\sigma_j\varphi_{ij}
	\neq
	y_{i'} + \frc12\sum_{j\neq i'}\sigma_j\varphi_{i'j} \quad
	\forall \,i\neq i'\in\dbra 1,N\dket,\ \bs\sigma\in\{+,-\}^{\dbra 1,N\dket},
\eeq
then $\bs x$ is unique.
%

A possibly different, non-empty set of solutions is $\bs x \in {\tt Leftward}(\bs\chi,\bs X^+(\bs\chi,\bs y))$.  For this set,
\beq
	\max_{i\in\dbra 1,N\dket} x_i = x_+(\bs\chi,\bs y),
\eeq
and
\beq\label{distexist2}
	x_i-a\leq \sum_{j:x_j\in[a,x_i)} \varphi_{ij} \quad \forall\  x_i\leq a,\ i\in\dbra 1,N\dket.
\eeq
As a consequence, $D_i\geq 0$ for all $x_i\leq x_*$ (Definition \ref{definatmag}). Any solution $\bs x$ that satisfy \eqref{distexist2} lies in ${\tt Leftward}(\bs\chi,\bs X^+(\bs\chi,\bs y))$, and if further \eqref{noncoincidence} holds, then $\bs x$ is unique.

\end{theorem}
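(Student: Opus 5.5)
The plan is to treat the {\tt Rightward} case in full; the {\tt Leftward} case then follows by the parity symmetry $x\mapsto -x$, $y_i\mapsto -y_i$, which exchanges $\bs X^-$ and $\bs X^+$, maps \eqref{distexist} to \eqref{distexist2}, and sends the SBE to itself. Run {\tt Rightward} on $\bs X=\bs X^-(\bs\chi,\bs y)$ and record the order of extraction $l_1,l_2,\ldots,l_N$. At step $k$ the current coordinates are $Z_i^{(k)}=X_i^- -\sum_{j\in\{l_1,\ldots,l_{k-1}\}}\varphi_{ij} = y_i+\frc12\sum_{j\neq i}\sigma^{(k)}_j\varphi_{ij}$, where $\sigma_j^{(k)}=-1$ for already extracted $j$ and $+1$ otherwise. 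We set $x_{l_k}=Z^{(k)}_{l_k}=\min_{i\in s}Z_i^{(k)}$.

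\emph{Monotonicity.} Since $-\varphi_{ij}>0$, each step weakly increases every remaining $Z_i$. Hence $x_{l_{k+1}}=\min_s Z^{(k+1)}\geq \min_s Z^{(k)}$, and the inequality is strict because the added term $-\varphi_{i l_k}$ is strictly positive for every remaining $i$. So $x_{l_1}<x_{l_2}<\cdots<x_{l_N}$, which gives $\bs x\in\Rne^N$ and $\min_i x_i=x_{l_1}=\min_i X_i^-=x_-(\bs\chi,\bs y)$, i.e.\ \eqref{miniexist}. The SBE then follows at once: $\sigma^{(k)}_j=-1$ exactly when $x_j<x_{l_k}$, so $\sigma_j^{(k)}=-\sgn(x_{l_k}-x_j)$, and $x_{l_k}=y_{l_k}-\frc12\sum_j\sgn(x_{l_k}-x_j)\varphi_{l_kj}$, which is \eqref{bethemain}.

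\emph{Separation \eqref{distexist}.} Fix $i=l_k$ and $a\le x_i$. Let $m$ be the first index with $x_{l_m}\ge a$, so that $\{j:x_j\in[a,x_i)\}=\{l_m,\ldots,l_{k-1}\}$. At step $m$, soliton $i$ was still present, so $Z_i^{(m)}\ge x_{l_m}\ge a$. Between step $m$ and step $k$ its coordinate increases by exactly $-\sum_{j\in\{l_m,\ldots,l_{k-1}\}}\varphi_{ij}$, and therefore $x_i-a\ge Z_i^{(k)}-Z_i^{(m)}=-\sum_{j:x_j\in[a,x_i)}\varphi_{ij}$. This is \eqref{distexist}. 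Taking $a=x_*$ in \eqref{Xnatural} then gives $D_i\ge0$ for $x_i\ge x_*$.

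\emph{Converse and uniqueness.} Let $\bs x$ be any SBE solution satisfying \eqref{distexist}, and order it as $x_{l_1}<\cdots<x_{l_N}$. By the SBE and \eqref{specialrelationmain}, $x_{l_k}=Z^{(k)}_{l_k}$ with $Z^{(k)}$ built from this order. It remains to show that $l_k$ attains $\min_{i\in s}Z_i^{(k)}$, where $s=\{l_k,\ldots,l_N\}$. For $i\in s$ with $x_i>x_{l_k}$, apply \eqref{distexist} with $a=x_{l_k}$:
\[
Z_i^{(k)}=x_i+\sum_{j:x_j\in[x_{l_k},x_i)}\varphi_{ij}\ge x_{l_k}.
\]
So the order is a valid sequence of choices of {\tt Rightward}, and $\bs x$ lies in its output. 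Under \eqref{noncoincidence}, every $Z_i^{(k)}$ has the form $y_i+\frc12\sum\sigma_j\varphi_{ij}$, so distinct $i$ give distinct values. Each argmin is then a singleton, the algorithm is deterministic, and the output is a single vector.

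The main obstacle is the bookkeeping in the separation step: one has to identify the extracted set between steps $m$ and $k$ precisely with $\{j:x_j\in[a,x_i)\}$, which relies on the strict ordering of the output. Ties in the argmin are harmless, because strictness comes from $\varphi<0$ and not from the choice made.
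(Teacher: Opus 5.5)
Your proof is correct. It reaches the theorem more directly than the paper does.

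\textbf{How the paper argues.} The paper never analyses {\tt Rightward} on $\bs X^-(\bs\chi,\bs y)$ on its own terms. It first proves the general correspondence of Lemma~\ref{lemrelationbethe} between regular magnifying-glass vectors at an arbitrary observation point $x_*$ and SBE solutions, built by {\tt Rightward} on $\{i:X_i>x_*\}$ and {\tt Leftward} on $\{i:X_i<x_*\}$ (Lemma~\ref{lembetheintermediate}). It derives the SBE there from the sign identity \eqref{specialrelation} combined with the magnifying-glass equation \eqref{eqxymainXstr}. It then specialises to $x_*<x_-(\bs\chi,\bs y)$, where Lemma~\ref{lemdstr} identifies the magnifying-glass vector with $\bs X^-(\bs\chi,\bs y)$. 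The converse and uniqueness come from the separate recursion of Lemma~\ref{lemuniq}, which again needs Lemmas~\ref{lemrelationbethe} and \ref{lemdstr} to identify $\hat{\mathcal C}_{\bs\chi,x_*}(\bs x)$ with $\bs X^-$. The {\tt Leftward} half is left as ``a similar argument''.

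\textbf{How you argue instead.}
\begin{itemize}
\item You track the sign vector $\sigma^{(k)}$ during extraction, so the SBE is read off directly at step $k$.
\item Your separation step, $Z_i^{(m)}\ge x_{l_m}\ge a$, handles every $a\le x_i$ at once. The paper instead assembles \eqref{distexist} from the pairwise gaps \eqref{hduai1} via \eqref{Xabound}.
\item Your converse is the same recursion as in Lemma~\ref{lemuniq}, but it needs no identification of the contracted vector.
\item Your parity reduction is cleaner than the paper's ``similar argument''. One consequence needs a line: $D_i\ge0$ for $x_i\le x_*$. Definition~\ref{definatmag} treats a soliton sitting exactly at $x_*$ asymmetrically, so parity yields the inequality with the sum over $(x_i,x_*]$. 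That implies the one over $(x_i,x_*)$ because $\varphi_{ij}<0$; alternatively, apply \eqref{distexist2} with $a=x_*$ directly.
\end{itemize}

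\textbf{What each route gives.} Yours is short and self-contained. It makes plain that only $\varphi_{ij}<0$ is used, and that strict ordering of the output holds for every tie-breaking choice. The paper's route produces, along the way, the observation-point correspondence of Lemma~\ref{lemrelationbethe}, which it reuses elsewhere, for example in Lemma~\ref{extregular} and the analysis of regular points.
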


In order to show the above theorem and to understand better the SBE, we establish a relation between regular magnifying-glass vectors and semiclassical Bethe vectors. This lemma shows, in quite general terms, how the regularity \eqref{reg}, which imposes conditions on positions $x_i$ to be far enough from an observation point $x_*$, connects with the existence of a regular magnifying-glass vector at $x_*$, where the only condition is that $X_i$ be different from $x_*$. 
\begin{lemma}[Relation between regular magnifying-glass and semiclassical Bethe vectors]\label{lemrelationbethe}
Let $N\in\N,\bs\chi\in\Rpu^N,\,\bs y\in\R^N,\,x_*\in\R$ and $\varep>0$. The following statements are equivalent:
\bi
\item[(1)] There is a solution $\bs X\in\R^N$ to \eqref{eqxymainXstr} such that
\beq\label{xixstarpoint1}
	|X_i-x_*|\geq \varep\quad\forall\;i.
\eeq
\item[(2)] There is a solution $\bs x\in\Rne^N$  to \eqref{bethemain} such that
\beq\label{xixstarpoint2}
	|x_i-x_*|\geq -\sum_{j:x_j\in[x_*,x_i)} \varphi_{ij} + \varep\quad \forall\; i\in\dbra 1,N\dket.
\eeq
\ei
The solutions may be obtained from one another, as follows.
\bi
\item$(1)\to(2)$ We define a map $\hat{\mathcal E}_{\bs\chi,x_*}:\{\bs X\in\R^N: X_i\neq x_*\,\forall \,i\} \to\Rne^N$ by $\bs x = \hat{\mathcal E}_{\bs\chi,x_*}(\bs X)$ given as a choice of (recall Eq.~\eqref{setext})
\beq\label{relationconstruction}
	(x_i)_{i\in s_+} \in {\tt Rightward}(\bs\chi_{s_+},\bs X_{s_+}),\quad
	(x_i)_{i\in s_-} \in {\tt Leftward}(\bs\chi_{s_-},\bs X_{s_-})
\eeq
where $s_\pm = \{i\in\dbra 1,N\dket:X_i\gtrless x_*\}$. If $\bs X$ satisfies Point (1), then $\bs x = \hat{\mathcal E}_{\bs\chi,x_*}(\bs X)$ satisfies Point (2). With this construction,
\beq\label{Xabound}
	|x_i-a|\geq -\sum_{j:x_j\in[a,x_i)} \varphi_{ij} \quad \forall\  a\in[x_*,x_i),\,i\in\dbra 1,N\dket
\eeq
and
\beq\label{sgnmainlemma}
	\sgn(x_i-x_*) = \sgn(X_i-x_*).
\eeq
\item $(2)\to(1)$. We consider the restriction of the map $\hat{\mathcal C}_{\bs\chi,x_*}$, Eq.~\eqref{Xnatural}, to $\{\bs x\in\Rne^N,\,x_i\neq x_*\,\forall \,i\} \to\R^N$, that is $\bs X = \hat{\mathcal C}_{\bs \chi,x_*}(\bs x)$ given as
\beq\label{Xxrelation}
	X_i = x_i + \sgn(x_i-x_*)\sum_{j:x_j\in[x_*,x_i)}\varphi_{ij}\quad
	\forall\ i\in\dbra 1,N\dket.
\eeq
If $\bs x$ satisfies Point (2), then $\bs X = \hat{\mathcal C}_{\bs\chi,x_*}(\bs x)$ satisfies Point (1), and \eqref{sgnmainlemma} holds. Further, if $\bs X$ satisfies Point (1), then $\hat{\mathcal C}_{\bs\chi,x_*}(\hat{\mathcal E}_{\bs\chi,x_*}(\bs X)) = \bs X$.
\ei
\end{lemma}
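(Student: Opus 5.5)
The plan is to prove the two directions separately. Everything rests on one structural property of ${\tt Rightward}$ (and, by reflection, ${\tt Leftward}$): \emph{the positions it outputs are strictly increasing in the order of extraction}. I would establish this first.

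Apply Algorithm \ref{rightward} to $(\bs\chi_{s_+},\bs X_{s_+})$ and let $l_1,l_2,\ldots$ be the successive extracted indices. At step $k+1$, the candidate value of $l_{k+1}$ is its value at step $k$ minus $\varphi_{l_{k+1}l_k}$. Its value at step $k$ is $\geq x_{l_k}$, by minimality of $l_k$ at that step. Since $\varphi_{l_{k+1}l_k}<0$ by \eqref{varphineg}, we get $x_{l_{k+1}}>x_{l_k}$. Consequently, for $i\in s_+$,
\[
x_i = X_i-\sum_{j\in s_+,\,x_j<x_i}\varphi_{ij},
\]
and the $x_i$, $i\in s_+$, are pairwise distinct. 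In particular $x_i\geq X_i>x_*$. Symmetrically, for $i\in s_-$ we have $x_i = X_i+\sum_{j\in s_-,\,x_j>x_i}\varphi_{ij}\leq X_i<x_*$. This gives \eqref{sgnmainlemma}, and shows $\bs x\in\Rne^N$.

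For $(1)\to(2)$, I would next verify the SBE \eqref{bethemain}. Write $X_i=y_i+\frc12\sum_{j\neq i}\sgn(X_j-x_*)\varphi_{ij}$ from \eqref{eqxymainXstr}, and substitute the formula above for $x_i$, $i\in s_+$. Since every $j\in s_-$ satisfies $x_j<x_*<x_i$, the terms combine into $y_i=x_i+\frc12\sum_{j\neq i}\sgn(x_i-x_j)\varphi_{ij}$; the case $i\in s_-$ is identical.

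For the separation bound \eqref{Xabound}, fix $i\in s_+$ and $a\in[x_*,x_i)$, and let $m$ be the first extracted index with $x_m\geq a$. At the step where $m$ is extracted, $i$ is still a candidate, so its current value $X_i-\sum_{j\text{ extracted before }m}\varphi_{ij}$ is $\geq x_m\geq a$. The indices extracted from $m$ onward and before $i$ are exactly those $j$ with $x_j\in[a,x_i)$; these lie in $s_+$, because $s_-$ sits below $x_*\le a$. Subtracting their $\varphi_{ij}$ gives $x_i$, and hence $x_i-a\geq-\sum_{j:x_j\in[a,x_i)}\varphi_{ij}$. The mirror argument with ${\tt Leftward}$ handles $i\in s_-$.

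For the $\varep$-version \eqref{xixstarpoint2}, take $a=x_*$ directly: $x_i-x_*=X_i-x_*-\sum_{j:x_j\in[x_*,x_i)}\varphi_{ij}$, and $X_i-x_*\geq\varep$ by \eqref{xixstarpoint1}.

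For $(2)\to(1)$, set $\bs X=\h{\mathcal C}_{\bs\chi,x_*}(\bs x)$. By \eqref{Xiyi} of Lemma \ref{lemnat}, $X_i=y_i+\frc12\sum_{j\neq i}\sgn(x_j-x_*)\varphi_{ij}$. Condition \eqref{xixstarpoint2} gives $(X_i-x_*)\sgn(x_i-x_*)=|x_i-x_*|+\sum_{j:x_j\in[x_*,x_i)}\varphi_{ij}\geq\varep$. This yields \eqref{xixstarpoint1} and, since $\varep>0$, \eqref{sgnmainlemma}. We may then replace $\sgn(x_j-x_*)$ by $\sgn(X_j-x_*)$, so $\bs X$ solves \eqref{eqxymainXstr}.

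Finally, $\h{\mathcal C}_{\bs\chi,x_*}(\h{\mathcal E}_{\bs\chi,x_*}(\bs X))=\bs X$ follows from the explicit formula for $x_i$ above. The set $\{j:x_j\in[x_*,x_i)\}$ appearing in \eqref{Xxrelation} coincides with $\{j\in s_+:x_j<x_i\}$, so $\h{\mathcal C}$ exactly adds back what ${\tt Rightward}$ subtracted, and likewise on the $s_-$ side.

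The main obstacle I expect is the monotonicity/separation argument for ${\tt Rightward}$. In particular I need the identification that, at the moment $m$ is extracted, the remaining interactions of $i$ are exactly with the solitons in $[a,x_i)$. Ties in the argmin do not matter: strict negativity of $\varphi$ still forces strict increase.
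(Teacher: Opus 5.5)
Your proof is correct and follows essentially the same route as the paper. The paper packages the Rightward/Leftward analysis into the intermediate Lemma~\ref{lembetheintermediate}; its inequalities \eqref{hduai1}--\eqref{hduai2} are exactly your extraction-order monotonicity and separation argument. It converts to the SBE through the identity \eqref{specialrelation}, where you substitute directly and, for $(2)\to(1)$, cite \eqref{Xiyi}; that citation is legitimate, since \eqref{Xiyi} is proved independently of this lemma.
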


Theorem \ref{theosbe} and Lemma \ref{lemrelationbethe} are proven in Section \ref{ssectbetheproof}.

It is interesting to observe that \eqref{Xabound} implies the following: for every $i,j$ with $x_*<x_j<x_i$, and for every $\{l_0,l_1,\ldots,l_m\}\subset\N$ with $x_j=x_{l_0}<x_{l_1}<\ldots<x_{l_m}<x_{l_{m+1}} = x_i$, we have
\beq\label{difijcons}
	|x_i-x_j|\geq -\sum_{n=0}^m
	\sum_{x_k\in[x_{l_n},x_{l_{n+1}})}\varphi_{l_{n+1}k}.
\eeq
The same formula holds for $x_i<x_j<x_*$ and $x_i=x_{l_{m+1}}<x_{l_m}<\ldots<x_{l_1}<x_{l_{0}} = x_j$. That is, we can partition the set of positions between $x_i$ and $x_j$ and bound the distance by sums over scattering shifts associated to the various parts of the partition.

Note how if both \eqref{distexist} and \eqref{distexist2} hold, then the solution lies both in ${\tt Rightward}$ and ${\tt Leftward}$. This is the case which we referred to above as low-density, Eq.~\eqref{lowdensity}.

Note also that at regular points, Lemma \ref{lemrelationbethe} can be applied for any
$0\leq \varep \leq \min_i \big(|x_i-x_*|+\sum_{j:x_j\in[x_*,x_i)} \varphi_{ij}\big)$, and it
implies that the locally contracted vectors \eqref{Xnatural} 
are indeed regular magnifying-glass vectors: they satisfy \eqref{eqxymainXstr} (as we have shown independently in Lemma \ref{lemnat}). As mentioned, Lemma \ref{lemrelationbethe} gives a more general relation between regular magnifying-glass vectors and semiclassical Bethe vectors.

We now show that all points outside the core are regular, and that points outside $[\min_i x_i,\max_i x_i]$ but within the core cannot be regular.
\begin{lemma}[Regularity outside the core]\label{extregular} Let $\bs x$ be a semiclassical Bethe vector for $\bs\chi\in\Rpu^N,\,\bs y\in\R^N$. Then
\beq
	\reg_{\bs\chi}(\bs x)\setminus [\min_i x_i,\max_i x_i]
	=
	\R\setminus\core(\bs\chi,\bs y).
\eeq
\end{lemma}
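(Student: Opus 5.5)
The plan is to compute both sides directly. Outside $[\min_i x_i,\max_i x_i]$ all solitons lie on one side of $x_*$, so the regularity condition \eqref{condreg} collapses to a condition on the extremal positions $X_i^\pm(\bs\chi,\bs y)$. Fix $x_*\notin[\min_i x_i,\max_i x_i]$ and treat the two cases $x_*<\min_i x_i$ and $x_*>\max_i x_i$ separately; they are mirror images of each other.

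\emph{Case $x_*<\min_i x_i$.} Every $i$ has $x_i>x_*$, so $|x_i-x_*|=x_i-x_*$. Moreover $[x_*,x_i)$ contains exactly the $x_j$ with $x_j<x_i$. Hence condition \eqref{condreg} for index $i$ reads
\[
x_i+\sum_{j:x_j\in[x_*,x_i)}\varphi_{ij}>x_*,
\]
that is, $X_i>x_*$, where $\bs X$ is the locally contracted vector at $x_*$ of Definition \ref{definatmag}. By \eqref{Xiyi} in Lemma \ref{lemnat}, since $\sgn(x_j-x_*)=1$ for all $j$, we get $X_i=y_i+\frac12\sum_{j\neq i}\varphi_{ij}=X_i^-(\bs\chi,\bs y)$. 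Therefore, for such $x_*$, regularity holds if and only if $X_i^-(\bs\chi,\bs y)>x_*$ for all $i$, i.e.\ if and only if $x_*<x_-(\bs\chi,\bs y)$.

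\emph{Case $x_*>\max_i x_i$.} Every $i$ has $x_i<x_*$, and under convention \eqref{defab} $[x_*,x_i)=(x_i,x_*]$ contains exactly the $x_j$ with $x_j>x_i$. The condition becomes $x_*-x_i>-\sum_{j:x_j>x_i}\varphi_{ij}$, i.e.\ $X_i<x_*$. Here \eqref{Xiyi} gives $X_i=X_i^+(\bs\chi,\bs y)$, since now $\sgn(x_j-x_*)=-1$ for all $j$. So regularity is equivalent to $x_*>x_+(\bs\chi,\bs y)$.

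Combining the two cases,
\[
\reg_{\bs\chi}(\bs x)\setminus[\min_i x_i,\max_i x_i]=\{x_*<\min_i x_i:\ x_*<x_-\}\cup\{x_*>\max_i x_i:\ x_*>x_+\}.
\]
By \eqref{sbeboundinv} we have $x_-(\bs\chi,\bs y)\le\min_i x_i$ and $\max_i x_i\le x_+(\bs\chi,\bs y)$. The constraints $x_*<\min_i x_i$ and $x_*>\max_i x_i$ are therefore automatic, and the set equals $(-\infty,x_-)\cup(x_+,\infty)=\R\setminus\core(\bs\chi,\bs y)$, since the core \eqref{core} is closed.

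The only delicate point is the bookkeeping of the interval convention \eqref{defab} and of the sign convention \eqref{sgn} when $x_*$ lies to the right of all positions. One has to check that the set of $j$ summed over in \eqref{condreg} coincides with the set in the definition \eqref{Xnatural} of $X_i$. Everything else follows from Lemma \ref{lemnat} and \eqref{sbeboundinv}.
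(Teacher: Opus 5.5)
Your proof is correct. For the inclusion $\R\setminus\core(\bs\chi,\bs y)\subseteq\reg_{\bs\chi}(\bs x)$ it is the same computation as the paper's, which orders $x_{i_1}<\cdots<x_{i_N}$ and bounds $x_{i_n}-x_*>x_{i_n}-X^-_{i_n}(\bs\chi,\bs y)=-\sum_{k<n}\varphi_{i_ni_k}$.

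For the reverse inclusion you take a more direct route. The paper starts from a regular point $x_*<\min_i x_i$ and notes that \eqref{xixstarpoint2} then holds with some small $\varep>0$. It uses Lemma \ref{lemrelationbethe} to produce a regular magnifying-glass vector with all $X_i>x_*$, and concludes $x_*<x_-(\bs\chi,\bs y)$ from Lemma \ref{lemdstr}.

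You instead observe that when all $x_j$ lie on one side of $x_*$, condition \eqref{condreg} is literally $X_i(x_*)\gtrless x_*$ for the locally contracted vector. The identity \eqref{Xiyi} identifies this vector with $\bs X^\mp(\bs\chi,\bs y)$. This gives a pointwise equivalence: for $x_*<\min_i x_i$, $x_*\in\reg_{\bs\chi}(\bs x)$ if and only if $x_*<x_-(\bs\chi,\bs y)$, and symmetrically on the right. Both inclusions follow at once.

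Your argument needs only the algebraic identity \eqref{Xiyi} and the bound \eqref{sbeboundinv}, with no $\varep$-margin. The paper's version places the lemma inside its magnifying-glass framework. Specialised to this situation, Lemmas \ref{lemrelationbethe} and \ref{lemdstr} reduce to exactly your computation.
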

\proof We show that
\beq\label{twostat}
	\R\setminus \core(\bs\chi,\bs y)
	\subseteq \reg_{\bs\chi}(\bs x),\quad
	\reg_{\bs\chi}(\bs x)\setminus [\min_i x_i,\max_i x_i]
	\subseteq
	\R\setminus\core(\bs\chi,\bs y).
\eeq
Because of \eqref{sbeboundinv}, the first statement gives  $\R\setminus \core(\bs\chi,\bs y)\subseteq \reg_{\bs\chi}(\bs x)\setminus [\min_i x_i,\max_i x_i]$, and therefore this implies the statement of the present lemma.

Consider the first statement in \eqref{twostat}. Let $x_*<x^-(\bs\chi,\bs y)$, Eq.~\eqref{condxstar}. Let $i_1,\ldots,i_N\in\N$ be the unique indices such that $x_{i_1}<x_{i_2}<\ldots<x_{i_N}$. Let $n\in\dbra1,N\dket$. Then,
\beq
	x_{i_n} = y_{i_n} + \frc12 \sum_{j\neq i_n} \sigma_j\varphi_{i_nj},\quad
	\sigma_j=\lt\{\ba{ll}
	-1 & (j\in\{i_1,\ldots,i_{n-1}\})\\
	1 & (\mbox{otherwise}).
	\ea\rt.
\eeq
Therefore,
\beq
	x_{i_n}-x_* > x_{i_n}-x^-(\bs\chi,\bs y)=
	x_{i_n}-\min_i \Big(y_i + \frc12\sum_{j\neq i}\varphi_{ij}\Big)
	\geq
	x_{i_n}-\Big(y_{i_n} + \frc12\sum_{j\neq i_n}\varphi_{i_nj}\Big)
	=
	-\sum_{k=1}^{n-1}\varphi_{i_ni_k}.
\eeq
Hence the condition on the right-hand side of \eqref{reg} holds. As this is true for all $n$, we have $x_*\in\reg_{\bs\chi}(\bs x)$. A similar analysis holds for $x_*>x^+(\bs\chi,\bs y)$.

Consider the second statement in \eqref{twostat}. Let $x_*\in\reg_{\bs\chi}(\bs x)$ with $x_*<\min_i x_i$. Then \eqref{xixstarpoint2} holds for some $\varep>0$ small enough, and by Lemma \ref{lemrelationbethe}, there is a regular magnifying-glass vector $\bs X$ such that $X_i>x_*$ for all $i$. By Lemma \ref{lemdstr}, it must be that $x_*<x^-(\bs\chi,\bs y)$. A similar analysis applies for $x_*>\max_i x_i$.
\eproof

\medskip

Finally, it is clear that the locally contracted vectors are monotonic in $x_*$, an immediate consequence of Eq.~\eqref{Xnatural}. That is, $X_i$ is either stationary, or ``carried'' along with $x_*$, as $x_*$ moves, in accordance with the fact that there is a contraction towards $x_*$. We state this here for completeness although we will not make explicit use of it.
\begin{corol}[Monotonicity of locally contracted vectors] \label{corolmono} Let $\bs x$ be a semiclassical Bethe vector, and for every $x_*\in\R$ let $\bs X(x_*)$ be the locally contracted vector, Definition \ref{definatmag}. Then
\beq
	X_i(x_*')-X_i(x_*)\geq 0 \quad \forall\ x_*,x_*'\in \R,\,x_*'\geq x_*.
\eeq
\end{corol}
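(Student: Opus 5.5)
The plan is to argue directly from the explicit form \eqref{Xiyi} of Lemma \ref{lemnat}, which gives, for every $x_*\in\R$,
\beq
	X_i(x_*) = y_i + \frc12\sum_{j\neq i}\sgn(x_j-x_*)\varphi_{ij}.
\eeq
In this expression the vector $\bs x$ is fixed, since it is a given semiclassical Bethe vector. The observation point $x_*$ enters only through the signs $\sgn(x_j-x_*)$.

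First we note that, with the convention \eqref{sgn}, the map $x_*\mapsto\sgn(x_j-x_*)$ is non-increasing for each fixed $j$. It equals $1$ for $x_*\leq x_j$ and $-1$ for $x_*>x_j$. Hence for $x_*'\geq x_*$ we have $\sgn(x_j-x_*')-\sgn(x_j-x_*)\in\{0,-2\}$.

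Next we subtract the two expressions:
\beq
	X_i(x_*')-X_i(x_*) = \frc12\sum_{j\neq i}\big(\sgn(x_j-x_*')-\sgn(x_j-x_*)\big)\varphi_{ij} = -\sum_{j\neq i:\,x_j\in(x_*,x_*']}\varphi_{ij}.
\eeq
Since $\varphi_{ij}<0$ for all $i\neq j$ by \eqref{varphineg}, each term is non-negative, and so is the sum. This gives the claim. The same identity also yields the interpretation stated before the corollary: $X_i$ stays constant while no $x_j$, $j\neq i$, lies in $(x_*,x_*']$, and it jumps to the right by $-\varphi_{ij}$ each time $x_*$ passes some $x_j$.

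There is no real obstacle here. The only care needed is in the convention at coincidences, namely the value $\sgn(0)=1$. Because \eqref{Xiyi} holds for every $x_*$, including $x_*=x_j$, the difference formula above is exact with the half-open interval $(x_*,x_*']$. One could instead argue directly from \eqref{Xnatural}, but the jump of $X_i$ when $x_*$ crosses $x_i$ itself is then awkward to handle. Working through \eqref{Xiyi} avoids this, because the term $j=i$ is absent from the sum.
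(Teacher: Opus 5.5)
Your argument is correct, and it differs only mildly from the paper's. The paper treats the corollary as immediate from \eqref{Xnatural} itself, branch by branch. For $x_*\le x_i$ one has $X_i(x_*)=x_i+\sum_{j:x_j\in[x_*,x_i)}\varphi_{ij}\le x_i$, and the index set shrinks as $x_*$ increases. For $x_*>x_i$ one has $X_i(x_*)=x_i-\sum_{j:x_j\in(x_i,x_*)}\varphi_{ij}\ge x_i$, and the index set grows. Since $\varphi_{ij}<0$, both branches are non-decreasing, and they are ordered across $x_i$.

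In particular, $X_i$ does not jump when $x_*$ crosses $x_i$: by \eqref{fixedneigh}, $X_i=x_i$ in a neighbourhood of $x_*=x_i$. So that route is not awkward. It also never uses the SBE, so it holds for any $\bs x\in\Rne^N$.

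Your route through \eqref{Xiyi} gives a single formula and an exact expression for the increment. The cost is that it invokes the SBE, although \eqref{specialrelationmain} alone would give the same $x_*$-dependence for arbitrary $\bs x\in\Rne^N$.

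One slip, at exactly the point you flagged as needing care. With the convention \eqref{sgn}, $\sgn(x_j-x_*')-\sgn(x_j-x_*)=-2$ precisely when $x_*\le x_j<x_*'$. Hence $X_i(x_*')-X_i(x_*)=-\sum_{j\neq i:\,x_j\in[x_*,x_*')}\varphi_{ij}$, with the half-open interval reversed from what you wrote. For example, $x_j=x_*<x_*'$ does contribute, while $x_j=x_*'$ does not. This is consistent with your own description of the sign function and with the interval $[x_*,x_i)$ in \eqref{Xnatural}. It does not affect non-negativity, so the conclusion stands.
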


\subsection{Imprecision of solitons' positions}\label{ssectimp}

As mentioned, solutions to the SBE are not unique in general. Intuitively, this non-uniqueness corresponds to the imprecision in locating solitons in a dense gas -- there is no unique way of locating a soliton, because in high densities, interaction may effectively make it extended (uniqueness can be achieved by imposing certain conditions, Theorem \ref{theosbe}). For our purposes, it is convenient to simply choose one solution to the SBE, a choice of our quasi-particle positions. We will see in Section \ref{ssectprojections} how the ${\tt Rightward}$ algorithm is a good basis. 

Here, we explain how to observe the imprecision of locating solitons by defining various measures of imprecision directly from a choice of semiclassical Bethe vectors. These all have the property of vanishing at low densities, Eq.~\eqref{lowdensity}, and, crucially, they allow us to control the relation between the core, Eq.~\eqref{core}, which determines the support of a multi-soliton solution, and the maximal and minimal values of positions $x_i$ from the SBE.

In this subsection, we fix $N\in\N,\,\bs\chi\in\Rpu^N,\,\bs y\in\R^N$ and a semiclassical Bethe vector $\bs x\in\Rne^N$ with respect to these, Definition \ref{defibethe}. For every $x_*\in\R$, we denote $X_i(x_*)$ and $D_i(x_*)$ the associated locally contracted vector and contracted distances respectively, Definition \ref{definatmag}. We omit the explicit dependence on $\bs\chi,\,\bs x$ of the objects we define.

We first define the {\em irregularity length} as the maximal distance between nearest neighbours in the set of regular points (see Eqs.~\eqref{reg} and \eqref{neigh}),
\beq\label{convirreg}
	\delta
	:=
	\neigh(\reg_{\bs\chi}(\bs x)).
\eeq
The smaller the irregularity length, the more regular the vector is. This is our first notion of a global measure of the ``imprecision'' of solitons' positions.  It is immediate from our definitions that for a low-density $\bs x$, Eq.~\eqref{lowdensity}, we have
\beq
	\delta = 0\quad (\mbox{low density}).
\eeq

Second, we define the {\em dark displacement}, as the maximal amount by which the contracted distances $D_i$ may get negative:
\beq\label{varepdef}
	\varep:= \sup_{i\in\dbra 1,N\dket,\,x_*\in\R}
	D_i(x_*)\Theta(-D_i(x_*))
\eeq
where $\Theta$ is Heaviside's steps function. This indicates by how much the contraction goes beyond $x_*$, in the dark space we discussed in Section \ref{ssectsbe} -- how large the dark space is. It stays within the cells delimited by regular points, by Lemma \ref{lemnat}, and is a slightly more accurate measure of imprecision than $\delta$. At lower densities, it is a smaller number, or even 0. In particular, because of Eq.~\eqref{maxminnatural} from Lemma \ref{lemnat}, and by \eqref{reglowdensity}, at low densities we have
\beq
	\varep = 0\quad (\mbox{low density}).
\eeq

Our final notion of imprecision is in fact a family, parametrised by a local scale $\Delta X\geq 0$ in the contracted metric. We would like the true, effective position of $X_i\pm\Delta X$, from which we will obtain an imprecision $\Delta x\geq 0$. Inspired by the fixed-point property \eqref{fixedpoint}, a natural way is to define them by scanning the values of $x_*$ from $x_*\ll x_-(\bs\chi,\bs y)$, within $\R$ or a subset of $\R$, and looking for the first instance when the locally contracted position $X_i$ at $x_*$ crosses $x_*+\Delta X$, and the last instance when it crosses $x_*-\Delta X$. These give us an interval in real space, the {\em effective neighbourhood} $K_i(\Delta X)$ of $x_i$. We define the {\em effective imprecision} $\Delta x(\Delta X)$ as the length of the longest interval. This is expressed in the following definition:
\begin{defi}[Effective neighbourhoods and imprecision for local scale $\Delta X\geq 0$]\label{defineigh}
Let $\bs x$ be a semiclassical Bethe vector for $\bs\chi\in\Rpu^N,\,\bs y\in\R^N$, and for every $x_*\in\R$ let $\bs X(x_*)$ be the locally contracted vector, Definition \ref{definatmag}. Define, for all $i\in\dbra 1,N\dket$,
\beq\label{xilr}\begin{aligned}
	x_i^{\rm left}&:= \sup\{x: X_i(x_*)\geq x_*+\Delta X \;\forall\; x_*< x\}\\
	x_i^{\rm right}&:= \inf\{x: X_i(x_*)\leq x_*-\Delta X \;\forall\; x_*> x\}.
	\end{aligned}
\eeq
The solitons' effective neighbourhoods are
\beq
	K_i(\Delta X)
	:=
	[x_i^{\rm left},x_i^{\rm right}],
\eeq
and the (global) effective imprecision is
\beq
	\Delta x(\Delta X) := \max_{i\in\dbra 1,N\dket} \big(x_i^{\rm right}-x_i^{\rm left}\big).
\eeq
\end{defi}

It is clear that $x_i^{\rm left}$ and $x_i^{\rm right}$ exist in $\R$ by Lemma \ref{lemnat} (in particular Eq.~\eqref{valuesextremepositionsnat}).
It is also clear that we must have $x_i^{\rm left} \leq x_i^{\rm right}$. Note how $\Delta X$ is a {\em radius} around $x_*$ in the contracted space, while $\Delta x$ is the corresponding {\em diameter} of the cell in real space. It is convenient to choose the factors of 2 as such.

The quantities defined in Definition \ref{defineigh} have some simple properties and relations, as follows.
\begin{lemma}[Basic relations on effective neighbourhoods and imprecisions]\label{lembasicimp} For all $i\in\dbra 1,N\dket$,
\beq\label{inclneigh}
	K_i(\Delta X)\subseteq
	K_i(\Delta X')
	\quad \forall\;\Delta X'\geq \Delta X.
\eeq
In particular
\beq\label{inclimp}
	\Delta x(\Delta X)\leq
	\Delta x(\Delta X')
	\quad \forall\;\Delta X'\geq \Delta X.
\eeq
Further,
\beq\label{xiinK}
	x_i\in K_i(0);
\eeq
if $\bs x\in{\tt Rightward}(\bs\chi,\bs X^-(\bs\chi,\bs y))$, resp.~$\bs x\in{\tt Leftward}(\bs\chi,\bs X^+(\bs\chi,\bs y))$, then
\beq\label{leftrightpoints}
	K_i(0) = [x_i,x_i^{\rm right}],\quad \mbox{resp.}\quad
	K_i(0) \subset [x_i^{\rm left},x_i];
\eeq
and at low density, Eq.~\eqref{lowdensity},
\beq\label{Kilowdensity}
	K_i(0) = \{x_i\},\quad \Delta x(0)=0\quad \mbox{(low density)}.
\eeq
Finally,
\beq\label{minDeltax}
	\Delta x(\Delta X)\geq 2\Delta X
	\quad \forall\, \Delta X\geq 0.
\eeq
\end{lemma}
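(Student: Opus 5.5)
Every claim will be read off directly from the definitions \eqref{xilr}, using three facts about $X_i(x_*)$ as a function of $x_*$: it is piecewise constant (Corollary \ref{corolmono} gives monotonicity, and \eqref{Xnatural} shows it jumps only when $x_*$ crosses some $x_j$); it satisfies the fixed-point property \eqref{fixedneigh}, $X_i(x_*)=x_i$ for $x_*$ near $x_i$; and it satisfies the lower bound $D_i\ge -\max(|x_*^\pm-x_*|)$ of Lemma \ref{lemnat}.

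\emph{Monotonicity in $\Delta X$.} Let $\Delta X'\ge\Delta X$. The condition $X_i(x_*)\ge x_*+\Delta X'$ for all $x_*<x$ implies the same condition with $\Delta X$. So the set whose supremum defines $x_i^{\rm left}(\Delta X')$ is contained in the one for $\Delta X$, which gives $x_i^{\rm left}(\Delta X')\le x_i^{\rm left}(\Delta X)$. Symmetrically, $x_i^{\rm right}(\Delta X')\ge x_i^{\rm right}(\Delta X)$. This proves \eqref{inclneigh}, and \eqref{inclimp} follows by taking the maximum over $i$.

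\emph{Position of $x_i$, and \eqref{minDeltax}.} For \eqref{xiinK}, note that by \eqref{fixedneigh}, at $x_*$ slightly above $x_i$ we have $X_i(x_*)=x_i<x_*$. Hence the left condition with $\Delta X=0$ fails for every $x>x_i$, so $x_i^{\rm left}\le x_i$. Likewise, at $x_*$ slightly below $x_i$ we have $X_i=x_i>x_*$, so $x_i^{\rm right}\ge x_i$. For \eqref{minDeltax}, I use the same points: for $x_*\in(x_i-\Delta X,x_i]$ close to $x_i$, $X_i(x_*)=x_i<x_*+\Delta X$, so $x_i^{\rm left}\le x_i-\Delta X$. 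Symmetrically $x_i^{\rm right}\ge x_i+\Delta X$, which gives length at least $2\Delta X$. I expect a little care is needed here, because \eqref{fixedneigh} only holds in an $\epsilon$-neighbourhood. I will handle it as follows. For $x_*<x_i$, \eqref{Xnatural} gives $X_i(x_*)=x_i+(\text{nonpositive sum})\le x_i$. So for $x_*\in(x_i-\Delta X,x_i)$ we get $X_i\le x_i<x_*+\Delta X$, and the argument goes through.

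\emph{Rightward/Leftward and low density.} For \eqref{leftrightpoints} with $\bs x$ from ${\tt Rightward}$, Theorem \ref{theosbe} gives \eqref{distexist}, i.e.\ $D_i(x_*)\ge0$ whenever $x_i\ge x_*$. For $x_*<x_i$ this means $X_i(x_*)\ge x_*$, so $x_i^{\rm left}\ge x_i$; combined with the bound above, $x_i^{\rm left}=x_i$. The Leftward case is symmetric via \eqref{distexist2}, giving $x_i^{\rm right}=x_i$, hence $K_i(0)\subseteq[x_i^{\rm left},x_i]$. Finally, at low density \eqref{lowdensity}, both \eqref{distexist} and \eqref{distexist2} hold. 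Then $D_i\ge0$ on both sides, so $x_i^{\rm left}=x_i^{\rm right}=x_i$, giving \eqref{Kilowdensity}.

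The only delicate points are boundary conventions (sup/inf with strict versus non-strict inequalities at $x_*=x_i$), and these are settled by the one-sided sign of the contraction sums.
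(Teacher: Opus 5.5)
Your proof is correct and follows essentially the same route as the paper: monotonicity read directly from Definition \ref{defineigh}, $x_i\in K_i(0)$ from the fixed-point property \eqref{fixedneigh}, the Rightward/Leftward cases from the one-sided bounds \eqref{distexist}, \eqref{distexist2} of Theorem \ref{theosbe}, and \eqref{minDeltax} from the contraction property (your ``nonpositive sum'' observation is exactly \eqref{contractionprop}, used in contrapositive form). The only small deviation is the low-density case, where you observe that \eqref{lowdensity} is precisely \eqref{distexist} and \eqref{distexist2} together, instead of invoking regularity via \eqref{reglowdensity} and \eqref{Xiregular}; this handles the points $x_*=x_j$ without any limiting argument, and it also gives equality $K_i(0)=[x_i^{\rm left},x_i]$ in the Leftward case.
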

\proof Eqs.~\eqref{inclneigh}, \eqref{inclimp} are immediate from Definition \ref{defineigh}. Eqs.~\eqref{xiinK} follow from \eqref{fixedneigh}.
By Theorem \ref{theosbe} if $\bs x\in{\tt Rightward}(\bs\chi,\bs X^-(\bs\chi,\bs y))$ we have $X_i(x_*)\geq x_*$ for all $x_*\leq x_i$, similarly for the ${\tt Leftward}$ algorithm, hence using \eqref{xiinK}, Eq.~\eqref{leftrightpoints} follows. Eq.~\eqref{Kilowdensity} follows from
\eqref{reglowdensity} and \eqref{Xiregular}.
For all $x_*\leq x_i^{\rm left}$ we have $x_*\leq x_i$ and therefore $X_i(x_*)-x_* = D_i \leq x_i-x_*$ by \eqref{contractionprop}, and also $X_i(x_*)-x_*\geq \Delta X$, therefore $x_i-x_* \geq \Delta X$. Hence $x_i-x_i^{\rm left}\geq \Delta X$. A similar argument gives $x_i^{\rm right}-x_i \geq \Delta X$, and this implies \eqref{minDeltax}. \eproof

\medskip

The interval $K_i(0)$ is a natural notion of where the soliton $i$ is supported, and $\Delta x(0)$ a natural maximal soliton length. For $\Delta X$ large, then $K_i(\Delta X)$ instead has the interpretation of the overall range in real space corresponding to $[x_*-\Delta X,\,x_*+\Delta X]$ in contracted space. This encodes a change of metric on large scales, at the position $x_i$ where the soliton lies. Therefore $\Delta x(\Delta X)$ gives an overall maximal change of length associated to this change of metric.

\begin{lemma}[Relations between position shifts]\label{lemshift}
In the context of Definition \ref{defineigh}, denote $\Delta x = \Delta x(\Delta X)$.
Then
\beq\label{localitycondition}
\begin{aligned}
	&\forall\ i\in\dbra1,N\dket,\,x_*\in\R,\mbox{ the following are true}:\\
	&
	x_i< x_*-\Delta x\Rightarrow
	X_i(x_*)\leq x_*-\Delta X,\quad
	x_i> x_*+\Delta x\Rightarrow
	X_i(x_*)\geq x_*+\Delta X.
	\end{aligned}
\eeq
\end{lemma}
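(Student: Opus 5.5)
The plan is to argue directly from Definition \ref{defineigh}, using only that $x_i\in K_i(\Delta X)$ (from \eqref{xiinK} and \eqref{inclneigh}) and that $K_i(\Delta X)$ has length at most $\Delta x$. We treat the second implication; the first is symmetric, with $x_i^{\rm right}$ in place of $x_i^{\rm left}$ and reversed inequalities.

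Fix $i$ and $x_*$ with $x_i>x_*+\Delta x$. First, since $x_i\in K_i(0)\subseteq K_i(\Delta X)=[x_i^{\rm left},x_i^{\rm right}]$, we have $x_i^{\rm right}\ge x_i$. Combining this with $x_i^{\rm right}-x_i^{\rm left}\le \Delta x$ gives
\[
x_i^{\rm left}\ \ge\ x_i^{\rm right}-\Delta x\ \ge\ x_i-\Delta x\ >\ x_*.
\]
Hence $x_*<x_i^{\rm left}$.

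Next, we convert this into the pointwise statement. By definition, $x_i^{\rm left}$ is the supremum of the set $S=\{x: X_i(x_*')\ge x_*'+\Delta X\ \forall\, x_*'<x\}$. This set is a down-set: if $x\in S$ and $\tilde x<x$, then $\tilde x\in S$. So for any $x_*<x_i^{\rm left}$ there is some $x\in S$ with $x>x_*$. Applying the defining property of $x\in S$ with $x_*'=x_*<x$ gives $X_i(x_*)\ge x_*+\Delta X$, which is the claim.

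Some care is needed with the endpoints. Lemma \ref{lemnat} guarantees that $x_i^{\rm left},x_i^{\rm right}$ are finite, and the strict inequality $x_*<x_i^{\rm left}$ means we never need to evaluate the condition at the supremum itself. The main point requiring attention is the inclusion $x_i\in K_i(\Delta X)$, which comes from the fixed-point property \eqref{fixedneigh} via \eqref{xiinK} and the monotonicity \eqref{inclneigh}. Beyond that, the argument is purely order-theoretic.
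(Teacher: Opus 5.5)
Your proof is correct and is essentially the paper's argument. The paper proves the first implication via the chain $x_*>x_i+\Delta x\ge x_i^{\rm left}+\Delta x\ge x_i^{\rm right}$ and the strictness of this inequality in the infimum defining $x_i^{\rm right}$; you run the mirror-image chain for the second implication with $x_i^{\rm left}$, and you spell out the supremum (down-set) step a little more explicitly than the paper does.
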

\proof Choose $i$ and $x_*$ such that $x_i<x_*-\Delta x$. Then, using \eqref{inclneigh} and \eqref{xiinK},
\beq
	x_* > x_i + \Delta x \geq x_i^{\rm left} + \Delta x\geq  x_i^{\rm right}.
\eeq
As this is a strict inequality, the condition of the infimum defining $x_i^{\rm right}$ in \eqref{xilr} is satisfied, and therefore $X_i(x_*)\leq x_*-\Delta X$. This shows the first implication.
A similar arguments holds for the other implication.
\eproof

\medskip

Finally, we note that we have three natural measures of imprecision:
\beq
	\delta,\ \varep,\ \Delta x(0).
\eeq
It is convenient to define the {\em outer imprecision} as
\beq\label{minimp}
	\Delta^{\rm out} = \min\{\delta,\varep,\Delta x(0)\}.
\eeq
These various measures of imprecision bound the differences between $\min_i x_i$, $\max_i x_i$ that the boundaries of the core, and bound each other, as follows.
\begin{lemma}[Bounds on imprecisions]\label{lemimp} Recall \eqref{sbeboundinv} and define $\Delta^\pm$ as
\beq\label{sbebound}
	\core(\bs\chi,\bs y) = [\min_i x_i - \Delta^-, \max_i x_i + \Delta^+].
\eeq
Then
\beq
	\Delta^- = \inf_{x_*\in\reg_{\bs\chi}(\bs x):\atop x_*<\min_i x_i}(\min_i x_i - x_*)\geq 0,\quad
	\Delta^+ = \inf_{x_*\in\reg_{\bs\chi}(\bs x):\atop x_*>\max_i x_i}(x_*-\max_i x_i)\geq 0,
	\label{boundvar}
\eeq
and
\beq\label{bounds}
	\Delta^\pm\leq\varep\leq \delta,\quad
	\Delta^\pm\leq \Delta x(0).
\eeq
Further,
\beq
	\Delta^\pm\leq \Delta^{\rm out}.
\eeq
\end{lemma}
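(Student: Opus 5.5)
The plan is to reduce everything to Lemma \ref{extregular} and to the explicit form \eqref{Xiyi} of the locally contracted vector. I argue for $\Delta^-$; the case of $\Delta^+$ follows by the mirror argument, using $x_+$, \eqref{valuesextremepositionsnat} on the right, and $x_i^{\rm left}$ in place of $x_i^{\rm right}$.

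\emph{Formula \eqref{boundvar}.} By \eqref{sbeboundinv}, $x_-(\bs\chi,\bs y)\le\min_i x_i$, so $\Delta^-=\min_i x_i-x_-\ge 0$. By Lemma \ref{extregular}, a point $x_*<\min_i x_i$ is regular if and only if $x_*\notin\core(\bs\chi,\bs y)$, that is, if and only if $x_*<x_-$. Hence the set over which the infimum in \eqref{boundvar} is taken is exactly $(-\infty,x_-)$. The infimum of $\min_i x_i-x_*$ over this set is $\min_i x_i-x_-=\Delta^-$, which proves \eqref{boundvar}.

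\emph{Key observation.} Assume $\Delta^->0$, since otherwise all the bounds are trivial. For every $x_*\in(x_-,\min_i x_i)$ all positions satisfy $x_j>x_*$. Then \eqref{Xiyi} gives $X_i(x_*)=X_i^-(\bs\chi,\bs y)$ for all $i$, independently of $x_*$. Pick $i_0\in\argmin_i X_i^-$, so that $X_{i_0}(x_*)=x_-$ on this whole interval. Since $x_{i_0}>x_*$, its contracted distance is $D_{i_0}(x_*)=x_--x_*<0$.

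\emph{Bound $\Delta^-\le\varep$.} I read \eqref{varepdef} as measuring the magnitude of the most negative $D_i$ (that is, $\varep=\sup(-D_i)\Theta(-D_i)$). Letting $x_*\uparrow\min_i x_i$ in the key observation gives $\varep\ge\sup_{x_*}(x_*-x_-)=\Delta^-$.

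\emph{Bound $\Delta^-\le\Delta x(0)$.} For $x_*\in(x_-,\min_i x_i)$ we have $X_{i_0}(x_*)<x_*$. So the defining condition of $x_{i_0}^{\rm left}$ in \eqref{xilr}, with $\Delta X=0$, fails for every $x>x_-$, which gives $x_{i_0}^{\rm left}\le x_-$. By \eqref{xiinK}, $x_{i_0}^{\rm right}\ge x_{i_0}\ge\min_i x_i$. Therefore $\Delta x(0)\ge x_{i_0}^{\rm right}-x_{i_0}^{\rm left}\ge\Delta^-$.

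\emph{Bound $\varep\le\delta$.} Fix $x_*$. By the last relation in \eqref{maxminnatural}, every $D_i(x_*)$ exceeds $-\max(|x_*^+-x_*|,|x_*^--x_*|)$, where $x_*^\pm$ are the nearest regular points on either side of $x_*$. The remaining task is to show that both distances are at most the irregularity length $\delta$ of \eqref{convirreg}. This is the step I expect to need care. I will interpret $\neigh(\reg_{\bs\chi}(\bs x))$ as the largest length of a gap between consecutive components $I_p$ of the regular set. Any $x_*$ that is not regular lies in such a gap (and if $x_*$ is regular, $D_i(x_*)>0$ by \eqref{Xiregular}). Its distance to either endpoint of the gap is then at most $\delta$, so $-D_i(x_*)\le\delta$ for all $i$ and $x_*$, giving $\varep\le\delta$. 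I will also check directly on the definition \eqref{neigh} that this gap interpretation is the intended one.

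\emph{Conclusion.} Combining the three bounds, $\Delta^\pm$ is bounded by each of $\delta$, $\varep$ and $\Delta x(0)$, hence by their minimum $\Delta^{\rm out}$ defined in \eqref{minimp}.
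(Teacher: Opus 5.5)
Your proof is correct and follows the paper's route. You use Lemma \ref{extregular} for \eqref{boundvar}, the soliton realising $x_-(\bs\chi,\bs y)$ (whose contracted position stays at $x_-$ for all $x_*<\min_i x_i$) for both $\Delta^-\le\varep$ and $\Delta^-\le\Delta x(0)$, and the last relation of \eqref{maxminnatural} for $\varep\le\delta$; the only difference is that you argue directly from \eqref{Xiyi}, \eqref{xilr} and \eqref{xiinK}, where the paper cites Lemmas \ref{lemnat} and \ref{lemshift}. Your readings of $\varep$ (as $\sup(-D_i)\Theta(-D_i)$) and of $\delta$ (as the largest gap between consecutive components of $\reg_{\bs\chi}(\bs x)$) are exactly what the paper's own proof needs. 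Note, however, that \eqref{neigh} applied literally to the open set $\reg_{\bs\chi}(\bs x)$ returns $0$, so checking against \eqref{neigh} will not confirm the gap reading; you should state it as a necessary correction of the definition.
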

\proof Eq.~\eqref{boundvar} follows from By Lemma \ref{extregular}. $\varep\leq \delta$ follows from Lemma \ref{lemnat}, in particular the third relation in Eq.~\eqref{maxminnatural}.
Note that for all $x_*<\min_i x_i$, we have from \eqref{Xnatural}, for every $i$,
\beq
	X_i = x_i + \sum_{j:x_j\in[\min_k x_k,x_i)}\varphi_{ij}.
\eeq
That is, $X_i$ does not depend on $x_*$. Now take $x_*=\min_k x_k-\ep$ for $\ep>0$, noting that all contracted positions are contracted towards the left. Taking $\ep$ as small as desired, by definition of $\varep$ we obtain
\beq
	X_i \geq \min_k x_k-\varep\quad \forall\;i.
\eeq
Taking $\ep$ large enough so that $x_*<\min_k x_k-\varep$, we have $X_i = X_i^-(\bs\chi,\bs y)$. Hence $x_-(\bs\chi,\bs y)\geq \min_k x_k - \varep$, which implies $\Delta^-\leq \varep$. A similar argument gives $\Delta^+\leq \varep$.

Choose $\Delta X =0$ and $x_*= x_-(\bs\chi,\,\bs y)+\ep$ for $\ep>0$ small enough. Then by Lemma \ref{lemnat}, there exists $i$ such that $X_i(x_*)= x_-(\bs\chi,\,\bs y)$ hence $X_i(x_*)<  x_*+\Delta X$, so that by the opposite of the second implication in Lemma \ref{lemshift}, we have $x_i\leq x_*+\Delta x(0)$. Therefore, $\min_i(x_i)\leq x_*+\Delta x(0) = x_-(\bs\chi,\,\bs y)+\Delta x(0)+\ep$, giving $\Delta^-\leq \Delta x(0)+\ep$. As this holds for all $\ep>0$, we conclude that $\Delta^-\leq \Delta x(0)$. A similar argument gives $\Delta^+\leq \Delta x(0)$.
\eproof

\medskip

\subsection{Quasi-particles, and local and fluid-cell projections}\label{ssectprojections}

It is natural to consider a reduction of the $N$-soliton field where we keep only a subset of solitons. Now that we have defined positions of solitons via the SBE, it is also natural to take this subset to be that of solitons whose positions, forming a semiclassical Bethe vector, lie within a certain interval. These are {\em projections}, and our main theorems will state that such  projections do not change too much the KdV field on regions where solitons are kept.

But how do we take away solitons in a way that does not affect the field? Just erasing columns and rows in $\Psi(x)^2+\omega$, Eqs.~\eqref{utau}, \eqref{tau}, is in general incorrect. We start with basic lemmas that explain how to do this.


Let $N\in\N$. For any ordered pair of disjoint subsets $s_+,\,s_-\subseteq\dbra 1,N\dket$, $s_+\cap s_-=\emptyset$, we denote
\beq\label{limz}
	\lim_{\bs z:s_+,s_-} \quad\mbox{the limit}\quad
	\lt\{
	\ba{ll}
	z_i\to\infty& (i\in s_+)\\
	z_i\to-\infty& (i\in s_-)\\
	z_i=0 & (\mbox{otherwise})
	\ea\rt\}\quad
	\mbox{in any order.}
\eeq
Let
\beq\label{sspsm1}
	s = \dbra 1,N\dket\setminus (s_+\cup s_-).
\eeq
Given $\bs \chi\in\Rpu^N,\,\bs y\in\R^N$, we define for every such ordered pair $s_+,\,s_-$ the vector $\bs y^{(s_+,s_-)}\in \R^{|s|}$ by
\beq\label{yispsm}
	y_i^{(s_+,s_-)} = y_i + \frc12\sum_{j\in s_+} \varphi_{ij}
	-
	\frc12\sum_{j\in s_-} \varphi_{ij},
	\quad
	i \in s
\eeq
(we keep the induced indexing, $i\in s$, to describe the elements of the vector $\bs y\in\R^{|s|}$). Note that $\bs y^{(s_+,s_-)}$ depends implicitly on $\bs\chi,\,\bs y$, being a $\bs\chi$-dependent shift of $\bs y$. This satisfies a natural consistency condition: for any other ordered pair $s_+',s_-'\subseteq s,\,s_+'\cap s_-' = \emptyset$, we have
\beqa
	\big(\bs y^{(s_+,s_-)}\big)^{(s_+',s_-')}
	&=&
	\Big(y_i + \frc12\sum_{j\in s_+} \varphi_{ij}
	-
	\frc12\sum_{j\in s_-} \varphi_{ij}\Big)_i^{(s_+',s_-')}\n
	&=&
	\Big(y_i + \frc12\sum_{j\in s_+\cup s_+'} \varphi_{ij}
	-
	\frc12\sum_{j\in s_-\cup s_-'} \varphi_{ij}\Big)_i\n
	&=&\bs y^{(s_+\cup s_+',s_-\cup s_-')}.
	\label{consistency}
\eeqa
Recall from Eq.~\eqref{setext} that
\beq
	\bs \chi_s = (\chi_i)_{i\in s}.
\eeq
\begin{lemma}[Impact parameters in the limit] \label{lemproj}
Let $N\in\N,\,\bs \chi\in\Rpu^N,\,\bs y\in\R^N$, let $s_+,\,s_-\subseteq\dbra 1,N\dket$ be an ordered pair of disjoint subsets, and $s$ be given by \eqref{sspsm}. Then for every $x\in\R$ and every integer $n\geq0$,
\beq\label{limextract}
	\lim_{\bs z:s_+,s_-}
	\p_x^n u_{\bs \chi,\bs y+\bs z}(x)
	=
	\p_x^n u_{\bs \chi_s,\bs y^{(s_+,s_-)}}(x)
\eeq
where the order of limits is not important.
\end{lemma}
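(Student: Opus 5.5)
The plan is to work with the in-out representation of Lemma \ref{lemtaudecomp}, in the form \eqref{tauresult2}, using the freedom in the choice of $s$ so that every term has a finite limit. I take the out-set to be $s' = s_+\cup \sigma$, where $\sigma\subseteq s$ is arbitrary but fixed. Concretely I choose $\sigma=s$, so $s'=s_+\cup s$ and the in-set is $s_-$. With $\bs y\to\bs y+\bs z$, the term labelled by $r$ in \eqref{tauresult2} is
\[
\exp\sum_{i\in r}2\chi_i\sgn_{s'}(i)\big(x-x_i^{s',r}-z_i\big),
\]
where $x_i^{s',r}$ involves only $y_i$ and the $\varphi_{ij}$'s, not $\bs z$. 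For $i\in s_+$ we have $\sgn_{s'}(i)=1$ and $z_i\to+\infty$, so the factor vanishes. For $i\in s_-$ we have $\sgn_{s'}(i)=-1$ and $z_i\to-\infty$, so again $-(\,\cdot\,-z_i)\to-\infty$ and the factor vanishes. Every term with $r\cap(s_+\cup s_-)\neq\emptyset$ therefore tends to zero, exponentially in $\min_{i\in s_+\cup s_-}|z_i|$. This convergence is uniform in the order of limits and holds for $x$ in compact sets. The same is true after applying $\p_x^k$, since differentiation only produces polynomial prefactors $(2\sum_{i\in r}\chi_i\sgn_{s'}(i))^k$.

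The surviving terms are those with $r\subseteq s$, and for them $\bs z$ plays no role. For $i\in r\subseteq s$, I compute
\[
x_i^{s',r}=y_i+\tfrac12\sum_{j\notin r}\sgn_{s'}(j)\varphi_{ij}
= y_i+\tfrac12\sum_{j\in s_+}\varphi_{ij}-\tfrac12\sum_{j\in s_-}\varphi_{ij}+\tfrac12\sum_{j\in s\setminus r}\varphi_{ij}
= y_i^{(s_+,s_-)}+\tfrac12\sum_{j\in s\setminus r}\sgn_{s}(j)\varphi_{ij}.
\]
This is exactly the quantity $x_i^{s,r}$ of \eqref{xist} for the reduced system with $|s|$ solitons, spectral parameters $\bs\chi_s$, impact parameters $\bs y^{(s_+,s_-)}$, and out-set the whole of $s$. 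Hence the limit of $\tau^{s'}_{\bs\chi,\bs y+\bs z}$, together with all its $x$-derivatives, is the function $\tau^{s}_{\bs\chi_s,\bs y^{(s_+,s_-)}}$ of \eqref{tauresult2}. By Lemma \ref{lemtaudecomp} applied to the reduced system, that function is equivalent under \eqref{equiv} to $\tau_{\bs\chi_s,\bs y^{(s_+,s_-)}}$.

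To conclude, I write $\p_x^n u = 2\p_x^{n+2}\log\tau$ as a rational function of $\tau,\p_x\tau,\ldots,\p_x^{n+2}\tau$, with a power of $\tau$ in the denominator (Faà di Bruno, as in Lemma \ref{lembound1}). By \eqref{tauprimet} I may use $\tau^{s'}$ for every $\bs z$. The limiting $\tau$ is strictly positive, since it contains the term $r=\emptyset$, which equals $1$. So the ratios converge to the corresponding expression for the reduced tau function, and this gives \eqref{limextract}. Since the argument never used an ordering of the individual limits, but only that each $|z_i|\to\infty$ with the stated signs, the order of limits is irrelevant. The empty case $s=\emptyset$ gives $\tau\to1$ and $u\to 0$, consistent with $u_{\emptyset,\emptyset}=0$.

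The main point requiring care is the choice of the out-set $s'$. A naive choice, such as $s=\dbra1,N\dket$ in \eqref{taudecomp}, would make some terms diverge as $z_i\to-\infty$, and one would have to renormalise. Putting $s_-$ in the in-set is precisely what makes the $\sgn_{s'}$ factors align with the directions of the limits. Beyond that, the only delicate step is the sign bookkeeping in the identity for $x_i^{s',r}$ above.
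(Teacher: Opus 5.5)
Your proof is correct and is essentially the paper's argument: choose an in-out representation whose out-set contains $s_+$ and excludes $s_-$, let every term with $r\not\subseteq s$ vanish in the limit, and recognise the surviving terms as an in-out tau function of the reduced system. The difference is only cosmetic. The paper takes the out-set $s_+$ with \eqref{tauresult} and arrives at $\tau^{\emptyset}_{\bs\chi_s,\bs y^{(s_+,s_-)}}$, whereas you take the out-set $s_+\cup s$ with \eqref{tauresult2} and arrive at $\tau^{s}_{\bs\chi_s,\bs y^{(s_+,s_-)}}$. You also spell out the Fa\`a di Bruno and positivity step ($\tau\geq 1$) that the paper leaves implicit.
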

\proof This is immediate from the in-out representation, Lemma \ref{lemtaudecomp}. Consider the representation $\tau_{\bs \chi,\bs y}^{s_+}$. Then in \eqref{tauresult} the limit $\lim_{\bs z:s_+,s_-}$ can be done on every term, and only those with $r\subseteq s$ remain. Writing (see \eqref{xist})
\beq
	x_i^{s_+} = y_i + \frc12\sum_{j=1\atop j\neq i}^N \sgn_{s_+}(j)\varphi_{ij}
	=
	y_i^{(s_+,s_-)}
	- \frc12\sum_{j\in s\atop j\neq i} \varphi_{ij}
\eeq
we obtain $\tau_{\bs \chi_s,\bs y^{(s_+,s_-)}}^{\emptyset}$. The same can be done for every derivative $\p_x^n \tau_{\bs \chi,\bs y}^{s_+}$, as $\p_x$ merely brings a factor $\sum_{i\in r}2\chi_i \sgn_{s_+}(i)$ in every term.
\eproof

\medskip


In addition to describing the result of the limit above, the modified impact parameters $\bs y^{(s_+,s_-)}$ also correspond to {\em the extraction of solitons}. Here the picture is that the solitons we ``extract'' are those we keep and concentrate on -- as this is what we want to do with the fluid-cell projection.

Let $x_*\in\R$. We say that $s_+,\,s_-\subseteq\dbra 1,N\dket$ are {\em separated by $x_*$} (with respect to $\bs x\in\Rne^N$) if there exists
\beq\label{intervalreal}
	I=[I_-,I_+],\quad I_-\leq I_+\in\R,
\eeq
with $x_*\in I$ such that
\beq
	s_+ = \{i:x_i> I_+\},\quad
	s_- = \{i:x_i< I_-\}.
\eeq


%
\begin{lemma}[How to extract solitons]\label{lemextractsbe} Let $N\in\N$, and let $\bs x\in\Rne^N$ be a semiclassical Bethe vetor for $\bs \chi\in\Rpu^N,\,\bs y\in\R^N$, Definition \ref{defibethe}. Let $x_*\in\R$, and $s_+, \,s_-\subseteq\dbra 1,N\dket$ be separated by $x_*$. With $s$ given by \eqref{sspsm1}, then $\bs x_s$ is a semiclassical Bethe vector for $\bs\chi_s,\,\bs y^{(s_+,s_-)}$.
\end{lemma}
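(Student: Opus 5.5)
This is a direct verification. Since $\bs x\in\Rne^N$, its restriction $\bs x_s$ lies in $\Rne^{|s|}$, so it suffices to check that $\bs x_s$ satisfies the SBE \eqref{bethemain} for the reduced data $\bs\chi_s$, $\bs y^{(s_+,s_-)}$, that is,
\beq
	y_i^{(s_+,s_-)} = x_i + \frc12\sum_{j\in s\atop j\neq i}\sgn(x_i-x_j)\varphi_{ij}\quad\forall\;i\in s.
\eeq
Two structural facts make this work. First, the scattering shifts $\varphi_{ij}$ in \eqref{varphi} depend only on $\chi_i,\chi_j$, so the matrix of shifts for $\bs\chi_s$ is exactly the restriction $(\varphi_{ij})_{i,j\in s}$. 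Second, the ordering of all three groups of solitons is fixed by the separation hypothesis.

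Fix $i\in s$ and start from the full SBE \eqref{bethemain} for $\bs x$. Split the sum over $j\neq i$ into the three disjoint parts $j\in s\setminus\{i\}$, $j\in s_+$ and $j\in s_-$. Since $s_+,s_-$ are separated by $x_*$, there is an interval $I=[I_-,I_+]$ with $x_i\in I$ for every $i\in s$, while $x_j>I_+$ for $j\in s_+$ and $x_j<I_-$ for $j\in s_-$. Hence $\sgn(x_i-x_j)=-1$ for all $j\in s_+$ and $\sgn(x_i-x_j)=+1$ for all $j\in s_-$. The full SBE therefore becomes
\beq
	y_i = x_i + \frc12\sum_{j\in s\atop j\neq i}\sgn(x_i-x_j)\varphi_{ij} - \frc12\sum_{j\in s_+}\varphi_{ij} + \frc12\sum_{j\in s_-}\varphi_{ij}.
\eeq
Moving the last two sums to the left-hand side produces exactly $y_i^{(s_+,s_-)}$ as defined in \eqref{yispsm}, which is the reduced SBE claimed above.

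The only point that needs care is the sign bookkeeping, specifically the convention \eqref{sgn} that $\sgn(0)=1$. Because $I_\pm$ are excluded strictly from $s_\pm$ (we have $x_j>I_+$ or $x_j<I_-$) and all $x$'s are distinct, no zero argument ever arises, so the signs above are unambiguous. The point $x_*$ plays no role beyond guaranteeing that such an interval $I$ exists. I do not expect any step to be a genuine obstacle.
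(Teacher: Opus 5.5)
Your proposal is correct and follows the paper's own argument: since $x_j<x_i$ for $j\in s_-$ and $x_j>x_i$ for $j\in s_+$ (with $i\in s$), the corresponding terms of \eqref{bethemain} have fixed signs and can be moved to the left-hand side, producing exactly $\bs y^{(s_+,s_-)}$ from \eqref{yispsm}. Your sign bookkeeping checks out, and the remarks that $\bs x_s\in\Rne^{|s|}$ and that the restricted shift matrix is $(\varphi_{ij})_{i,j\in s}$ supply the details the paper leaves implicit.
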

\proof Because $x_j<x_i$ for all $j\in s_-,\,i\in s$, then in \eqref{bethemain} with such $i$, we can bring the terms with such $j$ to the left-hand side. This gives the second sum in \eqref{yispsm}. The first sum is obtained similarly.
\eproof
\medskip

Another way of stating the above lemma is as follows. For $\bs\chi\in\Rpu^N$, let $\mathcal E_{\bs\chi}$ be a right-inverse of $\mathcal C_{\bs\chi}$,
\beq\label{mapEsbe}
	\mathcal C_{\bs \chi}\circ \mathcal E_{\bs \chi} = {\rm id}.
\eeq
This is an {\em expansion map}. By Theorem \ref{theosbe}, it exists. For $s\subset\dbra 1,N\dket$,  define
\beq
	\bs y^{(s)} := \mathcal C_{\bs \chi_s}(\mathcal E_{\bs\chi}(\bs y)_s).
\eeq
Then the lemma says that if $s_+,\,s_-$ are separated by $x_*$, we have
\beq\label{ysyspsm}
	\bs y^{(s)} = \bs y^{(s_+,s_-)}.
\eeq
The map $(\bs\chi,\bs y)\mapsto (\bs\chi_s,\bs y^{(s)})$ simply executes the following operations: it expands the impact parameters $\bs y$ to obtain positions of solitons $\bs x = \mathcal E_{\bs\chi}(\bs y)$; projects out solitons that are not in the set $s$, that is taking $\bs \chi_s = (\chi_i)_{i\in s}$ and $\bs x_s = (x_i)_{i\in s}$; and contracts the remaining  positions to get the impact parameters $\bs y^{(s)}$ that correspond to them. The spectral parameters $\bs\chi_s$ of the restriction are a subvector of the original vector of $N$ spectral parameters $\bs\chi$, but the {\em impact parameters $\bs y^{(s)}$ are different from the subvector $\bs y_s$}. This is what keep the remaining positions unchanged -- this ``extracts'' the set $s$ of solitons from the solitons $(\bs \chi,\bs y)$. Below,  we will also use a notation that emphasises the interval $I$, Eq.~\eqref{intervalreal}:
\beq\label{datafcgen}
	\bs\chi^{(I)} := \bs\chi_s,\quad
	\bs y^{(I)}:=\bs y^{(s_+,s_-)} = \bs y^{(s)}.
\eeq

According to the proposed construction of Eqs.~\eqref{Omega}, \eqref{mapU} and \eqref{fcupsilon}, it is natural to suggest for the map $\mathcal U$
\beq
	\mathcal U(\bs\chi,\bs y) = (\bs\chi,\bs x)\ :\ 
	\bs x=\mathcal E_{\bs\chi}(\bs y),
\eeq
for some choice of right-inverse $\mathcal E_{\bs\chi}(\bs y) \in\Rne^N$, corresponding to a choice of solution to \eqref{bethemain}. But there are possibly many solutions to \eqref{bethemain}. The set of solutions we choose determines the image of $\mathcal U$,  the set of quasiparticle coordinates:
\beq
	\mathsf Q= \coprod_{N=0}^\infty
	\{(\bs\chi,\mathcal E_{\bs \chi}(\bs y):\bs \chi\in\Rpu^N,\,\bs y\in\R^N\}.
\eeq
Once such a choice is made, $\mathcal U$ is well defined, and its inverse, acting as,
\beq
	\mathcal U^{-1}(\bs\chi,\bs x) = (\bs \chi,\mathcal C_{\bs\chi}(\bs x)),
\eeq
exists on the image $\mathsf Q$ of $\mathcal U$, hence $\mathcal U$ is a bijection $\mathsf S\to\mathsf Q$. For \eqref{fcupsilon} to work, we must be able to evaluate $\mathcal U^{-1}:\mathsf Q \to \mathsf S$ on $\mathcal P_I\mathsf Q$, where the projection $\mathcal P_I$ is defined in \eqref{fcupsilon} for every interval $I$, Eq.~\eqref{intervalreal}. That is, the projection must map $\mathsf Q$ into $\mathsf Q$:
\beq\label{projproperty}
	\mathcal P_I\mathsf Q\subseteq \mathsf Q\quad \forall\;I.
\eeq
This is a constraint on our choice of $\mathcal U$. With such a choice of $\mathcal U$, then the extraction of solitons is also a projection: with
\beq\label{sspsm}
	s = \{i:x_i\in I\},\quad s_-:=\{i:x_i<I_-\},\quad s_+:=\{i:x_i>I_+\},
\eeq
we define
\beqa
	\hat {\mathcal P}_I \quad : \qquad \mathsf S&\to& \mathsf S\n
	(\bs \chi,\bs y)&\mapsto& (\bs \chi_s,\bs y^{(s_+,s_-)}).
	\label{hatprojection}
\eeqa
and this has the form
\beq\label{hatpU}
	\hat {\mathcal P}_I = \mathcal U^{-1}\mathcal P_I\mathcal U.
\eeq

Here, we choose Algorithm \ref{rightward}, as by Theorem \ref{theosbe} this provides a choice of right-inverse (a similar construction can be done with Algorithm \ref{leftward}). By \eqref{leftrightpoints}, it gives the positions of the left-most point $x_i$ of the soliton's support $K_i(0)$, hence it is a ``left anchor''. Two aspects are to be considered: (i) we must show the projection property \eqref{projproperty}, and (ii) we must deal with the non-coincidence condition \eqref{noncoincidence} in order to make the result of the algorithm unique. We now do this, with a slightly modified algorithm.

The data of this algorithm is $N\in\N$ (implicit) and $\bs \chi\in\Rpu^N,\,\bs y\in\R^N$. 
\begin{algo}[${\tt Leftanchor}(\bs\chi \in \Rpu^N,\bs y\in\R^N)$]\label{leftanchor}
{\ }\\
Let $\bs X = \bs X^-(\bs\chi,\bs y)$ and $s=\dbra 1,N\dket$. As long as $s\neq\emptyset$:
\bi
\item[\tt1-] Let $z_i = X_i - \sum_{j\in \dbra 1,N\dket \setminus s} \varphi_{i j},\,i\in s$. If the set $\{z_i\}_{i\in s}$ has a unique minimum, then set $l = \argmin_{i\in s}(z_i)$, and otherwise, $l = \lim_{\ep\to 0^+}\argmin_{i\in s}(z_i+v_i\ep)$ where $\bs v$ is the vector of velocities, Eq.~\eqref{timeevol}. Then set $x_l = z_l$.
\item[\tt3-] Replace $s$ by $s\setminus \{l\}$.
\item[\tt4-] Repeat.
\ei
Return $\bs x\in\R^N$.
\end{algo}

\begin{defi}\label{defiqp} Let $N\in\N,\,\bs \chi\in\Rpu^N,\,\bs y\in\R^N$. The vector of quasi-particle positions $\bs x(\bs\chi,\bs y)$ is defined as 
\beq
	\bs x(\bs\chi,\bs y) = {\tt Leftanchor}(\bs\chi,\bs y).
\eeq
This defines the map $\mathcal U$ as $\mathcal U(\emptyset,\emptyset) = (\emptyset,\emptyset)$ and
\beq
	\mathcal U(\bs\chi,\bs y) = (\bs\chi,\bs x(\bs\chi,\bs y)).
\eeq
The space of quasi-particle data is
\beq
	\mathsf Q = \{(\emptyset,\emptyset)\}\cup \{(\bs\chi,\bs x(\bs\chi,\bs y)):N\in\N,\,\bs \chi\in\Rpu^N,\,\bs y\in\R^N\}.
\eeq
\end{defi}
\begin{lemma}\label{lemdesc} Take the context of Definition \ref{defiqp}. If the non-coincidence condition \eqref{noncoincidence} holds, then $\bs x(\bs\chi,\bs y)$ is the unique vector
\beq\label{defqp1}
	\bs x(\bs\chi,\bs y) \in {\tt Rightward}(\bs\chi,\bs X^-(\bs\chi,\bs y))
\eeq
where $\bs X^-(\bs\chi,\bs y)$ are the left-extremal magnifying-glass positions \eqref{condxstar}. Otherwise, $\bs x(\bs\chi,\bs y)\in\Rne^N$ is the unique vector such that
\beq\label{defqp2}
	\bs x(\bs\chi,\bs y)+\ep\bs v \in
	{\tt Rightward}(\bs\chi,\bs X^-(\bs\chi,\bs y) + \ep\bs v)\quad
	\mbox{for all $\ep>0$ small enough.}
\eeq
The quasi-particle positions  form a semiclassical Bethe vector for $\bs \chi,\,\bs y$, Definition \ref{defibethe}. They satisfy the minimal-distance condition \eqref{distexist},
\beq\label{separation}
	x_i(\bs\chi,\bs y)-a\geq -\sum_{j:x_j(\bs\chi,\bs y)\in[a,x_i(\bs\chi,\bs y))} \varphi_{ij} \quad \forall\  a
	\leq x_i(\bs\chi,\bs y),\ i\in\dbra 1,N\dket.
\eeq
In particular, for all $i,j$ with $x_i\neq x_j$,
\beq\label{minseparation}
	|x_i(\bs\chi,\bs y)-x_j(\bs\chi,\bs y)|\geq \min\{|\varphi_{ij}|,|\varphi_{ji}|\}.
\eeq
Further, for every $I$ as in \eqref{intervalreal}  (see Eq.~\eqref{fcupsilon} for the definition of $\mathcal P_I$),
\beq\label{projectionQ}
	\mathcal P_I\mathsf Q\subseteq \mathsf Q.
\eeq
With \eqref{sspsm} we have
\beq\label{Pexplicit}
	\mathcal P_I(\bs\chi,\bs x(\bs\chi,\bs y)) =(\bs\chi_s,\bs x(\bs\chi_s,\bs y^{(s_+,s_-)}))
\eeq
so that, with \eqref{hatprojection},
\beq\label{projectionexchange}
	\mathcal P_I \mathcal U = \mathcal U \h{\mathcal P}_I.
\eeq
\end{lemma}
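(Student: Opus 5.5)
The plan is to reduce everything to Theorem \ref{theosbe} through a small perturbation along the velocity vector, and then pass to the limit $\ep\to0^+$. The first step is to compare {\tt Leftanchor} with {\tt Rightward}. At each step of either algorithm, the candidates are $z_i = y_i+\frc12\sum_{j\neq i}\sigma_j\varphi_{ij}$, with $\sigma_j=-1$ for the already-removed $j$ and $\sigma_j=+1$ otherwise. Hence under the non-coincidence condition \eqref{noncoincidence} the minimum at each step is unique. {\tt Leftanchor} then performs exactly the unique run of ${\tt Rightward}(\bs\chi,\bs X^-(\bs\chi,\bs y))$, and uniqueness follows from Theorem \ref{theosbe}; this gives \eqref{defqp1}.

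For the general case I use that $\bs X^-(\bs\chi,\bs y+\ep\bs v)=\bs X^-(\bs\chi,\bs y)+\ep\bs v$. Each coincidence in \eqref{noncoincidence} for $\bs y+\ep\bs v$ is an affine equation in $\ep$ with slope $v_i-v_{i'}\neq0$, because the $\chi_i$ are distinct. Since there are finitely many such equations, non-coincidence holds for all small enough $\ep>0$. For such $\ep$ the order in which {\tt Rightward} picks indices is constant, and it coincides with the tie-breaking rule $\lim_{\ep\to0^+}\argmin(z_i+v_i\ep)$ of {\tt Leftanchor}: the candidates at each step are shifted exactly by $\ep v_i$, because the removed-soliton corrections do not depend on $\bs y$. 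Therefore the unique {\tt Rightward} output at $\bs y+\ep\bs v$ equals $\bs x(\bs\chi,\bs y)+\ep\bs v$, which gives \eqref{defqp2}. Uniqueness of this characterisation is immediate.

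The SBE and the separation property \eqref{separation} then pass to the limit. For $\ep>0$, Theorem \ref{theosbe} gives both for $\bs x+\ep\bs v$. The ordering of $\bs x+\ep\bs v$ is constant in $\ep$. Taking $a=x_j$ in \eqref{distexist} yields $|x_i-x_j|+\ep|v_i-v_j|\geq\min(|\varphi_{ij}|,|\varphi_{ji}|)$, so in the limit the order stays strict and $\bs x\in\Rne^N$; this also gives \eqref{minseparation}. With the order fixed, the SBE are linear identities and survive the limit. For \eqref{separation} at fixed $a\leq x_i$, I apply the $\ep$-version with $a_\ep$ chosen just below the smallest shifted $x_j+\ep v_j$ with $x_j\in[a,x_i)$. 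This reproduces the same index set and lets me take $\ep\to0$ in a closed inequality.

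The last and most delicate step is the projection property \eqref{projectionQ} and \eqref{Pexplicit}. Lemma \ref{lemextractsbe} gives that $\bs x_s$ is a semiclassical Bethe vector for $\bs\chi_s,\bs y^{(s_+,s_-)}$. It satisfies \eqref{distexist}, since restricting the sum to $j\in s$ only removes negative terms. By Theorem \ref{theosbe} it therefore lies in ${\tt Rightward}(\bs\chi_s,\bs X^-(\bs\chi_s,\bs y^{(s_+,s_-)}))$. To identify it with ${\tt Leftanchor}(\bs\chi_s,\bs y^{(s_+,s_-)})$, including the tie-breaking, I perturb again by $\ep\bs v$. For small $\ep$ the sets $s,s_\pm$ are still separated by a slightly enlarged interval, because the $x_i$ in $s_\pm$ lie strictly outside $I$. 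Also $\bs y^{(s_+,s_-)}$ computed at $\bs y+\ep\bs v$ equals $\bs y^{(s_+,s_-)}+\ep\bs v_s$. So $\bs x_s+\ep\bs v_s$ solves the projected SBE with separation, and generic non-coincidence forces it to be the unique {\tt Rightward} output. By \eqref{defqp2} this is $\bs x(\bs\chi_s,\bs y^{(s_+,s_-)})$. Equation \eqref{projectionexchange} then follows from \eqref{hatprojection} and Definition \ref{defiqp}. I expect the main obstacle to be the bookkeeping of the $\ep$-perturbation here, namely checking that the separating interval and the index sets remain valid uniformly as $\ep\to0^+$.
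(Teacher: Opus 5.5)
Your proposal is correct and follows essentially the same route as the paper. You identify {\tt Leftanchor} with {\tt Rightward} through the common form $g_i^{\bs\sigma}$ of the candidates, perturb along $\ep\bs v$ to restore non-coincidence, and transfer the SBE, $\Rne^N$ and separation to $\ep\to0^+$. For the projection you combine Lemma~\ref{lemextractsbe}, the restricted separation, Theorem~\ref{theosbe} and the same perturbation. You keep the original sets $s,s_\pm$ and check that they stay separated for $\bs x+\ep\bs v$; this is slightly cleaner than the paper's redefinition of $s$ through the perturbed positions (written there with $\ep\bs\chi$), which would misclassify a soliton sitting exactly at $I_+$.
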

\proof Let $g_i^{\bs\sigma}:= y_i + \frc12\sum_{j\neq i}\sigma_j\varphi_{ij},\,i\in\dbra 1,N\dket,\, \bs\sigma\in\{+,-\}^{\dbra 1,N\dket}$. We note that for every $s\in\dbra 1,N\dket$ the set $\{z_i\}_{i\in s}$ in ${\tt Leftanchor}(\bs\chi,\bs y)$ is formed of $z_i = g_i^{\bs \sigma}$ for $\bs\sigma =\sgn_s$.

If condition \eqref{noncoincidence} holds, then ${\tt Rightward}(\bs\chi,\bs X^-(\bs\chi,\bs y))$ is a singleton which solves \eqref{bethemain}, by Theorem \ref{theosbe}, and further, by the above observation, $\{z_i\}_{i\in s}$ has a unique minimum, and we see that ${\tt Leftanchor}(\bs\chi,\bs y)$ is the unique element of ${\tt Rightward}(\bs\chi,\bs X^-(\bs\chi,\bs y))$. This shows \eqref{defqp1} and \eqref{separation} and the fact that $\bs x(\bs\chi,\bs y)$ is semiclassical Bethe vector for $\bs \chi,\,\bs y$, Definition \ref{defibethe}.

Otherwise, because $\chi_i\neq\chi_j\;\forall\;i\neq j$, then for every $\bs\sigma$ we have $g_i^{\bs\sigma}+v_i\ep\neq g_{i'}^{\bs\sigma}+v_{i'}\ep \;
\forall\,i\neq i'\in\dbra 1,N\dket$ whenever $\ep>0,\,\ep<\min_{i,j:g_i^{\bs\sigma}\neq g_j^{\bs\sigma}}((g_i^{\bs\sigma}-g_j^{\bs\sigma})/(v_i- v_j))$. Therefore ${\tt Rightward}(\bs\chi,\bs X^-(\bs\chi,\bs y)+ \ep\bs v)$ is a singleton, and by the linearity of \eqref{inpminmain}, for all such $\ep$ the unique element has the form $\bs x + \ep\bs v$ for some $\bs x\in\R^N$. It is clear that the limit taken in ${\tt Leftanchor}(\bs\chi,\bs y)$ exactly gives this vector. This shows \eqref{defqp2} for $\bs x = \bs x(\bs\chi,\bs y)\in\R^N$.

By Theorem \ref{theosbe}, choosing any $a=x_j<x_i$ in the notation of \eqref{distexist}, we find that for all such $\ep$ as above, the vector $\bs x$ above satisfies $x_i-x_j\geq-\ep(v_i-v_j)-\varphi_{ij}\;\forall\,i,j:x_i-x_j\geq -\ep(v_i-v_j)$. Therefore there are $a,a'>0$ such that $x_i-x_j\geq-\varphi_{ij}-a\ep\,\forall\,i,j:x_i>x_j-a'\ep$, and taking $\ep>0$ as small as desired, we find $\bs x(\bs\chi,\bs y)\in\Rne^N$. By Theorem \ref{theosbe} $\bs x + \ep\bs  v$ solves \eqref{bethemain} for all $\ep$ as above, and as $\bs x \in\Rne^N$ then for all $\ep>0$ small enough, we have $\sgn(x_i-x_j+\ep(v_i-v_j)) = \sgn(x_i-x_j)$, hence $\bs x$ also solves \eqref{bethemain}. This shows that $\bs x(\bs\chi,\bs y)$ is semiclassical Bethe vector for $\bs \chi,\,\bs y$, Definition \ref{defibethe}.

Further, using again \eqref{distexist},  $x_i-a\geq-\ep v_i-\sum_{j:x_j+\ep v_j\in[a,x_i+\ep v_i)}\varphi_{ij}$ for all $a\leq x_i+\ep v_i$. As $\bs x+\ep\bs  v\in\Rne^N$ for all $\ep>0$ by the fact that $\bs x+\ep v\chi \in{\tt Rightward}(\bs\chi,\bs X^-(\bs\chi,\bs y) + \ep v\chi)$, and also for $\ep=0$ by the result established above, then for all $\ep$ as above, the condition $x_j+\ep v_j < x_i+\ep v_i$ is equivalent to $x_j<x_i$. Hence we have $x_i-a\geq-\ep v_i-\sum_{j:x_j\in[a-\ep v_j,x_i)}\varphi_{ij}$ for all $a\leq x_i+\ep v_i$. The limit $\ep\to0^+$ of this condition gives \eqref{separation}.

It remains to show \eqref{projectionQ} with \eqref{Pexplicit}. First, we show that for any $s_+,s_-\in \dbra 1,N\dket,\,s = \dbra 1,N\dket\setminus s_+\cup s_-$ with $s_+\cap s_-=\emptyset$, if the non-coincidence condition \eqref{noncoincidence} holds for $\bs\chi,\,\bs y$, then it also does for $\bs\chi_s,\,\bs y^{(s_+,s_-)}$. This follows from the fact that for every $\bs\sigma\in\{+,-\}^s$ and $i\in s$:
\beq
	y_i^{(s_+,s_-)} + \frc12\sum_{j\in s\atop j\neq i}
	\sigma_j\varphi_{ij}
	=
	y_i + \frc12\sum_{j\in s_+} \varphi_{ij}
	-
	\frc12\sum_{j\in s_-} \varphi_{ij}
	+ \frc12\sum_{j\in s\atop j\neq i}
	\sigma_j\varphi_{ij}
	= g_i^{\bs\sigma},\quad
	\sigma_k = \pm\,(k\in s_\pm).
\eeq

Let $\bs x\in\mathsf Q$. Then $\bs x = \bs x(\bs\chi,\bs y)$ for some $N\in\N,\,\bs\chi\in\Rpu^N,\,\bs y\in\R^N$. Let $s = \{i:x_i\in I\},\,s_-:=\{i:x_i<I_-\},\, s_+:=\{i:x_i>I_+\}$. Then $s_+,\,s_-$ are separated by any point in $I$ and by Lemma \ref{lemextractsbe}, $\bs x_s\in\Rne^{|s|}$ solves \eqref{bethemain} for $\bs \chi_s,\,\bs y^{(s_+,s_-)}$. Also for all $i\in s,\,a\leq x_i$, by \eqref{separation}
\beq
	x_i - a \geq -\sum_{j\in \dbra 1,N\dket:x_j\in [a,x_i)}\varphi_{ij}
	\geq -\sum_{j\in s:x_j\in [a,x_i)}\varphi_{ij}.
\eeq
Hence by Theorem \ref{theosbe},
\beq
	\bs x_s\in {\tt Rightward}\Big(\bs\chi_s,\bs X^-\big(\bs\chi_s,\bs y^{(s_+,s_-)}\big)\Big).
\eeq
If the non-coincidence condition \eqref{noncoincidence} holds, by the result above it also holds for $\bs\chi_s,\bs y^{(s_+,s_-)}$, so there is a unique element in ${\tt Rightward}\Big(s,\bs X^-\big(\bs\chi_s,\bs y^{(s_+,s_-)}\big)\Big)$, and by \eqref{defqp1} established above, we have
\beq
	\mathcal P_I\bs x(\bs\chi,\bs y) = \bs x(\bs\chi,\bs y)_s = \bs x(\bs\chi_s,\bs y^{(s_+,s_-)})\in\mathsf Q.
\eeq
Otherwise, then let $s = \{i:x_i+\ep\chi_i\in I\},\,s_-:=\{i:x_i+\ep\chi_i<I_-\},\, s_+:=\{i:x_i+\ep\chi_i>I_+\}$. For all $\ep>0$ small enough, these do not depend on $\ep$. Then by the same argument as above, using the fact that
\beq
	\bs X^-(\bs\chi,\bs y)+\ep\bs \chi= \bs X^-(\bs\chi,\bs y+\ep\bs \chi),
\eeq
we have, for all $\ep>0$ small enough,
\beq
	\bs x(\bs\chi,\bs y)_s+\ep\bs\chi_s\in {\tt Rightward}\Big(\bs\chi_s,\bs X^-\big(\bs\chi_s,(\bs y+\ep\bs\chi)^{(s_+,s_-)}\big)\Big)
	=
	{\tt Rightward}\Big(\bs\chi_s,\bs X^-\big(\bs\chi_s,\bs y^{(s_+,s_-)}\big)+\ep\bs\chi_s\Big)
\eeq
and by \eqref{defqp2}, we again have
\beq
	\mathcal P_I\bs x(\bs\chi,\bs y) = \bs x(\bs\chi,\bs y)_s = \bs x(\bs\chi_s,\bs y^{(s_+,s_-)})\in\mathsf Q.
\eeq
\eproof

\medskip
Note how the  relation \eqref{consistency} guarantees that \eqref{Pexplicit} makes sense: for every $a\leq c\leq d\leq b$, with $s= \{i:x_i\in [a,b]\}$, $s'= \{i:x_i\in [c,d]\}\subseteq s$, the relation $({\bs x}(\bs\chi,\bs y)_s)_{s'} = {\bs x}(\bs\chi,\bs y)_{s'}$ is consistent with $\bs x(\bs\chi,\bs y)_s = \bs x(\bs\chi_s,\bs y^{(s_+,s_-)})$. Setting $s_-'=s\cap \{i:x_i<c\}$, $s_+'=s\cap \{i:x_i>d\}$, we have that $s_-\cup s_-' = \{i:x_i< c\}$, $s_+\cup s_+' = \{i:x_i>d\}$ and
\beqa
	({\bs x}(\bs\chi,\bs y)_s)_{s'} &=&
	{\bs x}(\bs\chi_s,\,\bs y^{(s_+,s_-)})_{s'}\n
	&=&
	{\bs x}\Big((\bs\chi_s)_{s'},\,\big(\bs y^{(s_+,s_-)}\big)^{(s_+',s_-')}\Big)
	=
	{\bs x}\Big(\bs\chi_{s'},\,\big(\bs y^{(s_+\cup s_+',s_-\cup s_-')}\big)\Big)\n
	&=&{\bs x}(\bs\chi,\bs y)_{s'}.
\eeqa

It is convenient to write
\beq\label{XDconv}
	\bs X(x_*;\bs\chi,\bs y),\quad \bs D(x_*;\bs\chi,\bs y)
\eeq
for the locally contracted vector and the vector of contracted distances at $x_*\in\R$ for $\bs\chi,\,{\bs x}(\bs\chi,\bs y)$ (Definition \ref{definatmag}). We recall the fixed-point property \eqref{fixedpoint},
\beq\label{fixedpointgen}
	X_i(x_i(\bs\chi,\bs y);\bs\chi,\bs y) = x_i(\bs\chi,\bs y).
\eeq
We also specialise the various imprecision measures introduced above to $\bs x(\bs\chi,\bs y)$, denoting these as follows: for every $N\in\N$, $\bs\chi\in\Rpu^N,\,\bs y\in\R^N$,
\beq\label{conveverything}\begin{aligned}
	& \reg(\bs\chi,\bs y) = \reg_{\bs\chi}({\bs x}(\bs\chi,\bs y))\quad \mbox{from Eq.~\eqref{reg}},\\
	& \varep(\bs\chi,\bs y) = \varep\quad \mbox{from Eq.~\eqref{varepdef}},\quad
	\delta(\bs\chi,\bs y) = \neigh(\reg(\bs\chi,\bs y))\quad\mbox{from Eq.~\eqref{convirreg} for $\bs x = {\bs x}(\bs\chi,\bs y)$}\\
	& K_i(\Delta X;\bs\chi,\bs y) = K_i(\Delta X),\quad
	\Delta x(\Delta X;\bs\chi,\bs y) = \Delta x(\Delta X)\quad
	\mbox{from Def.~\ref{defineigh} for $\bs x = {\bs x}(\bs\chi,\bs y)$},\\
	& \Delta^{\rm out}(\bs\chi,\bs y) = \Delta^{\rm out}
	\quad \mbox{from Eq.~\eqref{minimp} for $\bs x = {\bs x}(\bs\chi,\bs y)$.}
	\end{aligned}
\eeq

Recall the interval $I$, Eq.~\eqref{intervalreal}. This is what we refer to as the {\em fluid cell}. We are now ready to define the fluid-cell projection, which simply takes out all solitons whose positions are not within the fluid cell.
\begin{defi}[Fluid-cell projection] \label{defimapfc}
Set $I$ as in Eq.~\eqref{intervalreal}, which we refer to as the fluid cell. The fluid-cell projection onto $I$ is the map (see Eq.~\eqref{sspsm})
\beq
	\mathcal F_{I}: \mathfrak S\to \mathfrak S
\eeq 
such that for every $(\bs \chi,\bs y)\in \mathsf S$ (see Eq.~\eqref{Omega}, and Eqs.~\eqref{yispsm}, \eqref{ysyspsm} and \eqref{datafcgen}),
 \beq\label{fluidcelllimitsbe}
	\mathcal F_{I} u_{\bs \chi,\bs y}
	=
	u_{\bs \chi_s,\bs y^{(s_+,s_-)}} = u_{\bs\chi_s,\bs y^{(s)}}
	=
	u_{\bs\chi^{(I)},\bs y^{(I)}}\quad\forall\, x\in\R
\eeq
where $\mathcal F_{I} u_{\emptyset,\emptyset} = \mathcal F_{I} 0 = 0$, and
\beq\label{spmmapfcsbe}
	s_+ = \{i:x_i>I_+\},\quad
	s_- = \{i:x_i<I_-\},\quad  s = \dbra 1,N\dket \setminus (s_+\cup s_-)\quad\mbox{for}\  \bs x={\bs x}(\bs\chi,\bs y).
\eeq
From Lemma \ref{lemproj} we have
\beq\label{notationfc}
	\mathcal F_{I} u_{\bs \chi,\bs y}(x)
	=
	\lim_{\bs z:s_+,s_-}
	u_{\bs \chi,\bs y+\bs z}(x) \quad\forall\;x\in\R.
\eeq
That is, the relation \eqref{fcupsilon} holds: using the definition \eqref{Omega} of $\Omega$, then the relation \eqref{projectionexchange} and the definition \eqref{Moller} of the M\o ller transform $\Upsilon$, we have
\beq\label{fcformal}
	\mathcal F_I = \Omega \h{\mathcal P}_I\Omega^{-1} = \Omega \mathcal U^{-1}\mathcal P_I\mathcal U\Omega^{-1} =  \Upsilon \mathcal P_I \Upsilon^{-1}.
\eeq
\end{defi}
By \eqref{fcformal}, we have the projection property
\beq
	\mathcal F_I^2 = \mathcal F_I.
\eeq

The locally contracted vector and contracted distances behave well under the fluid-cell projection.
\begin{lemma}[Locally contracted vector and distances under projection]\label{lemcontproj} In the context of Definition \ref{defimapfc}, we have for all $N\in\N,\,\bs\chi\in\Rpu^N,\,\bs y\in R^N$,
\beq\label{eqXi}
	X_i\big(x_*;\bs\chi^{(I)},\bs y^{(I)}\big)
	= X_i\big(x_*;\bs\chi,\bs y\big)\quad
	\forall\;x_*\in I,\,i:x_i\in I
\eeq
and
\beq\label{ineqdi}
	D_i\big(x_*;\bs\chi^{(I)},\bs y^{(I)}\big)
	\geq
	D_i\big(x_*;\bs\chi,\bs y\big)
	\quad\forall\; x_*\in \R,\,i:x_i\in I.
\eeq
\end{lemma}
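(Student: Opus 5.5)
The plan is to reduce both statements to the explicit form of locally contracted coordinates in Definition \ref{definatmag}, using the projection identity \eqref{Pexplicit} of Lemma \ref{lemdesc}. That identity gives that the quasi-particle positions of the projected data are simply the restriction of the original ones:
\beq
	\bs x\big(\bs\chi^{(I)},\bs y^{(I)}\big) = \bs x(\bs\chi,\bs y)_s,\quad s=\{i:x_i\in I\}.
\eeq
Hence, for $i\in s$, the contracted coordinate $X_i(x_*;\bs\chi^{(I)},\bs y^{(I)})$ is given by \eqref{Xnatural} with the same $x_i$. The only change is that the sum over $j$ is restricted to $j\in s$:
\beq
	X_i\big(x_*;\bs\chi^{(I)},\bs y^{(I)}\big) = x_i + \sgn(x_i-x_*)\sum_{j\in s:\,x_j\in[x_*,x_i)\ \text{or}\ (x_i,x_*)}\varphi_{ij}.
\eeq
So the whole proof amounts to comparing the set of indices $j$ whose positions lie between $x_*$ and $x_i$, before and after removing the indices in $s_+\cup s_-$.

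For \eqref{eqXi}, take $x_*\in I$ and $x_i\in I$. Since $I$ is an interval, every $x_j$ lying in $[x_*,x_i)$ or $(x_i,x_*)$ lies in $I$, so $j\in s$. The two sums therefore coincide, and the contracted coordinates are equal.

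For \eqref{ineqdi}, take $x_*\in\R$ arbitrary and $x_i\in I$. Restricting to $j\in s$ removes some terms $\varphi_{ij}<0$ (by \eqref{varphineg}), so the restricted sum is larger, i.e.\ less negative. Multiplying by $\sgn(x_i-x_*)$ in \eqref{contracteddist}, we get
\beq
	D_i = |x_i-x_*| + \sum_{j:\,x_j\ \text{between}}\varphi_{ij}.
\eeq
Here $\sgn(x_i-x_*)^2=1$, including at $x_*=x_i$, where the sum is empty by \eqref{fixedpoint}. This expression is monotone decreasing in the index set, which gives the inequality for the projected data.

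The only point needing care is the validity of \eqref{Pexplicit} when the non-coincidence condition fails; Lemma \ref{lemdesc} covers this case via the $\ep\bs v$ regularisation. The rest is bookkeeping with the half-open interval conventions \eqref{defab}.
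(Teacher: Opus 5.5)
Your proof is correct and follows the paper's argument. Both use \eqref{Pexplicit} to reduce to the restricted positions $\bs x_s$, then compare the index sets in \eqref{Xnatural} using $\varphi_{ij}<0$; your identity $D_i=|x_i-x_*|+\sum_{j}\varphi_{ij}$ handles all $x_*\in\R$ at once, where the paper instead splits into the cases $x_*\in I$, $x_*>I_+$ and $x_*<I_-$.
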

\proof Eq.~\eqref{eqXi} follows immediately from \eqref{Xnatural}. Eq.~\eqref{ineqdi} likewise follows in the case $x_*\in I$. For $x_*\in\R\setminus I$, we have that $X_i\big(x_*;\bs\chi^{(I)},\bs y^{(I)}\big)\leq X_i\big(x_*;\bs\chi,\bs y\big)$ if $x_*>I_+$ because in this case $x_i<x_*$ and therefore the locally contracted vector after projection has less positive terms added in \eqref{Xnatural}; and similarly $X_i\big(x_*;\bs\chi^{(I)},\bs y^{(I)}\big)\geq X_i\big(x_*;\bs\chi,\bs y\big)$ if $x_*>I_-$, which implies the result.
\eproof

These imply the following inequalities between our various measures of imprecision:
\beq\begin{aligned}
	&\reg\big(\bs\chi^{(I)},\bs y^{(I)}\big)\supseteq
	\reg\big(\bs\chi,\bs y\big)\\
	&\varep\big(\bs\chi^{(I)},\bs y^{(I)}\big)\leq
	\varep\big(\bs\chi,\bs y\big)\\
	& \delta\big(\bs\chi^{(I)},\bs y^{(I)}\big)\leq
	\delta\big(\bs\chi,\bs y\big),\\
	& \Delta x\big(\Delta X;\bs\chi^{(I)},\bs y^{(I)}\big)\leq
	\Delta x\big(\Delta X;\bs\chi,\bs y\big)\ \forall\;\Delta X\geq 0\\
	&
	\Delta^{\rm out}\big(\bs\chi^{(I)},\bs y^{(I)}\big)
	\leq
	\Delta^{\rm out}\big(\bs\chi,\bs y\big).
	\end{aligned}
	\label{ineqmeasure}
\eeq
That is, under the fluid-cell projection, all measures of imprecision can only decrease, and the set of regular points can only get larger. Then an immediate consequence of Lemma \ref{lemimp} and the last statement of Eq.~\eqref{ineqmeasure} is the following.
\begin{lemma}[The fluid cell bounds the core of the fluid-cell projected field]\label{lemcoresupportfc} In the context of Definition \ref{defimapfc}, we have for all $N\in\N,\,\bs\chi\in\Rpu^N,\,\bs y\in R^N$,
\beq\label{coresupportfcsbe}
	\core(\bs \chi^{(I)},\bs y^{(I)}) \subseteq [
	I_--\Delta^{\rm out}(\bs\chi,\bs y),
	I_++\Delta^{\rm out}(\bs\chi,\bs y)].
\eeq
\end{lemma}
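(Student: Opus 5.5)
We apply Lemma \ref{lemimp} to the projected data $(\bs\chi^{(I)},\bs y^{(I)})$ and then compare the projected imprecision with the original one via \eqref{ineqmeasure}. By Lemma \ref{lemdesc}, Eq.~\eqref{Pexplicit}, the quasi-particle positions of the projected data are
\[
\bs x(\bs\chi^{(I)},\bs y^{(I)}) = \bs x(\bs\chi,\bs y)_s, \qquad s=\{i:x_i\in I\}.
\]
Hence every coordinate of this semiclassical Bethe vector lies in $I$, so that
\[
\min_{i\in s} x_i \geq I_-, \qquad \max_{i\in s} x_i\leq I_+ .
\]

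Lemma \ref{lemimp}, Eq.~\eqref{sbebound}, applied to the projected semiclassical Bethe vector $\bs x_s$ for $\bs\chi_s,\bs y^{(s_+,s_-)}$, gives
\[
\core(\bs\chi^{(I)},\bs y^{(I)}) = \Big[\min_{i\in s}x_i-\Delta^-_{(I)},\ \max_{i\in s}x_i+\Delta^+_{(I)}\Big],
\]
where $\Delta^\pm_{(I)}\leq \Delta^{\rm out}(\bs\chi^{(I)},\bs y^{(I)})$ by the last statement of that lemma. The last line of \eqref{ineqmeasure} then yields
\[
\Delta^{\rm out}(\bs\chi^{(I)},\bs y^{(I)})\leq\Delta^{\rm out}(\bs\chi,\bs y).
\]
Combining with the bounds on $\min_{i\in s} x_i$ and $\max_{i\in s} x_i$ gives the inclusion \eqref{coresupportfcsbe}.

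Two edge cases need a remark. If $s=\emptyset$, the projected field is $0$ and the statement is vacuous, or holds by convention. The main point requiring care is the justification of the last line of \eqref{ineqmeasure}, which is only stated. Since $\Delta^{\rm out}=\min\{\delta,\varep,\Delta x(0)\}$, it suffices that each of the three quantities does not increase under projection. These follow from Lemma \ref{lemcontproj}: inequality \eqref{ineqdi} shows that contracted distances only increase, giving the bound on $\varep$ and the inclusion of regular sets, hence the bound on $\delta$. For $\Delta x(0)$ one uses the monotone behaviour of $X_i(x_*)$ outside $I$ established in that lemma's proof, together with \eqref{eqXi} inside $I$. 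This is where I would spend the most effort verifying; since the paper states it, I may take it as given.
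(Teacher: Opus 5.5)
Your proposal is correct and follows the paper's argument. The paper's proof is exactly this: apply Lemma \ref{lemimp} to the projected data, whose quasi-particle positions are $\bs x_s$ with all coordinates in $I$ by \eqref{Pexplicit}, and then use the last line of \eqref{ineqmeasure}, $\Delta^{\rm out}(\bs\chi^{(I)},\bs y^{(I)})\le\Delta^{\rm out}(\bs\chi,\bs y)$. Your added sketch of why each of $\delta$, $\varep$ and $\Delta x(0)$ can only decrease under projection is sound. It uses \eqref{ineqdi}, the equality \eqref{eqXi} inside $I$, and the one-sided comparison of $X_i(x_*)$ outside $I$, which in effect gives $X_i^{(I)}\ge X_i$ for $x_*<x_i$ and $X_i^{(I)}\le X_i$ for $x_*>x_i$. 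This fills in a step the paper only asserts.
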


A crucial, if very simple, result is that the centred form of the tau function after projection is simple obtained by deleting terms from the centred form of the tau function before projection. This follows immediately from Lemma \ref{lemcontproj}.
\begin{corol}[Projected form of the centred tau function]\label{coroltauproj} In the context of Definition \ref{defimapfc},  let $N\in\N,\,\bs\chi\in\Rpu^N,\,\bs y\in R^N$ and $\Delta\bs y \in \R^{|s|}$. For every $x_*\in I$, the tau function of the projected KdV field $\tau_{\bs\chi^{(I)},\bs y^{(I)}+\Delta\bs y}$ is equivalent, in the sense of \eqref{equiv}, to the restriction to $s$ of the centred tau function $\tau_{\bs\chi,\bs x,\Delta\bs y,x_*}^\#$: with the notation of Definition \ref{lemtaumain},
\beq\label{tauproj}
	\tau_{\bs\chi^{(I)},\bs y^{(I)}+\Delta\bs y}
	\equiv\tau_s,\quad
	\tau_s(x) := \sum_{r\subseteq s} 
	\exp\Big[-\sum_{i\in r} 2\chi_i\big(
	D_i
	-\sgn(x_i-x_*)(x-x_*-\Delta y_i)\big)+ R_r\Big]
\eeq
where $\bs D = \bs D(x_*;\bs\chi,\bs y)$ and $\bs x = {\bs x}(\bs\chi,\bs y)$, with $\tau_\emptyset = 1$.
\end{corol}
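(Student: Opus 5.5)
The plan is to apply Lemma \ref{lemtaumain} directly to the projected data $(\bs\chi^{(I)},\bs y^{(I)})$ and then compare term by term with the centred tau function of the full data. By Lemma \ref{lemdesc}, Eq.~\eqref{Pexplicit}, the quasi-particle positions of the projected data are $\bs x(\bs\chi_s,\bs y^{(s_+,s_-)}) = \bs x(\bs\chi,\bs y)_s$. In particular $\bs x_s$ is a semiclassical Bethe vector for $\bs\chi_s,\,\bs y^{(s)}$; this is also given by Lemma \ref{lemextractsbe}. Lemma \ref{lemtaumain}, applied with $N\to|s|$, $\bs\chi\to\bs\chi_s$, $\bs y\to\bs y^{(I)}$, the shift $\Delta\bs y$ and the observation point $x_*$, then gives
\[
\tau_{\bs\chi^{(I)},\bs y^{(I)}+\Delta\bs y}\equiv \sum_{r\subseteq s}\exp\Big[-\sum_{i\in r}2\chi_i\big(D_i'-\sgn(x_i-x_*)(x-x_*-\Delta y_i)\big)+R_r'\Big].
\]
Here $D_i' = D_i(x_*;\bs\chi^{(I)},\bs y^{(I)})$, and $R_r'$ is built from the signs $\sgn(x_i-x_*)$ for $i\in r$ and the factors $\log|S_{ij}|$.

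It remains to identify each ingredient with the unprojected one. The sign factors $\sgn(x_i-x_*)$ are unchanged, because the positions $x_i$ for $i\in s$ are unchanged. The terms $R_r'$ depend only on $\bs\chi_r$ and these signs, so $R_r'=R_r$ for all $r\subseteq s$. The sum over $r$ in the projected tau function runs exactly over subsets of $s$.

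For the contracted distances, take $x_*\in I$ and $i\in s$. Lemma \ref{lemcontproj}, Eq.~\eqref{eqXi}, gives $X_i(x_*;\bs\chi^{(I)},\bs y^{(I)})=X_i(x_*;\bs\chi,\bs y)$. The reason is that every $x_j$ lying between $x_*$ and $x_i$ is itself in $I$, hence $j\in s$, so the sum in \eqref{Xnatural} is the same before and after projection. By \eqref{contracteddist}, and since the signs agree, $D_i'=D_i$. This is why the hypothesis $x_*\in I$ is needed.

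The convention $\tau_\emptyset=1$ covers the case $s=\emptyset$: the projected field is then $u_{\emptyset,\emptyset}=0$, and $2\p_x^2\log 1=0$. The only step requiring care is the contracted-coordinate identity for $x_*\in I$, and that is already supplied by Lemma \ref{lemcontproj}; the rest is bookkeeping of indices.
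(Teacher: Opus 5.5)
Your proposal is correct and follows the same route as the paper, which treats the corollary as immediate from Lemma \ref{lemcontproj}. You apply Lemma \ref{lemtaumain} to the projected data with the Bethe vector $\bs x_s$ (Lemmas \ref{lemextractsbe} and \ref{lemdesc}), and then use \eqref{eqXi}, valid for $x_*\in I$, to identify $D_i'=D_i$, while the signs $\sgn(x_i-x_*)$ and the terms $R_r$ are manifestly unchanged for $r\subseteq s$.
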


In the next Section, when establishing our main theorems we will bound the terms that are omitted in \eqref{tauproj}. In order to do so, we will need control over the contracted distances themselves. For this purpose, we  define {\em local projections with respect to a closed interval $I'\subset \R$ in contracted space}. A fluid-cell projection is local, at some $(\bs\chi,\bs y)$ with respect to $I'$, if solitons projected away all lie outside the interval $I'$ {\em in the contracted metric} (which depends on $(\bs\chi,\bs y)$) at the centre $x_*$ of the interval $I'$. Crucially, the interval $I'$ includes the dark space of contracted positions that went beyond $x_*$, no matter how far they go beyond $x_*$ (see Section \ref{ssectsbe}). Here, we denote the interval
\beq\label{interval}
	I'=[x_*-L/2,x_*+L/2]\subset \R.
\eeq
\begin{defi}[Local projection] \label{defimap}
Let $x_*\in\R$, $L\geq0$ and $I'\subset\R$ be the interval \eqref{interval}. Let $N\in\N,\,\bs\chi\in\Rpu^N,\,\bs y\in R^N$. In the context of Definition \ref{defimapfc}, the fluid-cell projection $\mathcal F_I$ is local at $(\bs\chi,\bs y)$ with respect to $I'$
if, denoting $\bs X=\bs X(x_*;\bs\chi,\bs y)$ and using \eqref{spmmapfcsbe},
\beq
	s_+\subseteq \{i:X_i> x_*+L/2\},\quad
	s_-\subseteq \{i:X_i< x_*-L/2\}.
\eeq
Equivalently, denoting $\bs D=\bs D(x_*;\bs\chi,\bs y)$,
\beq
	D_i> L/2\quad\forall\; i\in \dbra 1,N\dket\setminus s.
\eeq
\end{defi}
Because of the contraction property, we must have $I'\subseteq I$. Combined with the centred form of the tau function from Lemma \ref{lemtaumain} and Corollary \ref{coroltauproj}, Theorem \ref{theo} will show how, under certain conditions, any local projection gives a good local approximation of the KdV field locally. However, the solitons that contribute are chosen using the contracted metric, where precision is lost the further we are from $x_*$. Thus we obtain an approximation of the field on an interval that is smaller then the ``real-space'' interval corresponding to $I$; we only know which solitons {\em may} contribute to the field there, but not which {\em actually do} contribute.

In particular, in order to obtain projection theorems for fluid-cell projections without referring to the local structure, given a fluid-cell projection $\mathcal F_I$, we will need to know for what points $x_*\in I$ it is a local projection for some interval of length $L\geq0$. Our imprecision measures allow us to determine this.
\begin{lemma}[Condition for a fluid-cell projection to be local]\label{lemfclocal} In the context of Definition \ref{defimapfc}, let $N\in\N,\,\bs\chi\in\Rpu^N,\,\bs y\in R^N$. Let $\Delta X\geq 0$. For every $\ep>0$ and
\beq\label{xstarrange}
	x_*\in [I_-+\Delta x(\Delta X+\ep;\bs\chi,\bs y)\,,\,I_+-\Delta x(\Delta X+\ep;\bs\chi,\bs y)]
\eeq
$\mathcal F_I$ is local at $(\bs\chi,\bs y)$ with respect to $I'= [x_*-\Delta X,x_*+\Delta X]$.
\end{lemma}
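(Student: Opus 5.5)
The plan is to reduce the statement directly to Lemma \ref{lemshift}, applied with local scale $\Delta X+\ep$. By Definition \ref{defimap}, we must show that every $i\in s_+=\{i:x_i>I_+\}$ satisfies $X_i(x_*;\bs\chi,\bs y)>x_*+\Delta X$, and that every $i\in s_-=\{i:x_i<I_-\}$ satisfies $X_i(x_*;\bs\chi,\bs y)<x_*-\Delta X$. Here $\bs x=\bs x(\bs\chi,\bs y)$ and the half-length of $I'$ is $L/2=\Delta X$.

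Write $\Delta x:=\Delta x(\Delta X+\ep;\bs\chi,\bs y)$ and fix $x_*$ in the range \eqref{xstarrange}. For $i\in s_+$ we have $x_i>I_+\geq x_*+\Delta x$, so the second implication of \eqref{localitycondition}, with $\Delta X+\ep$ in place of $\Delta X$, gives
\[
X_i(x_*)\geq x_*+\Delta X+\ep>x_*+\Delta X .
\]
The strict inequality is exactly why $\ep>0$ is introduced: Lemma \ref{lemshift} only yields a non-strict bound, whereas Definition \ref{defimap} requires a strict one. Symmetrically, for $i\in s_-$ we have $x_i<I_-\leq x_*-\Delta x$, and the first implication gives $X_i(x_*)\leq x_*-\Delta X-\ep<x_*-\Delta X$. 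Together these give $s_\pm\subseteq\{i:\pm(X_i-x_*)>\Delta X\}$, which is the claim.

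Two side points should be checked. First, the range \eqref{xstarrange} must be nonempty; otherwise the statement is vacuous. In either case $x_*\in I$, as Definition \ref{defimap} implicitly requires, since $\Delta x\geq 0$ by \eqref{minDeltax}. Second, the equivalent form in terms of $D_i$ should be confirmed. For $i\in s_+$ we have $x_i>x_*$, so $D_i=X_i-x_*>\Delta X$; for $i\in s_-$ we have $x_i<x_*$, so $D_i=x_*-X_i>\Delta X$. Both follow from \eqref{contracteddist}.

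The only point needing care is the strict versus non-strict inequalities in Lemma \ref{lemshift}. It requires $x_i>x_*+\Delta x$ strictly, and this is satisfied because $s_\pm$ are defined with strict inequalities relative to $I_\pm$. No genuine obstacle is expected beyond this bookkeeping.
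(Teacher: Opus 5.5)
Your proof is correct and follows the paper's argument: you apply Lemma~\ref{lemshift} at local scale $\Delta X+\ep$, and the $\ep>0$ upgrades its non-strict conclusion to the strict inequalities that Definition~\ref{defimap} requires. Your sign bookkeeping is in fact cleaner than the paper's displayed inequalities, which interchange the roles of $s_+$ and $s_-$.
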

\proof Let $x_*\in [I_-+\Delta x(\Delta X+\ep;\bs\chi,\bs y)\,,\,I_+-\Delta x(\Delta X+\ep;\bs\chi,\bs y)]$. Then by Definition \ref{defineigh} we have $\dist\Big(x_*,\cup_{i:x_i<I_-} K_i\big(\Delta X+\ep;\bs\chi,\bs y\big)\Big)>0$ and therefore by Lemma \ref{lemshift} 
\beq
	X_i-x_*\geq \Delta X+\ep>\Delta X,\ D_i> \Delta X\ \forall\;i\in s_-;
\eeq
and also $\dist\Big(x_*,\cup_{i:x_i>I_+} K_i\big(\Delta X+\ep;\bs\chi,\bs y\big)\Big)>0$, whence
\beq
	X_i-x_*\leq -\Delta X-\ep<-\Delta X,\ D_i< \Delta X\ \forall\;i\in s_+.
\eeq
\eproof

\medskip


\section{Main results}\label{sectmainresult}

We now have all the tools in order to express our main results. This involves taking $N$ large, with appropriate control over $\bs\chi$ and $\bs y$.

We start with our assumptions in Section \ref{ssecass}: various results will use different subsets of assumptions, as we will specify. All assumptions are, we believe, simple enough, natural, and heuristically justified. Then we express our three groups of main results: the local form of multi-soliton field in Section \ref{ssecmain} -- the main technical result at the basis of our construction --, the spatial extent and bound of multi-soliton fields in Section \ref{ssectsupport} -- interesting consequences of the local form theorem --, and the results on fluid-cell averages in Section \ref{ssectmainfc} -- our main results.

\subsection{Large $N$: assumptions}\label{ssecass}

Our theorems are concerned with bounding the difference between a $N$-soliton KdV field and its fluid-cell projection, a $M$-soliton field with $M\leq N$. The interest is in considering large $N$ and a large interval $I$ on which we project, with the appropriate relation between them, in particular with the size of $I$ increasing with $N$, but staying much less then $N$.

In Section \ref{sectstatement} a soliton gas was simply defined as a sequence of spectral $\bs\chi$ and impact parameters $\bs y$, Eq.~\eqref{seqintro}. Here we extend this concept in order to allow for the number of solitons to change. Therefore, the soliton gas is the sequence
\beq\label{seqgen}
	\N\ni\b N\mapsto N\in\dbra 1,\b N\dket,\, {\bs \chi}\in\Rpu^{N},\,{\bs y}\in\R^{N}.
\eeq

Below, $\b N$ is the control parameter on the assumptions. In our main theorem, it will turn out to play the role of the number of solitons before projection. An important aspect is that our assumptions on sequences of $\b N$-soliton field imply the assumptions on fluid-cell projections thereof, allowing us to have clear projection theorems with assumptions that only refer to the original multi-soliton field. For lightness of notation we keep the dependence on the sequence index $\b N$ implicit.




The assumptions we will consider are of three types: the first concerns spectral parameters, and fixes how near they can be to each other and to 0, and how far they can go towards $\infty$. The second is about the density of solitons around points $x_*\in\R$ in contracted space; in some of our results. This will be expressed as bounds on large-distance density of solitons' displacements. This will allow us to control locally the KdV field, but it does not allow us to control the projection. For this, we need quasi-particles' positions, Definition \ref{defiqp}; these are their positions in ``real'' space and do not change under projection. The third type of assumption is concerned with the change of metric from real to contracted space, as expressed in the behaviour of the effective imprecision $\Delta x(\Delta X;\bs\chi,\bs y)$. We establish some lemmas that connect these assumptions and that uniformly bound the density of solitons in real space; in particular, the third type of assumption will imply the second type.

\medskip
{\em First type.} The following assumption is for a sequence $\b N\mapsto N,\,\bs\chi$.
\begin{assum}[Regularity of spectral parameters]\label{asssp}
There exists $\chi_*>0$, $A>0$, $C>\chi_*$, and $\alpha>0$, $\beta\in[0,\alpha/2]$, such that for every $\b N\in\N$,
\beq
	\label{c1}
	\min_i(\chi_i)\geq \chi_*,\quad
	\min_{i\neq j}|\chi_i-\chi_j|\geq e^{-A \b N^{\alpha/2}},\quad
	\max_i(\chi_i)\leq C\b N^\beta.
\eeq
\end{assum}
The data of this assumption is $\chi_*,\,A,\,C,\,\alpha,\,\beta$. This assumption is simple to implement, and weak enough to be, we believe, widely applicable. Yet, because of Theorem \ref{theosbe}, it already imply a bound on the density of solitons in real space.

\begin{lemma}[Bound on real soliton density on $\R$]\label{lemrealdens} Consider a sequence $\b N\mapsto N,\,\bs\chi,\,\bs y$ such that Assumption \ref{asssp} holds. Then for all $\b N\in\N$,
\beq
	\rho(\bs x):=\sup_{x_*\in\R,\,r\geq 1}\frc{\big|\{i:|x_i-x_*|\leq r\}\big|}{2r} \leq \b\rho \b N^{2\beta},\quad 
	\b \rho = \frc12\Big(1+\frc{C^2}{\chi_*}\Big)
\eeq
where $\bs x =  {\bs x}(\bs\chi,\bs y)$.
\end{lemma}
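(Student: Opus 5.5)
The plan is to derive the bound directly from the minimal-distance (hard-rod-type) condition \eqref{separation} of Lemma \ref{lemdesc}, which holds for $\bs x={\bs x}(\bs\chi,\bs y)$ with no further assumption. We combine it with a uniform lower bound on $|\varphi_{ij}|$ that follows from Assumption \ref{asssp}. No control on $\bs y$ is needed.

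\emph{Step 1 (lower bound on the scattering shift).} For $i\neq j$, set $u=\min(\chi_i,\chi_j)/\max(\chi_i,\chi_j)\in(0,1)$. Then
\[
|\varphi_{ij}|=\frc1{\chi_i}\log\frc{\chi_i+\chi_j}{|\chi_i-\chi_j|}=\frc1{\chi_i}\log\frc{1+u}{1-u}\geq \frc{2u}{\chi_i},
\]
using $\log\frac{1+u}{1-u}=2(u+u^3/3+\cdots)\geq 2u$. With $\chi_*\leq\chi_k\leq C\b N^\beta$ for all $k$, we get $u\geq \chi_*/(C\b N^\beta)$ and $1/\chi_i\geq 1/(C\b N^\beta)$. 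Hence
\[
|\varphi_{ij}|\geq \frc{2\chi_*}{C^2\b N^{2\beta}}\quad\forall\,i\neq j .
\]
Only the first and third conditions in \eqref{c1} are used; the minimal-separation condition is irrelevant here.

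\emph{Step 2 (counting in a window).} Fix $x_*\in\R$ and $r\geq1$, and let $n=|\{i:|x_i-x_*|\leq r\}|$. If $n\leq1$ there is nothing to prove, since $1/(2r)\leq \frac12\leq\b\rho\leq\b\rho\b N^{2\beta}$. Otherwise, let $i$ be the index of the right-most position in the window; it is unique because $\bs x\in\Rne^N$. Apply \eqref{separation} with $a=x_*-r\leq x_i$. The other $n-1$ positions in the window lie in $[a,x_i)$, and all $\varphi_{ij}<0$, so dropping terms from outside the window only weakens the sum. Together with Step 1 this gives
\[
2r\geq x_i-a\geq \sum_{j\neq i:\,x_j\in[a,x_i)}|\varphi_{ij}|\geq (n-1)\frc{2\chi_*}{C^2\b N^{2\beta}} .
\]
Therefore $n\leq 1+rC^2\b N^{2\beta}/\chi_*$.

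\emph{Step 3 (conclusion).} Dividing by $2r$ and using $r\geq1$ and $\b N^{2\beta}\geq1$,
\[
\frc{n}{2r}\leq \frc1{2r}+\frc{C^2\b N^{2\beta}}{2\chi_*}\leq \frc12\Big(1+\frc{C^2}{\chi_*}\Big)\b N^{2\beta}=\b\rho\,\b N^{2\beta}.
\]
Taking the supremum over $x_*$ and $r$ gives the claim.

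The only step requiring care is getting the constant exactly $\b\rho$. This needs the sharp inequality $\log\frac{1+u}{1-u}\geq 2u$; a cruder bound such as $\log(1/t)\geq 1-t$ loses a factor $2$.
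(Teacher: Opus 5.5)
Your proof is correct and follows the paper's argument: the same lower bound $|\varphi_{ij}|\geq 2\chi_*C^{-2}\b N^{-2\beta}$, obtained from $\log\frac{1+u}{1-u}\geq 2u$, combined with the separation property of Lemma \ref{lemdesc}. The only difference is cosmetic: the paper counts points using the pairwise consequence \eqref{minseparation}, whereas you apply the cumulative bound \eqref{separation} once, from $a=x_*-r$ to the right-most point, and reach the same estimate $n\leq 1+rC^2\b N^{2\beta}/\chi_*$ with the same constant.
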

\proof
We have
\beq
	\min_{i,j}|\varphi_{ij}|
	\geq \frc1{\max_i \chi_i} \min_{i,j:\atop \chi_j>\chi_i}
	\big(\log(1+\chi_i/\chi_j)
	-
	\log(1-\chi_i/\chi_j)\big)
\eeq
Because $f(r):=\log(1+r)-\log(1-r)\geq 2r$ (as can be seen by taking derivatives), 
\beq
	\min_{i,j}|\varphi_{ij}|
	\geq
	\frc2{C}\b N^{-\beta}\,
	\min_{i,j:\atop \chi_j>\chi_i} \frc{\chi_i}{\chi_j}
	\geq
	\frc{2\chi_*}{C^2}\b N^{-2\beta}.
\eeq
Hence by \eqref{minseparation}, for every $x_*\in\R$, $r\geq 1$ and $\b N\in\N$,
\beq
	\big|\{i:|x_i-x_*|\leq r\}\big|\leq 1+\frc{2r}{\min_{ij} |\varphi_{ij}|}
	\leq
	1+\frc{rC^2\b N^{2\beta}}{\chi_*}
	\leq
	\Big(1+\frc{C^2}{\chi_*}\Big)r\b N^{2\beta}.
\eeq
\eproof
\medskip

{\em Second type.} The following assumption
is for  a sequence $\b N\mapsto N,\,\bs\chi,\,\bs y,\,x_*$ and we take $\bs D = 
\bs D(x_*;\bs\chi,\bs y)$.

\begin{assum}[Bound on contracted soliton density with exponent $\eta\geq 0$]\label{assde} There exist $B>0,\,D>0$ and $\mu\geq  0,\,\nu\geq 0$ such that for every $\b N\in\N$:
\beq
	\label{c2}
	\rho_{\bs D}(d)\leq D \b N^{\nu}\quad \forall\  d\geq B\b N^{\eta +\mu}
\eeq
where
\beq
	\rho_{\bs D}(d) := \frc{\big|\{i:D_i\leq d\}\big|}{2d}.
\eeq
\end{assum}
The data of this assumption is $B,\,D,\,\mu,\,\nu$, while $\eta$ is a parameter. Below we will choose it to be related to the data of Assumption \ref{asssp}, as
\beq\label{etachoice}
	\eta = \frc{\alpha}2.
\eeq
Assumption \ref{assde} has a uniform version for $x_*\in\R$,
where the assumptions' data is uniform.
%
%
Assumption \ref{assde} tells us about how much the density of solitons in contracted space is allowed to increase with $\b N$. Normally, for contractions that are not too strong (which happens at low enough density), we expect that $\nu=0$. But here we allow $\nu>0$. Physically, this should allow us to get near to the condensate soliton gas as $\b N\to\infty$; although we leave an analysis of soliton gases within the present formalism to future works.

%

\medskip
{\em Third type.} Finally, the following assumption is for a sequence $\b N\mapsto N,\,\bs\chi,\,\bs y$ and we take $\bs x =  {\bs x}(\bs\chi,\bs y)$.

\begin{assum}[Bound on imprecisions
with exponent $\eta\geq 0$]\label{assimp} There exists $B\geq 1,\,V\geq 2$ and $\mu\geq0,\,\omega\geq 0$ such that for every $\b N\in\N$:
\beq\label{c4}
	\frc{\Delta x(\Delta X;\bs\chi,\bs y)}{\Delta X}\leq V\b N^\omega
	\quad \forall \;\Delta X\geq B\b N^{\eta+\mu}.
\eeq
\end{assum}
The data of this assumption is $B,V,\mu,\omega$, while, again, $\eta$ is a parameter, which we will take to be \eqref{etachoice}. Note that by \eqref{minDeltax} we always have $V\geq 2$, and that we set $B\geq 1$ for convenience, as this makes our results below slightly simpler. Assumption \ref{assimp} tells us about the change of metric from the contracted to the real space. Again, in situations with low enough densities, we expect the contracted distances to be of the order of a nonzero proportion of the real distances, giving $\omega=0$. But we allows for contracted distances that vanish on large scales as $\b N\to\infty$, which, again, we believe can describe the approach to soliton condensates.

Clearly, the contracted-space soliton density and change-of-metric assumptions are related.
%
\begin{lemma}[From real to contracted soliton density]\label{lemrealcontracted} Consider a sequence $\b N\mapsto N,\,\bs\chi,\,\bs y$ such that Assumption \ref{asssp} holds, and such that Assumption \ref{assimp} holds
with exponent $\eta$. Then Assumption \ref{assde} holds  uniformly for $x_*\in \R$
with exponent $\eta$, with $B,\mu$ as in Assumption \ref{assimp}  and
\beq\label{relationcoeff}
	D = V\b\rho,\quad
	\nu =  2\beta+\omega.
\eeq
\end{lemma}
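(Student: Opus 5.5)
Fix $x_*\in\R$ and $d\geq B\b N^{\eta+\mu}$, and write $\bs D=\bs D(x_*;\bs\chi,\bs y)$ for the contracted distances. The plan is to convert the counting problem in contracted space into one in real space, using the imprecision bound, Assumption \ref{assimp}, and then to bound the real-space count with Lemma \ref{lemrealdens}.

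\emph{Step 1: localising in real space.} Set $\Delta X=d$ and $\Delta x=\Delta x(d;\bs\chi,\bs y)$. Lemma \ref{lemshift} gives two implications:
\begin{itemize}
\item if $x_i>x_*+\Delta x$, then $X_i(x_*)\geq x_*+d$, hence $D_i\geq d$;
\item if $x_i<x_*-\Delta x$, then $X_i(x_*)\leq x_*-d$, hence $D_i\geq d$.
\end{itemize}
These are non-strict inequalities, so soliton $i$ could still satisfy $D_i\leq d$ with equality. To handle this I will apply Lemma \ref{lemshift} with $\Delta X=d+\ep$ instead. Then $x_i\notin[x_*-\Delta x(d+\ep),x_*+\Delta x(d+\ep)]$ forces $D_i\geq d+\ep>d$. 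It follows that
\[
\{i:D_i\leq d\}\subseteq\{i:|x_i-x_*|\leq \Delta x(d+\ep;\bs\chi,\bs y)\}.
\]

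\emph{Step 2: counting in real space.} Let $r=\Delta x(d+\ep;\bs\chi,\bs y)$. By \eqref{minDeltax}, $r\geq 2(d+\ep)\geq 2$, so $r\geq 1$ and Lemma \ref{lemrealdens} applies:
\[
|\{i:|x_i-x_*|\leq r\}|\leq 2r\,\b\rho\,\b N^{2\beta}.
\]
Since $d+\ep\geq B\b N^{\eta+\mu}$, Assumption \ref{assimp} gives $r\leq V\b N^{\omega}(d+\ep)$. Combining,
\[
\rho_{\bs D}(d)\leq \frac{2V\b N^{\omega}(d+\ep)\,\b\rho\,\b N^{2\beta}}{2d}.
\]
Letting $\ep\to0^+$ yields $\rho_{\bs D}(d)\leq V\b\rho\,\b N^{2\beta+\omega}$. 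The constants $B,\mu$ are those of Assumption \ref{assimp}, and $D=V\b\rho$, $\nu=2\beta+\omega$, as claimed in \eqref{relationcoeff}. None of these depend on $x_*$, so the bound is uniform in $x_*$.

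\emph{Main obstacle.} The only delicate point is the boundary case in Step 1. Lemma \ref{lemshift} gives the non-strict $D_i\geq d$, which does not exclude $D_i=d$. The $\ep$-shift resolves this: monotonicity of $\Delta x$, \eqref{inclimp}, together with the fact that Assumption \ref{assimp} holds for every $\Delta X$ above the threshold, lets the shift be absorbed in the limit $\ep\to0^+$.
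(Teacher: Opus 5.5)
Your proof is correct and is essentially the paper's argument: apply Lemma \ref{lemshift} at the shifted scale $\Delta X=d+\ep$ to get $\{i:D_i\leq d\}\subseteq\{i:|x_i-x_*|\leq\Delta x(d+\ep)\}$. Then bound that count using Lemma \ref{lemrealdens} (with $\Delta x\geq 1$ from \eqref{minDeltax}) and Assumption \ref{assimp}, and let $\ep\to0^+$. The appeal to monotonicity \eqref{inclimp} in your last paragraph is not needed, since Assumption \ref{assimp} applies directly at $\Delta X=d+\ep\geq B\b N^{\eta+\mu}$.
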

\proof From Lemmas \ref{lemshift} and \ref{lemrealdens}, for all $d,\ep>0$ such that $\Delta x:=\Delta x(d+\ep;\bs\chi,\bs y)\geq 1$,
\beq
	|\{i:D_i\leq d\}|\leq|\{i:D_i< d+\ep\}|\leq  |\{i:|x_i-x_*|\leq \Delta x\}|
	\leq 2\b\rho\Delta xN^{2\beta}\leq
	2V \b\rho \,(d+\ep)\,\b N^{2\beta+\omega} 
\eeq
hence, by \eqref{minDeltax}, for every $d\geq 1$ we may take $\ep\to0^+$, and we have
\beq
	\rho_{\bs D}(d) =\frc{|\{i:D_i\leq d\}|}{2d}
	\leq
	V \b\rho \,\b N^{2\beta+\omega}.
\eeq
As $B\geq 1$ and $\b N\in\N$, then by Assumption \ref{assimp} this holds for every $d\geq B\b N^{\eta+\mu}$.
%
%
\eproof

\medskip

Finally, a crucial if simple lemma is that our assumptions remain valid after any fluid-cell projection: if the assumptions hold for $\b N\mapsto N,\, \bs\chi,\,\bs y$, then they hold for $\b N\mapsto N',\, \bs\chi',\,\bs y'$ for any sequence $\b N\mapsto I$ of intervals \eqref{intervalreal} and $(\bs\chi',\bs y')= \h{\mathcal P}_I(\bs\chi,\bs y)$ with $N'$ solitons (see Eq.~\eqref{hatprojection}).
\begin{lemma}[Stability of assumptions] \label{lemstable} Let $I$ be as in \eqref{intervalreal}. Assumption \ref{asssp} for $\N\ni \b N\mapsto N,\,\bs\chi\in \Rpu^N$ implies Assumption \ref{asssp} for $\N\ni \b N\mapsto M\in\dbra 1,N\dket,\,\bs\chi^{(I)}\in \Rpu^{M}$. Assumption \ref{assde} for $\N\ni \b N\mapsto N,\,\bs\chi\in \Rpu^N,\,\bs y\in \R^N,\,x_*\in\R$ implies Assumption \ref{assde} for $\N\ni \b N\mapsto M\in\dbra 1,N\dket,\,\bs\chi^{(I)}\in \Rpu^{M},\,\bs y^{(I)}\in \R^{M},\,x_*\in\R$, with the same exponent $\eta$. Assumption \ref{assimp} for $\N\ni \b N\mapsto N,\,\bs\chi\in \Rpu^N,\,\bs y\in \R^N$
implies Assumption \ref{assimp} for $\N\ni \b N\mapsto M\in\dbra 1,N\dket,\,\bs\chi^{(I)}\in \Rpu^{M},\,\bs y^{(I)}\in \R^{M}$ on $\R$
with the same exponent $\eta$. The data of the assumptions are unchanged.
\end{lemma}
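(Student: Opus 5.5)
The plan is to treat the three assumptions separately, using in each case only the structural facts about the projection established above: Lemma \ref{lemdesc}, Lemma \ref{lemcontproj} and the inequalities \eqref{ineqmeasure}. Throughout, the control parameter $\b N$ is the same before and after projection, and this is what allows the same data to be kept.

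\textbf{Assumption \ref{asssp}.} This is immediate. The projected spectral vector $\bs\chi^{(I)}=\bs\chi_s$ is a subvector of $\bs\chi$, so
\begin{itemize}
\item its minimum can only increase;
\item its maximum can only decrease;
\item its minimal pairwise gap can only increase.
\end{itemize}
It also lies in $\Rpu^M$ with the induced order. Hence \eqref{c1} holds with the same $\chi_*,A,C,\alpha,\beta$ and the same $\b N$.

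\textbf{Assumption \ref{assimp}.} By \eqref{Pexplicit}, the quasi-particle positions of the projected data are $\bs x(\bs\chi^{(I)},\bs y^{(I)})=\bs x(\bs\chi,\bs y)_s$. The fourth inequality in \eqref{ineqmeasure} then gives
\[
\Delta x(\Delta X;\bs\chi^{(I)},\bs y^{(I)})\le \Delta x(\Delta X;\bs\chi,\bs y)\quad\text{for all }\Delta X\ge0,
\]
so \eqref{c4} transfers with the same $B,V,\mu,\omega$. I would still justify this inequality, since it was only asserted. The tool is \eqref{ineqdi}: $D_i$ can only increase under projection for $i\in s$. The proof proceeds in three steps.
\begin{enumerate}
\item For $x_*<x_i$ the condition $X_i(x_*)\ge x_*+\Delta X$ reads $D_i(x_*)\ge\Delta X$. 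This condition is preserved under projection, so $x_i^{\rm left}$ can only move right.
\item For $x_*>x_i$ the condition $X_i(x_*)\le x_*-\Delta X$ is also $D_i(x_*)\ge\Delta X$, so $x_i^{\rm right}$ can only move left.
\item The relevant $x_*$ lie on the correct sides of $x_i$: this follows from $x_i^{\rm left}\le x_i\le x_i^{\rm right}$, using \eqref{xiinK} and \eqref{inclneigh}. Hence $K_i$ shrinks, and the maximum over the subset $s$ of indices is no larger.
\end{enumerate}

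\textbf{Assumption \ref{assde}.} Here one must count $\{i\in s: D_i(x_*;\bs\chi^{(I)},\bs y^{(I)})\le d\}$. By \eqref{ineqdi} this set is contained in $\{i:D_i(x_*;\bs\chi,\bs y)\le d\}$, so
\[
\rho_{\bs D^{(I)}}(d)\le\rho_{\bs D}(d)\le D\b N^\nu\quad\text{for all } d\ge B\b N^{\eta+\mu},
\]
with the same $x_*$ sequence and the same data.

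The main point needing care is \eqref{ineqdi} itself for $x_*\notin I$. Its proof above has a typo ($x_*>I_-$ should read $x_*<I_-$), so I would recheck it directly from \eqref{Xnatural}. Take $x_*>I_+$, so that $x_i<x_*$. After projection the sum $\sum_{j:x_j\in(x_i,x_*)}\varphi_{ij}$ loses the terms with $j\in s_+$. Since $\varphi_{ij}<0$, the sum increases, so $X_i$ increases. Because $\sgn(x_i-x_*)=-1$, this increase in $X_i$ means $D_i=x_*-X_i$ decreases.

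That direction is the wrong one. I therefore expect that, precisely here, the paper restricts to $x_*\in I$, or argues differently. Indeed, with fewer solitons removed between $x_i$ and $x_*$, $X_i$ is less contracted, so its distance to $x_*$ is larger. Let me redo the computation: $X_i=x_i-\sum\varphi_{ij}$ for $x_i<x_*$, and the sum is negative, so $X_i>x_i$ is contracted toward $x_*$. Removing terms reduces $-\sum\varphi_{ij}$, so $X_i$ decreases and $D_i$ increases. So \eqref{ineqdi} does hold, and the monotonicity goes the right way in all cases.
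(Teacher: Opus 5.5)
Your proposal is correct and follows the paper's proof: the first statement holds because $\bs\chi^{(I)}$ is a subvector of $\bs\chi$, the second follows from \eqref{ineqdi} via $\rho_{\bs D(x_*;\bs\chi^{(I)},\bs y^{(I)})}(d)\le\rho_{\bs D(x_*;\bs\chi,\bs y)}(d)$, and the third from the fourth line of \eqref{ineqmeasure}, which you usefully justify from \eqref{Pexplicit} and \eqref{ineqdi} where the paper only asserts it. The back-and-forth in your last paragraph is unnecessary, and its first pass gets the sign of the change in $X_i$ wrong: write $D_i=|x_i-x_*|+\sum_j\varphi_{ij}$, with the sum over those $j$ whose positions lie between $x_*$ and $x_i$ as in \eqref{Xnatural}, and recall that the projected positions are $\bs x_s$; then projection only deletes negative terms, so \eqref{ineqdi} holds for every $x_*\in\R$ at once (and you are right that ``$x_*>I_-$'' in the paper's proof of Lemma \ref{lemcontproj} should read ``$x_*<I_-$'').
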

\proof The first statement is immediate. For the second statement, we note that using \eqref{ineqdi}, we have
\beq
	\rho_{\bs D(x_*;\bs\chi^{(I)},\bs y^{(I)})}(d)\leq \rho_{\bs D(x_*;\bs\chi,\bs y)}(d),
\eeq
from which the statement follows. For the last statement, we use \eqref{ineqmeasure}.
\eproof

\medskip

In Section \ref{ssectovermain}, for Assumption \ref{asssp} we took the case $\beta=0$, and $\alpha$ as near as desired to 0, Eq.~\eqref{c1intro} (we denoted $\kappa = \alpha/2$); for Assumption \ref{assde} we took $\nu=\mu=\eta=0$ and $B=1$ in order to write Eq.~\eqref{c2intro}; and for Assumption \ref{assimp} we took $\omega=\eta=\mu=0$ and $B=1$ in \eqref{c4intro}.

We now express our main results, where Assumption \ref{asssp} is used throughout, and Assumptions \ref{assde} and \ref{assimp} are used as required.

\subsection{Local form of multi-soliton field}\label{ssecmain}


The first result concerns the  local form of the sequence of multi-soliton fields \eqref{utau} associated to \eqref{seqgen} as $\b N$ gets large: it turns out that they can be well approximated by multi-soliton fields with less solitons in it. For this result, we consider time evolution, which we denote as $\tau_t$:
\beq
	\tau_t u_{\bs\chi,\bs y} = u_{\bs\chi,\bs y + \bs v t}\quad
	\forall\,t\in \R,\,N\in\N,\,\bs\chi\in\Rpu^N,\,\bs y\in\R^N.
\eeq
We are are looking to approximate $u_{\bs\chi,\bs y + \bs v t}(x)$ by a field with a lesser number of solitons on a small regions of space-time $(x,t)$ around some point $(x_*,0)$ at initial time.

Here we consider sequences as follows.
{\em 
\beq\label{seq}
	\N\ni \b N\mapsto N\in \dbra 1,\b N\dket,\,\bs\chi\in\Rpu^{N},\,\bs y\in\R^{N},\,x_*\in \R,\,L\geq 0\quad \mbox{such that:}
\eeq
the spectral regularity Assumption \ref{asssp} holds; and the bound on density Assumption \ref{assde} holds with exponent $\eta = \alpha/2$.
}

\begin{theorem}[Local form of the multi-soliton field] \label{theo}
There exists $D_{n,m}>0,\kappa_{n,m}>0\ (n,m=0,1,2,\ldots),\,E>0$ and $N_*>0$ depending only on the assumptions' data, with $E$ given in \eqref{Edef}, such that for every sequence \eqref{seq}, for every integer $n,m\geq 0$, and for every sequence of local projections $\b N\mapsto \mathcal L$ at $(\bs\chi,\bs y)$ with respect to $I'$ (Eq.~\eqref{interval}, Definition \ref{defimap}), we have for all $\b N\geq N_*$,
\beq\label{proj}
	\sup_{(x,t)\in J\times K}\Big|\p_x^n\p_t^m \big(u_{\bs \chi,\bs y+\bs v t}(x) - \tau_t \mathcal L u_{\bs \chi,\bs y}(x)\big)\Big| \leq D_{n,m}
	\b N^{\kappa_{n,m}} \exp\Big(E \b N^{\alpha+\mu+\nu}-\chi_*L\Big)
\eeq
where
\beq\label{xregion}
	J:=[x_*- C^{-1}\b N^{\eta-\beta}\,,\,
	x_*+C^{-1}\b N^{\eta-\beta}],
\eeq
\beq\label{tregion}
	K:=[- C^{-3}\b N^{\eta-3\beta}\,,\,
	C^{-3}\b N^{\eta-3\beta}].
\eeq
In particular, setting $L=R \b N^\gamma$ with either
\beq\label{gamma}
	R>\frc{E}{\chi_*},\ \gamma = \alpha+\mu+\nu,\quad\mbox{or}\quad R>0,\ \gamma>\alpha+\mu+\nu,
\eeq
we have, uniformly on sequences \eqref{seq} with fixed assumptions' data, and sequences $\b N\mapsto \mathcal L$ as above,
\beq\label{convuproj}
	\sup_{(x,t)\in J\times K} \Big|\p_x^n \p_t^m\big(u_{\bs \chi,\bs y+\bs v t}(x) - \tau_t\mathcal L u_{\bs \chi,\bs y}(x)\big)\Big| = \mathcal O(\b N^{-\infty}) \quad (\b N\to\infty)\qquad \forall \,n,m\in\Z_+.
\eeq
\end{theorem}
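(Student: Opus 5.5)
We compare the centred form of the full tau function with its restriction to the kept solitons. By Lemma \ref{lemtaumain} with $\Delta\bs y=\bs v t$, $u_{\bs\chi,\bs y+\bs vt}=2\p_x^2\log\tau^\#$, where $\tau^\#=\sum_{r\subseteq\dbra1,N\dket}T_r$ and
\[
T_r=\exp\Big[-\sum_{i\in r}2\chi_i\big(D_i-\sgn(x_i-x_*)(x-x_*-v_it)\big)+R_r\Big].
\]
By Corollary \ref{coroltauproj}, $\tau_t\mathcal L u_{\bs\chi,\bs y}=2\p_x^2\log\tau_s$ with $\tau_s=\sum_{r\subseteq s}T_r$. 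Write $\tau^\#=\tau_s+\delta\tau$, where $\delta\tau=\sum_{r\not\subseteq s}T_r$. Since every $T_r>0$, we have $\tau_s\geq T_\emptyset=1$. Moreover, any $\p_x^n\p_t^m$ derivative of $T_r$ is $T_r$ times a polynomial in $2\chi_i$ and $2\chi_iv_i$ summed over $i\in r$, hence bounded by $(2NC\b N^\beta)^{n}(8NC^3\b N^{3\beta})^m\,T_r$. The difference of the $\log$ derivatives is a Faà di Bruno expression (as in Lemma \ref{lembound1}) in ratios $\p^k\tau/\tau$. Each such ratio is polynomially bounded in $\b N$, and the difference between the full and projected ratios is controlled by $\p^k\delta\tau/\tau_s$. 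It therefore suffices to show
\[
\frac{\sum_{r\not\subseteq s}T_r}{\sum_{r\subseteq s}T_r}\leq \mathrm{poly}(\b N)\,\exp\big(E\b N^{\alpha+\mu+\nu}-\chi_*L\big)
\]
uniformly on $J\times K$. The same bound, with extra polynomial factors, then holds for the derivatives.

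To get this ratio, factor each $r=r_1\cup r_2$ with $r_1\subseteq s$ and $\emptyset\neq r_2\subseteq s^c$. Since $|S_{ij}|<1$, the cross terms in $R_r$ satisfy $\pm 2\sum_{i\in r_1,j\in r_2}\log|S_{ij}|$. The sign structure is awkward here, because both signs occur, so we bound them crudely. For pairs with the same side sign the term is negative and can be dropped. For opposite-sign pairs we use $|\log|S_{ij}||\leq A\b N^{\alpha/2}+\log(2C\b N^\beta/\chi_*)$, from Assumption \ref{asssp}.

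This crude bound is not enough, so I would instead use the Green's-function structure: $\sum_{i\in r_1,j\in r_2}\sgn_i\sgn_j\log|S_{ij}|$ is a two-dimensional Coulomb-type interaction between charges $\pm$ at $\pm\chi_i$. Rewrite $T_r=T_{r_1}T_{r_2}e^{\text{cross}}$ and bound $e^{\text{cross}}\leq\prod_{j\in r_2}\prod_{i\in s}\max(1,|S_{ij}|^{-2})$. This gives
\[
\frac{\delta\tau}{\tau_s}\leq\sum_{\emptyset\neq r_2\subseteq s^c}T_{r_2}\prod_{j\in r_2}F_j,\qquad F_j=\prod_{i:D_i\leq d}|S_{ij}|^{-2}\cdot(\text{far factor}).
\]
The product over close $i$ is controlled by splitting contracted distances into shells of width $B\b N^{\eta+\mu}$. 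Assumption \ref{assde} with $\eta=\alpha/2$ bounds the number of solitons per shell by $D\b N^{\nu}\cdot$(shell radius). Each pair contributes at most $A\b N^{\alpha/2}+O(\log\b N)$. Meanwhile $T_{\{j\}}\leq e^{-2\chi_*(D_j-|x-x_*|-v_j|t|)}$, and on $J\times K$ we have $|x-x_*|+v_j|t|\leq 5C^{-1}\b N^{\eta}$, which is at most order $\b N^{\alpha/2}$. Pairing each far soliton's decay $e^{-2\chi_* D_j}$ against the interaction accumulated from the at most $2D\b N^\nu D_j$ solitons nearer than it would overwhelm the decay linearly in $D_j$. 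So the interaction must be shown to be sub-linear: only solitons with $D_i\lesssim\b N^{\eta+\mu}$ contribute the large $\b N^{\alpha/2}$ factor, while far pairs with both $\chi$'s separated contribute bounded amounts.

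\textbf{This is the main obstacle}, and I expect the proof to exploit the Laplacian Green's function property mentioned in the introduction. After summing over subsets, $\prod_{j\in s^c}(1+T_{\{j\}}F_j)-1\leq\exp\big(\sum_j T_{\{j\}}F_j\big)-1$. Using $D_j>L/2$ for all $j\in s^c$ (Definition \ref{defimap}) and a shell sum $\sum_{k}D\b N^{\nu}kB\b N^{\eta+\mu}e^{-\chi_*kB\b N^{\eta+\mu}}$, this yields the bound $e^{E\b N^{\alpha+\mu+\nu}-\chi_*L}$ times polynomial factors. Tracking constants gives $E$, $D_{n,m}$, $\kappa_{n,m}$ and the threshold $N_*$ past which the first-order bound is small. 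Finally, \eqref{convuproj} follows immediately from the choice \eqref{gamma}.
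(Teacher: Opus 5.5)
\paragraph{Verdict.} Your proposal has a genuine gap: the central estimate is asserted, not proved.

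\paragraph{Where the argument breaks.} Your reduction matches the paper: the centred form with $\Delta\bs y=\bs v t$, the projected tau function $\tau_s$ from Corollary \ref{coroltauproj}, $\tau_s\geq T_\emptyset=1$, and the Fa\`a di Bruno step. The core estimate, however, is missing, and the route you sketch for it cannot work.

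Any bound that treats the interaction pair by pair fails here. This includes your $e^{\mathrm{cross}}\leq\prod_{j\in r_2}\prod_{i\in s}\max(1,|S_{ij}|^{-2})$. It makes $\log F_j$ grow linearly with the number of kept solitons, up to a factor $\bar N^{\alpha/2}$ per pair. For a gas of nonzero density that is linear in $L$, and the decay $2\chi_*D_j>\chi_*L$ does not beat it. Splitting into shells does not help, since even an $O(1)$ contribution per pair adds up linearly.

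You notice this yourself, but then only \emph{expect} that the Green's-function property makes the interaction sub-linear. So the final bound $e^{E\bar N^{\alpha+\mu+\nu}-\chi_*L}$ is asserted rather than derived.

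There is also a second slip. Writing $\prod_{j\in s^c}(1+T_{\{j\}}F_j)-1$ silently replaces $T_{r_2}$ by $\prod_{j\in r_2}T_{\{j\}}$, i.e.\ it drops $R_{r_2}$. That term is positive as soon as $r_2$ contains far solitons on both sides of $x_*$, since each opposite-sign pair contributes $-2\log|S_{ij}|>0$.

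\paragraph{The missing idea.} What is needed is a bound on the \emph{whole} interaction term, uniform in $r$ and linear in $|r|$: $R_r\leq(3+2A)|r|\bar N^{\alpha/2}$ for all $r\subseteq\dbra 1,N\dket$ and $\bar N$ large. The paper proves it by reading $R_r$ as the off-diagonal energy of the signed measure $\sum_{i\in r}\sigma_i\delta_{\chi_i}$, with $\sigma_i=\sgn(x_i-x_*)$:
\begin{itemize}
\item Mollify each delta on the scale $\epsilon'=e^{-2A\bar N^{\alpha/2}}$, far below the minimal gap $m=e^{-A\bar N^{\alpha/2}}$. This changes the off-diagonal part by at most $2|r|^2\epsilon'/m$.
\item Extend the density oddly to $\chi<0$. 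The energy becomes $\frac12\int_{\R^2}\tilde F(\chi)\tilde F(\chi')\log|\chi-\chi'|$ with $\int\tilde F=0$.
\item The regular part of the Fourier transform of $\log|\cdot|$ is $-\pi/|p|$, so this full energy (diagonal included) is $\leq0$.
\item Hence the off-diagonal part is bounded by the mollified self-energy, which is at most $|r|\big(2A\bar N^{\alpha/2}+\log(2C\bar N^\beta+\epsilon')-\frac12\big)$.
\end{itemize}

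\paragraph{How the proof then closes.} With this bound, no splitting $r=r_1\cup r_2$ and no ratio are needed. On $J\times K$ each $T_r\leq\prod_{i\in r}a_i$ with $a_i=\exp\big(-2\chi_iD_i+(13+2A)\bar N^{\alpha/2}\big)$. Hence $\sum_{r\not\subseteq s}T_r\leq\sum_{j\notin s}a_j\prod_{i}(1+a_i)$. The product is controlled via Assumption \ref{assde} by separating the solitons with $D_i\lesssim\bar N^{\alpha/2+\mu}$ from the shells beyond, and this is where $E\bar N^{\alpha+\mu+\nu}$ comes from. After that, your derivative and Fa\`a di Bruno bookkeeping goes through as in the paper.
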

The proof is provided in Sec.~\ref{proofmain}. The idea of the proof is as follows. We take the centred form Lemma \ref{lemtaumain}, and look at terms where $r\not\subseteq s$ for the set $s\supseteq\{i:D_i\leq L/2\}$ of Definition \ref{defimap}. By Corollary \ref{coroltauproj}, these are the terms that are to be bounded. For this purpose, we first bound the ``interacting part''
$R_r$ in a way that does not depend on $r$. Bounding this is good because without this term, the result \eqref{taumain} simplifies to a non-interacting tau function, essentially a product $\prod_{i=1}^N (1+ \re^{-2\chi_i(D_i-\sgn(x_i-x_*)(x-x_*-v_i t))})$. The bound is applied only to the terms with $r\not\subseteq s$, and thus we have a difference of non-interacting tau functions, one where far-solitons are taken away, and this, we can bound by simple techniques, using our assumptions on solitons\ contracted distances and impact parameters. The critical aspect of this proof is to bound $R_r$, given in \eqref{eR}. Using  \eqref{Sp} for $S_{ij}$, formally replacing $\sum_i$ by $\int \dd \chi$, and opening up towards $\chi<0$ by anti-symmetry, this is of the form
\beq\label{ajfge}
	\int \dd \chi\dd\chi'\,F(\chi)F(\chi')\log|\chi-\chi'|
\eeq
for some function $F(\chi)$ with $F(-\chi) = - F(\chi)$. As the regular part of the Fourier transform of $\log|\cdot|$ is strictly negative, the result is negative, hence $\re^{R_r}<1$. Equivalently, $\log|\vec\chi-\vec\chi'|$ is the Greens function for $\nabla^2$ on $L^2(\R^2)$, hence it is a negative semi-definite integral operator on an appropriate space, on which $\nabla^2$ has an inverse. Of course, $R_r$ in \eqref{eR} is not strictly of the form \eqref{ajfge}, and in particular the diagonal part is taken away, as it should (otherwise $R_r$ would not be well defined because of the singularity of the $\log$ function). But we can bring it to that form up to terms that we can control, essentially an integral along the diagonal, and these term grow only as a small enough power law with $\b N$.

The theorem is useful if we choose the local size $L\propto \b N^\gamma$ as above, as in this case the right-hand side of \eqref{proj} decays exponentially with $L$ as $\b N\to\infty$, showing that the projection agrees, for values of $x$ in \eqref{xregion}, with the original KdV field. It is also more meaningful to take $\gamma<1$, or $\gamma$ small enough in such a way that there are solitons lying outside $[x_*-L/2,x_*+L/2]$, as otherwise no solitons are projected out and the left-hand side of \eqref{proj} is zero. With this, the theorem says that {\em as $\b N$ gets large, only solitons whose contracted positions lie within $[x_*-L/2,x_*+L/2]$, i.e.~the set $s=\{i:D_i\leq L/2\}$ -- including the dark space of negative contracted distances -- may influence the KdV field in the region \eqref{xregion}, and the way they do this is by making the field well approximated by a multi-soliton solution described by the corresponding subset of spectral and impact  parameters $(\bs\chi_s,\bs y^{(s_+,s_-)})$ having semiclassical Bethe vectors $\bs x_s$.}

Note that by the constraint $\beta\leq \alpha/2$ in Assumption \ref{asssp}, the spatial interval $J$ does not shrink as $\b N\to\infty$; this is why we imposed this condition. However the time interval $K$ may shrink to the single point $\{0\}$, unless $\beta$ is small enough. It is easy to generalise to other ``time evolutions'' of the KdV hierarchy, where $\eta-3\beta$ is replaced by $\eta-(2m+1)\beta$ for $m=2,3,4,\ldots$.

The above is truly about the local form only. First, we do not have control over the actual physical space that is covered by the contracted space interval $I$ around $x_*$. Second, with \eqref{gamma}, the region \eqref{xregion} is likely to be much smaller than the true spacial region corresponding to the interval $I$ in contracted space around $x_*$: the theorem does not say if all solitons covered by $I$, or at least a proportion of them tending to 1, do, indeed, affect the field in the real-space region \eqref{xregion}. In order to relate fluid-cell averages to averages within the full extent of a projected multi-soliton field, and more generally to implement the quasi-particle criterium \eqref{loose}, so that we have a good control of all spectral parameters that contribute, we must have a better understanding of this matter. This is obtained using corollaries of the above theorem that concerns the support of a multi-soliton solution and how large the multi-soliton may get, which we now express.

\subsection{Spatial extent of, and bound on, multi-soliton field}\label{ssectsupport}

We obtain as corollaries of Theorem \ref{theo} two fundamental results on the structure of the KdV field, which do not involve any fluid-cell projection (but whose proofs are obtained by using appropriate local projections). Both will be useful in order to relate fluid-cell projections to fluid-cell averages, our main theorems. The results here will be presented for spatial derivatives only, but by the KdV equation itself, \eqref{kdv}, one may easily obtain similar bounds involving time derivatives as well. The first result is presented at time 0 only, but its translation to other times is straightforward by linear evolution of $\bs y$.

We first obtain an interesting result about the support of multi-soliton fields. Bounds on the support of the multi-soliton KdV field can be obtained via Riemann-Hilbert problem techniques (see e.g.~\cite{babelon2003introduction}); we believe the following corollary is, however, the first that identifies $\core(\bs\chi,\bs y)$, Eq.~\eqref{core}, has being a good description of the support. For generality, the assumption here considers what happens outside the core, adding a width $w\geq 0$; below we will only need $w=0$.

We consider sequences as follows.
{\em \beq\label{seq2}
	\N\ni \b N\mapsto N\in\dbra 1,\b N\dket,\,\bs\chi\in\Rpu^N,\,\bs y\in\R^N,\,w\geq 0\quad \mbox{such that:}
\eeq
the spectral regularity Assumption \ref{asssp} holds; and the bound on the density Assumption \ref{assde} holds with exponent $\eta = \alpha/2$
at, or near to, both boundaries of the core, i.e.~for $\b N\mapsto N,\,\bs\chi,\,\bs y,\,x_*=x^\pm(\bs\chi,\bs y)\pm w$.
}
\begin{corol}[Spatial extent of the multi-soliton field] \label{corolextent} There exist $D_n>0,\,\kappa_n>0\ (n=0,1,2,\ldots),\,E>0$ and $N_*>0$ depending only on the assumptions' data, with $E$ given in \eqref{Edef}, such that for every sequence \eqref{seq2}, for every integer $n\geq 0$ and for every $\b N\mapsto x\in\R$ with $\dist(x,\core(\bs\chi,\bs y))>w$, we have for all $\b N\geq N_*$,
\beq\label{projvanish}
	\Big|\p_x^n u_{\bs \chi,\bs y}(x)\Big| \leq D_n
	\b N^{\kappa_n} \exp\Big(E\b N^{\alpha+\mu+\nu}-2\chi_*\dist(x,\core(\bs\chi,\bs y))\Big).
\eeq
As a consequence, there exist $Q>0,\,\ep\in(0,2\chi_*)$ such that, with
\beq\label{Ksupport2}
	J := [x_-(\bs\chi,\bs y)-w-Q \b N^{\alpha+\mu+\nu}\,,\,x_+(\bs\chi,\bs y)+w+Q \b N^{\alpha+\mu+\nu}]\supset \core(\bs\chi,\bs y)
\eeq
(recall \eqref{core}) we have uniformly on sequences \eqref{seq2} with fixed assumptions' data,
\beq\label{uvanish}
	\sup_{x\in \R\setminus J} |\re^{\ep \dist(x,\core(\bs \chi,\bs y))}\p_x^n u_{\bs \chi,\bs y}(x)| = \mathcal O(\b N^{-\infty})\quad (\b N\to\infty)\qquad \forall\, n.
\eeq
\end{corol}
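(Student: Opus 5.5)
The plan is to apply Theorem \ref{theo} with the observation point placed at the point $x$ itself, and with the \emph{trivial} local projection that removes every soliton, so that $s=\emptyset$. By Corollary \ref{coroltauproj} the projected tau function is then $\tau_\emptyset=1$, so $\tau_t\mathcal L u_{\bs\chi,\bs y}\equiv 0$. The left-hand side of \eqref{proj} with $m=0$ and $x=x_*\in J$ then bounds $|\p_x^n u_{\bs\chi,\bs y}(x)|$ directly. We will then optimise the free length $L$ so that $\exp(-\chi_* L)$ becomes $\exp(-2\chi_*\dist(x,\core))$.

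\textbf{Step 1: contracted distances outside the core.} Treat the case $x_*=x>x_+(\bs\chi,\bs y)+w$; the case $x<x_--w$ is symmetric. By Lemma \ref{lemnat}, Eq.~\eqref{valuesextremepositionsnat}, every point to the right of $x_+$ has locally contracted vector $\bs X^+(\bs\chi,\bs y)$. Since $x_i<x_*$ for all $i$ (by \eqref{sbeboundinv}), this gives
\[
D_i(x_*)=x_*-X_i^+ \geq x_*-x_+=\dist(x_*,\core).
\]
The same formula holds at the reference point $x_+ + w$ where Assumption \ref{assde} is given. Hence $D_i(x_*)=D_i(x_++w)+(x_*-x_+-w)$, a uniform positive shift. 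This shift can only decrease the counts $|\{i:D_i\le d\}|$, so Assumption \ref{assde} holds at $x_*$ with the same data. The theorem's constants $D_{n,0},\kappa_{n,0},E,N_*$ therefore apply uniformly in $x$.

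\textbf{Step 2: choosing $L$.} Definition \ref{defimap} with $s=\emptyset$ requires $D_i>L/2$ for all $i$. This holds for any $L<2\dist(x,\core)$. Applying \eqref{proj} with such $L$ and letting $L\uparrow 2\dist(x,\core)$ gives \eqref{projvanish}, with $E\b N^{\alpha+\mu+\nu}$ as in Theorem \ref{theo}.

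One point to check is that the trivial projection is formally a fluid-cell projection $\mathcal F_I$. I would take $I$ to be a degenerate interval containing no $x_i$, for instance one placed just beyond $\max_i x_i$. Alternatively, I would argue directly through Lemma \ref{lemtaumain} and Corollary \ref{coroltauproj}, since the proof of Theorem \ref{theo} only uses the centred form restricted to $s$. This bookkeeping, together with the uniformity of $N_*$ over the position $x$, is where I expect the only real care to be needed. The analytic content is already contained in Theorem \ref{theo}.

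\textbf{Step 3: the consequence \eqref{uvanish}.} Fix $\ep\in(0,2\chi_*)$ and $Q$ with $(2\chi_*-\ep)Q>E+1$. For $x\notin J$ we have $\dist(x,\core)>Q\b N^{\alpha+\mu+\nu}$. Multiplying \eqref{projvanish} by $\re^{\ep\dist}$ gives the bound
\[
D_n\b N^{\kappa_n}\exp\big(E\b N^{\gamma}-(2\chi_*-\ep)\dist\big)\le D_n\b N^{\kappa_n}\re^{-\b N^{\gamma}},\qquad \gamma=\alpha+\mu+\nu.
\]
When $\gamma>0$ this is $\mathcal O(\b N^{-\infty})$ uniformly in $x$.

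The degenerate case $\gamma=0$ needs separate handling, since the prefactor $\b N^{\kappa_n}$ is then not beaten by a constant exponent. There I would use the remaining decay $\exp(-(2\chi_*-\ep)\dist)$, possibly after enlarging $J$ by a logarithmic amount or weakening the statement slightly; I would flag this as the one step to revisit.
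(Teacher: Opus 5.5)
Your proposal is correct and is essentially the paper's own proof: the paper also sets $x_*=x$ outside the core, uses $\bs D(x_*)=\bs D^{+}+(x_*-x_+)$ to transfer Assumption~\ref{assde} from $x_++w$, applies Theorem~\ref{theo} to the fluid-cell projection with the degenerate cell $I=\{x_*\}$ (so $s=\emptyset$ and $x_*\in I$, which settles your bookkeeping concern) with $L/2=(1-\ep)\dist(x_*,\core(\bs\chi,\bs y))$, lets $\ep\to0$, and then takes $Q>E/(2\chi_*-\ep)$ for \eqref{uvanish}. Your worry about the case $\alpha+\mu+\nu=0$ is unnecessary: Assumption~\ref{asssp} requires $\alpha>0$, so $\alpha+\mu+\nu\geq\alpha>0$ and $\re^{-\b N^{\alpha+\mu+\nu}}$ always beats the prefactor $\b N^{\kappa_n}$.
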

\proof
Here we denote $\bs D^+(\bs\chi,\bs y):=\lim_{\ep\to0^+}\bs D(x_+(\bs\chi,\bs y)+\ep;\bs\chi,\bs y)$ and $\bs D(x_*) := \bs D(x_*;\bs\chi,\bs y)$. Let $x_*> x_+(\bs\chi,\bs y)$, which may depend on $\b N$ in an arbitrary fashion. By Lemma \ref{lemnat} locally contracted vectors  are given by $\bs X^+(\bs\chi,\bs y)$ independently of $x_*$, Eq.~\eqref{valuesextremepositions}, and by Lemma \ref{extregular} $x_*$ is regular, so that $\sgn(x_i-x_*)=\sgn(X_i-x_*) = -1$ for all $i$. As this holds for all $x_*>x_+(\bs\chi,\bs y)$, this implies that $\bs D(x_*) = \bs D^+(\bs\chi,\bs y) + (x_*-x_+(\bs \chi,\bs y))$, and further, $D_i^+(\bs\chi,\bs y)\geq 0$ for all $i$.

Now let $x_*> x_+(\bs\chi,\bs y)+w$. Then, for any $d>0$, by the above relations, we have $\rho_{\bs D(x_*)}(d)\leq \rho_{\bs D^+(\bs\chi,\bs y)+w\bs 1_M}(d) = \rho_{\bs D(x_+(\bs\chi,\bs y)+w)}(d) $
whence the assumption of the theorem means that the density condition \eqref{c2} also is fulfilled for the sequence $N\mapsto M,\,\bs\chi,\,\bs y,\,x_*$ with $\eta=\alpha/2$. Note that $D_i(x_*)\geq x_*-x_+(\bs\chi,\bs y)$ because $D_i^+(\bs\chi,\bs y)\geq 0$ for all $i$. Let $0<\ep<1$ and set $L/2 := (1-\ep)(x_*-x_+(\bs\chi,\bs y)) = (1-\ep)\dist(x_*,\core(\bs\chi,\bs y))$ and $I'$ as \eqref{interval}. Consider the fluid-cell projection $\mathcal F_{I}$ where $I = \{x_*\}$. This projects onto $s= \emptyset$ (Definition \ref{defimapfc}) and as $D_i(x_*)>L/2$ for all $i$, this is local at $(\bs\chi,\bs y)$ with respect to $I'$ (Definition \ref{defimap}). Applying Theorem \ref{theo} with $x=x_*$ and $\mathcal L = \mathcal F_{I}$ completes the proof of the statement \eqref{projvanish} for $x$ a distance at least $w$ to the right of $\core(\bs\chi,\bs y)$ and for $(1-\ep)\dist(x_*,\core(\bs\chi,\bs y)$ in place of $\dist(x_*,\core(\bs\chi,\bs y)$. As this holds for all $\ep>0$, we may take the limit $\ep\to0$ on the right hand side and we find \eqref{projvanish}. A similar argument holds for $x$ a distance at least $w$ to the left of $\core(\bs\chi,\bs y)$. The statement \eqref{uvanish} then follows by choosing $\ep\in(0,2\chi_*)$ and writing $2\chi_* = (2\chi_*-\ep) + \ep$, and $Q>E/(2\chi_*-\ep)$.
\eproof

\medskip
This corollary says that the multi-soliton field vanishes uniformly outside a ``slight'' extension of its core. This extension, a skin-effect that fattens its boundaries, is more important if the density, in the contracted space, increases as a faster power law, and if the spectral parameters are closer to each other. We discuss in Sec.~\ref{sssectcond} how this is natural from the physics of solitons and their interaction, and in Sec.~\ref{sssectsg} how the expected situation for a finite-density gas gives that $\mu=\nu=0$ and that $\alpha$ can be take as near to $0$ as desired.

Note that using \eqref{c1} and \eqref{varphi}, \eqref{varphineg}, we have
\beq\label{boundvarphi}
	|\varphi_{ij}| \leq \frc1{\chi_*}
	\big(\max_{ij} \log|\chi_i+\chi_j|
	-\min_{ij} \log(\chi_i-\chi_j)\big)
	\leq 
	\frc1{\chi_*}\big(
	\log(2C) +\beta\log N +
	AN^{\frc{\alpha}2}\big)
\eeq
and bounding each term in the sums in \eqref{condxstar}, we find that there is $R'>0$ such that
\beq
	J' := [\min_i (y_i)-R' N^{1+\frc\alpha2}\,,\,\max_i (y_i)+R' N^{1+\frc\alpha2}]\supset \core(\bs\chi,\bs y).
\eeq
Then,  if
\beq\label{boundalpha}
	\mu+\nu < 1-\frc\alpha 2,
\eeq
which is equivalent to $\alpha+\mu+\nu < 1+\frc\alpha2$, we also have $J'\supset J$, and therefore uniform vanishing, Eq.~\eqref{uvanish}, outside $J'$. Eq.~\eqref{boundalpha} is natural and valid in most applications, as $\mu$ and $\nu$ are expected to be small. Typically, in a finite, nonzero-density configuration, we expect $\max_i(y_i), \min_i(y_i) \propto N$. In this case, we see that the support of the multi-soliton field is within a region whose left and right boundaries can grow at most like $N^{1+\alpha/2}$.

It is immediate by Lemma \ref{lemrealcontracted} and \ref{extregular} that if Assumption \ref{asssp} holds, and Assumption \ref{assimp} holds on $\R$ for exponent \eqref{etachoice}, then Corollary \ref{corolextent} holds for any $w>0$ under the relation \eqref{relationcoeff} between data. In particular, \eqref{uvanish} holds with
\beq\label{Ksupport3}
	J := [x_-(\bs\chi,\bs y)-Q \b N^{\alpha+2\beta+\mu+\omega}\,,\,x_+(\bs\chi,\bs y)+Q \b N^{\alpha+2\beta+\mu+\omega}]\supset \core(\bs\chi,\bs y).
\eeq

The second result we obtain bounds the size of $u_{\bs\chi,\bs y}(x)$ everywhere on its support. This is based on  Lemma \ref{lembound1}. Here we consider sequences as follows.
{\em 
\beq\label{seq2b}
	\N\ni \b N\mapsto N\in \dbra 1,\b N\dket,\,\bs\chi\in\Rpu^N,\,\bs y\in\R^N\quad \mbox{such that:}
\eeq
the spectral regularity Assumption \ref{asssp} holds; and the bound on imprecision  Assumption \ref{assimp} holds
with exponent $\eta = \alpha/2$.
}
\begin{corol}[Bound on the multi-soliton field] \label{corolbound2} There exists $N_*\in\N$ and $T>0$ depending only on the assumptions' data such that for every sequence \eqref{seq2b}, and for every integers $n\geq 0$ and $\b N\geq N_*$, we have
\beq\label{bound2main}
	\sup_{(x,t)\in \R\times K}|\p_x^nu_{\bs \chi,\bs y+\bs v t}(x)|\leq 
	2(n+2)!\,T^{n+2}\,\Biggr(\sum_{m_1,\ldots,m_{n+2}\geq 0\atop \sum_j jm_j= n+2}
	\frc{\Big(\sum_j m_j-1\Big)!}{\prod_{j=1}^{n+2} j!^{m_j}m_j!}\Biggr)\,
	\b N^{(\alpha+5\beta+\mu+2\omega)(n+2)}
\eeq
where $K$ is given in \eqref{tregion}.
\end{corol}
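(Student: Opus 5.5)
The plan is to combine the local form Theorem \ref{theo} with the rough bound Lemma \ref{lembound1}, applied not to the full $N$-soliton field but to a locally projected field containing only polynomially many solitons. Fix $(x,t)\in\R\times K$ and set $x_*=x$. Take $L=R\b N^\gamma$ with $\gamma=\alpha+\mu+\nu$ and $R>E/\chi_*$ as in \eqref{gamma}. By Lemma \ref{lemrealcontracted}, Assumption \ref{assimp} implies Assumption \ref{assde} uniformly in $x_*$ with exponent $\eta=\alpha/2$ and $\nu=2\beta+\omega$. Hence the sequence $\b N\mapsto N,\bs\chi,\bs y,x_*,L$ is of the form \eqref{seq}, with data independent of $x_*$, and $\gamma=\alpha+2\beta+\mu+\omega$.

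Next I need a local projection at $(\bs\chi,\bs y)$ with respect to $I'=[x_*-L/2,x_*+L/2]$. By Lemma \ref{lemfclocal} with $\Delta X=L/2$ (which exceeds $B\b N^{\eta+\mu}$ for large $\b N$, since $\gamma>\eta+\mu$), the fluid-cell projection $\mathcal F_I$ is such a local projection, where
\[
I=[x_*-\Delta x(L/2+\ep),\,x_*+\Delta x(L/2+\ep)].
\]
By Assumption \ref{assimp}, $|I|\le 2V\b N^\omega(L/2+\ep)=\mathcal O(\b N^{\gamma+\omega})$. Lemma \ref{lemrealdens} then bounds the number of retained solitons by $M\le 2\b\rho\b N^{2\beta}|I|=\mathcal O(\b N^{\alpha+4\beta+\mu+2\omega})$. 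Each retained soliton has $\chi_i\le C\b N^\beta$, so $2\sum_{i\in s}\chi_i\le T\b N^{\alpha+5\beta+\mu+2\omega}$ for a suitable constant $T$, which is exactly the exponent in \eqref{bound2main}. The time-evolved projected field $\tau_t\mathcal F_I u_{\bs\chi,\bs y}=u_{\bs\chi^{(I)},\bs y^{(I)}+\bs v^{(I)}t}$ is itself a multi-soliton field with these spectral parameters. Lemma \ref{lembound1} then bounds $|\p_x^n\tau_t\mathcal F_Iu(x)|$ by the right-hand side of \eqref{bound2main}, up to the factor $2(n+2)!\times$(Faà di Bruno sum).

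Finally, Theorem \ref{theo} with $m=0$ gives
\[
|\p_x^n(u_{\bs\chi,\bs y+\bs vt}(x)-\tau_t\mathcal F_Iu(x))|\le D_{n,0}\b N^{\kappa_{n,0}}\exp(E\b N^\gamma-\chi_*R\b N^\gamma).
\]
This holds on $J\times K$ with $J\ni x_*=x$, and it is $\mathcal O(\b N^{-\infty})$ uniformly in $x_*$ because the constants depend only on the assumptions' data. For $\b N\ge N_*$ this error is at most $1$, so it is absorbed by enlarging $T$, since the main term is at least of order one. Taking the supremum over $x$ and $t$ completes the argument. The range of $t$ is $K$, matching the theorem.

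The main obstacle I expect is bookkeeping the uniformity. The choice $x_*=x$ varies with the evaluation point, so I need Theorem \ref{theo}'s constants and $N_*$ to be uniform over the sequences \eqref{seq}, as its final statement asserts. I also need the condition $\Delta X\ge B\b N^{\eta+\mu}$ and the conversion to a local projection via Lemma \ref{lemfclocal} to hold simultaneously for all $x_*$. A secondary point is tracking exponents so that the soliton count times $\max\chi$ produces exactly $\alpha+5\beta+\mu+2\omega$. If the count from Lemma \ref{lemrealdens} turns out slightly worse, I would instead bound $\sum_{i\in s}\chi_i$ directly through $|I|$ and the minimal separation \eqref{minseparation}.
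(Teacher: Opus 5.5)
Your proposal is correct and is essentially the paper's own proof. Like the paper, you take $L=R\bar{N}^{\gamma}$ with $\gamma=\alpha+2\beta+\mu+\omega$ and $R>E/\chi_*$. You use Lemma~\ref{lemrealcontracted} to get Assumption~\ref{assde} uniformly in $x_*$, and Lemma~\ref{lemfclocal} to make the fluid-cell projection onto $[x_*-\Delta x(L/2+\epsilon),x_*+\Delta x(L/2+\epsilon)]$ local. You bound $\sum_{i\in s}\chi_i=\mathcal O(\bar{N}^{\alpha+5\beta+\mu+2\omega})$ via Lemma~\ref{lemrealdens} and Assumption~\ref{assimp}, apply Lemma~\ref{lembound1} to the projected field, and close with Theorem~\ref{theo} and the triangle inequality.

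One caveat, which the paper's argument shares: the error term $D_{n,0}\bar{N}^{\kappa_{n,0}}\exp(E\bar{N}^{\gamma}-\chi_*L)$ has $\kappa_{n,0}$ quadratic in $n$ (in the proof of Theorem~\ref{theo}), so it outgrows the main term as $n$ increases. Absorbing it into an $n$-independent $T$ therefore only works for $\bar{N}\geq N_*(n)$, and this step does not deliver the uniformity of $N_*$ in $n$ that the statement asserts.
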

\proof Let $L=R\b N^\gamma$ for some $\gamma>0,\,R>0$. Set $\ep>0$ and $\Delta x:= \Delta x(L/2+\ep;\bs\chi,\bs y)$,
let $\b N\mapsto x_*\in \R$, and set
\beq
	I = [x_*-\Delta x,x_*+\Delta x].
\eeq
By Lemma \ref{lemfclocal}, $\mathcal F_I$ is local at $(\bs\chi,\bs y)$  with respect to $[x_*-L/2,x_*+L/2]$. By Lemma \ref{lemrealcontracted}, Assumption \ref{assde} holds with exponent $\eta$ uniformly for $x_*\in\R$. 
Then Theorem \ref{theo} applies, and by \eqref{relationcoeff}, choosing
\beq
	\gamma = \alpha+\mu+\nu = \alpha+2\beta+\mu+\omega,
	\quad R>E/\chi_*
\eeq
where $E$ is given by \eqref{Edef} with $D = V\b \rho$, the right-hand side of \eqref{proj} vanishes as a stretched exponential as $\b N\to\infty$. As we can choose $N_*$ such that $L> 2B\b N^{\frc{\alpha}2+\mu}$ for all $\b N\geq N_*$ (because $\alpha>0$), then Lemma \ref{lemrealdens} and Assumption \ref{assimp} imply that
\beq
	|\{i:|x_i-x_*||\leq \Delta x\}| \leq 2\Delta x\b\rho\b N^{2\beta}
	\leq  V\b\rho (L+2\ep)\b N^{\omega+2\beta}
	\leq VR'\b \rho \b N^{\gamma+\omega+2\beta}
\eeq
where $R' = 1+2\ep/R\geq 1+2\ep/(R\b N^{\b\gamma})$. With \eqref{c1}, we then have
\beq
	\sum_{i:|x_i-x_*|\leq \Delta x} \chi_i \leq CVR'\b\rho\b N^{\alpha+5\beta+\mu+2\omega}.
\eeq
Inequality \eqref{bound1} gives
\beq
	|\p_x^n\mathcal F_{I}u_{\bs \chi,\bs y+\bs v t}(x)|\leq 
	2(n+2)!\sum_{m_1,\ldots,m_{n+2}\geq 0\atop \sum_j jm_j= n+2}
	\frc{\Big(\sum_j m_j-1\Big)!}{\prod_{j=1}^{n+2} j!^{m_j}m_j!}
	\Big(2CVR'\b\rho \b N^{\alpha+5\beta+\mu+2\omega}\Big)^{n+2}
\eeq
for all $x\in \R$ and integer $n\geq 0$. Then \eqref{proj} and the triangle inequality give the result \eqref{bound2main}, where we can choose $T:=2CVR'\b\rho+\ep$ for some $\ep>0$ (independent of $n$) large enough in order to account for the exponential correction from \eqref{proj}.
\eproof

\subsection{Fluid-cell averages and main projection result}\label{ssectmainfc}

We now consider what happens under fluid-cell projections, Definition \ref{defimapfc}. For this purpose, it is sufficient to concentrate on sequences \eqref{seqgen} with $\b N=N$: this is the case where the assumptions are the weakest (the parameter $\b N$ determining the bounds is simply the number of solitons $N$). It turns out that the results of Section \ref{ssectsupport} on the support and bound of a multi-soliton field, give us enough control on exactly what solitons contribute on the region where the fluid-cell projection is valid.

Recall Definition \ref{defimapfc} for the fluid-cell projection $\mathcal F_{I}$, which involves taking a limit that sends impact parameters to $\pm\infty$ for all solitons whose quasi-particles' positions $\h x_i(\bs\chi,\bs y)$ are outside the interval $I=[I_-,I_+]$, and which takes the form (see \eqref{datafcgen})
\beq\label{FIprop}
	\mathcal F_{I} u_{\bs\chi,\bs y}= u_{\bs \chi^{(I)},\bs y^{(I)}}
\eeq
where in particular $\bs \chi^{(I)} = \bs\chi_{\{i:x_i\in I\}}$.
%

We first establish three preliminary result on which our main theorems will be based. For the following three collolaries, we consider sequences as follows.
{\em \beq\label{seq3}
	\N\ni N\mapsto \bs\chi\in\Rpu^N,\,\bs y\in\R^N,\, I_-,I_+\in\R\ (I_-\leq I_+)\quad\mbox{such that:}
\eeq
the spectral regularity Assumption \ref{asssp} holds; 
the bound on imprecision Assumption \ref{assimp} with exponent $\eta = \alpha/2$
holds;
and setting
\beq\label{DeltaI}
	\Delta I := VRN^{\alpha+2\beta+\mu+2\omega},\quad
	R>\frc{E}{2\chi_*}
\eeq
where $E$ is given in \eqref{Edef} with $D = V\b \rho$, we have
\beq\label{intervalled}
	I_-+\Delta I < I_+-\Delta I.
\eeq
}

Below, in the proofs we will use
\beq
	\Delta X := R'N^\gamma,\quad 
	\gamma = \alpha+2\beta+\mu+\omega,\quad
	0<R' < R
\eeq
and
\beq
	\Delta x:=\Delta x(\Delta X+(R-R')N^\gamma;\bs\chi,\bs y).
\eeq
We also use the fact that, by Assumption \ref{assimp}, $\Delta x\leq VN^\omega(\Delta X+(R-R')N^\gamma)=VRN^{\alpha+2\beta+\mu+2\omega}= \Delta I$. We set as usual
\beq
	I = [I_-,I_+].
\eeq
Note the various exponents we have encountered until now, in terms of the data of Assumption \ref{asssp} and \ref{assimp}, in non-decreasing order: $\alpha+2\beta+\mu+\omega$ in \eqref{Ksupport3} and as part of the proofs below; $\alpha+2\beta+\mu+2\omega$ in our choice of $\Delta I$, Eq.~\eqref{DeltaI}, and $\alpha+5\beta+\mu+2\omega$ in Corollary \ref{corolbound2}.

The first result is a simple consequence of Theorem \ref{theo}. This tells us what happens {\em within the fluid cell}, the interval $I$. It says that the fluid-cell projection on $I$ agrees with the original KdV field on an interval that is ``slightly'' smaller than $I$, with a variation $\Delta I>0$ taken away at the borders. For it, we need to control the local projection on $\R$.

\begin{corol}[Form of the projected field within the fluid cell]\label{corolfc}
We have, uniformly on sequences \eqref{seq3} with fixed assumptions' data,
\beq\label{convuprojfc}
	\sup_{x\in  [I_-+\Delta I,I_+-\Delta I],} \Big|\p_x^n \big(u_{\bs \chi,\bs y}(x) - \mathcal F_{I} u_{\bs \chi,\bs y}(x)\big)\Big| = \mathcal O(N^{-\infty}) \quad (N\to\infty)\qquad \forall \;n\in\Z_+.
\eeq
\end{corol}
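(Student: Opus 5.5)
The plan is to apply Theorem \ref{theo} at every observation point $x_*\in[I_-+\Delta I,I_+-\Delta I]$, with $t=0$ (so $m=0$) and with the local projection $\mathcal L=\mathcal F_I$ itself. Everything reduces to three checks: that $\mathcal F_I$ is local at such $x_*$, that the hypotheses of Theorem \ref{theo} hold uniformly in $x_*$, and that the resulting exponential error is $\mathcal O(N^{-\infty})$.

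\textbf{Locality.} Use the notation introduced after \eqref{seq3}: $\Delta X=R'N^\gamma$, $\gamma=\alpha+2\beta+\mu+\omega$, with $0<R'<R$. Apply Lemma \ref{lemfclocal} with local scale $\Delta X$ and with $\ep=(R-R')N^\gamma$.
\begin{itemize}
\item For every $x_*\in[I_-+\Delta x,I_+-\Delta x]$, where $\Delta x=\Delta x(\Delta X+\ep;\bs\chi,\bs y)$, the lemma shows that $\mathcal F_I$ is local at $(\bs\chi,\bs y)$ with respect to $I'=[x_*-\Delta X,x_*+\Delta X]$. In the notation of Definition \ref{defimap}, this means $L=2\Delta X=2R'N^\gamma$.
\item Assumption \ref{assimp} applies for $N$ large, since $\Delta X+\ep=RN^\gamma\geq BN^{\alpha/2+\mu}$ because $\gamma>\alpha/2+\mu$. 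It gives $\Delta x\leq \Delta I$, so the range $[I_-+\Delta I,I_+-\Delta I]$ is contained in the admissible range of $x_*$.
\item The interval $[I_-+\Delta I,I_+-\Delta I]$ is nonempty by \eqref{intervalled}.
\end{itemize}

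\textbf{Hypotheses of Theorem \ref{theo}.} Assumption \ref{asssp} holds by hypothesis. Assumption \ref{assde} with $\eta=\alpha/2$ holds uniformly in $x_*\in\R$ by Lemma \ref{lemrealcontracted}, with $D=V\b\rho$ and $\nu=2\beta+\omega$. Hence $\alpha+\mu+\nu=\gamma$. The constants $D_{0..}$, $\kappa$, $E$, $N_*$ depend only on the assumptions' data, not on $x_*$ or on the sequence.

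Theorem \ref{theo} then bounds $|\p_x^n(u_{\bs\chi,\bs y}-\mathcal F_Iu_{\bs\chi,\bs y})|$ on $J\ni x_*$, in particular at $x=x_*$, by
\beq
D_{n,0}N^{\kappa_{n,0}}\exp\big(EN^\gamma-2\chi_*R'N^\gamma\big).
\eeq
Since $R>E/(2\chi_*)$, choose $R'\in(E/(2\chi_*),R)$. The exponent is then $-cN^\gamma$ with $c=2\chi_*R'-E>0$, so the bound is a stretched exponential, hence $\mathcal O(N^{-\infty})$. Taking the supremum over $x_*$ costs nothing, because the bound is uniform in $x_*$ and the evaluation point is $x=x_*$ itself.

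\textbf{Main obstacle.} The delicate point is the bookkeeping of the two scales. The constant $R$ in $\Delta I$ must simultaneously dominate the imprecision, via Assumption \ref{assimp} at scale $RN^\gamma$, and leave room for $R'$ to beat $E/(2\chi_*)$. The slack $R-R'$ serves as the $\ep$ of Lemma \ref{lemfclocal}. One also has to confirm that the required lower threshold $\Delta X\geq BN^{\eta+\mu}$ holds for $N$ large, and absorb small $N$ into the $\mathcal O$ constant.
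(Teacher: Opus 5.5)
Your proof is correct and follows the paper's own argument. You use Lemma \ref{lemfclocal} with slack $\ep=(R-R')N^\gamma$ to make $\mathcal F_I$ local at every $x_*$ in the shrunken cell, Lemma \ref{lemrealcontracted} to get Assumption \ref{assde} uniformly with $\alpha+\mu+\nu=\gamma$, and Theorem \ref{theo} at $x=x_*$, $t=0$ with $L=2\Delta X$. You also state explicitly the requirement $R'\in(E/(2\chi_*),R)$, which the paper's proof leaves implicit in its choice $0<R'<R$.
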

\proof 
Consider $N\mapsto M,\,\bs\chi^{(I)},\bs y^{(I)}$. By Lemmas \ref{lemstable} and \ref{lemrealcontracted}, Assumption \ref{asssp} holds, and Assumption \ref{assde} holds with exponent $\eta$ uniformly for $x_*\in \R$
with $D = V\b\rho,\,\nu =  2\beta+\omega$. Hence we can apply Theorem \ref{theo} uniformly for such $x_*$. By Lemma \ref{lemfclocal}, $\mathcal F_I$ is local at $(\bs\chi,\bs y)$ with respect to $[x_*-\Delta X,x_*+\Delta X]$. So we set $L/2 = \Delta X$ and $x_*\in 
[I_-+\Delta x,I_+-\Delta x]$ and $N\mapsto \mathcal L = \mathcal F_I$ in Theorem \ref{theo}.
Hence we can set $x=x_*$
and $t=0$ in \eqref{convuproj}. Doing this for all $x_*\in
[I_-+\Delta x,I_+-\Delta x]$,
as $\Delta x< \Delta I$ this completes the proof. \eproof

\medskip

The next result is a consequence of Corollary \ref{corolextent} and Lemma \ref{lemcoresupportfc}. It tells us what happens {\em outside the fluid cell}, the interval $I$, after the fluid-cell projection: the projected field vanishes outside a ``slightly'' bigger interval, which can be taken as a positive multiple of $\Delta I$ added on at the borders.

\begin{corol}[Vanishing of the projected field outside the fluid cell]\label{corolvanishfc}
There exist $a\geq 1,\,\ep>0$ depending only on the assumptions' data such that we have, uniformly on sequences \eqref{seq3} with fixed assumptions' data,
\beq\label{uvanishfcN}
	\sup_{x\in \R\setminus [I_--a\Delta I,I_++a\Delta I]} |\re^{\ep \dist(x,I)}\p_x^n \mathcal F_{I} u_{\bs \chi,\bs y}(x)| = \mathcal O(N^{-\infty})\quad (N\to\infty)\qquad \forall\, n\in\Z_+.
\eeq
\end{corol}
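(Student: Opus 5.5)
The plan is to apply Corollary \ref{corolextent} to the projected field $\mathcal F_I u_{\bs\chi,\bs y} = u_{\bs\chi^{(I)},\bs y^{(I)}}$, and to place its core inside a slight enlargement of $I$ using Lemma \ref{lemcoresupportfc}.

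\emph{Step 1: the projected sequence satisfies the assumptions.} Consider the projected sequence $N\mapsto M,\,\bs\chi^{(I)},\,\bs y^{(I)}$, with control parameter $\b N = N$. By Lemma \ref{lemstable}, Assumption \ref{asssp} and Assumption \ref{assimp} (with exponent $\eta=\alpha/2$) hold for it with the same data. By Lemma \ref{lemrealcontracted}, Assumption \ref{assde} then holds uniformly in $x_*\in\R$, with $D=V\b\rho$ and $\nu=2\beta+\omega$. In particular it holds at $x_* = x_\pm(\bs\chi^{(I)},\bs y^{(I)})\pm w$ for any $w\geq0$; take $w=0$. Hence the projected sequence is of the form \eqref{seq2}, and Corollary \ref{corolextent} applies to it with constants depending only on the assumptions' data. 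This gives \eqref{uvanish} with an $\ep\in(0,2\chi_*)$ outside
\[
J=[x_-(\bs\chi^{(I)},\bs y^{(I)})-QN^{\gamma},\,x_+(\bs\chi^{(I)},\bs y^{(I)})+QN^{\gamma}],\qquad \gamma=\alpha+2\beta+\mu+\omega .
\]
If $M=0$ the projected field is identically zero, so there is nothing to prove.

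\emph{Step 2: locate the core.} By Lemma \ref{lemcoresupportfc}, $\core(\bs\chi^{(I)},\bs y^{(I)})\subseteq[I_--\Delta^{\rm out},I_++\Delta^{\rm out}]$. By \eqref{minimp}, $\Delta^{\rm out}\leq \Delta x(0;\bs\chi,\bs y)$. By monotonicity \eqref{inclimp}, $\Delta x(0)\leq \Delta x(BN^{\alpha/2+\mu})$, and Assumption \ref{assimp} bounds this by $VBN^{\alpha/2+\mu+\omega}$. This is at most a constant multiple of $\Delta I=VRN^{\alpha+2\beta+\mu+2\omega}$. Similarly $QN^\gamma\leq (Q/(VR))\Delta I$, since $V\geq2$ and $\omega\geq0$. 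Choosing
\[
a\geq 1+\frac{B}{R}+\frac{Q}{VR}
\]
gives $J\subseteq[I_--a\Delta I,\,I_++a\Delta I]$. For $x$ outside this larger interval, $\dist(x,\core)\geq \dist(x,I)-\Delta^{\rm out}$, so the weight changes only by a factor $\re^{\ep\Delta^{\rm out}}$. To avoid this factor blowing up, I would instead use \eqref{projvanish} directly. Outside $[I_--a\Delta I,I_++a\Delta I]$ we have $\dist(x,\core)\geq \dist(x,I)-c\Delta I$ and $\dist(x,I)\geq a\Delta I$, so
\[
-2\chi_*\dist(x,\core)\leq -\ep'\dist(x,I)-(2\chi_*-\ep')(1-c/a)\,a\Delta I+2\chi_*c\Delta I .
\]
Choosing $a$ large makes the exponent at most $-\ep'\dist(x,I)-c'\Delta I$.

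\emph{Step 3 (main obstacle): beat the $E N^{\alpha+\mu+\nu}$ term.} The exponent must still dominate $EN^{\alpha+\mu+\nu}=EN^{\alpha+2\beta+\mu+\omega}$ from \eqref{projvanish}, uniformly in $x$. This works because $\Delta I$ carries the larger exponent $\alpha+2\beta+\mu+2\omega$ and coefficient $VR$ with $R>E/(2\chi_*)$, which is exactly the condition \eqref{DeltaI}. Taking $a$ large enough that $(2\chi_*-\ep')(a-c)VR>E$ gives a net stretched-exponential decay $\exp(-cN^{\gamma})$. This kills the polynomial prefactor $N^{\kappa_n}$, giving $\mathcal O(N^{-\infty})$ uniformly.

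Care is needed to ensure that $E$ from Corollary \ref{corolextent}, applied with $D=V\b\rho$, coincides with the $E$ in \eqref{DeltaI}, and that the $N_*$ threshold is harmless for an $\mathcal O(N^{-\infty})$ statement. I expect this matching of constants and exponents to be the main, though routine, difficulty.
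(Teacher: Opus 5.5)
Your proposal is correct. Up to locating the core it follows the paper: you apply Corollary \ref{corolextent} with $w=0$ to the projected sequence via Lemmas \ref{lemstable} and \ref{lemrealcontracted}, place the projected core in $[I_--\Delta^{\rm out},I_++\Delta^{\rm out}]$ by Lemma \ref{lemcoresupportfc}, and bound $\Delta^{\rm out}$ by a multiple of $\Delta I$.

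The finish differs. The paper asserts $\dist(x,\core(\bs\chi^{(I)},\bs y^{(I)}))\geq\dist(x,I)$ and reads the claim off the qualitative statement \eqref{uvanish} with a $\Delta I$-enlarged $J$. That inequality does not follow from Lemma \ref{lemcoresupportfc}, which only gives $\dist(x,\core)\geq\dist(x,I)-\Delta^{\rm out}$. The core can indeed protrude from $I$: two solitons with $0<X_2^--X_1^-<|\varphi_{12}|-|\varphi_{21}|$ and $I=[x_1,x_2]$ give $\max\core=X_1^+>I_+$. The resulting factor $\re^{\ep\Delta^{\rm out}}$ is only controlled by a stretched exponential in $N$, so a bare $\mathcal O(N^{-\infty})$ cannot absorb it. You instead return to the quantitative bound \eqref{projvanish} and choose $a$ so large that $a\Delta I$ dominates both the protrusion and $EN^{\alpha+\mu+\nu}$, using $\Delta I\geq VRN^{\alpha+2\beta+\mu+\omega}$. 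That is what actually closes the argument, so your route is the more careful of the two.

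Two small corrections. First, your final condition on $a$ must retain the $+2\chi_*c\Delta I$ term: you need, e.g., $\big((2\chi_*-\ep')a-2\chi_*c\big)VR>E$ rather than $(2\chi_*-\ep')(a-c)VR>E$. Second, since $a$ is at your disposal, neither the lower bound $R>E/(2\chi_*)$ in \eqref{DeltaI} nor the matching of the two constants $E$ plays any role; any $\ep'\in(0,2\chi_*)$ works.
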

\proof
Consider $N\mapsto M,\,\bs\chi^{(I)},\bs y^{(I)}$. By Lemmas \ref{lemstable} and \ref{lemrealcontracted}, Assumption \ref{asssp} holds, and Assumption \ref{assde} holds with exponent $\eta$ uniformly for
$x_*\in\R$
with $D = V\b\rho,\,\nu =  2\beta+\omega$. Hence we can apply Corollary \ref{corolextent}, where we may take $w= 0$.
We therefore have
\beq
	\sup_{x\in \R\setminus J} |\re^{\ep \dist(x,\core(\bs \chi^{(I)},\bs y^{(I)}))}\p_x^n \mathcal F_Iu_{\bs \chi,\bs y}(x)| = \mathcal O(\b N^{-\infty})\quad (\b N\to\infty)\qquad \forall\, n
\eeq
and by Lemma \ref{lemcoresupportfc} $\core(\bs\chi^{(I)},\bs y^{(I)}) \subseteq [I_--\Delta^{\rm out}(\bs\chi,\bs y), I_++\Delta^{\rm out}(\bs\chi,\bs y)]$ so that we can use
\beq\label{JfcNout}
	J = [I_--\Delta^{\rm out}(\bs\chi,\bs y)-QN^{\alpha+2\beta+\mu+\omega},I_++\Delta^{\rm out}(\bs\chi,\bs y)+QN^{\alpha+2\beta+\mu+\omega}].
\eeq
With $\dist(x,\core(\bs \chi^{(I)},\bs y^{(I)}))\geq \dist(x,I)$, and the fact that by \eqref{minimp} and Lemma \ref{lembasicimp} (Eq.~\eqref{inclimp}) we have $\Delta^{\rm out}(\bs\chi,\bs y)\leq \Delta x$, there exists $Q>0$ depending only on the assumptions' data such that the result \eqref{uvanishfcN} follows
for
\beq\label{JfcNint}
	J := [I_--\Delta x-QN^{\alpha+2\beta+\mu+\omega},I_++\Delta x+QN^{\alpha+2\beta+\mu+\omega}].
\eeq
With $\Delta x\leq \Delta I$  and $QN^{\alpha+2\beta+\mu+\omega}\leq QN^{\alpha+2\beta+\mu+\omega} = (Q/VR) \Delta I$ the result follows.
%
 \eproof

\medskip

We have established what happens within the fluid cell, far enough from its boundaries (Corollary \ref{corolfc}), and outside of it, also far enough from its boundaries (Corollary \ref{corolvanishfc}). The third preliminary result tells us how the projected field is bounded {\em everywhere}. This follows from Corollary \ref{corolbound2}, and is crucial for the projection theorems below, as we need to show that the contributions near the boundaries vanish.
\begin{corol}[Uniform boundedness of the projected field]\label{coroboundedfc}
There exists $N_*\in\N$ and $T>0$ depending only on the assumptions' data such that for every sequence \eqref{seq3}, and every integers $n\geq 0$ and $N\geq N_*$, we have
\beq\label{bound2mainfc}
	\sup_{x\in \R}|\p_x^n\mathcal F_Iu_{\bs \chi,\bs y}(x)|\leq 
	2(n+2)!\,T^{n+2}\,\Biggr(\sum_{m_1,\ldots,m_{n+2}\geq 0\atop \sum_j jm_j= n+2}
	\frc{\Big(\sum_j m_j-1\Big)!}{\prod_{j=1}^{n+2} j!^{m_j}m_j!}\Biggr)\,
	N^{(\alpha+5\beta+\mu+2\omega)(n+2)}.
\eeq
\end{corol}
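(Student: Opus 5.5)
The plan is to apply Corollary \ref{corolbound2} directly to the projected sequence $N\mapsto M,\,\bs\chi^{(I)},\,\bs y^{(I)}$, where $M=|\{i:x_i\in I\}|\in\dbra 1,N\dket$ (the case $M=0$ is trivial since $\mathcal F_I u_{\bs\chi,\bs y}=0$). By Definition \ref{defimapfc}, $\mathcal F_I u_{\bs\chi,\bs y}=u_{\bs\chi^{(I)},\bs y^{(I)}}$. So the left-hand side of \eqref{bound2mainfc} is exactly the $t=0$ case of the supremum in \eqref{bound2main} for the projected data, and $0\in K$.

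The key step is to check that the projected sequence is admissible in the sense of \eqref{seq2b}, with $\b N=N$ as the control parameter. Lemma \ref{lemstable} does this. Assumption \ref{asssp} for $N\mapsto\bs\chi$ passes to $N\mapsto\bs\chi^{(I)}$, since a subset of spectral parameters satisfies the same three bounds with the same data. Assumption \ref{assimp} with exponent $\eta=\alpha/2$ passes to the projected sequence with unchanged data $B,V,\mu,\omega$, using \eqref{ineqmeasure}: $\Delta x(\Delta X;\bs\chi^{(I)},\bs y^{(I)})\le\Delta x(\Delta X;\bs\chi,\bs y)$. It matters that the sequence framework \eqref{seqgen} allows the number of solitons $M$ to differ from the control index $\b N$. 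All bounds in the assumptions are expressed through $\b N=N$, not $M$, so the projection does not alter the exponents.

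Corollary \ref{corolbound2} then gives $N_*$ and $T$, depending only on the assumptions' data, such that \eqref{bound2main} holds for the projected field for all $N\ge N_*$. Because these data coincide with those of the original sequence \eqref{seq3}, $N_*$ and $T$ are uniform over all sequences \eqref{seq3} with fixed data, and over the choice of interval $I$. Evaluating at $t=0$ yields \eqref{bound2mainfc}.

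The only potentially delicate point is uniformity: Corollary \ref{corolbound2} must be read as giving constants uniform over all admissible sequences, including sequences indexed by $\b N$ with $M\le\b N$. Its statement ("depending only on the assumptions' data", "for every sequence \eqref{seq2b}") provides exactly this. The condition \eqref{intervalled} is not needed here.
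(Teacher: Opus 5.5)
Your proposal is correct and matches the paper's proof: you apply Corollary \ref{corolbound2} to the projected data $(\bs\chi^{(I)},\bs y^{(I)})$, use Lemma \ref{lemstable} with \eqref{ineqmeasure} to carry the assumptions over with unchanged data, and evaluate at $t=0\in K$. Your extra remarks, that the case $M=0$ is trivial and that \eqref{intervalled} is not needed, are correct and are left implicit in the paper.
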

\proof By Lemma \ref{lemstable} and Lemma \ref{ineqmeasure}
the assumptions of Corollary \ref{corolbound2} hold for $N\mapsto M,\,\bs\chi^{(I)},\bs y^{(I)}$ with exponent $\eta$.
The result follows by choosing $t=0\in K\;\forall\, \b N$. \eproof

\medskip

Finally, we now have all required results in order to establish the main fluid-cell projection theorem. This is based on the following: if $\Delta I$ is small enough compared to $I_+-I_-$, then the first two preliminary results above tell us that the fluid-cell projection is a good approximation of the original KdV multi-soliton field within the bulk of the fluid cell $I$, and that beyond a small fattening of $I$, the projected field vanishes. With the third preliminary result, we can neglect the parts around the boundaries of $I$, within $[I_+-\Delta I,I_+ + a\Delta I]$ and $[I_--a\Delta I,I_- + \Delta I]$. Thus, the fluid-cell average of ``any'' observable, function of the original multi-soliton field and its derivatives within $I$, is given by the fluid-cell average of the projected field. In particular, this tells us that the solitons with spectral parameters $\bs\chi_s = (\chi_i)_{x_i\in I}$ are exactly those that contribute within $I$, up to corrections that vanish as $N\to\infty$, and that fluid-cell averages of densities of the KdV conserved quantities are given by the density of such quantities carried by the solitons within $I$. This is the main property that we need to define the density of states of soliton gases.

We assign a degree $\deg$ to polynomials $P:\R^{n+1}\to\R$ as follows: for formal variables $u_k,\,k=0,\ldots,n$, we have
\beq\label{degree}
	\deg(u_k) = k+2,\quad
	\deg(u_{k_1}\cdots u_{k_l}) = \sum_{j=1}^l \deg(u_{k_l}),
\eeq
and the degree of the polynomial $P(u_0,\ldots,u_n)$ is the maximal degree of all its monomials.

For the following two results, we consider  sequences as follows.
{\em \beq\label{seq4}
	\N\ni N\mapsto \bs\chi\in\Rpu^N,\,\bs y\in\R^N\quad\mbox{such that:}
\eeq
the spectral regularity Assumption \ref{asssp} holds; 
the bound on imprecision Assumption \ref{assimp} with exponent $\eta = \alpha/2$ holds.
}
\begin{theorem}[Fluid-cell mean of local observables as fluid-cell projection] \label{theofc}
Let $N\mapsto I_-<I_+$ and $\lambda>0$, such that
\beq\label{Lgamma}
	L:= I_+-I_- = N^\lambda.
\eeq
Denote
\beq\label{uupDeltax}
	u = u_{\bs\chi,\bs y},\quad u'=\mathcal F_{I}u_{\bs\chi,\bs y}.
\eeq
Then, uniformly on sequences \eqref{seq4} with fixed assumptions' data:

\medskip
\noindent Case I: Let $F:\R^{n+1}\to \R$ be a  bounded Lipschitz function for the $L^1$ norm on $\R^{n+1}$ with $F(\bs 0) = 0$. If
\beq\label{case1}
	\lambda>\alpha+2\beta+\mu+2\omega,
\eeq
then
\beq\label{res1}
	\lim_{N\to\infty} \Bigg(\frc1{L}\int_{I} \dd x\,F(u,\p_x u,\ldots,\p_x^nu)
	- \frc1{L}\int_{-\infty}^{\infty} \dd x\,F(u',\p_x u',\ldots,\p_x^nu')
	\Bigg)=0.
\eeq
Case II: Let $P:\R^{n+1}\to\R$ be a polynomial without constant term. Set $r=\deg(P(u_0,\ldots,u_n))$. If
\beq\label{case2}
	\lambda > (1+r)\alpha+(2+5r)\beta+r\mu+(2+2r)\omega,
\eeq
then
\beq\label{res2}
	\lim_{N\to\infty} \Bigg(
	\frc1{L}\int_{I} \dd x\,P(u,\p_x u,\ldots,\p_x^nu)
	- \frc1{L}\int_{-\infty}^{\infty} \dd x\,P(u',\p_x u',\ldots,\p_x^nu')
	\Bigg)=0.
\eeq
In particular, for the $k$th local conserved density $\mathtt P_k$, Eq.~\eqref{conspoly}, we have
\beq\label{res2conserved}
	\lim_{N\to\infty}
	\Bigg(
	\frc1{L}\int_{I} \dd x\,\mathtt P_k[u](x)
	-
	\frc1L\sum_{i:x_i\in I}\chi_i^{2k+1}
	\Bigg)
	=0,\quad k\in\Z_+.
\eeq
\end{theorem}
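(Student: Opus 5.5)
The plan is to split the real line into five regions and use Corollaries \ref{corolfc}, \ref{corolvanishfc} and \ref{coroboundedfc} on each. Set $I^{\rm in}=[I_-+\Delta I,I_+-\Delta I]$, the two boundary layers $B_\pm$ (namely $[I_+-\Delta I,I_++a\Delta I]$ and $[I_--a\Delta I,I_-+\Delta I]$), and the exterior $\R\setminus[I_--a\Delta I,I_++a\Delta I]$. Since $\lambda>\alpha+2\beta+\mu+2\omega$ in both cases (the Case II condition is stronger), we have $\Delta I/L\to0$. Hence \eqref{intervalled} holds for $N$ large, and the sequence is of type \eqref{seq3}. Write the difference in \eqref{res1} as
\[
\frc1L\int_{I^{\rm in}}\big(F[u]-F[u']\big)+\frc1L\int_{I\setminus I^{\rm in}}F[u]-\frc1L\int_{\R\setminus I^{\rm in}}F[u'],
\]
where $F[u]$ abbreviates $F(u,\ldots,\p_x^nu)$. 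Each of the three pieces will be shown to vanish.

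\emph{Bulk term.} In Case I, the Lipschitz property gives $|F[u]-F[u']|\leq \mathrm{Lip}(F)\sum_{k\le n}|\p_x^k(u-u')|$. This is $\mathcal O(N^{-\infty})$ uniformly on $I^{\rm in}$ by Corollary \ref{corolfc}, so after integrating over a set of length at most $L$ and dividing by $L$ it vanishes. In Case II, write $P[u]-P[u']$ monomial by monomial as telescoping sums of products. Each product contains one factor $\p_x^k(u-u')=\mathcal O(N^{-\infty})$, and its other factors are bounded polynomially in $N$ by Corollary \ref{corolbound2} for $u$ and Corollary \ref{coroboundedfc} for $u'$. The product is therefore still $\mathcal O(N^{-\infty})$.

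\emph{Boundary layers.} These have total length $\mathcal O(\Delta I)$. In Case I, $F$ is bounded, so their contribution is $\mathcal O(\Delta I/L)\to0$, using exactly \eqref{case1}. In Case II, a monomial of degree $\le r$ is bounded by $N^{(\alpha+5\beta+\mu+2\omega)r}$ times constants, by Corollaries \ref{corolbound2} and \ref{coroboundedfc} applied to $u$ and $u'$. Multiplying by $\Delta I=VRN^{\alpha+2\beta+\mu+2\omega}$ and dividing by $L=N^\lambda$ gives a vanishing contribution precisely under \eqref{case2}; the exponents add up to $(1+r)\alpha+(2+5r)\beta+r\mu+(2+2r)\omega$. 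I expect the main obstacle to be this exponent bookkeeping, and in particular the degree convention \eqref{degree}. One must check that each $\p_x^k u$ contributes a bound whose power is consistent with $\deg(u_k)=k+2$. Corollary \ref{corolbound2} gives an exponent proportional to $(k+2)$ for $\p_x^ku$, so a monomial of degree $r$ is bounded by $N^{(\alpha+5\beta+\mu+2\omega)r}$, which matches.

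\emph{Exterior term.} Only $u'$ appears here. Since $F(\bs 0)=0$ and $F$ is Lipschitz, or $P$ has no constant term, $|F[u']|$ is bounded by a sum of the $|\p_x^k u'|$, or by a sum of products with at least one factor. Corollary \ref{corolvanishfc} gives $|\p_x^ku'(x)|\le \re^{-\ep\,\dist(x,I)}\mathcal O(N^{-\infty})$, which is integrable in $x$. Combined with the polynomial bounds on the remaining factors, this makes the exterior integral $\mathcal O(N^{-\infty})$. This gives \eqref{res1} and \eqref{res2}.

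Finally, \eqref{res2conserved} follows from \eqref{res2} with $P=P_k$, which is a polynomial without constant term. By \eqref{integralconserved}, the full-line integral $\int\mathtt P_k[u']$ over the multi-soliton field $u'=u_{\bs\chi^{(I)},\bs y^{(I)}}$ equals $\sum_{i:x_i\in I}\chi_i^{2k+1}$, because $\bs\chi^{(I)}=(\chi_i)_{i:x_i\in I}$ by \eqref{FIprop}. The required $\lambda$ is any value satisfying \eqref{case2} with $r=\deg P_k$. If $\lambda$ is small, one may instead note that the statement is intended for exponents satisfying this constraint, or, as in the introduction, that the data $\alpha,\beta,\mu,\omega$ can be taken arbitrarily small.
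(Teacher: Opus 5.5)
Your proposal follows the paper's proof essentially step for step. It uses the same split into $I^\circ=[I_-+\Delta I,I_+-\Delta I]$, the boundary layers inside $[I_--a\Delta I,I_++a\Delta I]$, and the exterior. These are controlled respectively by Corollary~\ref{corolfc} (with the Lipschitz bound, or the difference identity $P(\bs u)-P(\bs u')=\sum_k P_k\,(u_k-u_k')$), by the sup bounds of Corollaries~\ref{corolbound2} and \ref{coroboundedfc} times the layer length $\mathcal O(\Delta I)$, and by Corollary~\ref{corolvanishfc} together with $F(\bs 0)=0$ or the absence of a constant term. Then \eqref{res2conserved} follows from \eqref{integralconserved}, exactly as in the paper.

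The one thing to correct is the exponent count you single out as the main obstacle. With $\Delta I=VRN^{\alpha+2\beta+\mu+2\omega}$, the boundary-layer bound $N^{(\alpha+5\beta+\mu+2\omega)r}\,\Delta I/L$ vanishes if and only if $\lambda>(1+r)\alpha+(2+5r)\beta+(1+r)\mu+(2+2r)\omega$. So the coefficient of $\mu$ is $1+r$, not $r$. For $\mu>0$, your argument therefore establishes Case II (and \eqref{res2conserved}) only under this slightly stronger condition. The paper's own proof has the same slip, which it carries into \eqref{case2} and then into \eqref{condLambdacombined}. When $\mu=0$, as in the simplified setting of Section~\ref{ssectovermain}, the two conditions coincide.
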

We provide the proof of Theorem \ref{theofc} in Section \ref{ssectprooffc}. 

Finally, Theorem \ref{theofc} can be re-expressed, instead, as a more standard weak limit, under slowly-varying Schwartz functions. In the following, the natural ``Euler'' scale is the choice $\Lambda=1$, if it is available given the exponents in the assumptions' data.

\begin{theorem}[Weak limit of conserved densities]\label{theoweak} Denote
\beq
	u = u_{\bs\chi,\bs y},\quad
	\bs x= \bs x(\bs\chi,\bs y).
\eeq
Let $k\in\Z_+$, and, for the polynomial \eqref{conspoly},
\beq
	r = \deg(P_k(u_0,u_1,\ldots)).
\eeq
Let
\beq\label{condLambdacombined}
	\Lambda>(1+2r)\alpha+(2+10r)\beta+2r\mu+(2+4r)\omega.
\eeq
Then, for every Schwartz function $f:\R\to\R$ we have, uniformly on sequences \eqref{seq4} with fixed assumptions' data,
\beq\label{weaklimitmain}
	\lim_{N\to\infty}
	\Bigg(\int \dd x\,f(x)\,\mathtt P_k[u](N^\Lambda x)
	-
	\frc1{N^\Lambda}\sum_{i=1}^N f(x_i/N^\Lambda)\chi_i^{2k+1}
	\Bigg)=0.
\eeq
\end{theorem}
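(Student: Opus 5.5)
We reduce Theorem \ref{theoweak} to Theorem \ref{theofc}, Case II with $P=P_k$, via a partition of the macroscopic line into fluid cells. Change variables $x\to x/N^\Lambda$, so the first term becomes $N^{-\Lambda}\int \dd x\, f(x/N^\Lambda)\mathtt P_k[u](x)$. Choose a mesoscopic exponent $\lambda$ with
\beq
	(1+r)\alpha+(2+5r)\beta+r\mu+(2+2r)\omega<\lambda<\Lambda-\big(r\alpha+5r\beta+r\mu+2r\omega\big),
\eeq
which is possible precisely by \eqref{condLambdacombined}. Tile $\R$ into consecutive intervals $I^{(p)}=[pL,(p+1)L]$, $L=N^\lambda$. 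Fix a large cutoff $M>0$, take only cells meeting $[-MN^\Lambda,MN^\Lambda]$ (about $2MN^{\Lambda-\lambda}$ of them), and treat the tails separately.

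\emph{Bulk.} On each cell, replace $f(x/N^\Lambda)$ by its value at the cell centre, with error $\|f'\|_\infty N^{\lambda-\Lambda}$ pointwise. Then apply \eqref{res2conserved} to each cell.

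The main obstacle is uniformity over the $\sim N^{\Lambda-\lambda}$ cells and control of the error accumulation. Theorem \ref{theofc} only gives $o(1)$ per cell, uniformly in the sequences, hence uniformly in the cell position, since the cells are arbitrary sequences $N\mapsto I_\pm$. After weighting by $L/N^\Lambda$, the sum of these errors over $O(MN^{\Lambda-\lambda})$ cells is $M\cdot o(1)$. To be safe, I would instead track the explicit rates in the proof of Theorem \ref{theofc}. The boundary layers of width $\sim\Delta I$ contribute $\Delta I\cdot N^{(\alpha+5\beta+\mu+2\omega)r}/L$ per cell, via Corollary \ref{coroboundedfc} and the polynomial degree bound. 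The bulk and exterior contributions are $O(N^{-\infty})$ by Corollaries \ref{corolfc} and \ref{corolvanishfc}. The per-cell error therefore decays like a negative power of $N$ determined by $\lambda$.

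The Riemann-sum error from freezing $f$ is bounded by $N^{\lambda-\Lambda}$ times the absolute mass. That mass is controlled by $\sup|\p^j u|$ (Corollary \ref{corolbound2}), giving $N^{\lambda-\Lambda}\cdot N^{(\alpha+5\beta+\mu+2\omega)r}$ per unit of $x/N^\Lambda$. This is the origin of the upper constraint on $\lambda$. Here the doubling of coefficients in \eqref{condLambdacombined} relative to \eqref{case2} is consumed.

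On the particle side, freezing $f$ costs $N^{\lambda-\Lambda}\,N^{-\Lambda}\sum_{x_i\in[-MN^\Lambda,MN^\Lambda]}\chi_i^{2k+1}$. This is small by Lemma \ref{lemrealdens} and $\chi_i\le CN^\beta$.

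\emph{Tails.} For $|x|>MN^\Lambda$, the KdV side is bounded by $\sup_x(1+|x|^2)|f(x)|\,M^{-1}$ times a power of $N$ coming from Corollary \ref{corolbound2}. This is unfortunately not uniformly small. I would therefore also split the tails into fluid cells and use the same per-cell identity, now weighted by $|f|$ evaluated at the cell. Schwartz decay of $f$ makes the weighted sum of per-cell errors converge, since each per-cell error is bounded by a fixed negative power of $N$ times cell-independent constants. Only $O(N^{1+\dots})$ cells meet the core plus the $J$ of Corollary \ref{corolextent}, and beyond that region $u$ is $O(N^{-\infty})$ with exponential decay. The particle side is handled identically, since all $x_i$ lie in the core by \eqref{sbeboundinv}. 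Summing gives \eqref{weaklimitmain}.
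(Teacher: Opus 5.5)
Your proposal is correct and is essentially the paper's proof. It uses the same window for $\lambda$, a tiling by cells of length $N^\lambda$, and freezing of $f$ at cell centres at a cost $N^{\lambda-\Lambda}N^{(\alpha+5\beta+\mu+2\omega)r}$ from the sup bound. It then applies Theorem~\ref{theofc} Case II uniformly in the cell position, and adds a separate freezing estimate on the particle side. There you use Lemma~\ref{lemrealdens} and $\chi_i\le CN^\beta$ instead of the per-cell identity, which works because $r=2k+2$.

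The cutoff $M$ is unnecessary, and your claim that only polynomially many cells meet the core is not guaranteed by the assumptions, since the $y_i$ can be arbitrarily spread. Neither is needed: the Schwartz-weighted per-cell argument you fall back on for the tails already covers all of $\R$ at once, exactly as in the paper. It bounds the oscillation of $f$ over a cell by $N^{\lambda-\Lambda}g$ for a Schwartz majorant $g$.
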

The proof is provided in Section \ref{ssectweak}.

\section{Discussion}\label{sectdisc}

We now discuss a number of points concerning the above results.


\subsection{Locally contracted vectors and the change of metric}\label{ssectdiscmetric}

A crucial notion in our results is that of locally contracted vectors, Definition \ref{definatmag}. This takes away the interaction effects of particles lying between an observation point $x_*$ and the position $x_i$ of the solution. This can be interpreted as ``changing the metric'' around $x_*$, thus introducing a ``magnifying-glass effect'', where positions $X_i$ nearer to $x_*$ are more accurate than those further away (it is, in fact, in inverse magnifying-glass effect, as it is a contraction instead of an expansion). See Figure \ref{figmag}.
\begin{figure}
\bc\includegraphics[width=10cm]{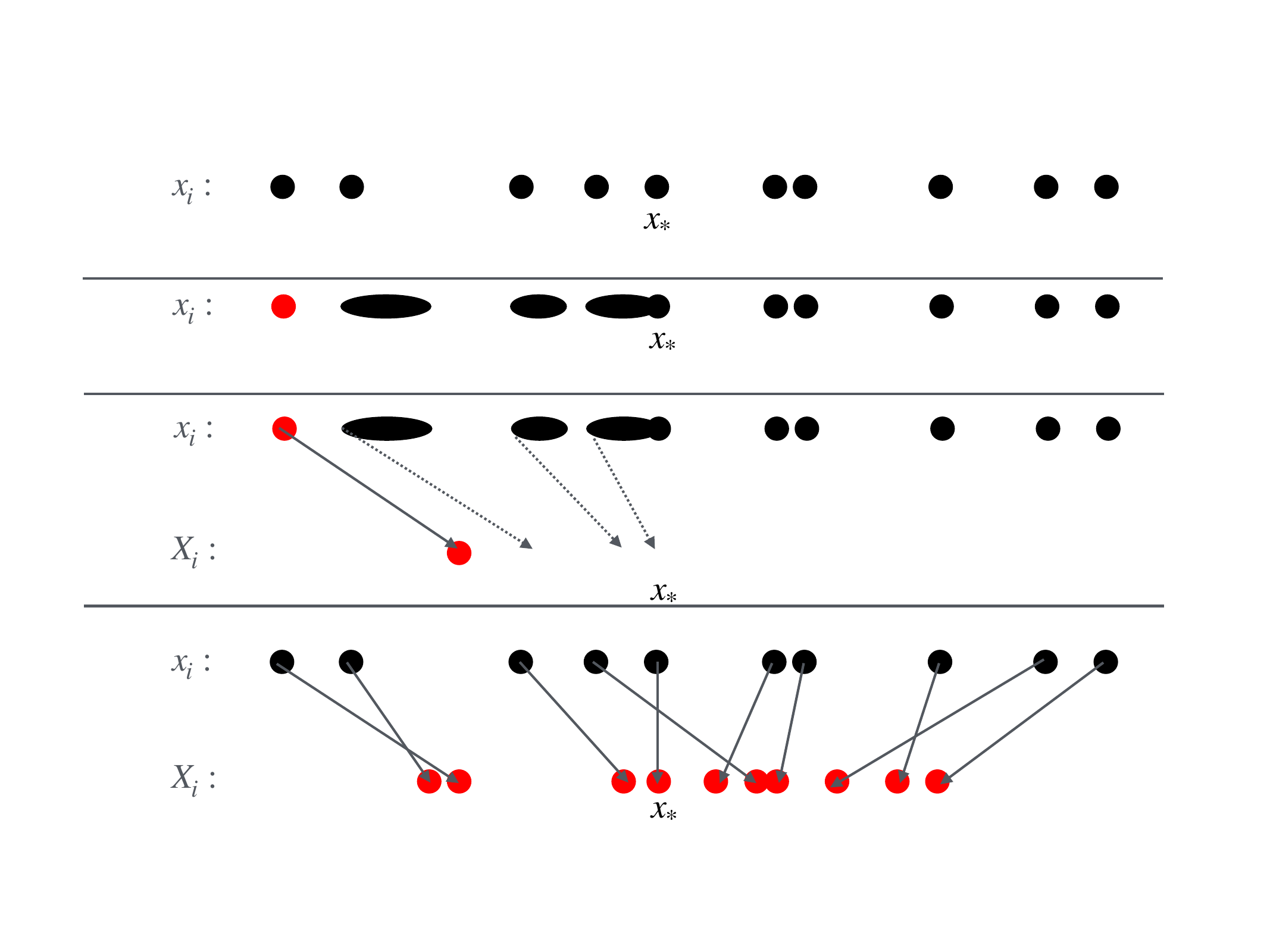}
\ec
\caption{Cartoon illustration of how locally contracted positions $X_i$'s relate to quasi-particle positions $x_i$'s from the SBE \eqref{bethe}. The case illustrated applies to the {\tt rightward} Algorithm \ref{rightward}, as contraction of solitons to the right of $x_*$ do not go beyond $x_*$.}
\label{figmag}
\end{figure}
One can gain intuition into this by informally taking the large-scale limit, where the empirical density becomes continuous:
\beq
	\sum_{i=1}^N \delta(\chi-\chi_i)\delta(x-x_i) \to \rho(\chi,x).
\eeq
Then from \eqref{Xnatural} we have (here $\varphi(\chi,\chi') = \frc1\chi\log\frc{|\chi-\chi'|}{\chi+\chi'}$)
\beq
	X_i = x_i + \int_{x_*}^{x_i} \dd x'\int_0^\infty \dd\chi'\,\rho(\chi',x')\varphi(\chi_i,\chi').
\eeq
Taking small variations, this suggest to define the $\chi$-dependent metric change as
\beq\label{dX}
	\dd X(\chi,x) = \Big(1 + \int_0^\infty \dd \chi'\,\rho(\chi',x)\varphi(\chi,\chi')\Big)\dd x.
\eeq
The quantity in the parenthesis is the ``density of space'', or total density (see e.g.~\cite{doyon2018geometric,doyon2020lecture}) 
\beq
	\rho_{\rm tot}(\chi,x)=1 + \int_0^\infty \dd \chi'\,\rho(\chi',x)\varphi(\chi,\chi'),
\eeq
which represents the space left for quasi-particle to move in, once their effective size, due to interaction, has been taken out. The corresponding coordinate function $X(\chi,x)$ satisfies the observation-point condition
\beq\label{Xxstar}
	X(\chi,x_*) = x_*;
\eeq
Eqs.~\eqref{dX}, \eqref{Xxstar} suggest the physical interpretation we have explained.

Note that the transformation \eqref{dX} is known from GHD \cite{doyon2018geometric,doyon2020lecture,hubner2024new}, and can be used to trivialises the GHD equation. In this context, the coordinate function must satisfy a different ``initial condition'' than \eqref{Xxstar}, which accounts for the interaction between all quasi-particles.

Crucially, it is known from soliton-gas theory that the space of allowed density of states $\rho(\chi,x)$ are those which guarantee that $\rho_{\rm tot}(\chi,x)\geq 0$. In particular, setting $\rho_{\rm tot}(\chi,x)= 0$ gives the soliton condensate \cite{el2021soliton,suret2024soliton}. Therefore, {\em away from condensates, the change of metric is regular}: $|X(\chi,x)-x_*| \geq a(\chi)|x-x_*|$ for $a (\chi)= \inf_{x\in\R}\rho_{\rm tot}(\chi,x)$ (and $\sgn(X(\chi,x)-x_*) = \sgn(x-x_*)$). This suggests
\beq\label{ratioDelta}
	\frc{\Delta x(\Delta X)}{\Delta X} \leq \frc{2}{a},\quad
	a = \inf_{\chi>0,x\in\R}\rho_{\rm tot}(\chi,x),
\eeq
and hence, according to this informal, large-scale analysis, away from condensates, the soliton gas has regular metric change, Eq.~\eqref{c4intro}. Similarly, our more general condition Assumption \ref{assimp} allows for a soliton gas that approaches the condensate as $N\to\infty$, as determined by the exponent $\omega$.

\subsection{Meaning of the other conditions on spectral and impact parameters}\label{sssectcond}

Consider Assumptions \ref{asssp} and \ref{assde}.
What do they mean in terms of the solitons involved?

The minimal value $\chi_*$ in conditions \eqref{c1} is natural. Indeed, as $\chi_i$ decreases, the one-soliton solution \eqref{1sol} becomes supported on a larger and larger domain, and is not ``localisable'' anymore (cannot be given a meaningful position): it cannot be given a specific position in a local cell of length $L = N^{\gamma}$. It is likely that the theorem can be generalised to account for a power law, $\min_i(\chi_i) = N^{-\upsilon},\,\upsilon\geq0$, we leave this for future works.

The minimal value of differences $|\chi_i-\chi_j|$ in \eqref{c1} is involved in the maximal size of the scattering shift two solitons can be subjected to. By \eqref{varphi}, the stretched exponential indicates that the shift cannot be larger than a power law. Again, this guarantees that, even throughout interaction events, solitons can be located.

The maximal value of $\chi_i$'s in \eqref{c1} is also involved in the maximal size of the scattering shift two solitons can be subjected to, however here it is simply a power law, this is a subleading effect compared to the minimal value of differences $|\chi_i-\chi_j|$. For $\chi_i$ large, the one-soliton solution \eqref{1sol} becomes very peaked, hence this is a regularity condition, and used to bound derivatives of the KdV field.

The density requirement \eqref{c2}
is also natural. Indeed, for a density that is too high, or a too large number of solitons within a region, there are too many scattering shifts for solitons to be locatable. The density $\rho_{\bs D}(d)$ for a cell of radius $d$ is with respect to contracted distances, where the magnifying-glass deformation effect has occurred. In accordance to the discussion above, a positive exponent $\nu>0$ would correspond to a soliton gas that approaches a condensate, around the observation point $x_*$, and thus away from condensates, we expect $\nu=0$.

The conditions \eqref{c1} on $\bs\chi$ are simple to verify in specific examples. Conditions \eqref{c2}, \eqref{c4} are more indirect as they necessitate the evaluation of quasi-particle positions and contracted distances.

The most important aspect is that all conditions are expected to be valid in the crucial application of soliton gases, as we now explain.

\subsection{How the assumptions hold for the KdV soliton gas}\label{sssectsg}

It would be good to verify that our assumptions are expected to hold in the most interesting application of our results, that of soliton gases, at least at most macroscopic times. We do not have a full rigorous derivation, however we provide here arguments that suggest that this is the case. These arguments are expressed for the homogeneous soliton gas, as these are simplest to describe, however similar arguments should hold for inhomogeneous soliton gases as well. Here, we only discuss Assumptions \ref{asssp} and \ref{assde}, but similar arguments can be made for Assumption \ref{assimp} discussed in Section \ref{ssectdiscmetric}.

A homogeneous KdV soliton gas may be obtained by generating impact parameters $\bs y$ as $N$ i.i.d.~variables uniformly distributed on $[-\ell/2,\ell/2]$, and spectral parameters $\bs\chi$ as $N$ i.i.d.~variables uniformly distributed on $\R_+$, re-labelling them so that $\bs\chi\in\R_{+\uparrow}$, and taking $\ell\propto N$ and $N\to\infty$. The interest is in the statistics of $u_{\bs\chi,\bs y}(x)$ on finite regions, or on regions that grow with $N$ weakly enough, $x\in[-L/2,L/2]$. Note that a more general way of constructing a soliton gas, including inhomogeneous gases, would be to distribute $y_i$'s and $\chi_i$'s according to a measure such as that explained in \cite{doyon2026generalised} (see \eqref{measure} below): with a Boltzmann weight of the form $\re^{-\sum_{i=1}^N \beta(x_i/\ell,\chi_i)}$ for some function $\beta(x,\chi)$ and $x_i = x_i(\bs\chi,\bs y)$. Developing this idea is left for future works.

Let us argue that for a homogeneous soliton gas, in the limit $N\to\infty$ the conditions from Assumptions \ref{asssp} and \ref{assde} are almost surely satisfied.

First, we may distribute $\chi$ on $[\chi_*,C]$ for some finite nonzero $0<\chi_*<C$. Then, the first and last conditions of \eqref{c1} are satisfied, with $\beta=0$.

Second, with these finite lower and upper bounds, in the above construction the typical values of $\chi_{i+1}-\chi_i$ scale as $1/N$, which is indeed much greater than the stretched exponential form in \eqref{c1}. But, as $N\to\infty$, there may be pairs of solitons whose spectral parameters get very near to each other, hence we need to control these atypical events. As $\Delta _i := N(\chi_{i+1}-\chi_i)$ are, in this case, i.i.d., exponentially distributed variables in the limit $N\to\infty$, with some mean $1/\lambda$, the probability that $\Delta_i\geq \Delta$ is
\beq
	\int_\Delta^\infty \dd w\,\lambda \re^{-\lambda w} = \re^{-\lambda \Delta}.
\eeq
Therefore, for $N$ large the minimum $\min_i \Delta_i =  N\min_{i\neq j}|\chi_i-\chi_j|$ has cumulative probability distribution
\beq
	P\big(N\min_{i\neq j}|\chi_i-\chi_j| < \Delta\big) \sim 1-\re^{-N\lambda \Delta}
\eeq
so that we obtain, for every $\alpha>0$,
\beq
	P\big(\min_{i\neq j}|\chi_i-\chi_j|\geq \re^{-AN^{\alpha/2}}\big) \sim \re^{-N^2\lambda\, \re^{-AN^{\alpha/2}}} \to 1\quad (N\to\infty).
\eeq
That is, in the limit $N\to\infty$, the second condition of \eqref{c1} is almost surely satisfied. Note that we could have replaced the stretched exponential by a power law $N^{-\alpha}$ for $\alpha>2$.

Third, according to the discussion of Section \ref{ssectdiscmetric}, away from a condensate the contracted distances, for any observation point $x_*$, are well distributed. Therefore, again for typical configurations and positions, \eqref{c2} will be satisfied with $\mu=\nu=0$. But again, we need to discuss the atypical configurations. Let us denote by $\rho$ the soliton density in the contracted space in the limit $N\to\infty$. Assuming that a large-deviation principle holds, there is some large-deviation function $I(\Delta\rho)\geq 0,\,I(0)=0$ with $I(\Delta\rho\to\infty)$ quickly enough as $|\Delta\rho|\to\infty$ such that the following probability density is
\beq
	P(\rho_{\bs D}(d) -\rho = \Delta\rho) \sim \re^{-d I(\Delta \rho)}.
\eeq
Therefore taking $D>\rho$
\beqa
	\lefteqn{P(\rho_{\bs D}(d)> D\ \mbox{for some}\  d\geq N^{\alpha/2}) }
	&&\n
	&\lesssim&
	\sum_{d=N^{\alpha/2}}^{\infty}
	\int_{D-\rho}^\infty \dd\Delta\rho\,
	\re^{-d I(\Delta \rho)}
	=
	\int_{D-\rho}^\infty \dd\Delta\rho\,
	\sum_{d=N^{\alpha/2}}^{\infty}
	\re^{-d I(\Delta \rho)}
	=
	\int_{D-\rho}^\infty \dd\Delta\rho\,
	\frc{\re^{-N^{\alpha/2}I(\Delta\rho)}}{1-\re^{I(\Delta\rho)}}\n
	&\to& 0 \quad (N\to\infty)
\eeqa
and, as $\alpha>0$, we see that \eqref{c2} is satisfied almost surely as $N\to\infty$, with $B=1$ and $\mu=\nu=0$.

A similar analysis can be done for $\Delta x(\Delta X)/\Delta X$ where we would fine \eqref{ratioDelta} almost surely as $N\to\infty$.

Of course, determining that a large deviation principle holds would require more analysis; but results similar to those above should follow from our explicit Algorithm \ref{leftanchor} along with properties of the variables $\bs\chi,\,\bs y$, and in particular the central limit theorem, as is done in the case of hard rods (see e.g.~\cite{spohn2012large}).

%
Having argued that the conditions should apply in soliton gases, we are now ready to derive, from our results, the kinetic equation of soliton gases.

\subsection{The soliton gas kinetic equation}\label{ssectghd}

The most interesting application of our theorem is to the kinetic equation /  GHD equation for the KdV soliton gas \cite{suret2024soliton}
\beq\label{ghd}
	\p_t \rho(\chi,x,t) + \p_x \big(v^{\rm eff}(\chi,x,t) \rho(\chi,x,t)\big) = 0
\eeq
where the effective velocity solves the linear integral equation
\beq\label{veff}
	v^{\rm eff}(\chi,x,t)
	=
	4\chi^2
	+
	\int_{\chi_*}^\infty \dd \chi'\,\rho(\chi',x,t) \varphi(\chi,\chi')
	(v^{\rm eff}(\chi',x,t)-v^{\rm eff}(\chi,x,t))
\eeq
and $\varphi(\chi,\chi') = \frc1{\chi}\log\frc{|\chi-\chi'|}{\chi+\chi'}$.

We now provide a non-rigorous derivation of \eqref{ghd}, \eqref{veff} for the Euler-scale weak limit of the empirical density
\beq\label{rhowlim}
	\rho(\chi,x,t) = \wlimu{N\to\infty}\frc1N
	\sum_{i=1}^N \delta(\chi-\chi_i)\delta(x-x_i^{(Nt)}/N)
\eeq
for quasi-particle positions $x_i^{(Nt)}$ evolved to time $Nt$ -- here and below we omit their dependence on $\bs\chi,\,\bs y$. For this purpose, we roughly follow the steps of \cite{doyon2024new}, which used the SBE \eqref{bethet}:
 \beq\label{betheDeltat}
	y_i +4\chi_i^2 t = x_i^{(t)}+ \frc12 \sum_{j=1\atop j\neq i}^N \sgn(x_i^{(t)}-x_j^{(t)})
	\varphi_{ij}
\eeq

\medskip
\noindent {\em Derivation.}
Dividing by $N$, changing $t\to Nt$ and denoting $\b x_i^{(t)} = x_i^{(Nt)}/N$,
\beq\label{betheDeltat2}
	y_i/N +4\chi_i^2 t = \b x_i^{(t)}+ \frc1{2N} \sum_{j=1\atop j\neq i}^N \sgn(\b x_i^{(t)}-\b x_j^{(t)})\,
	\varphi(\chi_i,\chi_j).
\eeq
Taking the time derivative, we assume that we can write
\beq\label{dotveff}
	\dot{\b x}_i^{(t)} = v^{\rm eff}(\chi_i,\b x_i^{(t)},t)
\eeq
for all $i$'s and for some function $v^{\rm eff}(\chi,x,t)$:
\beqa
	4\chi_i^2 &=& \dot {\b x}_i^{(t)}+ \frc1{2N} \sum_{j=1\atop j\neq i}^N (\dot {\b x}_i^{(t)}-\dot{\b x}_j^{(t)})\delta(\b x_i^{(t)}-\b x_j^{(t)})\,
	\varphi(\chi_i,\chi_j)\n
	&=& v^{\rm eff}(\chi_i,\b x_i^{(t)},t)\n
	&& +\, \frc12 \int_{-\infty}^\infty \dd x'\int_{\chi_*}^\infty\dd \chi'\,\rho(\chi',x',t)
	(v^{\rm eff}(\chi_i,\b x_i^{(t)},t)-v^{\rm eff}(\chi',x',t))\delta(\b x_i^{(t)}-x')\,
	\varphi(\chi_i,\chi')\n
	&=& v^{\rm eff}(\chi_i,\b x_i^{(t)},t)+ \frc12 \int_{\chi_*}^\infty\dd \chi'\,\rho(\chi',\b x_i^{(t)},t)
	(v^{\rm eff}(\chi_i,\b x_i^{(t)},t)-v^{\rm eff}(\chi',\b x_i^{(t)},t))\,
	\varphi(\chi_i,\chi')\no
\eeqa
which, taken for generic $\chi_i\to \chi$ and $\b x_i^{(t)}\to x$, is \eqref{veff}. Taking the time derivative of \eqref{rhowlim} and using again \eqref{dotveff}, we obtain \eqref{ghd}.
\eproof

\section{Proofs}\label{secproofs}

\subsection{Representations of the tau function}\label{ssecttaurep}

We show Lemma \ref{lemtaudecomp}.

\proof
Recall that $\Mat(N)$ is the space of real $N$ by $N$ matrices. For $s\in\dbra 1,N\dket$, we also denote by $\Mat(s)$ the space of real $|s|$ by $|s|$ matrices with the convention that indices run in the set $s$, that is $[A_{ij}]_{i\in s,j\in s}\in \Mat(s)$. For convenience we also denote for $i\in\dbra 1,N\dket$
\beq
	\b\Psi_i(x,t) = \Psi_i(x,t)^{-1}
\eeq
where $\Psi(x,t)$ is the diagonal matrix defined in \eqref{Psiomega}.

Recall the matrix $\omega$ from \eqref{Psiomega}. 
Using the fact that it is a Cauchy-like matrix, its determinant can be evaluated as (see \eqref{Sp})
\beq\label{detomega}
	\det\omega
	=
	\prod_{j<i} S_{ij}^2=S_{\dbra 1,N\dket}.
\eeq
It also has inverse
\beq\label{invomega}
	\omega^{-1} = \gamma\omega\gamma,\quad
	\gamma = \diag(\gamma_i)_{i\in\dbra 1,N\dket},\quad
	 \gamma_i = \prod_{k\neq i} S_{ik}^{-1}.
\eeq
For every $s\subseteq\dbra1,N\dket$ with $\b s = \dbra 1,N\dket\setminus s$, and every diagonal matrix $D\in\Mat(N)$, we define
\beq\label{gammaPsis}
	\gamma_s = \diag\Big(\prod_{k\in s,\,k\neq i}S_{ik}^{-1}\Big)_{i\in s}
	\in\Mat(s),
	\quad
	D_s = \diag(D_i\chi_s(i) + \chi_{\b s}(i))_{i\in\dbra 1,N\dket}\in
	\Mat(N)
\eeq
and similarly for $\b\Psi_s$, where $\chi_s$ is the indicator function of the set $s$. For a matrix $A\in\Mat(N)$, we write its restriction onto the set $s$ as
\beq
	A_{|s} = [A_{ij}]_{i\in s,j\in s}
	\in\Mat(s).
\eeq
Using \eqref{invomega} we have
\beq\label{omegainvs}
	(\omega_{|s})^{-1} = \gamma_s \omega_{|s}\gamma_s.
\eeq

Let $s\subseteq\dbra1,N\dket$. By expanding the following determinant in powers of the independent variables $\Psi_i^2$ ($i\in s$), let us define the coefficients $c_p^s$, rational functions of $\chi_i$'s, as
\beq\label{defc}
	\det(\Psi^2_{|s} + \omega_{|s}) = \sum_{p\subseteq s}
	c_{p}^{s}
	\prod_{i\in p}\Psi_i^2.
\eeq
We note that
\beq\label{cinit}
	c_s^s = 1,\quad c_\emptyset^s = \det\big(\omega|_s\big).
\eeq

Again let $s\subseteq\dbra1,N\dket$, and let $r\subseteq s,\,\b r = s\setminus r$. Writing
\beq
	\b\Psi^2_{|s} = (\b\Psi^2_{|s})_r (\b\Psi^2_{|s})_{\b r}
\eeq
and using $\det((\b\Psi_{|s}^2)_{\b r}) = \det(\b\Psi_{|\b r}^2)$ and $\det(\Psi_{|s}^2) = \det(\Psi_{|r}^2)\det(\Psi_{|\b r}^2)$, we have
\beqa
	\det(\Psi^2_{|s} + \omega_{|s}) &=&
	\det\Big(\b\Psi^2_{|s} + (\omega_{|s})^{-1}\Big)\det(\omega_{|s})\det(\Psi^2_{|s})
	\n &=&
	\det\Big((\b\Psi^2_{|s})_{ r} + (\Psi_{|s})_{\b r}(\omega_{|s})^{-1}(\Psi_{|s})_{\b r}\Big)\det(\omega_{|s})\det(\Psi^2_{|r}).
\eeqa
Therefore, setting $\Psi_i=0\,\forall \,i\in \b r$ we have
\beqa
	\lim_{\Psi_i=0\,\forall \,i\in \b r}\det(\Psi^2_{|s} + \omega_{|s})
	&=&
	\det\Big(\b\Psi^2_{| r} + (\omega_{|s})^{-1}_{\quad| r}\Big)\det(\omega_{|s})\det(\Psi^2_{| r})\n
	&=&
	\det\Big(\Psi^2_{|  r} + ((\omega_{|s})^{-1}_{\quad| r})^{-1}\Big)\det(\omega_{|s})\det((\omega_{|s})^{-1}_{\quad| r}).
\eeqa
Using \eqref{omegainvs} we now evaluate
\beq\label{ghsa}
	(\omega_{|s})^{-1}_{\quad | r} =
	(\gamma_s\omega_{|s}\gamma_s)_{| r}
	=
	\gamma_{s| r}\,\omega_{| r}\,\gamma_{s|{ r}}
\eeq
from which we get
\beq
	((\omega_{|s})^{-1}_{\quad| r})^{-1} 
	=
	\gamma_{s|{r}}^{-1}\,\gamma_{ r}\omega_{| r}\gamma_{r}\,
	\gamma_{s|r}^{-1}.
\eeq
Therefore
\beq
	\lim_{\Psi_i=0\,\forall \,i\in \b r}\det(\Psi^2_{|s} + \omega_{|s})
	=
	\det\Big(\gamma_{s| r}^{2}\gamma_{ r}^{-2} \Psi^2_{| r} + \omega_{| r}\Big)\det(\omega_{|s})\det((\omega_{|s})^{-1}_{\quad|r})
	\det(\gamma_{s| r}^{-2}\gamma_{ r}^2)
\eeq
and, using \eqref{defc}
\beq\label{mrel}
	\sum_{p\subseteq r}c_{p}^{s}
	\prod_{i\in p}\Psi_i^2
	= \sum_{p\subseteq r}
	c_p^{ r} \prod_{i\in p} (\gamma_{s| r}^{2}\gamma_{ r}^{-2})_i
	\det(\omega_{|s})\det((\omega_{|s})^{-1}_{\quad| r})
	\det(\gamma_{s| r}^{-2}\gamma_{ r}^2)
	\,\prod_{i\in p}\Psi_i^2.
\eeq
Using \eqref{ghsa}, \eqref{detomega} and \eqref{gammaPsis}  we find
\beqa
	\gamma_{s|r}^{2}\gamma_{r}^{-2} &=&
	\diag\Big(
	\prod_{k\in \b r}S_{ik}^{-2}
	\Big)_{i\in r}
	\\
	\det(\gamma_{s| r}^{-2}\gamma_{ r}^{2}) &=&
	\prod_{i\in r,\,k\in \b r}S_{ik}^{2}
	\\
	\det((\omega_{|s})^{-1}_{\quad | r}) &=&
	\prod_{i\in r, k\in s\atop i\neq k}S_{ik}^{-2}
	\prod_{i,j\in  r\atop i<j} S_{ij}^2
	\\
	\det(\omega_{|s}) &=&
	\prod_{i,j\in s\atop i<j} S_{ij}^2.
\eeqa
Using these expressions, \eqref{mrel} gives the following recursion relation, valid for all $p\subseteq  r\subseteq s\subseteq\dbra 1,N\dket$:
\beq\label{crecur}
	c_{p}^{s}
	= c_p^{ r}
	\prod_{i\in  r\setminus p,\,j\in s\setminus  r} S_{ij}^{2}
	\prod_{i,j\in s\setminus  r\atop i<j} S_{ij}^2.
\eeq
The case $p=r$, along with the first relation of \eqref{cinit}, gives
\beq\label{csol}
	c_p^s = \prod_{i,j\in s\setminus  p\atop i<j} S_{ij}^2 = S_{s\setminus p},
\eeq
which one can check is fulfils \eqref{crecur} for all $p\subseteq  r\subseteq s\subseteq\dbra 1,N\dket$ and  the second relation in \eqref{cinit}.

Recall the equivalence relation \eqref{equiv}. According to this, setting $\b s = \dbra 1,N\dket\setminus s$, we have
\beq
	\tau = \det(\Psi^2 + \omega) \equiv S_s^{-1}\det(\Psi^2 + \omega)\det\b\Psi_{\b s}^2
	=
	\sum_{p\subseteq s,\,q\subseteq\b s}
	S_s^{-1}c_{p\cup (\b s\setminus q)}^{\dbra 1,N\dket}
	\prod_{i\in p}\Psi_i^2 \prod_{j\in q}\b\Psi_j^2.
\eeq
Along with \eqref{csol}, this shows \eqref{taudecomp} with \eqref{Sp}. 

Choose a subset $s\subseteq\dbra1,N\dket$ and consider the representation \eqref{taudecomp}. Write (see Eq.~\eqref{xist})
\beq
	x_i^{s} = y_i + \frc12\sum_{j\neq i} \sgn_s(j)\varphi_{ij}.
\eeq
Then, setting $r=p\cup q$ in \eqref{taudecomp} and recalling \eqref{Psiomega}, \eqref{ay} and \eqref{varphi}, we have
\beqa
	\tau_s(x,0) &=&
	\sum_{p\subseteq s,\,q\subseteq \b s}
	\frc{S_{(s\setminus p)\cup q}}{S_s}
	\exp \Big[\sum_{i\in r} 2\chi_i\sgn_s(i)(x-y_i)\Big]
	\prod_{i\in r,\,j\in\dbra1,N\dket\atop i\neq j}|S_{ij}|^{\sgn_s(i)}
	\n
	&=&
	\sum_{p\subseteq s,\,q\subseteq \b s}
	\frc{S_{(s\setminus p)\cup q}}{S_s}
	\exp \Big[\sum_{i\in r} 2\chi_i\sgn_s(i)(x-x_i^s)\Big]
	\prod_{i\in r,\,j\in\dbra1,N\dket\atop i\neq j}|S_{ij}|^{\sgn_s(i)(1+\sgn_s(j))}
	\n
	&=&
	\sum_{p\subseteq s,\,q\subseteq \b s}
	\frc{S_{(s\setminus p)\cup q}}{S_s}
	\exp \Big[\sum_{i\in r} 2\chi_i\sgn_s(i)(x-x_i^s)\Big]
	\prod_{i\in r,\,j\in s\atop i\neq j}|S_{ij}|^{2\sgn_s(i)}.
\eeqa
Denoting $s_1= p,\,s_2 = s\setminus p,\,s_3=q,\,s_4=\b s\setminus q$, and $\prod_{i\in s_a,\,j\in s_b\atop i\neq j}|S_{ij}| = A_{ab}$, we have
\beq
	\prod_{i\in r,\,j\in s\atop i\neq j}|S_{ij}|^{2\sgn_s(i)}
	=A_{11}^2A_{12}^2A_{31}^{-2} A_{32}^{-2}
\eeq
and
\beq
	S_{(s\setminus p)\cup q}
	=
	\prod_{i,j\in (s\setminus p)\cup q\atop i\neq j}
	|S_{ij}|
	= A_{22}A_{23}A_{32}A_{33} = A_{22}A_{32}^2A_{33}
\eeq
giving
\beq
	S_{(s\setminus p)\cup q}
	\prod_{i\in r,\,j\in s\atop i\neq j}|S_{ij}|^{2\sgn_s(i)}
	= A_{11} A_{12}^2 A_{22}\ A_{11} A_{33} A_{31}^{-2}
	= S_s \prod_{i,j\in r\atop i\neq j}
	|S_{ij}|^{\sgn_s(i)\sgn_s(j)}.
\eeq
Thus we find \eqref{tauresult}. Eq.~\eqref{tauresult2} is obtained simply by absorbing the second exponential factor within the definition of $x_i^{s,r}$.
\eproof

\subsection{Semiclassical Bethe equations}\label{ssectbetheproof}

In this subsection we prove Theorem \ref{theosbe} and Lemma \ref{lemrelationbethe}.

First, in order to prove the lemma, we start with an intermediate lemma.

\begin{lemma}\label{lembetheintermediate} Let $N\in\N,\,\bs X\in\R^N,\,x_*\in\R$ and assume that $|X_i-x_*|\geq \varep\,\forall \,i$. There exists a solution $\bs x\in\Rne^N$ to the system of equations
\beq\label{xXinterm}
	x_i = X_i - \sgn(X_i-x_*)\sum_{j:x_j\in [x_*,x_i)}\varphi_{ij}
\eeq
with
\beq\label{signproperty}
	\sgn(X_i-x_*) = \sgn(x_i-x_*)\quad \forall\;i\in\dbra 1,N\dket.
\eeq
A solution is given by
\beq\label{solulemma}
	(x_i)_{i\in s_+} = {\tt Rightward}(s_+,(X_i)_{i\in s_+}),\quad
	(x_i)_{i\in s_-} = {\tt Leftward}(s_-,(X_i)_{i\in s_-})
\eeq
where
\beq
	s_\pm = \{i\in\dbra 1,N\dket:X_i\gtrless x_*\}.
\eeq
It has the following properties: for all $i\in\dbra 1,N\dket$,
\beq\label{xixstartlemma}
	|x_i-x_*|\geq -\sum_{j:x_j\in[x_*,x_i)} \varphi_{ij}+\varep
\eeq
and
\beq\label{xixlemma}
	|x_i-a|\geq -\sum_{j:x_j\in[a,x_i)} \varphi_{ij} \quad \forall\  a\in[x_*,x_i].
\eeq
\end{lemma}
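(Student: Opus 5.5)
The two groups $s_+$ and $s_-$ decouple, so I will treat $s_+$ in detail; $s_-$ follows by the reflection $x\mapsto 2x_*-x$, which exchanges {\tt Rightward} and {\tt Leftward}. For $i\in s_+$ the target equation \eqref{xXinterm} reads $x_i = X_i-\sum_{j:x_j\in[x_*,x_i)}\varphi_{ij}$. The plan is to show that any output of ${\tt Rightward}(\bs\chi_{s_+},\bs X_{s_+})$ satisfies this and has $x_i\geq X_i\geq x_*+\varep$. Here the set of $j$ with $x_j\in[x_*,x_i)$ is exactly the set of indices extracted before $i$.

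The key step is monotonicity of the extracted values. Write $l_1,l_2,\ldots$ for the successive choices and $z_i^{(k)}=X_i-\sum_{m<k}\varphi_{il_m}$. Since $\varphi<0$, each $z_i^{(k)}$ is non-decreasing in $k$. Hence $x_{l_{k+1}}=\min_i z_i^{(k+1)}\geq \min_i z_i^{(k)}=x_{l_k}$.

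The inequality is in fact strict. We have $x_{l_{k+1}}=z^{(k+1)}_{l_{k+1}} = z^{(k)}_{l_{k+1}}-\varphi_{l_{k+1}l_k}> z^{(k)}_{l_{k+1}}\geq x_{l_k}$. So the extracted positions are strictly increasing, and the set of $j\in s_+$ with $x_j<x_i$ is exactly the set extracted before $i$. This gives \eqref{xXinterm} on $s_+$, restricted to $j\in s_+$. It also gives $\bs x_{s_+}\in\Rne$, and $x_i\geq X_i\geq x_*+\varep$, hence \eqref{signproperty}.

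Mirroring this, $s_-$ gives values $x_i\leq X_i\leq x_*-\varep$. Therefore no $j\in s_-$ lies in $[x_*,x_i)$ for $i\in s_+$, and vice versa, so the two halves combine and $\bs x\in\Rne^N$. Property \eqref{xixstartlemma} is then immediate: $x_i-x_* = X_i-x_* - \sum\varphi_{ij}\geq \varep-\sum_{j:x_j\in[x_*,x_i)}\varphi_{ij}$.

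For \eqref{xixlemma}, take $i\in s_+$ and $a\in[x_*,x_i]$, and let $j_0$ be the first index extracted with $x_{j_0}\geq a$. If none precedes $i$, then $j_0=i$ and the sum is empty, so the claim is trivial. Otherwise, at the step $k_0$ where $j_0$ was chosen, minimality gives $z_i^{(k_0)}\geq x_{j_0}\geq a$. Then
\[
x_i = z_i^{(k_0)}-\sum_{j\text{ extracted at steps }\geq k_0,\ \text{before } i}\varphi_{ij},
\]
and the indices in this sum are exactly those with $x_j\in[a,x_i)$, by monotonicity. This yields $x_i-a\geq-\sum_{j:x_j\in[a,x_i)}\varphi_{ij}$.

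The main obstacle is careful bookkeeping of ties. With a non-unique argmin, the strict-increase argument still works because $\varphi_{ij}<0$ strictly, so I expect the argument to go through for any branch of choices.
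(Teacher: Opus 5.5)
Your proof is correct and follows essentially the paper's argument. The paper also runs {\tt Rightward} on $s_+$ and {\tt Leftward} on $s_-$, and uses $\varphi_{ij}<0$ to get $x_i\geq X_i\geq x_*+\varep$ (resp.\ $x_i\leq X_i\leq x_*-\varep$). It then derives both the strict ordering and \eqref{xixlemma} from the minimality bound $x_{i_{n'}}-x_{i_n}\geq -\sum_{m=n}^{n'-1}\varphi_{i_{n'}i_m}$, which is exactly your inequality $z_i^{(k_0)}\geq x_{j_0}$. Your reflection $x\mapsto 2x_*-x$ for $s_-$ and your explicit choice of $j_0$ only spell out what the paper leaves implicit, and since uniqueness of the argmin is never used, the argument holds for every branch of choices.
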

\proof 
Execute the following algorithm.
\bi
\item[\tt1-] If $s_+$ is non-empty, set $i_1^+ \in \argmin_{i\in s_+}X_i$ and let
\beq
	x_{i_1^+} = X_{i_1^+}.
\eeq
Otherwise, pass to {\tt 3-}.
\item[\tt2-]
Sequentially for $n=2,3,\ldots$, and as long as $\{i_1^+,\ldots,i_{n-1}^+\}\neq s_+$: set
\beq\label{inpmin}
	i_n^+ \in \argmin_{i\in s_+\setminus\{i_1^+,\ldots,i_{n-1}^+\}}
	\Big(X_i - \sum_{m=1}^{n-1} \varphi_{ii_m^+}\Big),\quad
	x_{i_n^+} = X_{i_n^+}- \sum_{m=1}^{n-1} \varphi_{i_n^+i_m^+}.
\eeq
\item[\tt3-] If $s_-$ is non-empty, set $i_1^- \in \argmax_{i\in s_-}X_i$ and let
\beq
	x_{i_1^-} = X_{i_1^-}.
\eeq
Otherwise, stop here.
\item[\tt4-] Sequentially for $n=2,3,\ldots$, and as long as $\{i_1^-,\ldots,i_{n-1}^-\}\neq s_-$: set
\beq
	i_n^- = \argmax_{i\in s_-\setminus\{i_1^-,\ldots,i_{n-1}^-\}}
	\Big(X_i + \sum_{m=1}^{n-1} \varphi_{ii_m^-}\Big),\quad
	x_{i_n^-} = X_{i_n^-}+ \sum_{m=1}^{n-1} \varphi_{i_n^-i_m^-}.
\eeq
\ei
Because $\varphi_{ij}<0$, then $x_i\geq X_i\geq x_*+\varep$ for all $i\in s_+$, and $x_i \leq X_{i}\leq x_*-\varep$ for all $i\in s_-$. Hence \eqref{signproperty} holds. Also, for all $n,n'\in\N$ with $n'> n$,
\beq\label{hduai1}
	x_{i_{n'}^+}-x_{i_n^+} = \Big(X_{i_{n'}^+}-\sum_{m=1}^{n-1}\varphi_{i_{n'}^+i_m^+}\Big) -
	\Big(X_{i_{n}^+}-\sum_{m=1}^{n-1}\varphi_{i_{n}^+i_m^+}\Big)
	 - \sum_{m=n}^{n'-1} \varphi_{i_{n'}^+i_m^+}
	 \geq - \sum_{m=n}^{n'-1} \varphi_{i_{n'}^+i_m^+}>0
\eeq
because we take the minimum in \eqref{inpmin}.
Similarly
\beq\label{hduai2}
	x_{i_{n'}^-}-x_{i_n^-} = \Big(X_{i_{n'}^-}+\sum_{m=1}^{n-1}\varphi_{i_{n'}^-i_m^-}\Big) -
	\Big(X_{i_{n}^-}+\sum_{m=1}^{n-1}\varphi_{i_{n}^-i_m^-}\Big)
	 +\sum_{m=n}^{n'-1} \varphi_{i_{n'}^-i_m^-}
	 \leq \sum_{m=n}^{n'-1} \varphi_{i_{n'}^-i_m^-}<0.
\eeq
These orderings show that
\beq
	x_{i_n^+} = X_{i_n^+} - \sum_{m=1}^{n-1} \varphi_{i_n^+i_m^+}
	= X_{i_n^+} - \sgn(X_{i_n^+}-x_*)\sum_{j:x_j\in[x_*,x_{i_n^+})} \varphi_{i_n^+j}
\eeq
and
\beq
	x_{i_n^-} = X_{i_n^-} + \sum_{m=1}^{n-1} \varphi_{i_n^-i_m^-}
	= X_{i_n^-} - \sgn(X_{i_n^-}-x_*)\sum_{j:x_j\in(x_{i_n^-},x_*]} \varphi_{i_n^-j}
\eeq
This shows \eqref{xXinterm} and \eqref{xixstartlemma}. The inequality
\eqref{xixlemma} holds thanks to \eqref{hduai1} and \eqref{hduai2}, and \eqref{solulemma} holds because the algorithm above is exactly Algorithm \ref{rightward} followed by Algorithm \ref{leftward}.
%
\eproof

\medskip
\noindent {\em Proof of Lemma \ref{lemrelationbethe}}:
Consider $(1)\Rightarrow(2)$. The statement of Point (1) of Lemma \ref{lemrelationbethe} implies the assumptions of Lemma \ref{lembetheintermediate}. Consider the resulting $\bs x$ given by \eqref{solulemma}. By \eqref{xixstartlemma} this satisfies \eqref{xixstarpoint2} from Point (2). Note that by \eqref{specialrelationmain} and the fact that $x_i\neq x_*$ for every $i\in\dbra 1,N\dket$ we have
\beq\label{specialrelation}
	\sgn(x_i-x_*)\sum_{j:x_j\in[x_*,x_i)}\varphi_{ij}
	=
	\frc12\sum_{j=1\atop j\neq i}^N\Big(
	\sgn(x_j-x_*)\varphi_{ij} - \sgn(x_j-x_i)\varphi_{ij}\Big).
\eeq
Therefore, by \eqref{xXinterm}, \eqref{signproperty}, and \eqref{eqxymainXstr} (as part of the statement of Point (1)), we have, for all $i$,
\beqa
	x_i &=& X_i - \frc12\sum_{j=1\atop j\neq i}^N\Big(
	\sgn(x_j-x_*)\varphi_{ij} - \sgn(x_j-x_i)\varphi_{ij}\Big)
	\n
	&=& y_i +\frc12 \sum_{j=1\atop j\neq i}^N\sgn(x_j-x_i)\varphi_{ij},
	\label{hghgf}
\eeqa
whence \eqref{bethemain} holds as $\sgn(x_j-x_i) = - \sgn(x_i-x_j)$ because $\bs x\in\Rne^N$. This completes the proof of $(1)\Rightarrow(2)$, and Lemma \ref{lembetheintermediate} also gives \eqref{relationconstruction} and \eqref{Xabound}. Consider $(2)\Rightarrow(1)$. Set
\beq
	s_i = \sgn(x_i-x_*).
\eeq
By \eqref{xixstarpoint2}, for all $i$,
\beqa
	s_i &=& 
	\sgn\Big(x_i-x_*+\sgn(x_i-x_*)\sum_{j:x_j\in[x_*,x_i)} \varphi_{ij} \Big).
\eeqa
Further, by \eqref{bethemain} and \eqref{specialrelation},
\beqa
	\lefteqn{x_i-x_*+\sgn(x_i-x_*)\sum_{j:x_j\in[x_*,x_i)} \varphi_{ij} }&&\n
	&=& y_i-x_*+\frc12\sum_{j=1\atop j\neq i}^N \sgn(x_j-x_i)\varphi_{ij}+\sgn(x_i-x_*)\sum_{j:x_j\in[x_*,x_i)} \varphi_{ij}\n
	&=& y_i-x_*+\frc12\sum_{j=1\atop j\neq i}^N s_j\varphi_{ij}.
	\label{pqoeiu}
\eeqa
Therefore, we may set
\beq\label{dhahus}
	X_i = y_i + \frc12\sum_{j=1\atop j\neq i}^N s_j\varphi_{ij}
\eeq
and we find that
\beq
	\sgn(X_i-x_*) = s_i
\eeq
which implies
\beq
	X_i = y_i + \frc12\sum_{j=1\atop j\neq i}^N \sgn(X_j-x_*)\varphi_{ij}.
\eeq
Hence $\bs X\in\R^N$ solves \eqref{eqxymainXstr}. Further, again by \eqref{xixstarpoint2},
\beq
	\Big|x_i-x_*+\sgn(x_i-x_*)\sum_{j:x_j\in[x_*,x_i)} \varphi_{ij}\Big| \geq \varep
\eeq
and therefore using the result \eqref{pqoeiu}, we have
\beq
	|X_i -x_*|=
	\Big|y_i-x_*+\sum_{j=1\atop j\neq i}^N s_j\varphi_{ij}\Big|
	\geq\varep.
\eeq
This implies \eqref{xixstarpoint1} and $\sgn(X_j-x_*) = \sgn_\varep(X_j-x_*)$ and completes the proof of $(2)\Rightarrow(1)$.

Combining \eqref{dhahus} with \eqref{specialrelation} and \eqref{bethemain}, we obtain \eqref{Xxrelation}. Finally, we observe that if $\bs X$ satisfies Point 1, then the constructed $\bs x=\hat{\mathcal E}_{\bs \chi,x_*}(\bs X)$ satisfies $x_i =  X_i - \sgn(x_i-x_*)\sum_{j:x_j\in [x_*,x_i)}\varphi_{ij}$ (by \eqref{xXinterm} and \eqref{signproperty}). Hence $\bs X' = \hat{\mathcal C}_{\bs \chi,x_*}(\bs x)$ is given by $X_i' = x_i + \sgn(x_i-x_*)\sum_{j:x_j\in[x_*,x_i)}\varphi_{ij} = X_i$.
\eproof

\medskip

We note that if $\bs x$ satisfies Point (2), then we do not necessarily have $\hat{\mathcal E}_{\bs\chi,x_*}(\hat{\mathcal C}_{\bs\chi,x_*}(\bs x)) = \bs x$; this is due in part to the possibly many elements of $\argmin,\,\argmax$ used in Algorithms \ref{rightward}, \ref{leftward}, but most importantly to the possible existence of other solutions not given by Algorithms \ref{rightward}, \ref{leftward}.

This ambiguity is partially lifted by the set of relations \eqref{Xabound}, which is stronger than \eqref{xixstarpoint2}. That is, it guarantees that Algorithms \ref{rightward} (towards the right) and \ref{leftward} (towards the left) give the full set of possibilities. The lack of uniqueness is then only due to the possible multiplicity of $\argmin,\,\argmax$, and this can be avoided with the non-coincidence condition \eqref{noncoincidence}.

\begin{lemma}[Uniqueness of Rightward and Leftward solutions] \label{lemuniq} Let $N\in\N,\bs\chi\in\Rpu^N,\,\bs y\in\R^N$. Let $\bs x\in\Rne$ be a solution to \eqref{bethemain}. Choose $x_*\in\R\setminus \bs x$ and let $s_\pm = \{i:x_i\gtrless x_*\}$ and $\bs X = \h{\mathcal C}_{\bs\chi,x_*}(\bs x)$ (Eq.~\eqref{Xxrelation}).
If
\beq\label{condr}
	x_i-a\geq- \sum_{j:x_j\in[a,x_i)} \varphi_{ij} \quad \forall\; a\in\R,\,i\in\dbra 1,N\dket:\ x_*\leq a< x_i,
\eeq
then
\beq\label{right}
	(x_i)_{i\in s_+} \in {\tt Rightward}(s_+,(X_i)_{i\in s_+}).
\eeq
In particular, if $x_*<x^-(\bs\chi,\bs y)$, then
\beq\label{specialright}
	\bs x \in {\tt Rightward}(\dbra 1,N\dket,\bs X^-(\bs\chi,\bs y)).
\eeq
If
\beq\label{condl}
	x_i-a\leq\sum_{j:x_j\in[a,x_i)} \varphi_{ij} \quad \forall\; a\in\R,\,i\in\dbra 1,N\dket:\ x_i<a\leq x_*,
\eeq
then
\beq\label{left}
	(x_i)_{i\in s_-} \in {\tt Leftward}(s_-,(X_i)_{i\in s_-}).
\eeq
In particular, if $x_*>x^+(\bs\chi,\bs y)$, then
\beq\label{specialleft}
	\bs x \in {\tt Leftward}(\dbra 1,N\dket,\bs X^+(\bs\chi,\bs y)).
\eeq
If the non-coincidence condition \eqref{noncoincidence} holds, then in \eqref{right}, \eqref{specialright}, \eqref{left}, \eqref{specialleft}, the set is composed of a single element.
\end{lemma}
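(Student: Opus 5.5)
The plan is to prove \eqref{right} by induction along the points of $s_+$ ordered from left to right, showing that this order is an admissible sequence of choices in Algorithm \ref{rightward} applied to $(X_i)_{i\in s_+}$. The statement \eqref{left} follows by the mirror argument (reflect $x\mapsto -x$, which exchanges the two algorithms and the conditions \eqref{condr}, \eqref{condl}).

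Fix $x_*\notin\bs x$ and label $s_+=\{i_1,\dots,i_m\}$ so that $x_*<x_{i_1}<\dots<x_{i_m}$. By \eqref{Xxrelation}, for $i\in s_+$ we have $X_i=x_i+\sum_{j:x_j\in[x_*,x_i)}\varphi_{ij}$. After $n-1$ steps in which $i_1,\dots,i_{n-1}$ have been removed, the algorithm's quantity for a remaining $i=i_k$, $k\ge n$, is
\[
z_i^{(n)}=X_i-\sum_{m'=1}^{n-1}\varphi_{ii_{m'}}=x_i+\sum_{j:x_j\in[x_{i_n},x_i)}\varphi_{ij}.
\]
For $i=i_n$ the last sum is empty, so $z_{i_n}^{(n)}=x_{i_n}$. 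For $k>n$, apply \eqref{condr} with $a=x_{i_n}\in[x_*,x_i)$ to get $z_i^{(n)}\ge x_{i_n}=z_{i_n}^{(n)}$. Hence $i_n$ is a minimiser at step $n$, and the value the algorithm assigns is exactly $x_{i_n}$. Induction over $n$ then gives $(x_i)_{i\in s_+}\in{\tt Rightward}(s_+,(X_i)_{i\in s_+})$.

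For \eqref{specialright}, take $x_*<x^-(\bs\chi,\bs y)$. Then $s_+=\dbra1,N\dket$ by \eqref{sbeboundinv}, and Lemma \ref{lemnat}, Eq.~\eqref{valuesextremepositionsnat}, gives $\bs X=\bs X^-(\bs\chi,\bs y)$. It remains to check that \eqref{condr} holds whenever $a<x_*$, which requires some care. In that case $\{j:x_j\in[a,x_i)\}=\{j:x_j\in[x_*,x_i)\}$, and the required inequality follows from the regularity of $x_*$ (Lemma \ref{extregular}, points outside the core are regular). So in the "in particular" statement, \eqref{condr} is understood, and applied, for $a\ge x_*$, which is all the induction uses.

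Uniqueness under \eqref{noncoincidence} is the only delicate step. Every $z_i^{(n)}$ has the form $y_i+\frac12\sum_{j\ne i}\sigma_j\varphi_{ij}$ with $\sigma_j=-1$ for $j$ already removed or lying left of $x_*$, and $\sigma_j=+1$ otherwise. The key point is that at a given step all the $z_i^{(n)}$ share the same sign vector $\bs\sigma$, namely $\sgn$ of the current removed set. Non-coincidence then forces the minimum at every step to be unique, so the algorithm has only one branch and returns a single vector. I would verify this common-$\bs\sigma$ bookkeeping carefully, including the initial contraction from the points of $s_-$, since that is the one place where the argument could fail. The other steps are direct consequences of \eqref{condr} and Lemma \ref{lemnat}.
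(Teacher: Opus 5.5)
Your proposal is correct and follows the paper's proof. Both argue by induction along the ordered points of $s_+$ (the paper does the mirror case $s_-$), use \eqref{condr} at $a=x_{i_n}$ to show the next point is an admissible minimiser assigned the value $x_{i_n}$, and get uniqueness from non-coincidence because all compared quantities share one sign vector $\bs\sigma$. Your deviations are minor and valid: you obtain $\bs X=\bs X^-(\bs\chi,\bs y)$ directly from Eq.~\eqref{valuesextremepositionsnat} rather than via Lemmas \ref{lemrelationbethe} and \ref{lemdstr}, and your digression on \eqref{condr} for $a<x_*$ is unnecessary, since both the hypothesis and the induction involve only $a\geq x_*$.
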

\proof We provide the proof of \eqref{left}, \eqref{specialleft}; for \eqref{right}, \eqref{specialright} the proof is similar. Is $s_-=\emptyset$ there there is nothing to prove, so we assume $s_-\neq\emptyset$. Let $(i_1,\ldots,i_N)$ be the sequence of indices such that $x_{i_1}>x_{i_2}>\ldots>x_{i_N}$. For all $k\in\dbra 0,N\dket$, define
\beq
	X_i^{(k)} := y_i
	+ \frc12
	\sum_{p=1\atop i_p\neq i}^k  \varphi_{i i_p}
	-\frc12\sum_{p=k+1\atop i_p\neq i}^N \varphi_{ii_p}.
\eeq
Then by \eqref{bethemain} and the stated ordering,
\beq\label{xikinterm}
	x_{i_{k}}
	= X_{i_{k}}^{(k-1)},\quad k=1,2,\ldots,N.
\eeq
Let $p\in\dbra 1,N\dket$ be the smallest integer such that $x_{i_p}<x_*$; it exists because $s_-\neq\emptyset$. Then $i\in s_-$ implies $i=i_k$ for some $k\geq p$. By \eqref{Xxrelation} and \eqref{xikinterm} we have, for all $k\geq p$,
\beq\label{Xikinterm}
	X_{i_k} = x_{i_k} - \sum_{q=p}^{k-1} \varphi_{i_ki_{q}} = X_{i_k}^{(p-1)}.
\eeq

Let $\bs x' \in {\tt Leftward}(s_-,(X_i)_{i\in s_-})$. We now proceed by recursion. Assume that the sequential ${\tt Leftward}$ construction, Algorithm \ref{leftward}, gives for the first $n$ steps the values $x_{i_k}'=x_{i_k}$, $k=p,\ldots,p+n-1$. We allow the case $n=0$, where no value is obtained, in which case the assumption is trivially satisfied (the starting point of the recursion). Then by \eqref{inpmaxmain} and \eqref{Xikinterm} we wish to evaluate
\beq
	\argmax_{k\in\dbra p+n,N\dket}
	\Big(X_{i_k} + \sum_{l=p}^{p+n-1} \varphi_{i_k i_l}\Big)=
	\argmax_{k\in\dbra p+n,N\dket}
	\Big(X_{i_k}^{(p+n-1)}\Big).
\eeq
For every $m\geq p+n+1$, by the assumption \eqref{condl} of the Lemma with $i=i_{m}$ and $a=x_{i_{p+n}}$,
\beq
	x_{i_m} - x_{i_{p+n}} \leq \sum_{k=p+n}^{m-1} \varphi_{i_mi_k}.
\eeq
Therefore we have, for every $m\geq p+n+1$,
\beq
	0 \geq x_{i_m}-x_{i_{p+n}} -\sum_{k=p+n}^{m-1} \varphi_{i_mi_k}
	=
	X_{i_m}^{(m-1)} - X_{i_{p+n}}^{(p+n-1)}
	-\sum_{k=p+n}^{m-1} \varphi_{i_mi_k}
	=
	X_{i_m}^{(p+n-1)} - X_{i_{p+n}}^{(p+n-1)}.
\eeq
Hence $i_{p+n}\in \argmax_{k\in\dbra p+n,N\dket}\Big(X_{i_k}^{(p+n-1)}\Big)$, so there is a choice such that $x_{i_k}'=x_{i_k}$, $k=p,\ldots,p+n$. This completes the recursion.

If $x_*>x^+(\bs\chi,\bs y)$ then by \eqref{sbeboundinv}, Eq.~\eqref{condl} implies \eqref{xixstarpoint2} for some $\varep>0$, and by Lemma \ref{lemrelationbethe}, $\bs X$ solves \eqref{eqxymainXstr}. As by Lemma \ref{lemdstr} this solution is unique and given by $\bs X = \bs X^+(\bs\chi,\bs y)$, the result \eqref{specialleft} follows.

Finally, under \eqref{noncoincidence}, then all $\argmin$, $\argmax$ are singletons.
\eproof

\medskip
\noindent {\em Proof of Theorem \ref{theosbe}}: By Lemma \ref{lemdstr}, we note that all $x_* \leq x_-(\bs\chi,\bs y)-\varep$ satisfy Point (1) of Lemma \ref{lemrelationbethe}. Therefore Point (2) holds. For these choices of $x_*$, we have $X_i = X_i^-(\bs\chi,\bs y)$ (Lemma \ref{lemdstr}), and  we may make the construction \eqref{relationconstruction} with $s_+ = \dbra 1,N\dket$, $s_-=\emptyset$. This gives $\bs x = {\tt Rightward}(\dbra 1,N\dket,\bs X^-(\bs\chi,\bs y))$. Further, this satisfies \eqref{Xabound}, which implies \eqref{distexist}. This construction gives $\min_{i\in\dbra 1,N\dket} x_i = \min_{i\in\dbra 1,N\dket} X_i^-(\bs\chi,\bs y) = x_i^-$, hence \eqref{miniexist} holds. A similar argument holds for $x_* = x_+(\bs\chi,\bs y)+\varep$, given the statements for the ${\tt Leftward}$ solution. Finally, Lemma \ref{lemuniq} provides the uniqueness. \eproof

\medskip

\subsection{Local form of the multi-soliton field}\label{proofmain}

We now prove simultaneously Theorems \ref{theo} and \ref{theomg}. For Theorem \ref{theo}, $D_i$'s are contracted distances and $\sigma_i=\sgn(x_i-x_*)$. For Theorem \ref{theomg}, we have soliton displacements $d_i$, and below we simply set
\beq
	D_i=|d_i|,\quad \sigma_i = \sgn(d_i)\quad\mbox{(Thm \ref{theomg})}.
\eeq
We are given the sequence \eqref{seq} for Theorem \ref{theo}, resp.~\eqref{seqmg} for Theorem \ref{theomg}. We have the related sequence $\b N\mapsto \bs D$, resp.~$\b N\mapsto \bs d$, which satisfy the assumptions of the theorem, within which the quantities $A,\,B,\,C,\,D,\,U,\,\chi_*,\,\alpha,\,\beta,\,\mu,\,\nu,\,\sigma$ are also fixed. For Theorem \ref{theo}, there is no Assumption \ref{assac}, so we set
\beq
	U=\sigma=0\quad\mbox{(Thm \ref{theo})}. 
\eeq

\proof
We consider the centred form, Lemma \ref{lemtaumain}, resp.~Lemma \ref{lemtaumainmg}, for the time-evolved tau function,
\beq
	\tau_{\bs\chi,\bs y^{(t)}} = \tau_{\bs\chi,\bs y + \bs v t}
\eeq
as per \eqref{yt}, choosing
\beq
	\Delta \bs y = \bs v t.
\eeq
We denote $\tau^\# = \tau^\#_{\bs\chi,\bs D,\bs v t,x_*}$, resp.~$\tau^\# = \tau^\#_{\bs\chi,\bs d,\bs v t,x_*}$, and hence we have
\beq\label{centred0}
	\tau^\#(x) = \sum_{r\subseteq \dbra1, N\dket} 
	\exp\Big[-\sum_{i\in r} 2\chi_i\big(
	D_i
	+\sigma_i(x_*-x+v_it + e_i)\big)+ R_r\Big]
\eeq
where 
\beq\label{eR0}
	e_i := \lt\{\ba{ll}
	0& \mbox{(Thm \ref{theo})}\z
	\frc12\sum_{ j\neq i} \big(\sgn(d_j) - \sgn_{\varep}(d_j)\big)\varphi_{ij}
	& \mbox{(Thm \ref{theomg})}
	\ea\rt.,\quad
	R_r := \sum_{i,j\in r,\,i\neq j}
	\sigma_i
	\sigma_j \log|S_{i,j}|.
\eeq

\subsubsection{Bounding $\chi_i e_i$}
We first bound the quantity $\chi_i e_i$ in \eqref{eR0}, which is involved in  Theorem \ref{theomg} (but not Theorem \ref{theo}). We make use of \eqref{sgnepgen}:
\beqa
	|\chi_i e_i|
	&\leq& \frc12\sum_{j\neq i}
	|\sgn(d_j) - \sgn_{\varep}(d_j)|\,|\chi_i\varphi_{ij}|\n
	&\leq& \max_{j\neq k}(|\chi_k\varphi_{kj}|)\;
	|\{j: |d_j|<\varep\}|.
\eeqa
From the 2nd and 3rd bounds in \eqref{c1} we have
\beqa
	\max_{j\neq k}|\chi_k\varphi_{kj}|
	&=&
	\max_{j\neq k}\log\frc{\chi_j+\chi_k}{|\chi_j-\chi_k|}
	\n
	&\leq&
	\log \max_{j\neq k}(\chi_j+\chi_k)-\log\min_{j\neq k}|\chi_j-\chi_k|\n
	&\leq&
	\log (2C\b N^\beta) +A\b N^{\frc\alpha2}.
\eeqa
Therefore, by \eqref{c3},
\beqa
	|\chi_i e_i| &\leq& (A\b N^{\frc{\alpha}2}+\beta\log \b N + \log (2C))U\b N^{\frc{\sigma}2}\n
	&\leq& V\b N^{\frc{\alpha+\sigma}2}
	\label{ebound}
\eeqa
for all $\b N\geq N_*^{(1)}(\alpha)$ where
\beq\label{Nalpha}
	N_*^{(1)}(\alpha):= \mbox{maximal value of $N_*\geq 1$ s.t.}\ N_*^{\frc\alpha 2} = \log N_*\ \mbox{if it exists, and  1 otherwise}
\eeq
(the first case occurs for $\alpha$ near enough to 0), and where
\beq\label{Vvalue}
	V := (A+\beta+\log(2C))U.
\eeq

\subsubsection{Bounding $R_r$} This is a crucial part of the proof. Using the first two bounds of \eqref{c1}, we bound $R_r$, Eq.~\eqref{eR0} as follows: there exists $N_*^{(4)}(A,C,\chi_*,\alpha,\beta)\geq N_*^{(1)}(\alpha)$ such that for all $\b N>N_*^{(4)}(A,C,\chi_*,\alpha,\beta)$ and $r\subseteq\dbra 1,N\dket$,
\beq\label{Rbound}
	R_r \leq (3+2A) \,|r|\, \b N^{\frc\alpha2}.
\eeq

Let
\beq\label{mdef}
	m= \re^{-A\b N^{\alpha/2}},
\eeq
and define the distribution (which are interpreted as signed densities)
\beq
	f(\chi) := \sum_{i\in r\,:\,\sigma_i=1} \delta(\chi-\chi_i)- \sum_{i\in r\,:\,\sigma_i=-1} \delta(\chi-\chi_i).
\eeq
Then, for all $0<\ep\leq m/2$ (which is not an optimal choice but sufficient for our purposes)
\beq
	R_r = \int_{\R_+^2,|\chi'- \chi|>\ep}\dd\chi\dd\chi'\,
	f(\chi)
	f(\chi')
	\log \Big|\frc{\chi-\chi'}{\chi+\chi'}\Big|.
\eeq
Note how, because of the second bound in \eqref{c1}, the condition of excluding the ``fat diagonal'' from the integration region only takes away the diagonal part, as it should from \eqref{eR0}.

Choose $\ep'\leq \min(\ep/4,2\chi_*)$ and let us modify the signed densities using regularised delta-functions, defining the compact-support, bounded function
\beq
	\t f(\chi) := \sum_{i\in r\,:\,\sigma_i=1} \delta_{\ep'}(\chi-\chi_i)- \sum_{i\in r\,:\,\sigma_i=-1} \delta_{\ep'}(\chi-\chi_i),\quad
\eeq
where
\beq
	\delta_{\ep'}(\chi) = \lt\{\ba{ll} 1/\ep' & (|\chi|\leq\ep'/2)\\
	0 & (|\chi|>\ep'/2).
	\ea\rt.
\eeq
Let us define (keeping its dependence on $r$ implicit)
\beq
	\t R := \int_{\R_+^2,|\chi'- \chi|>\ep}\dd\chi\dd\chi'\,
	\t f(\chi)
	\t f(\chi')
	\log \Big|\frc{\chi-\chi'}{\chi+\chi'}\Big|.
\eeq
Because $\ep'\leq \ep/4$, the supports of the regularised delta functions for non-diagonal terms do not overlap with the excluded fat diagonal, and the supports of diagonal terms lies completely within this excluded region, and because $\ep'/\leq\chi_*$, the supports of all terms stay within $\R^2_+$. Further, because $\ep'\leq m/8$, all supports are pairwise disjoint. See Fig.~\ref{figcons} for a visual guide.
\begin{figure}
\bc\includegraphics[width=4cm,height=4cm]{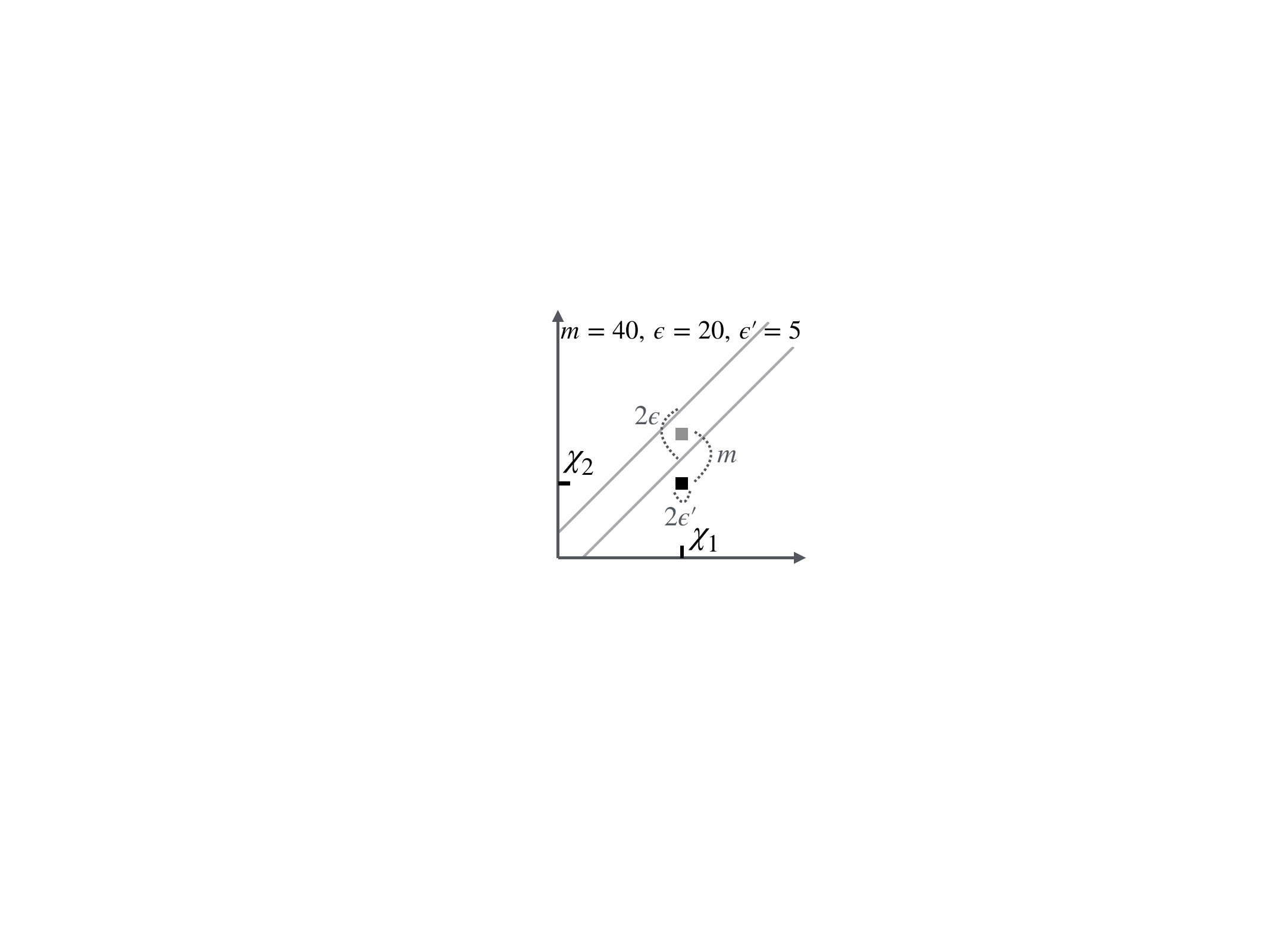}
\ec
\caption{Illustration of the integration of the mollified signed density, $\t R$. The term with $i=1,\;j=2$ is fully included, but that with $i=j=1$ is fully excluded.}
\label{figcons}
\end{figure}

 Hence we have
\beq
	\t R = \sum_{i,j\in r,\,i\neq j}\sigma_i\sigma_j
	\frc1{{\ep'}^2}\int_{\chi_i-\ep'/2}^{\chi_i+\ep'/2}\dd\chi
	\int_{\chi_j-\ep'/2}^{\chi_j+\ep'/2}\dd\chi'\,
	\log \Big|\frc{\chi-\chi'}{\chi+\chi'}\Big|.
\eeq
We evaluate, for all $i\neq j$,
\beqa
	\Delta R_{ij} &:=& \frc1{{\ep'}^2}\int_{\chi_i-\ep'/2}^{\chi_i+\ep'/2}\dd\chi
	\int_{\chi_j-\ep'/2}^{\chi_j+\ep'/2}\dd\chi'\,
	\log \Big|\frc{\chi-\chi'}{\chi+\chi'}\Big|
	-
	\log \Big|\frc{\chi_i-\chi_j}{\chi_i+\chi_j}\Big|
	\n
	&=&
	\frc1{{\ep'}^2}\int_{\chi_i-\ep'/2}^{\chi_i+\ep'/2}\dd\chi
	\int_{\chi_j-\ep'/2}^{\chi_j+\ep'/2}\dd\chi'\,
	\log \Big(\Big|\frc{\chi-\chi'}{\chi+\chi'}\Big| \Big/
	\Big|\frc{\chi_i-\chi_j}{\chi_i+\chi_j}\Big|\Big)
	\n
	&=&
	\frc1{{\ep'}^2}\int_{-\ep'/2}^{\ep'/2}\dd\chi
	\int_{-\ep'/2}^{\ep'/2}\dd\chi'\,
	\Big(\log \Big|1+\frc{\chi-\chi'}{\chi_i-\chi_j}\Big|
	-
	\log \Big|1+\frc{\chi+\chi'}{\chi_i+\chi_j}\Big|\Big)
	\n
	\Rightarrow \ |\Delta R_{ij}| &\leq&
	\log \Big|1+\frc{\ep'}{|\chi_i-\chi_j|}\Big|
	+
	\log \Big|1+\frc{\ep'}{\chi_i+\chi_j}\Big|
	\leq
	2\ep'/m
\eeqa
which is valid for all $\b N\geq N_*^{(2)}$ as follows. We use $\log(1+a)\leq a$ for all $a\geq 0$, and the first two bounds of \eqref{c1}, which give $\chi_i+\chi_j\geq 2\chi_*$ and $|\chi_i-\chi_j|\leq m$, and then we used $2\chi_*\geq m$ for all $\b N\geq \big(A^{-1}|\log (2\chi_*)|\big)^{\frc2\alpha}$. Combining with \eqref{Nalpha} we have the result for all $\b N\geq N_*^{(2)}(A,\alpha)$ with
\beq
	N_*^{(2)}(A,\chi_*,\alpha):=\max\Big\{N_*^{(1)}(\alpha),\big(A^{-1}|\log (2\chi_*)|\big)^{\frc2\alpha}\Big\}.
\eeq
As $R_r- \t R = \sum_{i,j\in r\atop i\neq j} \Delta R_{ij}$ we find
\beq\label{RtR}
	|R_r - \t R| = \Big|\sum_{i,j\in r\atop i\neq j} \Delta R_{ij}\Big| \leq 2|r|^2\ep'/m.
\eeq

Now we evaluate the contribution of the fat diagonal for the expression with mollified densities. The following is a non-negative quantity by the second line:
\beqa
	0\leq \t Q &:=&
	-\int_{\R_+^2,|\chi'- \chi|\leq \ep}\dd\chi\dd\chi'\,
	\t f(\chi)
	\t f(\chi')
	\log \Big|\frc{\chi-\chi'}{\chi+\chi'}\Big|\n
	&=&
	-\sum_{i\in r}
	\frc1{{\ep'}^2}\int_{\chi_i-\ep'/2}^{\chi_i+\ep'/2}\dd\chi
	\int_{\chi_i-\ep'/2}^{\chi_i+\ep'/2}\dd\chi'\,
	\log \Big|\frc{\chi-\chi'}{\chi+\chi'}\Big|
	\quad \mbox{(because $\ep'\leq \ep/4\leq m/8$)}
	\n
	&=&
	\sum_{i\in r}
	\frc1{{\ep'}^2}\int_{-\ep'/2}^{+\ep'/2}\dd\chi
	\int_{-\ep'/2}^{+\ep'/2}\dd\chi'\,
	\Big(\log (2\chi_i+\chi+\chi') - \log|\chi-\chi'|\Big)
	\n
	&\leq &
	\frc{|r|}{{\ep'}^2}\int_{-\ep'/2}^{+\ep'/2}\dd\chi
	\int_{-\ep'/2}^{+\ep'/2}\dd\chi'\,
	\Big(\log (2C\b N^\beta+\ep') - \log|\chi-\chi'|\Big)
	\quad \mbox{(by the third bound in \eqref{c1})}\n
	&=&
	|r|\Big(-\log(\ep')-\frc12 + \log(2C\b N^\beta+\ep')\Big)
	\quad \mbox{(by integration).}
\eeqa
We may take, for all $\b N\geq (A^{-1}\log 8)^{2/\alpha}$,
\beq\label{epprime}
	\ep' = \re^{-2 A \b N^{\alpha/2}}
\eeq
so we now have the condition $\b N\geq N_*^{(3)}(A,\alpha)$ with
\beq
	N_*^{(3)}(A,\chi_*,\alpha):=\max\Big\{N_*^{(2)}(A,\chi_*,\alpha),\big(A^{-1}\log 8\big)^{\frc2\alpha}\Big\}.
\eeq
Then
\beq\label{Qt}
	0\leq \t Q\leq |r|\Big(2A \b N^{\alpha/2} + \log(2C\b N^\beta+\ep')-\frc12\Big).
\eeq

We have
\beqa
	\t R  - \t Q &=& \int_{\R_+^2}\dd\chi\dd\chi'\,
	\t f(\chi)
	\t f(\chi')
	\log \Big|\frc{\chi-\chi'}{\chi+\chi'}\Big|\n
	&=& \frc12
	\int_{\R^2}\dd\chi\dd\chi'\,
	\t F(\chi)
	\t F(\chi')
	\log \big|\chi-\chi'\big|.
\eeqa
where $\t F(\chi) = \t f(\chi),\;\t F(-\chi) = -\t f(\chi)$ for all $\chi\geq 0$. Note that
\beq
	\int \dd\chi\,\t F(\chi)=0.
\eeq
Consider a sequence of real Schwartz functions $\t F_n,\,n=1,2,3,\ldots$, uniformly bounded by a Schwartz function, such that $\int \dd\chi\,\t F_n(\chi)=0$ and $\lim_n\t F_n(\chi) = \t F(\chi)$ pointwise (it is easy to construct such a sequence). Then by the bounded convergence theorem
\beq
	\t R - \t Q = \lim_{n\to\infty}
	\frc12
	\int_{\R^2}\dd\chi\dd\chi'\,
	\t F_n(\chi)
	\t F_n(\chi')
	\log \big|\chi-\chi'\big|.
\eeq
Applying Fourier transform,
\beq
	\int_{\R^2}\dd\chi\dd\chi'\,
	\t F_n(\chi)
	\t F_n(\chi')
	\log \big|\chi-\chi'\big|
	=\int \dd p\,|\widehat{\t F_n}(p)|^2
	\widehat{\log|\cdot|}(p)
\eeq
where $\widehat{\log|\cdot|}(p)$ is a tempered distribution, see e.g.~\cite{Titchmarsh1948}. As $\widehat{\t F_n}(0)=\int\dd\chi\,\t F_n(\chi)=0$ we can restrict to its regular part $-\pi/|p|$, giving
\beq
	\int_{\R^2}\dd\chi\dd\chi'\,
	\t F_n(\chi)
	\t F_n(\chi')
	\log \big|\chi-\chi'\big|
	=
	-\pi\int \dd p\,\frc{|\widehat{\t F_n}(p)|^2}{|p|}\leq 0\quad \forall\;n.
\eeq
Thus we conclude that
\beq\label{RQ}
	\t R \leq \t Q.
\eeq

In order to see more ``fundamentally'' what is going on, recall that $-\log|\vec \chi-\vec\chi'|$, for $\vec\chi,\vec\chi'\in\R^2$, is the (un-normalised) Greens function for the positive semi-definite Laplace operator $-\vec\nabla^2$. Thus, $-(\t R-\t Q)$ is the potential energy of a neutral charge distribution, and this is known to be strictly positive. Mathematically, if $G(\vec\chi)$ is a smooth, compactly supported function such that $\int \dd^2\chi\,G(\vec\chi) = 0$, then it lies in the domain of $(-\nabla^2)^{-\frc12}$, and from this and the Green's function property we can show that $-\int_{\R^2} \dd^2\chi\int_{\R^2}\dd^2\chi'\,
	G(\vec\chi)G(\vec\chi')\log|\vec\chi-\vec\chi'|\geq 0$. We can then take such $G(\chi_1,\chi_2)$ that approximates $\t F(\chi_1)\delta(\chi_2)$ as closely as we wish, and get the result \eqref{RQ}. We see that {\em the logarithmic structure of the two-body scattering shift is crucial}, and the main aspect is that its negative must be the kernel of a positive-definite operator.

Hence, we have found, using $\ep'/m= \re^{-A \b N^{\alpha/2}}$ from \eqref{mdef}, \eqref{epprime} and $|r|\leq N\leq \b N$:
\beqa
	R_r &\leq& \t R + 2|r|\b N\re^{-A \b N^{\alpha/2}}\quad \mbox{(from \eqref{RtR})}\n
	&\leq& \t Q + 2|r|\b N\re^{-A \b N^{\alpha/2}}
	\quad\mbox{(from \eqref{RQ})}\n
	&\leq&
	|r|\Big( 2\b N\re^{-A \b N^{\alpha/2}}
	+
	2A \b N^{\alpha/2} + \log(2C\b N^\beta+\re^{-2A\b N^{\alpha/2}})-\frc12\Big)
	\quad\mbox{(from \eqref{Qt})}\n
	&\leq& (3+2A) |r| \b N^{\alpha/2}
\eeqa
for all $\b N\geq N_*^{(4)}(A,C,\chi_*,\alpha,\beta)$ with
\beqa
	\label{Nalpha4}
	\lefteqn{N_*^{(4)}(A,C,\chi_*,\alpha,\beta)} &&\\
	&:=&\max\{N_*^{(3)}(A,\chi_*,\alpha),\n
	&&\quad \mbox{maximal value of $N_*\geq 1$ if it exists s.t.}\ N_*\re^{-A N_*^{\alpha/2}} = N_*^{\alpha/2},\n
	&& \quad \mbox{maximal value of $N_*\geq 1$ if it exists s.t.}\ 
	\log(2CN_*^\beta+\re^{-2AN_*^{\alpha/2}})-\frc12 = N_*^{\alpha/2}\}.
	\no
\eeqa
This shows \eqref{Rbound}.

\subsubsection{Bounding the sum over terms where some solitons lie outside the local cell.} We now use the two bounds proven above in order to study the centred form of the tau function, and show that solitons far from $x_*$ can be discarded. Let us write
\beq\label{tautaur}
	\tau^\# = \sum_{r\subseteq \dbra1, N\dket} \tau_r
\eeq
with
\beq\label{taur}
	\tau_r(x,t):=\exp\Big[-\sum_{i\in r} 2\chi_i\big(
	D_i
	+\sigma_i(x_*-x+v_i t + e_i)\big)+ R_r\Big].
\eeq
For every $w\geq 0$, let
\beq\label{sw}
	s_w := \{i\in \dbra 1,N\dket:D_i\leq w/2\},\quad \b s_w := \dbra 1,N\dket\setminus s_w.
\eeq

Note that using \eqref{ebound}, we have
\beq
	\Big|-\sum_{i\in r} 2\chi_i\sigma_ie_i\Big| \leq
	2V\,|r|\,\b N^{\frc{\alpha+\sigma} 2}.
\eeq
Combining this with \eqref{Rbound} and using $\b N^{-\sigma}\leq 1$ for $\b N\in\N$, we have
\beq
	\Bigg|-\sum_{i\in r} 2\chi_i
	\sigma_ie_i+ R_r\Bigg|\leq
	W\,|r|\,\b N^{\frc{\alpha+\sigma}2}
\eeq
where
\beq\label{Wdef}
	W:=3 + 2A+ 2V = 3 + 2(1+2U)A+2U\log(2C) + 2U\beta 
\eeq
(recall \eqref{Vvalue} for $V$). For notational convenience we write
\beq\label{MW}
	M:=W\b N^{\frc{\alpha+\sigma}2}.
\eeq
This quantity encodes all interaction effects to the large-$N$ behaviour. We now have
\beqa
	\sum_{r\not\subseteq s_L} \tau_r &\leq& \sum_{r\not\subseteq s_L}
	\exp-\sum_{i\in r} 2\chi_i
	\big(
	D_i
	+\sigma_i(x_*-x+v_i t)\big)\ 
	\exp |r| M\n
	&\leq&
	\sum_{r\not\subseteq s_L}
	\exp\sum_{i\in r} \big(-2\chi_i
	D_i+2\chi_i |x_*-x|+8\chi_i^3 |t|+M\big)\n
	&\leq&
	\sum_{j\in\b  s_L}
	\sum_{r\subseteq\dbra 1,N\dket\setminus \{j\} }
	\exp\sum_{i\in r\cup \{j\}} (-2\chi_i
	D_i+2\chi_i |x_*-x|+8\chi_i^3 |t|+M)\n
	&\leq&
	\sum_{j\in\b s_L}
	\exp(-2\chi_j
	D_j+2\chi_j|x_*-x|+8\chi_i^3 |t|+M)\n &&\times\;
	\sum_{r\subseteq\dbra 1,N\dket }
	\exp\sum_{i\in r} (-2\chi_i
	D_i+2\chi_i |x_*-x|+8\chi_i^3 |t|+M)\n
	&=&
	\sum_{j\in\b s_L}
	\exp(-2\chi_j
	D_j+2\chi_j|x_*-x|+8\chi_i^3 |t|+M)\n &&\times\;
	\prod_{i=1}^{\t N}
	\Big(1+\exp(-2\chi_i
	D_i+2\chi_i |x_*-x|+8\chi_i^3 |t|+M)\Big).
	\label{taucalcu}
\eeqa
We see that in the first line the expression corresponds to the terms, for the difference of tau-functions on $\dbra 1,N\dket$ and on $s_L$, that we would have for non-interacting solitons, up to the correction $e^{|r|M}$. The rest is a set of manipulations that bounds these terms, accounting for this correction. In passing from the 1st to the 2nd line we used $-\sigma_i(x_*-x)\leq |x_*-x|$, $-\sigma_iv_it\leq v_i|t|$ and $v_i = 4\chi_i^2$. In passing from the 2nd to the 3rd line, we used the fact that if $r\subseteq\dbra1,N\dket$ but $r\not\subseteq s_L$ then there must be $j\in\b s_L$ such that $j\in r$; we redefine $r$ by the replacement $r\rightarrow r\cup\{j\}$. Then, we overcount: we add a possibly non-zero number of positive terms, because it is possible to choose $j\in\b s_L,\,r\subseteq\dbra 1,N\dket\setminus \{j\}$ and $j'\in\b s_L,\,r'\subseteq\dbra 1,N\dket\setminus \{j'\}$ with $j\neq j'$ but $r\cup\{j\}=r'\cup\{j'\}$, thus the inequality. This overcounting, however, is not significant in the large-$N$ limit. In passing from the 3rd to the 4th line, we extracted a factor corresponding to the term $i=j$ in the exponential, and we have added yet new positive terms, those for which $r\subseteq\dbra 1, N\dket$ contains $j$, thus the inequality. This is convenient, as it allows us to write the last equality, where we have executed the sum over $r$, and obtained the an expression akin to a non-interacting tau function.

Now, for any $j\in \b s_L$, recalling \eqref{sw}, and by the 1st and 3rd bound of \eqref{c1} along with \eqref{xregion} and \eqref{tregion},
\beq
	2\chi_j
	D_j-2\chi_j |x_*-x|-8\chi_i^3 |t|-M
	\geq
	\chi_* L
	-
	(10+W)\b N^{\frc{\alpha+\sigma}2}.
\eeq
Hence, with $|\b s_L|\leq N\leq \b N$, the first factor in \eqref{taucalcu} can be bounded by
\beq\label{aors1}
	\sum_{j\in\b s_L}
	\exp(-2\chi_j
	D_j+2\chi_j|x_*-x|+8\chi_i^3|t|+M)
	\leq
	\b N\re^{(10 + W)\b N^{(\alpha+\sigma)/2} - \chi_* L
	}.
\eeq
This factor explicitly accounts for the exponential decay due to the soliton being far.

For the second factor, we now use $2\chi_i |x_*-x|+8\chi_i^3|t|+M\leq (10+W)\b N^{\frc{\alpha+\sigma}2}=:2\chi_* T$, so that
\beqa
	\lefteqn{\prod_{i=1}^{ N}
	\Big(1+\exp(-2\chi_i
	D_i+2\chi_i |x_*-x|+8\chi_i^3|t|+M)\Big)} \n
	&\leq&
	\prod_{i=1}^{\t N}
	\Big(1+\exp(-2\chi_*(
	D_i-T))\Big) \n
	&\leq&
	(2\re^{2\chi_*T})^{|s_{Q}|}
	\prod_{i\in \b s_{Q}}
	\Big(1+\exp-2\chi_*(
	D_i-T)\Big),\quad Q:=B'\b N^{\frc{\alpha+\sigma}2+\mu}
	\label{aors2}
\eeqa
where we choose
\beq\label{Bp}
	B' = \max\Big\{B,\frc{10+W}{2\chi_*}\Big\}.
\eeq
In the second inequality, we separated factors for index $i$ with $D_i\leq Q$, for which we used $1+\re^{-2\chi_*(|d_i|-T)}\leq 2\re^{2\chi_*T}$, from factors with $D_i>Q$. From \eqref{c2} and our choice \eqref{Bp}, we have $|s_Q|\leq QDN^\nu = B'DN^{\frc{\alpha+\sigma}2+\mu+\nu}$, hence
\beq\label{aors3}
	(2\re^{2\chi_*T})^{|s_{Q}|} \leq
	\exp\Big[
	\log(2)B'D\b N^{\frc{\alpha+\sigma}2+\mu+\nu}
	+
	(10+W)B'D\b N^{\alpha+\sigma+\mu+\nu}
	\Big].
\eeq
For all $\mu\geq 0$, we have $Q\geq T$ for all $\b N\in\N$ by our choice \eqref{Bp}. Then,
\beqa
	\prod_{i\in \b s_{Q}}
	\Big(1+\exp-2\chi_*(
	D_i-T)\Big)&\leq&
	\prod_{d=Q}^\infty
	\Big(1+\exp-2\chi_* (d-T)\Big)^{|s_{d+1}\setminus s_d|}\n
	&\leq&
	\prod_{d=Q}^\infty
	\Big(1+\exp-2\chi_* (d-T)\Big)^{(d+1)D\b N^\nu}\n
	&\leq&
	\prod_{d=Q-T}^\infty
	\Big(1+\exp-2\chi_* d\Big)^{(d+T+1)D\b N^\nu}\n
	&\leq &
	\re^{KD\b N^\nu + J(T+1)D\b N^\nu}
	\label{aors4}
\eeqa
where
\beq\label{JK}
	K=\log\prod_{d=0}^\infty
	\Big(1+\exp-2\chi_* d\Big)^{d},\quad
	J = \log \prod_{d=0}^\infty
	\Big(1+\exp-2\chi_* d\Big) = \log(-1,e^{-2\chi_*})_\infty
\eeq
are convergent expressions, functions of $\chi_*$ only.

Putting together \eqref{aors1}, \eqref{aors2}, \eqref{aors3}, \eqref{aors4}, we find
\beqa
	\sum_{r\not\subseteq s_L} \tau_r
	&\leq&
	\b N\exp\Big(
	(10+W)\b N^{\frc{\alpha+\sigma}2}+
	\log2\,B'D\b N^{\frc{\alpha+\sigma}2+\mu+\nu} + (10+W)B'D\b N^{\alpha+\sigma+\mu+\nu}\n
	&&\qquad\qquad
	+ (K+J)D\b N^\nu + \frc{J(10+W)}{2\chi_*}D\b N^{\frc{\alpha+\sigma}2 + \nu}
	- \chi_* L
	\Big).
\eeqa
Within the exponential, the leading power of $\b N$ with positive coefficient is $\alpha+\sigma+\mu+\nu$, and for all powers $\b N^a$ that appear, we have $\b N^a\leq\b  N^{\alpha+\sigma+\mu+\nu}$. Hence, with \eqref{Wdef}, \eqref{Bp} and \eqref{JK}, we may set
\beq\label{Edef}
	E:= (10+W)\Big(1+B'D+\frc{JD}{2\chi_*}\Big)
	 + \log 2\,B' D+(K+J)D,
\eeq
and we find that there exists $N_*\in\N$, which depend on all parameters $A,B,C,D,U,\,\chi_* ,\alpha,\beta,\mu,\nu,\sigma$, such that for all $\b N\geq N_*$ and all $L>0$,
\beq\label{resbound}
	\sum_{r\not\subseteq s_L} \tau_r
	\leq
	\b N\exp\Big(E\b N^{\alpha+\sigma+\mu+\nu} - \chi_*L\Big).
\eeq

\subsubsection{Bounding the difference}

We consider a local projection $\mathcal L$, Definition \ref{defimap}, resp.~Definition \ref{defimapmg}. Associated to it, there is $\b N\mapsto s$ such that
\beq
	\{i:D_i\leq L/2\}\subseteq s \subseteq \dbra 1,N\dket
	\quad \forall \; \b N\in\N.
\eeq
We also define $\b s = \dbra 1,N\dket \setminus s$. Let us write
\beq\label{uup}
	u(x,t) = u_{\bs \chi,\bs y+\bs v t},\quad u' = u_{\bs\chi_s,\bs y^{(s)}+\bs v_s t}.
\eeq
By Corollary \ref{coroltauproj}, resp.~Corollary \ref{coroltauprojmg}, we may write
\beq\label{up}
	u'(x,t)
	= 2\p_x^2 \log \tau'(x,y)
\eeq
with
\beq
	\tau'(x,t) = 
	\sum_{r\subseteq s} 
	\exp\Big[-\sum_{i\in r} 2\chi_i\big(
	|d_i|
	+\sgn(d_i)(x_*-x+v_i t+e_i)\big)+ R_r\Big]
\eeq
where
\beq
	e_i := \lt\{\ba{ll}
	0 & \mbox{(Thm \ref{theo})}\z
	\frc12\sum_{j=1\atop j\neq i}^N \big(\sgn(d_j) - \sgn_{\varep}(d_j)\big)\varphi_{ij}
	& \mbox{(Thm \ref{theomg})},
	\ea\rt.
	\quad
	R_r := \sum_{i,j\in r,\,i\neq j}
	\sigma_i
	\sigma_j \log|S_{i,j}|.
\eeq
Hence with \eqref{taur} we find
\beq\label{resproj}
	\tau'(x,t) = \sum_{r\subseteq s} \tau_r(x,t).
\eeq
We also denote
\beq
	\tau(x,t) =
	\sum_{r\subseteq \dbra 1,N\dket} \tau_r(x,t)
\eeq
and we have  (by Lemma \ref{lemtaumain}, resp.~Lemma \ref{lemtaumainmg})
\beq\label{ubtau}
	u(x,t) =2 \p_x^2 \log \tau(x,t).
\eeq
By \eqref{resbound} and $s_L\subseteq s$ (see \eqref{sw}), we then obtain
\beq
	\big|\tau - 
	\tau'\big|
	\leq
	\sum_{r\not\subseteq s}\tau_r
	\leq
	\sum_{r\not\subseteq s_L}\tau_r
	\leq \b N\re^{E\b N^{\alpha+\sigma+\mu+\nu}-\chi_* L}.
\eeq

We are looking to bound
\beq
	|\p_x^n u - \p_x^n u'|
\eeq
for $n=0,1,2,\ldots$. From \eqref{ubtau}, \eqref{up} and Faa Di Bruno's formula, we have
\beq
	\p_x^n u = 2(n+2)!\sum_{m_1,\ldots,m_{n+2}\geq 0\atop \sum_j jm_j= n+2}
	(-1)^{\sum_j m_j-1}\Big(\sum_j m_j-1\Big)!\prod_{j=1}^{n+2} \frc{(\p_x^j \tau)^{m_j}}{j!^{m_j}m_j!\tau^{m_j}}
\eeq
and similarly for $u'$ in terms of $\tau'$. Therefore, there exists $D_n'>0$ such that
\beq\label{dsupr}
	|\p_x^n u-\p_x^n u'| \leq
	D_n'
	\sum_{m_1,\ldots,m_{n+2}\geq 0\atop \sum_j jm_j= n+2}
	\Bigg|\prod_{j=1}^{n+2} \frc{(\p_x^j \tau)^{m_j}}{\tau^{m_j}}
	-
	\prod_{j=1}^{n+2} \frc{(\p_x^j \tau')^{m_j}}{(\tau')^{m_j}}\Bigg|
	\quad \forall \; n=0,1,2,\ldots
\eeq
We bound the difference of derivatives of the two tau functions, using $|r|\leq \t N\leq N$, as
\beqa
	\big|\p_x^j \tau - \p_x^j \tau'\big|
	&=&\Big|\p_x^j \sum_{r\not\subseteq s}
	\tau_r\Big|\n
	&\leq&
	\sum_{r\not\subseteq s}
	\Big|\sum_{i\in r}2\chi_i\sgn(d_i)\Big|^j\,
	\tau_r\n
	&\leq&
	(2C\b N^{\beta+1})^j\sum_{r\not\subseteq s}
	\tau_r\n
	&\leq&
	(2C)^j\b N^{j\beta+j+1}\re ^{E\b N^{\alpha+\sigma+\mu+\nu}-\chi_*L}
	\label{boundertau}
\eeqa
where we used the last bound in \eqref{c1} as well as \eqref{resbound}.  By a similar calculation,
\beq
	\big|\p_x^j\tau\big|
	\leq
	\Big|\sum_{r\in \dbra1,N\dket}\p_x^j\tau_r\Big|
	\leq
	(2C\b N^{\beta+1})^j\sum_{r\in \dbra1,N\dket}\tau_r
\eeq
therefore, with a similar calculation for $\tau'$,
\beq\label{dtt}
	\Big|\frc{\p_x^j\tau}{\tau}\Big|,\ 
	\Big|\frc{\p_x^j\tau'}{\tau'}\Big|
	\leq
	(2C\b N^{\beta+1})^j.
\eeq
Noting that $\tau,\tau'\geq 1$, using \eqref{boundertau} and \eqref{dtt} we now bound
\beq
	\Big|\frc{\p_x^j \tau}{\tau} - \frc{\p_x^j\tau'}{\tau'}\Big|
	\leq
	\Big|\frc{\p_x^j \tau}{\tau}\frc{\tau'-\tau}{\tau'}\Big|
	+
	\Big|\frc{\p_x^j \tau-\p_x^j\tau'}{\tau'}\Big|
	\leq
	2(2C)^j\b N^{j\beta+j+1}
	\re^{E\b N^{\alpha+\mu+\nu}-\chi_*L}.
\eeq
This bound, along with \eqref{dtt} again, gives us
\beqa
	\Big|\frc{(\p_x^j \tau)^m}{\tau^m} - \frc{(\p_x^j\tau')^m}{(\tau')^m}\Big|
	&=& \Big|\frc{\p_x^j \tau}{\tau} - \frc{\p_x^j\tau'}{\tau'}\Big|
	\,\Big|\sum_{a=0}^m \frc{(\p_x^j \tau)^a}{\tau^a} \frc{(\p_x^j \tau')^{m-a}}{(\tau')^{m-a}}\Big|\n
	&\leq&
	(m+1)(2C\b N^{\beta+1})^{mj}\Big|\frc{\p_x^j \tau}{\tau} - \frc{\p_x^j\tau'}{\tau'}\Big|\n
	&\leq &
	2(m+1)(2C)^{mj+j}\b N^{mj\beta + j\beta + mj + j + 1}
	\re^{E\b N^{\alpha+\mu+\nu}-\chi_*L}.
	\label{hdag}
\eeqa
We have, with \eqref{dtt}, by the telescopic argument
\beqa
	\prod_{j=1}^{n+2} a_j - \prod_{j=1}^{n+2} a_j'
	&=&
	(a_1-a_1') \prod_{j=2}^{n+2} a_j 
	+ a_1' (\prod_{j=2}^{n+2} a_j -\prod_{j=2}^{n+2} a_j') \n
	&=&
	(a_1-a_1') \prod_{j=2}^{n+2} a_j 
	+
	a_1' (a_2-a_2') \prod_{j=3}^{n+2} a_j
	+
	a_1'a_2' (\prod_{j=3}^{n+2} a_j -\prod_{j=3}^{n+2} a_j') \n
	&=&\cdots
\eeqa
 and using $(2C\b N^{\beta+1})^j\geq 1$ for all $j=1,\ldots,n+2$ and for all $\b N\in\N,\,\b N\geq (2C)^{-1/(\beta+1)}$, we find
\beq
	\Bigg|\prod_{j=1}^{n+2} \frc{(\p_x^j \tau)^{m_j}}{\tau^{m_j}}
	-
	\prod_{j=1}^{n+2} \frc{(\p_x^j \tau')^{m_j}}{(\tau')^{m_j}}\Bigg|
	\leq
	(2C\b N^{\beta+1})^{\sum_{j=1}^{n+2}jm_j}
	\sum_{j=1}^{n+2}
	\Big|\frc{(\p_x^j \tau)^{m_j}}{\tau^{m_j}} - \frc{(\p_x^j\tau')^{m_j}}{(\tau')^{m_j}}\Big|.
\eeq
With the condition $m_1,\ldots,m_{n+2}\geq 0$, $\sum_j jm_j= n+2$ we have both $m_j\leq n+2$ and $j\leq n+2$. Then the bound \eqref{hdag} gives
\beq
	\Bigg|\prod_{j=1}^{n+2} \frc{(\p_x^j \tau)^{m_j}}{\tau^{m_j}}
	-
	\prod_{j=1}^{n+2} \frc{(\p_x^j \tau')^{m_j}}{(\tau')^{m_j}}\Bigg|
	\leq
	2(n+2)(n+3)(2C)^{(n+2)(n+4)}
	\b N^{(n+2)(n+4)(\beta+1)+1}
	\re^{E\b N^{\alpha+\mu+\nu}-\chi_*L}.
\eeq
Hence with \eqref{dsupr},
\beq
	|\p_x^n u-\p_x^n u'| \leq D_n'
	2(n+2)^{n+3}
	(n+3)(2C)^{(n+2)(n+4)}
	\b N^{(n+2)(n+4)(\beta+1)+1}
	\re^{E\b N^{\alpha+\mu+\nu}-\chi_*L}.
\eeq
This shows \eqref{proj} with $D_n=D_n'
	2(n+2)^{n+3}
	(n+3)(2C)^{(n+2)(n+4)}$ and $\kappa_n = (n+2)(n+4)(\beta+1)+1$. A similar calculation can be done for time derivatives (giving powers of $\b N^{3\beta}$ instead of $\b N^\beta$), and this completes the proof of Theorem \ref{theo}.
\eproof

\subsection{Fluid-cell mean as fluid-cell projection}\label{ssectprooffc}

In this section we prove Theorem \ref{theofc}.

By Assumption \ref{assimp},
\beq
	\Delta x\leq VN^\omega\Delta X=VRN^{\alpha+2\beta+\mu+2\omega}=\Delta I
\eeq
hence
\beq
	\lim_{N\to\infty} \Delta I L^{-1}=0.
\eeq
So the interval
\beq
	I^\circ := [I_-+\Delta I,I_+-\Delta I]
\eeq
is non-empty and more than a single point for all $N$ large enough.

The following is a re-statement of Corollaries \ref{corolfc}, \ref{corolvanishfc} and \ref{coroboundedfc} which we express here for convenience and will refer to in the proof below.
\begin{lemma}\label{lemproof}
We have
\beq\label{convuprojfcpr}
	\sup_{x\in  I^\circ} \Big|\p_x^n \big(u(x) -u'(x)\big)\Big| = \mathcal O(N^{-\infty}) \quad (N\to\infty)\qquad \forall n\in\Z_+.
\eeq
Further, there exist $a\geq 1,\,\ep>0$ such that
\beq\label{uvanishfcpr}
	\sup_{x\in \R\setminus J} |\re^{\ep \dist(x,I)}\p_x^n u'(x)| = \mathcal O(N^{-\infty})\quad (N\to\infty)\qquad \forall\, n\in\Z_+
\eeq
where
\beq\label{Jfclast}
	J:= [I_--a\Delta I,I_++a\Delta I].
\eeq
Finally, there exists $N_*\in\N$ and $T>0$ such that for every integers $n\geq 0,\,N\geq N_*$,
\beq\label{bound2}
	\sup_{x\in \R}|\p_x^nu(x)|,\,\sup_{x\in \R}|\p_x^nu'(x)|\leq 
	2(n+2)!\,T^{n+2}\,\Biggr(\sum_{m_1,\ldots,m_{n+2}\geq 0\atop \sum_j jm_j= n+2}
	\frc{\Big(\sum_j m_j-1\Big)!}{\prod_{j=1}^{n+2} j!^{m_j}m_j!}\Biggr)\,
	N^{(\alpha+5\beta+\mu+2\omega)(n+2)}.
\eeq
\end{lemma}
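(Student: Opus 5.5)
This lemma is a restatement of Corollaries \ref{corolfc}, \ref{corolvanishfc} and \ref{coroboundedfc} in the notation $u=u_{\bs\chi,\bs y}$, $u'=\mathcal F_I u_{\bs\chi,\bs y}$ of \eqref{uupDeltax}. The plan is therefore to check that the hypotheses of the sequences \eqref{seq3} hold in the setting of Theorem \ref{theofc}, and then quote each corollary.

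\emph{Checking the hypotheses.} The sequence \eqref{seq4} already satisfies Assumption \ref{asssp} and Assumption \ref{assimp} with exponent $\eta=\alpha/2$. The only missing condition is \eqref{intervalled}, i.e.\ $I_-+\Delta I<I_+-\Delta I$. In Case I we have $L=N^\lambda$ with $\lambda>\alpha+2\beta+\mu+2\omega$. In Case II the condition \eqref{case2} gives an even larger $\lambda$. Since $\Delta I=VRN^{\alpha+2\beta+\mu+2\omega}$, this yields $\Delta I/L\to0$, as observed just above. Hence \eqref{intervalled} holds for all $N\ge N_0$ for some $N_0$. Finitely many $N$ do not affect the $\mathcal O(N^{-\infty})$ statements. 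For \eqref{bound2} they are handled by enlarging $N_*$. So for $N\ge N_0$ the data $(\bs\chi,\bs y,I_-,I_+)$ form a sequence of type \eqref{seq3}, with the same assumption data.

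\emph{Quoting the corollaries.} Statement \eqref{convuprojfcpr} is then exactly \eqref{convuprojfc}, with $I^\circ=[I_-+\Delta I,I_+-\Delta I]$. Statement \eqref{uvanishfcpr}, with $J$ as in \eqref{Jfclast}, is \eqref{uvanishfcN}; the constants $a\ge1$, $\ep>0$ are taken from Corollary \ref{corolvanishfc}. The bound on $u'$ in \eqref{bound2} is Corollary \ref{coroboundedfc}. The bound on $u$ itself follows from Corollary \ref{corolbound2} applied to the unprojected sequence, with $\b N=N$ and $t=0\in K$. Its hypotheses, Assumptions \ref{asssp} and \ref{assimp} with $\eta=\alpha/2$, are exactly those of \eqref{seq4}. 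Taking the larger of the two constants $T$ and of the two thresholds $N_*$ gives one common $T$ and $N_*$ for both functions.

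\emph{Main point to verify.} Uniformity over all sequences with fixed assumption data is inherited from the corollaries, whose constants depend only on that data. The one point needing care is that Corollaries \ref{corolvanishfc} and \ref{coroboundedfc} are applied to the projected data $(\bs\chi^{(I)},\bs y^{(I)})$. Lemma \ref{lemstable} guarantees that the assumptions survive projection with unchanged data and the same control parameter $N$, so no new constants enter. Beyond this bookkeeping, no step is a genuine obstacle; the real work sits in the earlier corollaries.
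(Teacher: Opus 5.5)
Your proposal is correct and follows the paper's own argument. The paper likewise treats this lemma as a restatement of Corollaries \ref{corolfc}, \ref{corolvanishfc} and \ref{coroboundedfc}, after noting that $\Delta I\,L^{-1}\to 0$, so that $I^\circ$ is non-degenerate, i.e.\ \eqref{intervalled} holds, for $N$ large.

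Your explicit use of Corollary \ref{corolbound2} (with $\b N=N$ and $t=0$) for the bound on the unprojected $u$ in \eqref{bound2} fills in a step the paper leaves implicit, since Corollary \ref{coroboundedfc} only covers $u'$.
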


\medskip

\noindent{\em Proof of Theorem \ref{theofc}.}
We use $J$ defined in \eqref{Jfclast}. We denote by $K$ the Lipschitz coefficient,
\beq\label{diffF}
	|F(\bs u)-F(\bs u')| \leq K\sum_{k=1}^n |u_k-u_k'|.
\eeq
We set $F_{\rm max} = \sup_{\bs u\in\R^{n+1}}(|F(\bs u)|)<\infty$. By Lemma \ref{lemproof}, and by the definition of the degree Eq.~\eqref{degree}, there exists $P_{\rm max}>0$ such that
\beq\label{supP}
	\sup_{x\in\R} |P(u(x),\p_x u(x),\ldots,\p_x^nu(x))|,\ 
	\sup_{x\in\R} |P(u'(x),\p_x u'(x),\ldots,\p_x^nu'(x))|\leq P_{\rm max}
	N^{(\alpha+5\beta+\mu+2\omega)r}.
\eeq

{\em Case I.} We have
\beqa
	\lefteqn{\Bigg|\int_{I} \dd x\,F(u,\p_x u,\ldots,\p_x^nu)
	-
	\int_{-\infty}^\infty \dd x\,F(u',\p_x u',\ldots,\p_x^nu')
	\Bigg|} &&\n
	&=&\Bigg|\Big(\int_{I^\circ} \dd x
	+
	\int_{I\setminus I^\circ} \dd x
	\Big)\,F(u,\p_x u,\ldots,\p_x^nu)
	-
	\int_{-\infty}^\infty \dd x\,F(u',\p_x u',\ldots,\p_x^nu')
	\Bigg|
	\n
	&=&\Bigg|-\int_{\R\setminus I^\circ} \dd x\,F(u',\p_x u',\ldots,\p_x^nu')
	+
	\int_{I\setminus I^\circ} \dd x
	\,F(u,\p_x u,\ldots,\p_x^nu) + \mathcal O(N^{-\infty})\Bigg|\n
	&\leq & \Bigg|\int_{\R\setminus I^\circ} \dd x\,F(u',\p_x u',\ldots,\p_x^nu')\Bigg|
	+
	2F_{\rm max}\Delta I + \mathcal O(N^{-\infty})\n
	&= & \Bigg|\int_{J\setminus I^\circ} \dd x\,F(u',\p_x u',\ldots,\p_x^nu') + \mathcal O(N^{-\infty})\Bigg|
	+
	2F_{\rm max}\Delta I + \mathcal O(N^{-\infty})
	\n
	&\leq &
	(4+2a)F_{\rm max}\Delta I + \mathcal O(N^{-\infty}).
	\label{proo1}
\eeqa
In the third line (the second equality) we used, by Lemma \ref{lemproof} and with \eqref{diffF},
\beqa
	\lefteqn{\Bigg|\int_{I^\circ} \dd x
	\,(F(u,\p_x u,\ldots,\p_x^nu)
	-
	F(u',\p_x u',\ldots,\p_x^nu')\Bigg|} &&\n
	&\leq&
	\int_{I^\circ} \dd x\,
	|F(u,\p_x u,\ldots,\p_x^nu)
	-
	F(u',\p_x u',\ldots,\p_x^nu')| \n
	&\leq&
	K\int_{I^\circ} \dd x\,
	\sum_{k=0}^n|\p_x^k (u(x)-u'(x))|\n
	&\leq&
	KL
	\sum_{k=0}^n|\sup_{x\in I^\circ} \p_x^k (u(x)-u'(x))|
	= \mathcal O(N^{-\infty})
\eeqa
using \eqref{convuprojfcpr} and \eqref{Lgamma}. In fourth line of \eqref{proo1} we used the fact that $F$ is bounded and that the size of $I\setminus I^\circ$ is $2\Delta I$. In the fifth line, we used Lemma \ref{lemproof} again,
\beqa
	\Bigg|\int_{\R\setminus J} \dd x\,F(u',\p_x u',\ldots,\p_x^nu')\Bigg|
	&\leq&
	\int_{\R\setminus J} \dd x\,|F(u',\p_x u',\ldots,\p_x^nu')|\n
	&\leq&
	K\int_{\R\setminus J} \dd x\,\sum_{k=0}^n
	|\p_x^ku'(x)|\n
	&\leq&
	K\sum_{k=0}^n\sup_{x\in\R\setminus J} |e^{\ep \dist(x,I)}\p_x^ku'(x)|\int_{\R\setminus J} \dd x\,
	e^{-\ep \dist(x,I)}\n
	&=&\mathcal O(N^{-\infty})
\eeqa
as $\int_{\R\setminus J} \dd x\, e^{-\ep \dist(x,I)} = 2\int_0^\infty \dd x\,\re^{-\ep (x+a\Delta I)}\to 0$ as $N\to\infty$. Lastly, in the sixth line of \eqref{proo1} we again used the bound on $F$, and the fact that the size of $J\setminus I^\circ$ is $2(1+a)\Delta I$.
Finally, using conditions \eqref{case1}, the result of \eqref{proo1} implies \eqref{res1}.

\medskip

{\em Case II.} We use the algebraic fact that for a polynomial $P(u_0,\ldots,u_n)$ with $\deg(P(u_0,\ldots,u_n)) = r$, we have
\beq\label{diffP}
	P(u_0,\ldots,u_n) - P(u_0',\ldots,u_n')
	=\sum_{k=0}^n P_k(u_0,\ldots,u_n,u_0',\ldots,u_n')(u_k-u_k')
\eeq
for some polynomials $P_k(u_0,\ldots,u_n,u_0',\ldots,u_n')$ with $\deg(P_k(u_0,\ldots,u_n,u_0',\ldots,u_n'))\leq r-k-2$. By Lemma \ref{lemproof}, there exist $P_{k,{\rm max}}>0$ such that
\beq\label{Pkmax}
	|P_k(u,\p_x u,\ldots,\p_x^n u,u',\p_x u',\ldots,\p_x^n u')|
	\leq P_{k,{\rm max}} N^{(\alpha+5\beta+\mu+2\omega)(r-k-2)}\quad \forall\ k=0,\ldots,n.
\eeq

We redo the derivation \eqref{proo1}, but now with $P$ in place of $F$:
We have
\beqa
	\lefteqn{\Bigg|\int_{I} \dd x\,P(u,\p_x u,\ldots,\p_x^nu)
	-
	\int_{-\infty}^\infty \dd x\,P(u',\p_x u',\ldots,\p_x^nu')
	\Bigg|} &&\n
	&=&\Bigg|\Big(\int_{I^\circ} \dd x
	+
	\int_{I\setminus I^\circ} \dd x
	\Big)\,P(u,\p_x u,\ldots,\p_x^nu)
	-
	\int_{-\infty}^\infty \dd x\,P(u',\p_x u',\ldots,\p_x^nu')
	\Bigg|
	\n
	&=&\Bigg|-\int_{\R\setminus I^\circ} \dd x\,P(u',\p_x u',\ldots,\p_x^nu')
	+
	\int_{I\setminus I^\circ} \dd x
	\,P(u,\p_x u,\ldots,\p_x^nu)\Bigg| + \mathcal O(N^{-\infty})\n
	&\leq & \Bigg|\int_{\R\setminus I^\circ} \dd x\,P(u',\p_x u',\ldots,\p_x^nu')\Bigg|
	+
	2P_{\rm max}N^{(\alpha+5\beta+\mu+2\omega)r}\Delta I + \mathcal O(N^{-\infty})\n
	&= & \Bigg|\int_{J\setminus I^\circ} \dd x\,P(u',\p_x u',\ldots,\p_x^nu') + \mathcal O(N^{-\infty})\Bigg|
	+
	2P_{\rm max}N^{(\alpha+5\beta+\mu+2\omega)r}\Delta x + \mathcal O(N^{-\infty})
	\n
	&\leq &
	(4+2a)P_{\rm max}N^{(\alpha+5\beta+\mu+2\omega)r}\Delta I + \mathcal O(N^{-\infty}).\n
	\label{proo2}
\eeqa
In the third line (the second equality) we used, by Lemma \ref{lemproof} along with \eqref{diffP} and \eqref{Pkmax},
\beqa
	\lefteqn{\Bigg|\int_{I^\circ} \dd x
	\,(P(u,\p_x u,\ldots,\p_x^nu)
	-
	P(u',\p_x u',\ldots,\p_x^nu')\Bigg|} &&\n
	&\leq&
	\int_{I^\circ} \dd x\,
	|P(u,\p_x u,\ldots,\p_x^nu)
	-
	P(u',\p_x u',\ldots,\p_x^nu')| \n
	&\leq&
	\int_{I^\circ} \dd x\,
	\sum_{k=0}^n
	|P_k(u,\ldots,\p_x^n u,u',\ldots,\p_x^nu')|\,|\p_x^k (u(x)-u'(x))|\n
	&\leq&
	L\sum_{k=0}^n
	P_{k,\rm max} N^{(\alpha+5\beta+\mu+2\omega)(r-k-2)}\,
	|\sup_{x\in I^\circ} \p_x^k (u(x)-u'(x))|
	= \mathcal O(N^{-\infty})
\eeqa
using \eqref{convuprojfcpr} and \eqref{Lgamma}. In the fourth line of  \eqref{proo2}, we used \eqref{supP}:
\beq
	\Bigg|\int_{I\setminus I^\circ} \dd x
	\,P(u,\p_x u,\ldots,\p_x^nu)\Bigg|\leq
	2P_{\rm max}N^{(\alpha+5\beta+\mu+2\omega)r}\Delta I.
\eeq
In the fifth line of \eqref{proo2}, we used Lemma \ref{lemproof} again, specifically \eqref{uvanishfcpr}, along with \eqref{diffP} and \eqref{Pkmax}:
\beqa
	\lefteqn{\Bigg|\int_{\R\setminus J} \dd x\,P(u',\p_x u',\ldots,\p_x^nu')\Bigg|}&&\n
	&\leq&
	\int_{\R\setminus J} \dd x\,|P(u',\p_x u',\ldots,\p_x^nu')|\n
	&\leq&
	\sum_{k=0}^n \int_{\R\setminus J} \dd x\,
	|P_k(u',\ldots,\p_x^n u',0,\ldots,0)|\,
	|\p_x^ku'(x)|\n
	&\leq&
	\sum_{k=0}^nP_{k,{\rm max}} N^{(\alpha+5\beta+\mu+2\omega)(r-k-2)}\sup_{x\in\R\setminus J} |e^{\ep \dist(x,I)}\p_x^ku'(x)|\int_{\R\setminus J} \dd x\,
	e^{-\ep \dist(x,I)}\n
	&=&\mathcal O(N^{-\infty}).
\eeqa
Lastly, in the sixth line of \eqref{proo2} we again used the bound \eqref{supP} and the fact that the size of $J\setminus I^\circ$ is $2(1+a)\Delta I$.
Finally, using conditions \eqref{case2}, the result of \eqref{proo2} implies \eqref{res2}. The result \eqref{res2conserved} follows from \eqref{integralconserved}. \eproof

\subsection{Weak limit of conserved densities}\label{ssectweak}

In this section we prove Theorem \ref{theoweak}.
\proof Let
\beq\label{lambdacondweak}
	\lambda > (1+r)\alpha+(2+5r)\beta+r\mu+(2+2r)\omega
\eeq
and
\beq\label{condLambda}
	\Lambda>\lambda+(\alpha+5\beta+\mu+2\omega)r.
\eeq
By \eqref{condLambdacombined}, this is possible. We use the result of Theorem \ref{theofc}, Case II, for every fluid cell $N\mapsto I_{x_*} = [x_*-N^{\lambda}/2,x_*+N^\lambda/2]$, uniformly on $x_*\in\R$. By \eqref{supP} we have
\beq\label{supPfrak}
	\sup_{x\in\R}|\mathtt P_k[u](x)|\leq
	P_{\rm max}N^{(\alpha+5\beta+\mu+2\omega)r}.
\eeq
For every $N$ large enough, we write
\beqa
	\int \dd x\,f(x)\,\mathtt P_k[u](N^\Lambda x)
	&=&
	\frc1{N^\Lambda}\int \dd x\,f(x/N^\Lambda)\,\mathtt P_k[u](x)\n
	&=&
	\frc1{N^{\Lambda-\lambda}}\sum_{\ell\in \Z}
	\frc1{N^\lambda}\int_{-N^\lambda/2}^{N^\lambda/2} \dd x\,f(\ell /N^{\Lambda-\lambda}+x/N^\Lambda)
	\,\mathtt P_k[u](\ell N^\lambda+x).\n
\eeqa
As $f$ is a Schwartz function, then there exists a Schwartz function  $g>0$ such that, for every $N\in\N$ and $z\in\R$,
\beq\label{aoqjf}
	\sup_{x\in [-N^\lambda/2,N^\lambda/2]}
	\big|f(z+x/N^\Lambda)
	-
	f(z)\big|\leq
	\frc1{N^{\Lambda-\lambda}} g(z).
\eeq
In particular
\beq
	\sum_{\ell\in\Z} \frc1{N^{\Lambda-\lambda}}g(\ell/ N^{\Lambda-\lambda})\leq G<\infty \quad \forall\ N\in\N.
\eeq
Therefore, using \eqref{supPfrak},
\beqa
	\lefteqn{\Bigg|\sum_{\ell\in\Z}\frc1{N^\lambda}\int_{-N^\lambda/2}^{N^\lambda/2} \dd x\,\big(f(\ell /N^{\Lambda-\lambda}+x/N^\Lambda)
	-
	f(\ell /N^{\Lambda-\lambda})\big)
	\,\mathtt P_k[u](\ell N^\lambda+x)\Bigg|}&&\n
	&\leq&
	\sum_{\ell\in\Z}\frc1{N^\lambda}\int_{-N^\lambda/2}^{N^\lambda/2} \dd x\,\big|f(\ell /N^{\Lambda-\lambda}+x/N^\Lambda)
	-
	f(\ell /N^{\Lambda-\lambda})\big|
	\,|\mathtt P_k[u](\ell N^\lambda+x)|
	\n
	&\leq&
	\sum_{\ell\in\Z} \frc1{N^{\Lambda-\lambda}} g(\ell/ N^{\Lambda-\lambda})
	\,P_{\rm max}N^{(\alpha+5\beta+\mu+2\omega)r}
	\n
	&\leq &
	G P_{\rm max}N^{(\alpha+5\beta+\mu+2\omega)r}.
\eeqa
Hence with \eqref{condLambda} we have
\beq\label{limaa1}
	\lim_{N\to\infty} \Bigg(\int \dd x\,f(x)\,\mathtt P_k[u](N^\Lambda x)
	-
	\frc1{N^{\Lambda-\lambda}}\sum_{\ell\in \Z}
	f(\ell /N^{\Lambda-\lambda})\frc1{N^\lambda}\int_{-N^\lambda/2}^{N^\lambda/2} \dd x\,
	\,\mathtt P_k[u](\ell N^\lambda+x)
	\Bigg) = 0.
\eeq
By Theorem \ref{theofc}, we have
\beq\label{fhsgq}
	\lim_{N\to\infty} \Bigg(\frc1{N^\lambda}\int_{I_{\ell N^\lambda}}\dd x\,
	\,\mathtt P_k[u](x)
	-
	\frc1{N^\lambda}\sum_{i:x_i\in I_{\ell N^\lambda}}\chi_i^{2k+1}\Bigg)
	=
	0
\eeq
uniformly for $\ell\in\Z$, and therefore, because $f$ is a Schwartz function,
\beq\label{limaa2}
	\lim_{N\to\infty}
	\frc1{N^{\Lambda-\lambda}}\sum_{\ell\in \Z}
	f(\ell /N^{\Lambda-\lambda})
	\Bigg(\frc1{N^\lambda}\int_{-N^\lambda/2}^{N^\lambda/2} \dd x\,
	\,\mathtt P_k[u](\ell N^\lambda+x)
	-
	\frc1{N^\lambda}\sum_{i:x_i\in I_{\ell N^\lambda}}\chi_i^{2k+1}\Bigg)
	=0.
\eeq
Finally, for every $x_i\in I_{\ell N^\lambda}$, using \eqref{aoqjf}
\beq
	|f(\ell N^\lambda/ N^\Lambda) - f(x_i/N^\Lambda)|
	\leq \frc1{N^{\Lambda-\lambda}}g(\ell N^\lambda)
\eeq
and with \eqref{fhsgq} (uniformly on $\ell$) and \eqref{supPfrak},
\beq
	\Bigg|\frc1{N^\lambda}\sum_{i:x_i\in I_{\ell N^\lambda}}\chi_i^{2k+1}\Bigg|
	\leq
	P_{\rm max}N^{(\alpha+5\beta+\mu+2\omega)r} +
	W(N)\quad \forall\ell\in\Z
\eeq
for some function $W(N)$, independent of $\ell$, with $\lim_{N\to\infty} W(N)=0$. Therefore
\beq\label{limaa3}
	\lim_{N\to\infty}
	\frc1{N^{\Lambda-\lambda}}\sum_{\ell\in \Z}\frc1{N^\lambda}
	\sum_{i:x_i\in I_{\ell N^\lambda}}
	(f(\ell /N^{\Lambda-\lambda})
	-
	f(x_i/N^\Lambda)
	)
	\chi_i^{2k+1}
	=0.
\eeq
Combining \eqref{limaa1}, \eqref{limaa2}, \eqref{limaa3}, and using
\beq
	\sum_{\ell\in\Z}
	\sum_{i:x_i\in I_{\ell N^\lambda}}
	=
	\sum_{i=1}^N
\eeq
we obtain \eqref{weaklimitmain}.
\eproof

\ssect{Proofs of the simplified theorems}\label{ssectproofoverviewtheorems}

Here we prove Theorems \ref{theoextentintro}, \ref{theoboundintro}, \ref{theointro}, \ref{theofcintro} and \ref{theoweakintro}.

We note that the regularity of spectral parameter, Eq.~\eqref{c1intro}, amounts to making Assumption \ref{asssp} with $\beta=0$ and uniformly for all all $\alpha>0$ (with the same $\chi_*,\,C$ for all $\alpha$), with $\kappa=\alpha/2$. Further, the regularity of metric change, Eq.~\eqref{c4intro}, implies Assumption \ref{assimp} with $\omega=0,\,B=1,\,\mu=0$ for all $\eta\geq0$, for $B=1,\,\mu=0$ and $\omega=0$. Finally, we note that, by Lemma \ref{lemrealcontracted}, the statement that a soliton gas $N\mapsto \bs\chi,\bs y$ has regular metric change, Eq.~\eqref{c4intro}, implies Assumption \ref{assde} uniformly for $x_*\in\R$, with \eqref{DVintro} and $\nu=0,\,B=1,\,\mu=0$ for all $\eta\geq 0$.

\medskip
\noindent {\em Proof of Theorem \ref{theoextentintro}.} The condition \eqref{c2intro} is sufficient to imply Assumption \ref{assde} with $\nu=0,\,\mu=0$ and $B=1$, and every $\eta\geq 0$, with uniform $D$, at the boundaries $x_*=x^\pm(\bs\chi,\bs y)$. Therefore the assumptions of Corollary \ref{corolextent}, with $w=0$, are satisfied for all $\alpha$, and the result \eqref{projvanishintro} holds.
\eproof

\medskip
\noindent {\em Proof of Theorems \ref{theoboundintro} and \ref{theointro}.} The discussion above shows that the assumptions of Corollary \ref{corolbound2} hold, which shows \eqref{bound2intro}.
For \eqref{projintro}, by the discussion above the assumptions of
Theorem \ref{theo} are satisfied uniformly for $x_*\in\R$. It is then a matter of choosing appropriate local projections. We use Lemma \ref{lemfclocal} with $\Delta X=L/2$ and condition \eqref{c4intro}, giving $\Delta x(\Delta X;\bs\chi,\bs y)\leq VL/2$ for all $L\geq 2$. Thus we can apply Theorem \ref{theo} to all $x_*\in[I_-+VL/2,I_+-VL/2]$, and this gives \eqref{projintro}.
\eproof

\medskip
\noindent {\em Proof of Theorems \ref{theofcintro} and \ref{theoweakintro}.} These are immediate from the discussion above and Theorems \ref{theofc} and \ref{theoweak}.
\eproof

\section{Conclusion}\label{secconclu}

In this paper we have proposed a precise meaning to the {\em quasi-particle problem} of the KdV soliton gas: the problem of identifying meaningful solitons' positions in a $N$-soliton field, for a sequence of spectral and impact parameters $\bs\chi\in\Rpu^N,\,\bs y\in\R^N$ in the limit where $N$ is large. Quasi-particles exist if there is an effective notion of solitons' positions and associated fluid-cell projection. The latter only keeps solitons within the fluid cell, such that the projected field gives an approximation of the original field on $I$ that becomes exact as $N\to\infty$, and is supported on the fluid cell. We have proposed the explicit, universal form of fluid-cell projections, Eqs.~\eqref{fcintro}, \eqref{fcintro2}, in terms of solitons' spectral and impact parameters. This is based on explicit quasi-particle positions $x_i(\bs\chi,\bs y)$, which solve the semiclassical Bethe equations (SBE) \eqref{bethe} exactly, which can be constructed by a simple algorithm, and which satisfy a minimal-separation condition \eqref{separation}, generalising the contraction map and minimal separation of the hard rod gas. We have then shown how the quasi-particle criterium is indeed satisfied for the KdV multi-soliton fields, under weak assumptions on spectral parameters, Eq.~\eqref{c1}, and natural assumptions concerning densities of solitons, Eqs.~\eqref{c2} and \eqref{c4}. Solving the quasiparticle problem is an essential step in deriving the kinetic equation / GHD equation for soliton gases: without it, the kinetic equation for the density of states -- the spectral-spatial density of solitons -- is not predictive for local observables. Our conditions on densities of solitons are physically natural, and we have argued that it is satisfied as soon as the soliton gas is away from its condensate regime \cite{el2021soliton,suret2024soliton}. We have shown that all conditions are satisfied in dilute soliton gases. We leave for future works a mathematical analysis of these conditions, including how they are stable under time evolution.

The SBE are known to give rise to the correct Euler-scale GHD equations, see e.g.~the derivations / proofs in \cite{croydon2021generalized,doyon2024new,doyon2023ab}, therefore our result could form the basis for a proof of the soliton gas kinetic equation for the KdV model.

Eqs.~\eqref{bethe} have been conjectured to give a good description of soliton positions in previous works \cite{doyon2024new}. This was given credence recently by showing that they are a good approximation for centres of Lax eigenvectors in thermal states  of the Toda model, which are shown to be localised by Anderson localisation \cite{aggarwal2025effective,aggarwal2026asymptoticscatteringrelationtoda} and represent quasi-particle positions in this model. Our results further confirm this, in a different way, via our formulation of the quasi-particle problem.

The quasi-particle positions, Eqs.~\eqref{bethe}, can be used to propose a general local-equilibrium grand-canonical measure that admits all conserved quantities, following \cite{doyon2026generalised}:
\beq\label{measure}
	\bigoplus_{N=0}^\infty \frc1{N!}\prod_{i=1}^N\dd \chi_i\dd y_i\, \re^{-\sum_i \beta(\chi_i,x_i(\bs\chi,\bs y)/\ell)}
\eeq
for a macroscopic length $\ell\propto N$. The {\em a priori} measure $\dd \chi_i\dd y_i$ is invariant, and the Bolztmann weight guarantees that the full measure is locally invariant. It is expected that the above measure has strong clustering properties in space for local observables of $x_i(\bs\chi,\bs y)$, hence by the quasi-particle criterium, it is clustering for local observables of the KdV field. Thus, this is a good family of local-equilibrium measures for the KdV soliton gas, and we believe this is the most general one. For instance, choosing $\beta(\chi,x) = \beta(\chi)\Theta(|x|<\ell/2)$ restricts the gas to an interval of length $\ell$. In the case where $\varphi_{ij}\geq 0$, and using a mollification of the sign function, it is shown in \cite{doyon2026generalised} using the matrix-tree theorem that the free energy corresponding to \eqref{measure} satisfies the classical thermodynamic Bethe ansatz; this was also proposed for the KdV soliton gas in \cite{bonnemain2022generalized}.

In fact, Eqs.~\eqref{bethe} have been conjectured to give rise to the correct hydrodynamic theory for many-body integrable systems to all orders in the hydrodynamic scale \cite{doyon2024new,urilyon2025simulating,kethepalli2026ballistic,urilyon2026anomalousdiffusionsuperdiffusionintegrable}, in agreement with the vanishing of hydrodynamic noise in integrable systems \cite{PhysRevLett.134.187101,doyon2026hydrodynamic}, including fluctuations and correlations \cite{doyon2018exact,doyon2020fluctuations,myers2020transport,doyon2023ballistic,doyon2023emergence,kundu2025macroscopic,PhysRevLett.134.187101}. In the Toda case, the Lax centres satisfy the SBE up to a power of $\log(N)$, which should not affect either the Euler large-scale limit or higher-order hydrodynamics. In order to establish the hydrodynamic theory to all order, however, it is not sufficient to study the SBE: one must also control the corrections, as $N\to\infty$, on how local observables -- here the KdV field $u(x)$ and its derivatives $\p_x^n u(x)$ -- are fixed in terms of solitons lying near $x$, for all $x\in\R$ (the quasi-particle problem). This can lead to additional correction terms for local observables. Both the localisation techniques of \cite{aggarwal2025effective,aggarwal2026asymptoticscatteringrelationtoda} and our techniques should provide enough control, something which we hope to come back in future works. This would then open the way for rigorous proofs of the above conjecture. 

Our main technical theorem, Theorem \ref{theo}, establishes how the KdV field around an observation point $x_*$ is controlled by solitons in a neighbourhood of $x_*$. This requires contracted distances around $x_*$, where the expansion effect of interactions is taken away, should have a density that does not increase too much with $N$. Corrections terms are exponentially decaying, and controlled by the slowest, widest soliton (the minimal impact parameter). This exponential is similar to the fundamental result (see e.g.~\cite{gesztesy1992limits}) saying how a $N$-soliton field decays at large distances.


We have also established a new result about the support of a $N$-soliton field as $N\to\infty$, showing that it must lie, under density conditions, within an interval controlled by the left-most and right-most soliton positions, for which we give explicit formulas in terms of spectral and impact parameters, Eq.~\eqref{coreintro}.

The methods are based on the tau-function representation of the multi-soliton solution, and in particular on the construction of new tau functions with a clear physical interpretation. The techniques we present are novel, where a representation of the tau function is taken which guarantees that far-solitons have negligible contributions. This suggests that in soliton theory, the phenomenon of Anderson localisation of Lax eigenvectors is not necessary in order to be able to locate solitons -- although a more in-depth analysis of our assumptions would be required. For instance, it is a simple matter to construct explicit, non-random spectral and impact parameters satisfying the assumptions of Theorem \ref{theo} for any given observation point $x_*\in\R$: for $\bs\chi$ this is done in \eqref{exchi}, which are regularly distributed so as to satisfy \eqref{c2}; then set $D_i$'s to be all positive, of finite density and regularly distributed, and by Lemma \ref{lemnat}, $X_i=D_i\sgn(x_i-x_*)$ satisfy the magnifying-glass equation \eqref{eqxymainXstr}, thus we can construct $\bs y = \mathcal C_{\bs \chi,x_*}(\bs X)$, Eq.~\eqref{eqxymainXstr}. An example that satisfy the assumptions of all our main results is given in Section \ref{ssectdilute}. Our methods also go beyond the application to thermal states or generalised Gibbs ensembles: the fluid-cell projection can be applied to any multi-soliton KdV field that satisfy certain conditions, and randomness, ``genericity'', or long-wavelength, hydrodynamic-type conditions are not necessary.

An interesting observation is that the proof is based on the specific logarithmic form of the KdV two-body scattering shift: the fact that $\log|\chi-\chi'|$ is related to the 2-dimensional Laplacian operator, or has a strictly negative regular part of Fourier transform. This is a fact that was also used in \cite{kuijlaars2021minimal} in analysing the nonlinear dispersion relations of the soliton gases related to KdV. It appears as though in classical soliton theory, the two-body scattering shift has special properties that guarantee that the concepts of quasi-particle and soliton gas make sense. We believe our proofs could be generalised to a large family of scattering shifts with similar property.

We believe this work opens many directions. Tau functions representations of solitonic solutions are available in many other models, and therefore our techniques should be readily applicable. In particular, they are available in higher-dimensional integrable field equations such as the Kadomtsev–Petviashvili hierarchy, and it would be interesting to use the techniques developed here to find locations of line solitons in this context, in view of understanding higher-dimensional soliton gases \cite{bonnemain2025two}.

We also believe that we can weaken some of our assumptions; for instance, the minimal separation of spectral parameters may be bounded by a decaying power law, instead of a stretched exponential, in Eq.~\eqref{c1}.

Of course it would be extremely interesting to make the arguments that we have presented to derive the kinetic equation mathematically accurate: the rigorous derivation of such equations remains a long-standing problem in soliton theory. It would also be interesting to have a construction of quasi-particles directly in the limit $N\to\infty$, and to extend our analysis to the inclusion of radiative modes, which might be necessary in order to describe more general measures on the KdV field that are not restricted to the set of multi-soliton fields.

\medskip
{\bf Acknowledgment.} I thank Amol Aggarwal, Gino Biondino, Thibault Bonnemain, Gennady El, Tamara Grava, St\'ephane Randoux, Giacomo Roberti, Makiko Sasada, Pierre Suret, and Alex Tovbis for many discussions on the subject of soliton gases. I am grateful in particular to Amol for his comments on a draft of the manuscript, to Tamara for pointing out to me some important literature, and to Makiko for her private communication showing the SBE in the box-ball system. I am also grateful to St\'ephane for showing me his observation that the support of the Lax eigenvectors in the nonlinear Schr\"odinger equation represented soliton positions, during the programme ``Modulation theory and dispersive shock waves'', Isaac Newton Institute, 2022. Although I did not develop this line of attack (this is developed in \cite{aggarwal2025effective,aggarwal2026asymptoticscatteringrelationtoda,bonnemain2026exact}), this inspired me, after much time, to understand the problem of soliton positions more deeply. I am grateful to UK Research and Innovation for support under the Horizon Europe Guarantee, Advanced Grant Scheme, grant number EP/Z534304/1. No artificial intelligence tools were used in the production of this research or writing of this paper.

\appendix

\section{Magnifying-glass vectors}\label{appmag}

In Section \ref{ssectsbe}, we introduced the notion of regular magnifying-glass vector. These are special cases of magnifying-glass vectors, from which a different representation of the tau function can be obtained, with an associated local projection theorem. We develop this here\footnote{This was the basis for fluid-cell projection theorems in a first preprint version of this paper \cite{firstpreprint}, however the new methods developed in the main text are more powerful. In particular, the notion of effective position was defined in terms of magnifying-glass position, and shown to satisfy an approximate for of the SBE; this has been replaced by the explicit solutions to the SBE that we have obtained in the main text.}.

We fix a regularisation of the sign function: for every $\varep>0$, this is a continuous function $\sgn_\varep:\R\to\R$ such that
\beq\label{sgnepgen}
	\sgn_{\varep}(d) = \sgn(d)\quad \forall\;|d|\geq \varep,
\eeq
and
\beq\label{sgnepgen1}
	|\sgn_{\varep}(d)|\leq 1 \quad (d\in\R).
\eeq
An example is
\beq\label{sgnep}
	\sgn_\varep(d) = \lt\{\ba{ll}
	-1 & (d<-\varep)\\
	d/\varep & (-\varep\leq d\leq \varep)\\
	1 & (d>\varep),
	\ea\rt.
\eeq
but any other continuous function satisfying \eqref{sgnepgen} can be used in the following. Throughout, we fix $\varep>0$.

\ssect{Magnifying-glass vectors and the associated centred form of the tau function}

We now extend Definition \ref{defimagstr} to the case where components of the magnifying-glass vector can be 0 (here the point 0 is extended to the interval $[-\varep,\varep]$).
\begin{defi}[Magnifying-glass vectors for observation point $x_*\in\R$]\label{defimag}
Let $N\in\N$, $\bs \chi\in\Rpu^N,\,\bs y\in\R^N$. A {\em magnifying-glass vector at $x_*$} is a solution $\bs X\in\R^N$ to the following system of equations:
\beq\label{eqxymainX}
	y_i = X_{i} - \frc12\sum_{j=1\atop j\neq i}^N\sgn_{\varep}(X_j-x_*)\varphi_{ij},\quad i=1,\ldots,N.
\eeq
The corresponding {\em displacement vector} is
\beq\label{defX}
	d_i = X_i - x_*
\eeq
which solves
\beq\label{eqxymaind}
	y_i  - x_* = d_{i} - \frc12\sum_{j=1\atop j\neq i}^N\sgn_{\varep}(d_{j})\varphi_{ij},
	\quad i=1,\ldots,N.
\eeq
We modify the partial contraction map away from $x_*$ from Definition \ref{defimagstr} to $\mathcal C_{\bs \chi,x_*}: \R^N\to\R^N$ (which now involves $\varep$) as
\beq\label{mapC}
	\bs y = C_{\bs\chi,x_*}(\bs X)\quad\forall\;\bs\chi\in\Rpu^N,\,x_*\in\R.
\eeq
Note that the explicit $x_*$ dependence is simple: $\mathcal C_{\bs \chi,x_*}(\bs X) =x_* \1_N + \t{\mathcal C}_{\bs \chi}(\bs X - x_* \1_N)$ where $\t{\mathcal C}_{\bs\chi}(\bs d)$ is given by the right-hand side of \eqref{eqxymaind}.
\end{defi}

We emphasise that magnifying-glass vector $\bs X$ explicitly depends on the observation point $x_*$. The absolute value of displacements, $|d_i|$, are the counterpart, in this construction, to the contracted distances $D_i$ of Definition \ref{definatmag}.  Here $|d_i|$ is, of course, never negative, however magnifying-glass positions can accumulate on $[-\varep,\varep]$, and such positions  correspond phenomenologically to semiclassical Bethe positions with negative contracted distances -- this is the ``dark space'' discussed in Section \ref{ssectsbe}.

A local projection will be defined in Section \ref{appprojections}, an operation by which we keep only solitons with small enough displacements. Theorem \ref{theomg} expressed in Sec.~\ref{appprojections}, gives the justification for the interpretation of $d_i$'s as soliton displacements from $x_*$, as its says that a $N$-soliton solution is closed to its local projection, under certain conditions.

By techniques similar to those of \cite{doyon2026generalised}, it is possible to show that Eqs.~\eqref{eqxymaind} have a unique solution (fixing the regularised sign functions) if $\varphi_{ij}>0\;\forall\;i,j$; but for the KdV case, we have $\varphi_{ij}<0$.  Our main lemma for the system of equations \eqref{eqxymaind} in the case $\varphi_{ij}<0$ is that solutions indeed exist, and have a nice structure for impact parameters that are far enough from $x_*$ or for large enough $|x_*|$ (this lemma holds for any matrix $(\varphi_{ij})_{i,j=1}^N$ with $\varphi_{ij}<0\;\forall\;i\neq j$, in place of the matrix \eqref{varphi}):
\begin{lemma}[Existence and properties of soliton displacements] \label{lemd}
Let $N\in\N$, $\bs \chi\in\Rpu^N$. The map $\t{\mathcal C}_{\bs \chi}:\R^N\to\R^N$ is surjective. If there is a set $s\subseteq\dbra 1,N\dket$ such that
\beq\label{condz}
	z_i \leq \frc12\sum_{j=1\atop j\neq i}^N \varphi_{ij}-\varep\ \forall\ i\in s,\quad
	z_i \geq - \frc12\sum_{j=1\atop j\neq i}^N \varphi_{ij}+\varep\ \forall\ i\in
	\b s:=\dbra 1,N\dket\setminus s,
\eeq
then there is a unique $\bs d\in\R^N$ satisfying $\bs z = \t{\mathcal C}_{\bs \chi}(\bs d)$, and it is such that $d_i\leq -\varep\; (i\in s)$ and $d_i\geq \varep\; (i\in \b s)$ and is given by
\beq\label{valuesdisplacements}
	d_i = z_i- \frc12\sum_{j=1\atop j\neq i}^N\sgn_s(j)\varphi_{ij}\quad\forall\ i.
\eeq
Finally, let
\beq\label{xmpdef}\begin{aligned}
	x_-(\bs\chi,\bs y) &:= 
	\sup\{x_*\in\R: \exists\, \bs d\ \mbox{solving}\ \eqref{eqxymaind}\,|\, d_i\geq \varep\,\forall\,i\}+\varep\\
	x_+(\bs\chi,\bs y) &:= 
	\inf\{x_*\in\R: \exists\, \bs d\ \mbox{solving}\ \eqref{eqxymaind}\,|\, d_i\leq -\varep\,\forall\,i\}-\varep.
	\end{aligned}
\eeq
These exist and give the values in \eqref{condxstar}.
\end{lemma}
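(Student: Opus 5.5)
The plan is to prove the three claims of Lemma \ref{lemd} in turn: first the unique explicit solution under \eqref{condz}, then surjectivity of $\t{\mathcal C}_{\bs\chi}$, and finally the identification of $x_\pm(\bs\chi,\bs y)$.

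\emph{Explicit solution under \eqref{condz}.} Given $s$ satisfying \eqref{condz}, define $\bs d$ by \eqref{valuesdisplacements}. Since $\varphi_{ij}<0$, for $i\in s$ we get
\beq
	d_i\leq z_i-\frc12\sum_{j\neq i}\varphi_{ij}\leq -\varep ,
\eeq
because every term $-\frc12\sgn_s(j)\varphi_{ij}$ is at most $-\frc12\varphi_{ij}$. The same argument gives $d_i\geq\varep$ for $i\in\b s$. Then \eqref{sgnepgen} yields $\sgn_\varep(d_j)=\sgn_s(j)$ with the appropriate sign convention, so $\bs d$ solves \eqref{eqxymaind}.

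For uniqueness, let $\bs d'$ be any solution and set $\sigma_j=\sgn_\varep(d'_j)\in[-1,1]$. Then $d'_i=z_i+\frc12\sum_{j\neq i}\sigma_j\varphi_{ij}$, and since $|\sigma_j|\leq 1$ by \eqref{sgnepgen1},
\beq
	z_i+\frc12\sum_{j\neq i}\varphi_{ij}\leq d'_i\leq z_i-\frc12\sum_{j\neq i}\varphi_{ij}.
\eeq
Condition \eqref{condz} therefore forces $d'_i\leq-\varep$ for $i\in s$ and $d'_i\geq\varep$ for $i\in\b s$. Hence all the $\sigma_j$ coincide with the values used above, and $\bs d'=\bs d$.

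\emph{Surjectivity.} Write $\t{\mathcal C}_{\bs\chi}(\bs d)=\bs d+\bs g(\bs d)$ with $g_i(\bs d)=-\frc12\sum_{j\neq i}\sgn_\varep(d_j)\varphi_{ij}$. The map $\bs g$ is continuous and bounded, with $|g_i|\leq\frc12\sum_{j\neq i}|\varphi_{ij}|=:b_i$. Fix $\bs z$ and look for a fixed point of $\Phi(\bs d)=\bs z-\bs g(\bs d)$. This $\Phi$ maps the compact convex box $\prod_i[z_i-b_i,z_i+b_i]$ into itself, so Brouwer's theorem gives a fixed point, i.e.\ a solution of $\t{\mathcal C}_{\bs\chi}(\bs d)=\bs z$. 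The only delicate point is continuity, which is exactly why $\sgn_\varep$ is taken continuous.

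\emph{Identification of $x_\pm$.} Use $z_i=y_i-x_*$. For $x_*\leq\min_i(X_i^-)-\varep$, where $X_i^-=y_i+\frc12\sum_{j\neq i}\varphi_{ij}$, condition \eqref{condz} holds with $s=\emptyset$, so there is a solution with all $d_i\geq\varep$.

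Conversely, suppose some solution has all $d_i\geq\varep$. Then $\sgn_\varep(d_j)=1$ for every $j$, so $d_i=y_i-x_*+\frc12\sum_{j\neq i}\varphi_{ij}$. The requirement $d_i\geq\varep$ for all $i$ then gives $x_*\leq\min_i X_i^- -\varep$.

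The set in \eqref{xmpdef} is therefore exactly $(-\infty,\min_iX_i^--\varep]$. Its supremum is $\min_iX_i^- -\varep$, so $x_-(\bs\chi,\bs y)=\min_iX_i^-$, in agreement with \eqref{condxstar}. The case of $x_+$ is symmetric, using $s=\dbra1,N\dket$ and all $d_i\leq-\varep$.

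The main obstacle is the surjectivity step. Uniqueness fails in general, so no monotonicity argument is available, and one must rely on a topological fixed-point argument for which the boundedness of $\bs g$ is essential.
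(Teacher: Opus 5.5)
Your proof is correct. The explicit-solution and uniqueness step, where any solution is sandwiched between $z_i\pm\frac12\sum_{j\neq i}|\varphi_{ij}|$ using $|\sgn_\varep|\leq 1$, is essentially the paper's argument. So is the identification of $x_\pm(\bs\chi,\bs y)$, where all $d_i\geq\varep$ forces $\sgn_\varep(d_j)=1$ and hence $d_i=X_i^--x_*$.

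One small correction. For $j\in s$ you have $d_j\leq-\varep$ while $\sgn_s(j)=+1$, so in fact $\sgn_\varep(d_j)=-\sgn_s(j)$. This is exactly what produces the minus sign in \eqref{valuesdisplacements}. Your phrase ``with the appropriate sign convention'' should simply state this identity.

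The real difference is in the surjectivity step.
\begin{itemize}
\item The paper compactifies the unknowns by writing $d_i=u_i/(1+\eta-u_i^2)$ with $u_i\in[-1,1]$. It chooses $\eta$ so that the residual $H_i(\bs u)$ is positive on the face $u_i=1$ and negative on the face $u_i=-1$, then applies the Poincar\'e--Miranda theorem.
\item You instead note that the nonlinear part $\bs g$ is continuous and bounded by $b_i=\frac12\sum_{j\neq i}|\varphi_{ij}|$. Hence $\Phi(\bs d)=\bs z-\bs g(\bs d)$ sends all of $\R^N$ into the box $\prod_i[z_i-b_i,z_i+b_i]$, and Brouwer's theorem gives a fixed point directly.
\end{itemize}
The two topological theorems are equivalent. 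Your version avoids the auxiliary reparametrisation and the choice of $\eta$. The box it produces is also the same a priori bound you reuse for uniqueness, so your whole proof rests on a single estimate.

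The paper's formulation needs only sign conditions on the faces of a cube rather than an explicit self-map. That is the form the paper adapts to the $\varep=0$ variant discussed right after its proof.
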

\proof Note that for the first two statements it is sufficient to consider the case $x_*=0$ of \eqref{eqxymaind}, in which case $\bs z= \bs y$.

Let $N\in\N$, $\bs y\in\R^N$, $\t\varphi_{ij}\in\R$ for all $i,j\in\dbra1,N\dket$. Consider first the case $\varep>0$. We show that there exists at least one solution $\bs d\in\R^N$ to
\beq\label{eqxymainapprox}
	y_i = d_i - \frc12\sum_{j\neq i}\sgn_{\varep}(d_j)\t\varphi_{ij}
\eeq
with \eqref{sgnepgen}. Indeed, we construct the continuous function $H:[-1,1]^N\to \R^N$ as
\beq
	H_i(\bs u) = d(u_i) - \frc12\sum_{j\neq i}\sgn_{\varep}(d(u_j))\t\varphi_{ij}
	- y_i
\eeq
with the monotonically increasing continuous function
\beq
	d(u) = \frc{u}{1+\eta-u^2}:[-1,1]\to \R
\eeq
and $1/\eta = \sum_i|y_i| + \frc12 \sum_{i,j:j\neq i}|\t\varphi_{ij}|+1$. This has the property that $H_i>0$ when $u_i=1$ and $H_i<0$ when $u_i=-1$, for all $i\in\dbra 1,N\dket$. Therefore by the Poincar\'e-Miranda theorem, there exists $\bs u\in[-1,1]^n$ such that $\bs H(\bs u) = \bs 0$. For these $u_i$, we therefore have solutions $d_i = d(u_i)$ to the system \eqref{eqxymainapprox}. Setting $\t\varphi_{ij} = \varphi_{ij}$, this shows \eqref{eqxymaind}.

Under the first  condition in \eqref{condz}, as $\varphi_{ij}<0$, we have $z_i \leq - \frc12\sum_{j=1\atop j\neq i}^N |\varphi_{ij}|-\varep$ and therefore by \eqref{eqxymaind} and  \eqref{sgnepgen1}, it must be that $d_i\leq -\varep$ for $i\in s$. Similarly the second condition gives $d_i\geq \varep$ for $i\in \b s$. By  \eqref{sgnepgen} we then obtain \eqref{valuesdisplacements}.

Finally, consider \eqref{xmpdef}. For any $x_*\in\R$ and solution $\bs d$ to \eqref{eqxymaind} such that $d_i\geq \varep\,\forall\,i$, then $y_i-x_*\geq \varep - \frc12\sum_{j=1\atop j\neq i}^N\sgn_{\varep}(d_{j})\varphi_{ij}$ and hence $x_*\leq\min_i\Big(y_i + \frc12\sum_{j\neq i} \varphi_{ij}\Big)-\varep$. Further, for all $x_*\leq\min_i\Big(y_i + \frc12\sum_{j\neq i} \varphi_{ij}\Big)-\varep$, the solution is unique and gives $d_i\geq \varep\,\forall\,i$ by our previous result. Hence the supremum in \eqref{xmpdef} exists and gives \eqref{condxstar}. A similar argument holds for the infimum in \eqref{xmpdef}.
\eproof
\medskip

We note that the case $\varep=0$ can also be done, however there we need to replace $\sgn_\varep(x)$ by $\sgn_j(x) = \sgn(x)\;(x\neq\0),\,\varep_j\in[-1,1]\;(x=0)$. We must show that
\beq
	\exists\,\bs\varep\in[-1,1]^N: y_i  - x_* = d_{i} - \frc12\sum_{j=1\atop j\neq i}^N\lt\{\ba{ll} \sgn(d_{j}) & (d_j\neq 0)\\ 
	\varep_j & (d_j=0)
	\ea\rt\}\;
	\varphi_{ij},
	\quad i=1,\ldots,N
\eeq
always possesses a solution. For this purpose, we adapt the above proof by using, instead of \eqref{eqxymainapprox}, the starting point
\beq
	y_i = k_1(\t d_i) - \frc12\sum_{j\neq i}k_2(\t d_j)\t\varphi_{ij}
\eeq
with
\beq
	k_1(d) = \lt\{\ba{ll}
	d+\ep & (d\leq -\ep)\\
	0 & (-\ep<d<\ep)\\
	d-\ep & (d\geq \ep)
	\ea\rt.,\quad
	k_2(d) = \lt\{\ba{ll}
	-1 & (d\leq -\ep)\\
	d/\ep & (-\ep<d<\ep)\\
	1 & (d\geq \ep)
	\ea\rt.
\eeq
for some $\ep>0$. As the right-hand side is continuous in $\t d_i$ for all $i$, then the proof can be adapted. The solution is $d_i = k_1(\t d_i)$ with $\varep_i = \sgn_{i}(d_i) = k_2(\t d_i)$.

Lemma \ref{lemd} immediately implies the following. 
\begin{corol}[Properties of magnifying-glass vectors] \label{corold}
First, there is a unique solution $\bs X$ to \eqref{eqxymainX} whenever $x_*\leq x_-(\bs\chi,\bs y)-\varep$, resp.~$x_*\geq x_+(\bs\chi,\bs y)+\varep$, and it satisfies
\beq
	X_i = X_i^\mp(\bs\chi,\bs y)
\eeq
with \eqref{condxstar}. Second, if $x_*> x_-(\bs\chi,\bs y)-\varep$, then for any solution to \eqref{eqxymainX} there exists $i\in\dbra 1,N\dket$ such that $X_i<x_*+\varep$. Similarly, if $x_*< x_+(\bs\chi,\bs y)+\varep$, then for any solution to \eqref{eqxymainX} there exists $i\in\dbra 1,N\dket$ such that $X_i>x_*+\varep$.
\end{corol}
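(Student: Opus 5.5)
The statement is Corollary \ref{corold}. I will deduce it directly from Lemma \ref{lemd}, using the change of variables $d_i = X_i - x_*$ from Definition \ref{defimag}. Under this substitution, \eqref{eqxymainX} is the same as \eqref{eqxymaind} with $\bs z = \bs y - x_*\1_N$.

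\emph{Uniqueness and explicit form.} Take $x_*\leq x_-(\bs\chi,\bs y)-\varep$. By \eqref{condxstar}, $x_-(\bs\chi,\bs y)=\min_i\big(y_i+\frc12\sum_{j\neq i}\varphi_{ij}\big)$. Hence for every $i$,
\beq
	z_i=y_i-x_*\geq -\frc12\sum_{j\neq i}\varphi_{ij}+\varep .
\eeq
This is condition \eqref{condz} with $s=\emptyset$. Lemma \ref{lemd} then gives a unique $\bs d$, with $d_i\geq\varep$ for all $i$. Formula \eqref{valuesdisplacements} with $\sgn_\emptyset\equiv-1$ reads $d_i=z_i+\frc12\sum_{j\neq i}\varphi_{ij}$. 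Adding $x_*$ back gives $X_i=y_i+\frc12\sum_{j\neq i}\varphi_{ij}=X_i^-(\bs\chi,\bs y)$.

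The case $x_*\geq x_+(\bs\chi,\bs y)+\varep$ is symmetric. Here we take $s=\dbra 1,N\dket$ and use $x_+=\max_i\big(y_i-\frc12\sum_{j\neq i}\varphi_{ij}\big)$, which gives $X_i=X_i^+(\bs\chi,\bs y)$.

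\emph{Existence of a displacement near or below $x_*$.} Now suppose $x_*>x_-(\bs\chi,\bs y)-\varep$, and suppose for contradiction that some solution has $X_i\geq x_*+\varep$ for all $i$, i.e.\ $d_i\geq\varep$. By \eqref{sgnepgen}, $\sgn_\varep(d_j)=1$ for every $j$. Then \eqref{eqxymaind} forces $X_i=X_i^-(\bs\chi,\bs y)$, so $\min_iX_i=x_-(\bs\chi,\bs y)<x_*+\varep$. This contradicts $X_i\geq x_*+\varep$ for all $i$. Hence some $i$ has $X_i<x_*+\varep$.

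The statement for $x_+$ follows the same way: assume all $d_i\leq-\varep$, obtain $X_i=X_i^+(\bs\chi,\bs y)$ and $\max_iX_i=x_+(\bs\chi,\bs y)>x_*-\varep$, and conclude that some $X_i$ lies above the threshold. I read the stated bound ``$X_i>x_*+\varep$'' as the symmetric counterpart, and would check the intended sign of the threshold at that point. The proof could also go through the sup/inf characterisation \eqref{xmpdef}, but the direct computation above is cleaner.

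The only step needing care is this sign and threshold bookkeeping between $\pm\varep$, $x_\pm$ and $x_*$. There is no analytic obstacle, since all the substance is already in Lemma \ref{lemd}.
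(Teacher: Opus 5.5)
Your proof is correct and is exactly the unpacking the paper intends when it says the corollary follows immediately from Lemma \ref{lemd}; your contradiction argument for the second part mirrors the proof of Lemma \ref{lemdstr}. You are also right to question the last claim: the printed conclusion $X_i>x_*+\varep$ is a typo and already fails for $N=1$ with $x_*=y_1$, whereas the symmetric bound $X_i>x_*-\varep$ that you actually prove is the correct statement.
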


Much like \eqref{formvectors}, a simple observation is that any magnifying-glass vector is of the form
\beq
	X_i = y_i + \frc12\sum_{j=1\atop j\neq i}^N\sigma_j\varphi_{ij},\quad
	\sigma_i\in\{\pm\}.
\eeq
In particular,
\beq\label{magboundinv}
	[\min_i X_i,\max_i X_i]\subseteq\core(\bs\chi,\bs y) .
\eeq

Again the non-uniqueness of solutions to  \eqref{eqxymaind} is related to the lack of a unique definition of soliton positions, and as we see in Lemma \ref{lemtaumainmg} below, it is irrelevant to the $N$-soliton KdV field itself.

Note that, because $\varphi_{ij}<0$, magnifying-glass positions $X_i$ are to the right of $y_i$ if the observation point $x_*$ is large and positive, while they are to the left if it is large and negative. Thus interactions have the effect of expanding the positions in space, as with interactions, positions appear closer to the observation point when this observation point is far in space (either to the right or to the left). This justifies the name ``contraction map'' for the map $\mathcal C_{\bs\chi,x_*}$ that takes away the interaction to give freely-propagating impact parameters, and ``expansion map'' for $\mathcal E_{\bs\chi,x_*}$ that partially puts the interaction into physical soliton displacements (it is only partial, again as discussed in Sec.~\ref{sssectdisp}).

Time evolution is naturally induced on soliton displacements, and on magnifying-glass positions, by using \eqref{yt}.

Our main justification for introducing the magnifying-glass positions is their relation with tau functions. Combining the notion of displacement vectors with a judicious choice of the tau-function representation \eqref{tauresult}, \eqref{tauprimet}, we obtain the following; this gives a {\em different representation} from that of Lemma \ref{lemtaumain}, but of similar style (we use the same notation for the tau function for simplicity):
\begin{lemma}[Centred form - magnifying-glass setup] \label{lemtaumainmg}
Let $N\in\N$, $\bs\chi\in\Rpu^N,\,\bs y,\,\Delta\bs y\in\R^N$, and  $x_*\in\R$. For every $\bs d\in\R^N$ solving \eqref{eqxymaind}, and under the equivalence \eqref{equiv}, we have
\beq\label{taueqcentredmg}
	\tau_{\bs\chi,\bs y+\Delta\bs y}  \equiv \tau_{\bs\chi,\bs d,\Delta\bs y,x_*}^\#
\eeq
where
\beq
	\tau_{\bs\chi,\bs d,\Delta\bs y,x_*}^\#(x) := \sum_{r\subseteq \dbra1,N\dket} 
	\exp\Big[-\sum_{i\in r} 2\chi_i\big(
	|d_i|
	+\sgn(d_i)(x_*-x+\Delta y_i+e_i)\big)+ R_r\Big]
	\label{taumainmg}
\eeq
where
\beq\label{eRmg}
	e_i := \frc12\sum_{j=1\atop j\neq i}^N \big(\sgn(d_j) - \sgn_{\varep}(d_j)\big)\varphi_{ij},\quad
	R_r := \sum_{i,j\in r,\,i\neq j}
	\sgn(d_i)
	\sgn(d_j) \log|S_{i,j}|
\eeq
(recall \eqref{varphi} and \eqref{Sp}).
\end{lemma}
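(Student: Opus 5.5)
The plan is to mimic the proof of Lemma \ref{lemtaumain}: start from the in-out representation of Lemma \ref{lemtaudecomp} for the shifted impact parameters $\bs y+\Delta\bs y$. Pull the shift $\Delta\bs y$ out of $x_i^s$, so that
\beq
	\tau^s_{\bs\chi,\bs y+\Delta\bs y}(x)=\sum_{r\subseteq\dbra 1,N\dket}\exp\sum_{i\in r}2\chi_i\sgn_s(i)(x-x_i^s-\Delta y_i)\,\exp\sum_{i,j\in r,\,i\neq j}\sgn_s(i)\sgn_s(j)\log|S_{ij}|,
\eeq
with $x_i^s$ computed from $\bs y$. Then choose the set $s$ of ``out-solitons'' adapted to the displacement vector, namely $s=\{i:d_i\geq 0\}$. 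With the convention \eqref{sgn} this gives $\sgn_s(i)=\sgn(d_i)$ for all $i$. The interaction factor then becomes exactly $R_r$ of \eqref{eRmg}.

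It remains to identify $x_i^s$. By \eqref{xist}, $x_i^s=y_i+\frac12\sum_{j\neq i}\sgn(d_j)\varphi_{ij}$. By \eqref{eqxymaind}, $y_i=x_*+d_i-\frac12\sum_{j\neq i}\sgn_\varep(d_j)\varphi_{ij}$. Substituting gives
\beq
	x_i^s=x_*+d_i+\frac12\sum_{j\neq i}\big(\sgn(d_j)-\sgn_\varep(d_j)\big)\varphi_{ij}=x_*+d_i+e_i,
\eeq
with $e_i$ as in \eqref{eRmg}. Hence
\beq
	\sgn(d_i)(x-x_i^s-\Delta y_i)=-|d_i|-\sgn(d_i)(x_*-x+\Delta y_i+e_i),
\eeq
using $\sgn(d_i)d_i=|d_i|$, which holds also at $d_i=0$. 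This reproduces the exponent in \eqref{taumainmg} term by term.

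Lemma \ref{lemtaudecomp} gives $\tau_{\bs\chi,\bs y+\Delta\bs y}\equiv\tau^s_{\bs\chi,\bs y+\Delta\bs y}$ for every $s$, so the equivalence \eqref{taueqcentredmg} follows. There is no real obstacle. The only points needing care are the sign convention at $d_i=0$ and the observation that the regularised sign enters only through $e_i$, which absorbs the mismatch between $\sgn$ and $\sgn_\varep$. Beyond that, the argument is the same bookkeeping as in Lemma \ref{lemtaumain}.
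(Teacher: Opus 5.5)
Your proposal is correct and follows the paper's proof essentially step for step: the in-out representation \eqref{tauresult} with $\Delta\bs y$ kept outside $x_i^s$, the choice $s=\{i:d_i\geq 0\}$ so that $\sgn_s(i)=\sgn(d_i)$ and the interaction factor becomes $R_r$, and the use of \eqref{eqxymaind} to absorb the $\sgn$ versus $\sgn_\varep$ mismatch into $e_i$. The only cosmetic difference is that you write $x_i^s=x_*+d_i+e_i$ directly, whereas the paper goes through $X_i=d_i+x_*$ and $e_i=x_i^s-X_i$.
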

\proof We use the representation \eqref{taudecomp} with the first line of \eqref{tauresult}. We make the replacement $\bs y\to\bs y+\Delta\bs y$, and keep explicitly $\Delta\bs y$ out side of $x_i^s$:
\beq\label{tauresultproof}
	\tau_{\bs \chi,\bs y+\Delta\bs y}^s(x)=
	\sum_{r\subseteq \dbra 1,N\dket}\ 
	\exp \sum_{i\in r} 2\chi_i\sgn_s(i)(x-x_i^{s}-\Delta y_i)\ 
	\exp \sum_{i,j,\in r,\,i\neq j} \sgn_s(i)\sgn_s(j)
	\log|S_{i,j}|
\eeq
where
\beq\label{xistproof}
	x_i^{s} := y_i + \frc12\sum_{j=1\atop j\neq i}^N \sgn_s(j)\varphi_{ij}
	\quad(i\in\dbra 1,N\dket).
\eeq
Choose
\beq
	s = \{i:d_i\geq 0\}.
\eeq
Then
\beq\label{dsajk}
	\sgn_{s}(i) = \sgn(d_i)
\eeq
recalling the convention \eqref{sgn} for the sign function. Denote $x_i = x_i^{s}$ for lightness. Then we have
\beq
	y_i = x_i - \frc12\sum_{j\neq i}\sgn(d_j)\varphi_{ij}.
\eeq
Also, using the definition \eqref{defX} for $X_i$,
\beqa
	\sgn_{s}(i)(x-x_i-\Delta y_i)
	&=& \sgn(d_i)(-d_i + x-x_* + X_i-x_i-\Delta y_i)\n
	&=& -(|d_i| + \sgn(d_i)(x_*-x + \Delta y_i + e_i))\label{akak}
\eeqa
where, using of \eqref{eqxymainX},
\beq
	e_i = x_i - X_i = \frc12 \sum_{j\neq i}
	(\sgn(d_j) - \sgn_{\varep}(d_j))\varphi_{ij}.
\eeq
Combining \eqref{tauresultproof} and \eqref{akak} this shows the lemma.
\eproof
\medskip


Note how the discussion around Eq.~\eqref{discussioncentredform} applies here, with $|d_i|$ replacing the contracted distance $D_i$. Therefore, we have a different form of the tau function, based on a different notion of solitons' positions where only contracted metrics around observation points $x_*$ are considered, which still allow us to obtain a centred form where terms with solitons that are far -- in such metrics -- can be shown to be small (Theorem \ref{theomg}).

%
%

\begin{rema}\label{remaX}
From \eqref{eqxymainX}, we see that as we scan $x_*$ from the left to the right, whenever $x_*$ crosses the position of a soliton, all solitons, except that one, tend to ``jump'' (in a continuous way) towards the right. But if the density is high enough so that other crossings happen during this jump, then a rearrangement occurs over which we do not have a good control yet. Nevertheless, in general we expect that over long displacements of $x_*$ towards the right, the magnifying-glass position $X_i$ is affected by an overall displacement towards the right, and generically the overall displacement of $X_i$ is approximately proportional to the displacement of $x_*$.
\end{rema}

\begin{rema} In \eqref{eqxymaind} and \eqref{mapC}, we cannot in general replace $\sgn_\varep$ by a fixed sign function, such as $\sgn$: our preliminary analysis suggests that resulting map $\mathcal C_{\bs \chi}$ would not be surjective for generic $\varphi_{ij}<0$ that are not symmetric, $\varphi_{ij}\neq \varphi_{ji}$, as is the case for the KdV scattering shifts \eqref{varphi}.
\end{rema}

\ssect{Local projection theorem}\label{appprojections}

We now define a local projection map that keeps only solitons whose magnifying-glass positions are near enough to the observation point $x_*$. We start with a lemma that mirrors Lemma \ref{lemextractsbe}, but applied to magnifying-glass positions. The following has less constraints on the sets $s_+,\,s_-$, because of the simpler form of Eq.~\eqref{eqxymainX} compared to Eq.~\eqref{bethemain}. It turns out that the same modification of impact parameters $\bs y^{(s_+,s_-)}$, Eq.~\eqref{yispsm}, is involved here.
\begin{lemma}[How to extract solitons: magnifying glass positions]
\label{lemseparated} Let $N\in\N$, and let $\bs X\in\R^N$ be a magnifying-glass vector at $x_*\in\R$ for $\bs \chi\in\Rpu^N,\,\bs y\in\R^N$ (Definition \ref{defimag}). Let $s_-\subseteq\{i:X_i\leq x_*-\varep\}$, $s_+\subseteq\{i:X_i\geq x_*+\varep\}$, and
\beq\label{sspsmmg}
	s = \dbra 1,N\dket \setminus (s_+\cup s_-).
\eeq
Then $\bs X_s$ is a magnifying-glass vector at $x_*$ for $\bs\chi_s,\,\bs y^{(s_+,s_-)}$.
\end{lemma}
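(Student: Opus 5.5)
The plan is a direct substitution into the defining equations \eqref{eqxymainX}, mirroring the proof of Lemma \ref{lemextractsbe}. For each $i\in s$, split the sum over $j\neq i$ in \eqref{eqxymainX} into three parts: $j\in s_+$, $j\in s_-$, and $j\in s\setminus\{i\}$. The whole argument rests on the fact that, for $j\in s_\pm$, the regularised sign is frozen: $X_j-x_*\geq\varep$ for $j\in s_+$ and $X_j-x_*\leq-\varep$ for $j\in s_-$. By \eqref{sgnepgen} this gives $\sgn_\varep(X_j-x_*)=+1$ on $s_+$ and $-1$ on $s_-$.

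With these values the equation for $i\in s$ becomes
\[
y_i=X_i-\frc12\sum_{j\in s_+}\varphi_{ij}+\frc12\sum_{j\in s_-}\varphi_{ij}-\frc12\sum_{j\in s,\,j\neq i}\sgn_\varep(X_j-x_*)\varphi_{ij}.
\]
Move the $s_\pm$ sums to the left-hand side. This produces exactly $y_i^{(s_+,s_-)}$ from \eqref{yispsm}, so
\[
y_i^{(s_+,s_-)}=X_i-\frc12\sum_{j\in s,\,j\neq i}\sgn_\varep(X_j-x_*)\varphi_{ij},\qquad i\in s.
\]
This is \eqref{eqxymainX} for the data $\bs\chi_s,\bs y^{(s_+,s_-)}$. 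The matrix $(\varphi_{ij})_{i,j\in s}$ is the restriction of the original one, since $\varphi_{ij}$ depends only on $\chi_i,\chi_j$, and the same $\sgn_\varep$ and $\varep$ are used. Hence $\bs X_s$ is a magnifying-glass vector at $x_*$ for the reduced data.

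There is no real obstacle. The only point needing care is that the thresholds $\pm\varep$ in the hypotheses on $s_\pm$ match \eqref{sgnepgen}, so no regularisation ambiguity remains. This also explains why, unlike Lemma \ref{lemextractsbe}, no separation-by-an-interval condition is needed: the signs in \eqref{eqxymainX} depend only on $X_j-x_*$, not on relative positions $X_i-X_j$.
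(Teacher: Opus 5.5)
Your proposal is correct and follows the paper's own argument: for $i\in s$, the regularised signs of the $j\in s_\pm$ terms in \eqref{eqxymainX} are frozen at $\pm1$ by \eqref{sgnepgen}. Moving these terms to the left-hand side produces exactly $\bs y^{(s_+,s_-)}$ from \eqref{yispsm}. Your closing remark also matches the paper's observation that no interval-separation condition is needed here, because the signs depend only on $X_j-x_*$.
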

\proof Because $X_j\leq x_*-\varep$ for all $j\in s_-$, then in \eqref{eqxymainX} with all $i\in s$, we can bring the terms with such $j$ to the left-hand side. This gives the second sum in \eqref{yispsm}. The first sum is obtained similarly, and the case $\varep=0$ is obtained similarly.
\eproof
\medskip

For every $N\in\N,\,\bs\chi\in\Rpu^N,\,\bs y\in\R^N,\,x_*\in\R$, we choose a solution $\bs X\in\R^N$ to \eqref{eqxymainX}, and their associated displacement vectors:
\beq
	\bs X(x_*;\bs\chi,\bs y),\quad \bs d(x_*;\bs\chi,\bs y).
\eeq
These exist by Lemma \ref{lemd}. With this choice, we now define {\em local projections with respect to $I$}, where $I\subset\R$ is an interval. Such a local projection is a map that keeps at least all solitons that lie within $I$, in the magnifying-glass metric at the centre of $I$, with respect to the chosen set of solutions. We set
\beq\label{intervalapp}
	I=[x_*-L/2,x_*+L/2]\subset \R.
\eeq
This definition mirrors Definition \ref{defimap}, except it does not need to notion of a fluid-cell projection, because of the lesser constraints on the sets $s_\pm$ in Lemma \ref{lemseparated}.
\begin{defi}[Local projection] \label{defimapmg}
Let $x_*\in\R$, $L\geq2\varep$ and $I\subset\R$ be the interval \eqref{intervalapp}. A local projection with respect to the $I$ is a map
\beq
	\mathcal L: \mathfrak{S}\to \mathfrak{S}
\eeq 
such that for every $(\bs \chi,\bs y)\in \mathsf S$ and with $\bs d =\bs d(x_*;\bs\chi,\bs y)$, there exists $s\supseteq \{i:|d_i|\leq L/2\}$ such that
\beq\label{localspsm}
	\mathcal L u_{\bs \chi,\bs y} = u_{\bs \chi_s,\bs y^{(s_+,s_-)}}
\eeq
where $s_+,\,s_-\subseteq\dbra 1,N\dket$ are defined such that $d_i>L/2\,\forall\,i\in s_+$, $d_i<-L/2\,\forall\,i\in s_-$ and $s_+\cup s_- = \dbra 1,N\dket\setminus s$. Equivalently, we have
\beq\label{locallimit}
	\mathcal L u_{\bs \chi,\bs y}(x)
	= 
	\lim_{\bs z:s_+,s_-}
	u_{\bs \chi,\bs y+\bs z}(x)\quad\forall\, x\in\R.
\eeq
\end{defi}
The statement \eqref{locallimit} come from Lemmas \ref{lemproj}. A local projection takes
\beq
	u_{\bs \chi,\bs y}\mapsto u_{\bs \chi',\bs y'}
\eeq
with $\bs\chi\in\Rpu^N,\,\bs y\in\R^N$ and $\bs\chi'\in\Rpu^M,\,\bs y'\in\R^M$ for some $M\leq N$. This extracts from the full set of solitons, the solitons in the set $s$, which contain at least all those whose magnifying-glass positions that lie in $I$. By Lemma \ref{lemseparated}, $u_{\bs \chi',\bs y'}=\mathcal L u_{\bs \chi,\bs y}$ is a multi-soliton field for which $\bs X(x_*;\bs\chi,\bs y)_s$ are magnifying-glass positions at $x_*$ (although they may be different from $\bs X(x_*;\bs\chi',\bs y')$).
The spectral parameters $\bs\chi'=\bs\chi_s$ of the projected, $M$-soliton solution are a subset of those of the original $N$-soliton one, $\bs\chi$, but the  impact parameters $\bs y'=\bs y^{(s_+,s_-)}$ are not a subset of $\bs y$.

Because $\mathcal L u_{\bs \chi,\bs y}$ is a multi-soliton field for which $\bs X(x_*;\bs\chi,\bs y)_s$ are magnifying-glass positions at $x_*$, then $\bs d(x_*,\bs\chi,\bs y)_s\in\R^{|s|}$ solves \eqref{eqxymaind} for $\bs\chi_s,\,\bs y^{(s_+,s_-)}$. Hence we immediately have the equivalent of Corollary \ref{coroltauproj} in the magnifying-glass setup, as a corollary of Lemma \ref{lemtaumainmg}.
\begin{corol}[Projected form of the centred tau function, magnifying-glass setup]\label{coroltauprojmg} In the context of Definition \ref{defimapmg},  let $(\bs\chi,\bs y)\in {\mathtt S}$ and $\Delta\bs y \in \R^{|s|}$. The tau function of the projected KdV field $\tau_{\bs\chi_s,\bs y^{(s_+,s_-)}+\Delta\bs y}$ is equivalent, in the sense of \eqref{equiv}, to the restriction to $s$ of the centred tau function $\tau_{\bs\chi,\bs x,\Delta\bs y,x_*}^\#$: with the notation of Definition \ref{lemtaumainmg},
\beq\label{tauprojmg}
	\tau_{\bs\chi_s,\bs y^{(s_+,s_-)}+\Delta\bs y}
	\equiv\tau_s,\quad
	\tau_s(x) :=
	\sum_{r\subseteq s} 
	\exp\Big[-\sum_{i\in r} 2\chi_i\big(
	|d_i|
	+\sgn(d_i)(x_*-x+\Delta y_i+e_i)\big)+ R_r\Big]
\eeq
where $\bs d = \bs d(x_*;\bs\chi,\bs y)$ and
\beq\label{eRmgproj}
	e_i := \frc12\sum_{j=1}^N \big(\sgn(d_j) - \sgn_{\varep}(d_j)\big)\varphi_{ij},\quad
	R_r := \sum_{i,j\in r,\,i\neq j}
	\sgn(d_i)
	\sgn(d_j) \log|S_{i,j}|.
\eeq
\end{corol}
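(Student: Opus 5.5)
The plan is to apply the magnifying-glass centred form, Lemma \ref{lemtaumainmg}, directly to the projected data $(\bs\chi_s,\bs y^{(s_+,s_-)})$ instead of the original data. Then I would check that the resulting expression coincides term by term with $\tau_s$ as written in \eqref{tauprojmg}.

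First, I would establish that the restricted displacement vector $\bs d_s=(d_i)_{i\in s}$, with $\bs d=\bs d(x_*;\bs\chi,\bs y)$, solves \eqref{eqxymaind} for the projected data at the same observation point $x_*$. By Definition \ref{defimapmg}, $d_i>L/2\geq\varep$ for $i\in s_+$ and $d_i<-L/2\leq-\varep$ for $i\in s_-$. In terms of $X_i=x_*+d_i$, this reads $s_+\subseteq\{i:X_i\geq x_*+\varep\}$ and $s_-\subseteq\{i:X_i\leq x_*-\varep\}$. These are exactly the hypotheses of Lemma \ref{lemseparated}, which then gives that $\bs X_s$ is a magnifying-glass vector at $x_*$ for $\bs\chi_s,\bs y^{(s_+,s_-)}$; equivalently, $\bs d_s$ solves \eqref{eqxymaind}. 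The constraint $L\geq 2\varep$ in Definition \ref{defimapmg} is used precisely here. The underlying reason is that for $j\in s_\pm$ the regularised sign $\sgn_\varep(d_j)$ equals $\pm1$, so the corresponding terms can be moved into the impact parameters as in \eqref{yispsm}.

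Second, I would invoke Lemma \ref{lemtaumainmg} with $N$ replaced by $|s|$, data $(\bs\chi_s,\bs y^{(s_+,s_-)})$, shift $\Delta\bs y\in\R^{|s|}$, and displacement vector $\bs d_s$. This gives
\[
\tau_{\bs\chi_s,\bs y^{(s_+,s_-)}+\Delta\bs y}\equiv\sum_{r\subseteq s}\exp\Big[-\sum_{i\in r}2\chi_i\big(|d_i|+\sgn(d_i)(x_*-x+\Delta y_i+e_i')\big)+R_r'\Big],
\]
where $e_i'=\frc12\sum_{j\in s,\,j\neq i}(\sgn(d_j)-\sgn_\varep(d_j))\varphi_{ij}$ and $R_r'$ is given by \eqref{eRmg} restricted to $r\subseteq s$.

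Third, I would compare this with \eqref{tauprojmg}. Since $R_r'$ depends only on the indices in $r\subseteq s$, it is identical to $R_r$ in \eqref{eRmgproj}. For $e_i$, the sum in \eqref{eRmgproj} runs over all $j$, but for every $j\in s_+\cup s_-$ we have $|d_j|\geq\varep$, so $\sgn(d_j)=\sgn_\varep(d_j)$ by \eqref{sgnepgen} and those terms vanish. Hence $e_i=e_i'$, and the two expressions agree, which proves the corollary.

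None of these steps is a serious obstacle. The only points that need care are consistency checks: the convention $\sgn(0)=1$ in \eqref{sgn} must be the same one used in Lemma \ref{lemtaumainmg}, and the boundary case $|d_j|=\varep$ must be covered, which it is because \eqref{sgnepgen} holds for $|d|\geq\varep$.
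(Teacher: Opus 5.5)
Your proposal is correct and matches the paper's proof: the paper also gets that $\bs d_s$ solves \eqref{eqxymaind} for $(\bs\chi_s,\bs y^{(s_+,s_-)})$ from Lemma \ref{lemseparated}, applies Lemma \ref{lemtaumainmg} to the projected data, and then extends the sum defining $e_i$ from $j\in s$ to all $j$ using $\sgn_\varep(d_j)=\sgn(d_j)$ for $|d_j|\geq\varep$. The only point to make explicit is that the sum in \eqref{eRmgproj} has to be read with $j\neq i$, as in \eqref{eRmg}, because $\varphi_{ii}$ is undefined; your identification $e_i=e_i'$ already treats it this way.
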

\proof The result is immediate from what we said above and Lemma \ref{lemtaumainmg}, if $e_i$ in \eqref{eRmgproj} is replaced by $e_i := \frc12\sum_{j\in s} \big(\sgn(d_j) - \sgn_{\varep}(d_j)\big)\varphi_{ij}$. But as $|d_j|> L\geq \varep$ for all $j\in s_+\cup s_- = \dbra 1,N\dket\setminus s$, and by \eqref{sgnepgen}, we can extend the sum to that written in \eqref{eRmgproj}.
\eproof
\medskip

We are now ready to combine the above with the centred form of the tau function from Lemma \ref{lemtaumainmg}, in order to show the main theorem of this Appendix, Theorem \ref{theomg}. This shows how, under certain conditions, any local projection gives a good local approximation of the KdV field locally. However, because of the use of magnifying-glass positions, where precision is lost the further we are from $x_*$, we obtain an approximation of the field on an interval that is smaller then the ``real-space'' interval corresponding to $I$; we only know which solitons {\em may} contribute to the field there, but not which {\em actually do} contribute. In particular, the support of the local projection of the field goes beyond the region where it agrees with the original field. This is not sufficient to obtain the quasi-particle criterium \eqref{loose} -- here we will not show how to solve this criterium in the magnifying-glass setup, as it was solved already in the main text using the SBE.

The conditions we need are Assumption \ref{asssp}, as well as a no-accumulation assumption and an assumption that adapts Assumption \ref{assde} to soliton displacements instead of contracted distances. We consider, as in the main text, sequences \eqref{seqgen} for our soliton gas. The following two assumptions are for  a sequence $\b N\mapsto N,\,\bs\chi,\,\bs y,\,x_*$ and we take $\bs d =  \bs d(x_*;\bs\chi,\bs y)$.

\begin{assum}[Bound on accumulation of small soliton displacements]\label{assac}
There exists $U>0$ and $\sigma\geq 0$ such that for every $\b N$:
\beq\label{c3}
	|\{i:|d_{i}|\leq \varep\}| \leq U\b N^{\frc\sigma2}.
\eeq
\end{assum}
The data of this assumption is $U,\,\sigma$.

\begin{assum}[Bound on soliton displacements' density for exponent $\eta\geq 0$]\label{assdemg} There exist $B>0,\,D>0$ and $\mu\geq  0,\,\nu\geq 0$ such that for every $\b N$:
\beq
	\label{c2app}
	\rho_{\bs d}(d)\leq D N^{\nu}\quad \forall\  d\geq B\b N^{\eta +\mu}
\eeq
where
\beq
	\rho_{\bs d}(d) := \frc{\big|\{i:|d_i|\leq d\}\big|}{2d}.
\eeq
\end{assum}
The data of this assumption is $B,\,D,\,\mu,\,\nu$, while $\eta$ is a parameter to be chosen appropriately in terms of other assumptions' data.

For our main local projection theorem of the magnifying-glass context, we consider a sequence as follows:
{\em 
\beq\label{seqmg}
	\N\ni \b N\mapsto N\in \dbra 1,\b N\dket,\,\bs\chi\in\Rpu^N,\,\bs y\in\R^N,\,x_*\in \R,\,L\geq \varep\quad \mbox{such that:}
\eeq
the spectral regularity Assumption \ref{asssp} holds; and the bound on accumulation Assumption \ref{assac} and the bound on density Assumption \ref{assdemg} for exponent
\beq
	\eta = \frc{\alpha+\sigma}2.
\eeq
}
\begin{theorem}[Local form of the multi-soliton field] \label{theomg}
There exists $D_{n,m},\kappa_{n,m}>0\ (n,m=0,1,2,\ldots),\,E>0$ and $N_*>0$ depending only on the assumptions' data, with $E$ given in \eqref{Edef}, such that for every sequence \eqref{seq}, for every integer $n,m\geq 0$, and for every sequence of local projections $\b N\mapsto \mathcal L$ with respect to $I$ (Eq.~\eqref{intervalapp}, Definition \ref{defimapmg}), we have for all $\b N\geq N_*$,
\beq\label{projapp}
	\sup_{(x,t)\in J\times K}\Big|\p_x^n\p_t^m \big(u_{\bs \chi,\bs y+\bs v t}(x) - \tau_t \mathcal L u_{\bs \chi,\bs y}(x)\big)\Big| \leq D_{n,m}
	\b N^{\kappa_{n,m}} \exp\Big(E \b N^{\alpha+\sigma+\mu+\nu}-\chi_*L\Big)
\eeq
where
\beq\label{xregionapp}
	J:=[x_*- C^{-1}\b N^{\eta-\beta}\,,\,
	x_*+C^{-1}\b N^{\eta-\beta}],
\eeq
\beq\label{tregionapp}
	K:=[- C^{-3}\b N^{\eta-3\beta}\,,\,
	C^{-3}\b N^{\eta-3\beta}].
\eeq
In particular, if we can take $L=R N^\gamma$ with either
\beq\label{gammaapp}
	R>\frc{E}{\chi_*},\ \gamma = \alpha+\sigma+\mu+\nu,\quad\mbox{or}\quad R>0,\ \gamma>\alpha+\sigma+\mu+\nu,
\eeq
then we have, uniformly on sequences \eqref{seq} with fixed assumptions' data, and sequences $N\mapsto \mathcal L$ as above,
\beq\label{convuprojapp}
	\sup_{(x,t)\in J\times K} \Big|\p_x^n \p_t^m\big(u_{\bs \chi,\bs y+\bs v t}(x) - \tau_t\mathcal L u_{\bs \chi,\bs y}(x)\big)\Big| = \mathcal O(N^{-\infty}) \quad (\b N\to\infty)\qquad \forall \,n,m\in\Z_+.
\eeq
\end{theorem}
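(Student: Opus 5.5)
The plan is to run the argument of Theorem \ref{theo} again, now with the magnifying-glass centred form of Lemma \ref{lemtaumainmg}. The displacements $|d_i|$ take the place of the contracted distances $D_i$, and $\sigma_i=\sgn(d_i)$ takes the place of $\sgn(x_i-x_*)$. Setting $\Delta\bs y=\bs v t$, Lemma \ref{lemtaumainmg} writes $\tau_{\bs\chi,\bs y+\bs vt}$ as $\tau^\#=\sum_{r}\tau_r$, with
\[
\tau_r=\exp\Big[-\sum_{i\in r}2\chi_i\big(|d_i|+\sigma_i(x_*-x+v_it+e_i)\big)+R_r\Big].
\]
By Corollary \ref{coroltauprojmg}, the projected field $\tau_t\mathcal L u_{\bs\chi,\bs y}$ comes from the restricted sum $\tau'=\sum_{r\subseteq s}\tau_r$, where $s\supseteq\{i:|d_i|\le L/2\}$. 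The whole problem therefore reduces to bounding $\sum_{r\not\subseteq s}\tau_r$ uniformly on $J\times K$. After that, the Fa\`a di Bruno comparison of $\partial_x^n\log\tau^\#$ with $\partial_x^n\log\tau'$ finishes the proof exactly as in Theorem \ref{theo}, since $\tau^\#,\tau'\ge1$ and $|\partial_x^j\tau/\tau|\le(2C\bar N^{\beta+1})^j$. Time derivatives are handled the same way, with extra factors $v_i\le 4C^2\bar N^{2\beta}$.

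The first new ingredient, compared with Theorem \ref{theo}, is the shift $e_i$. By \eqref{sgnepgen}, only indices $j$ with $|d_j|<\varep$ contribute to $e_i$. Hence $|\chi_ie_i|\le\max_{j\ne k}|\chi_k\varphi_{kj}|\cdot|\{j:|d_j|<\varep\}|$. The spectral bounds \eqref{c1} give $\max|\chi_k\varphi_{kj}|\le \log(2C\bar N^\beta)+A\bar N^{\alpha/2}$, and Assumption \ref{assac} bounds the count by $U\bar N^{\sigma/2}$. Together these give $|\chi_ie_i|\lesssim\bar N^{(\alpha+\sigma)/2}$. This is why $\sigma$ enters the exponent, and why $\eta=(\alpha+\sigma)/2$ is chosen: the windows $J,K$ must be small enough that $2\chi_i|x-x_*|+8\chi_i^3|t|$ is also $O(\bar N^{\eta})$.

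The second ingredient is the bound $R_r\le(3+2A)|r|\bar N^{\alpha/2}$. Its proof only uses that $R_r$ is a signed Coulomb-type energy with weights $\pm1$, so it carries over verbatim. The steps are: mollify the point charges at scale $\ep'=e^{-2A\bar N^{\alpha/2}}$, control the diagonal self-energy, and use negativity of the regular part of $\widehat{\log|\cdot|}$ on neutral, antisymmetrised densities. I expect this to be the main technical step, though it is already done.

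With $M=W\bar N^{(\alpha+\sigma)/2}$ absorbing both corrections, I bound $\sum_{r\not\subseteq s_L}\tau_r$ as in \eqref{taucalcu}. First I extract one index $j$ with $|d_j|>L/2$, which gives a factor $\bar N e^{c\bar N^{\eta}-\chi_*L}$. The remaining factor is $\prod_i(1+e^{-2\chi_*(|d_i|-T)})$. I split it at $|d_i|\le Q=B'\bar N^{\eta+\mu}$, using Assumption \ref{assdemg} with $\eta=(\alpha+\sigma)/2$ to count those indices, and bound the tail by a convergent product. This yields $\bar N\exp(E\bar N^{\alpha+\sigma+\mu+\nu}-\chi_*L)$. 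The main obstacle is bookkeeping: all powers of $\bar N$ in the exponent must be dominated by $\alpha+\sigma+\mu+\nu$, which fixes $E$ as in \eqref{Edef}. The final statement \eqref{convuprojapp} follows by choosing $L=R\bar N^\gamma$ as in \eqref{gammaapp}.
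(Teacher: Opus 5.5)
Your proposal is correct and follows the paper's own argument. The paper proves Theorem~\ref{theomg} together with Theorem~\ref{theo}, using the centred form of Lemma~\ref{lemtaumainmg} with $\Delta\bs y=\bs v t$ and $D_i=|d_i|$, the bound $|\chi_i e_i|\le V\b N^{(\alpha+\sigma)/2}$ from \eqref{c1} and Assumption~\ref{assac}, the same Coulomb-type bound $R_r\le(3+2A)|r|\b N^{\alpha/2}$, extraction of one index with $|d_j|>L/2$ and a split of the remaining product at $Q=B'\b N^{\eta+\mu}$ via Assumption~\ref{assdemg}, and finally the Fa\`a di Bruno comparison using $\tau^\#,\tau'\geq 1$, with time derivatives contributing the extra factors $v_i$ exactly as you describe.
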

The proof is provided in Sec.~\ref{proofmain}, simultaneously with the proof of Theorem \ref{theo}, as it has exactly the same structure.

\subsection{Magnifying glass effect}\label{sssectdisp}

%

Consider \eqref{eqxymaind},  \eqref{eqxymainX}. Their meaning is as follows: from the position $X_i$ of a tracer soliton, we subtract half of its scattering shifts with all other solitons $j\neq i$ which stand on the left of the observation point $x_*$ (at least a distance $\varep$ away), and add those for solitons $j$ on the right, in order to obtain a quantity, the impact parameter for soliton $i$, that evolves linearly in time. The result is that whenever the position $X_j$ of another soliton crosses the observation point, the position $X_i$ jumps by a scattering shift with that soliton. This jump is exactly what happens over a full interaction event when soliton $i$ interact with soliton $j$ in a two-body collision. Over long times and distances, the accumulation of these shifts represent the global effect of factorised soliton scattering. This is very similar to the system \eqref{bethe} encoding the collision rate ansatz. However, there is an important difference: here we account for solitons that cross the observation point $x_*$, instead of those that actually cross the tracer soliton's position itself as in \eqref{bethe}. For positions $X_i$ that are near to $x_*$, this does not change very much from the viewpoint of large-scale observations. But for solitons that are far, we introduce an ``observation error''. It is as if we were looking at solitons ``through a lens'', with deformations away from the centre of the lens. In the metric language of \cite{doyon2018geometric}, the soliton displacements account for the metric change from the impact-parameter space to real space around $x_*$, but not for that at positions far from it, and therefore, soliton displacements that are large do not live in the real-space metric of the position they represent. This is why $X_i$'s must depend on $x_*$, the point where we look.

Assuming that \eqref{bethe} give soliton positions that are accurate on large scales -- that is, within the correct metric --, we may heuristically describe the metric change induced by \eqref{eqxymainX} away from the observation point $x_*$. For, say, $X_i<x_*$, the difference is in accounting for positions that lie within the interval $(X_i,x_*)$ (disregarding the $\varep$ regularisation of the sign function): on the right-hand side of \eqref{eqxymainX}, these give a sign function that is negative, while within \eqref{bethe} it is positive. As $\varphi_{ij}<0$, for fixed $y_i$, the value of $X_i$ is therefore ``too much to the right''. Thus, the magnifying-glass effect, the deformation of the metric away from the observation point $x_*$, is expected on large scales to bring accurate soliton positions, overall, towards $x_*$. This is, of course, in a way that is not smooth on small scales, because of interaction effects not accounted for by this simple arguments.

Theorem \ref{theomg} applies to these deformed positions. The ``error'' can be estimated by noticing that for a displacement $d$, the number of crossing missed is proportional to the ``true'' soliton density times $d$. For a finite, bounded density, the true displacements are therefore proportional to the soliton displacements with respect to the observation point.

Let us denote $X_i(x_*) = X_i(x_*;\bs\chi,\bs y)$. It is simple to argue  that the fixed points $x_i$ defined by $X_i(x_i) = x_i$ should satisfy  \eqref{bethe}, if for every $i$, the function $X_i(x_*)$ is continuous and has a single fixed point:
\beq
	x_{*,i}\ :\  X_i(x_{*,i}) = x_{*,i}.
\eeq
The fixed point is the position of soliton $i$ when it is in the centre of the lens, hence it is a ``precise'' soliton position. As, by Lemma \ref{lemd}, for all $x_*$ small enough (towards $-\infty$) we have $X_i(x_*)>x_*$, then for every $x_*$ such that $X_i(x_*)>x_*$ (resp.~$X_i(x_*)<x_*$), we have $x_{*,i}>x_*$ (resp.~$x_{*,i}<x_*$). Hence, for all $\varep$ small enough (towards 0) and assuming nonzero separations of soliton positions,
\beq
	\sgn_\varep(X_j(x_{*,i}) - x_{*,i}) = \sgn(x_{*,j}-x_{*,i}) \quad (j\neq i).
\eeq
Eq.~\eqref{eqxymainX} then implies that
\beq\label{xuniversal}
	x_i = x_{*,i}
\eeq
satisfies \eqref{bethe}. The argument fails because in general $X_i(x_*)$ are not necessarily continuous functions, and do not necessarily have a single fixed point.


%

\end{document}